\documentclass[12pt]{article}
\usepackage{ctex}
\usepackage{amssymb,amsmath,color,bm} 
\usepackage[colorlinks,linkcolor=blue]{hyperref}
\usepackage{geometry}
\usepackage{tabularx}
\usepackage{authblk} 
\usepackage{tikz}
\usetikzlibrary{fit,patterns,arrows.meta,decorations.pathmorphing}
\usepackage{nicematrix}
\usepackage{enumitem}
\usetikzlibrary{angles, quotes} 
\usepackage[square, comma, sort&compress, numbers]{natbib}
\usepackage{float}  

\newcommand{\rank}{{\mathrm{rank}}}
\newcommand{\hull}{{\mathrm{Hull}}}
\newtheorem{theorem}{Theorem}[section]
\newtheorem{definition}{Definition}[section]
\newtheorem{lemma}{Lemma}[section]
\newtheorem{example}{Example}[section]

\newtheorem{corollary}{Corollary}[section]
\newtheorem{remark}{Remark}[section]

\catcode`@=11 \@addtoreset{equation}{section} \catcode`@=12
\NiceMatrixOptions{cell-space-limits
	=1pt}
\begin{document}
\title{\bfseries Non-GRS type Euclidean and Hermitian LCD codes and Their Applications for EAQECCs\thanks{E-mails: liangzhongh0807@163.com; 3120193984@qq.com; qunyingliao@sicnu.edu.cn; cuilingfan@163.com\\ ; zzc@swjtu.edu.cn}} 
\author[1]{\small Zhonghao Liang}\author[1]{\small Dongmei Huang}  
\author[1]{\small Qunying Liao\thanks{Corresponding author}} 
\author[2]{\small Cuiling Fan}
\author[3]{\small Zhengchun Zhou}

\affil[1]{\small College of Mathematical Sciences, Sichuan Normal University, Chengdu 610066, China}

\affil[2]{\small School of Mathematics, Southwest Jiaotong University, Chengdu 611756, China}
\affil[3]{\small School of Information Science and Technology, Southwest Jiaotong University, Chengdu, 611756, China}
	\date{}
	\maketitle
	
{\bf Abstract.}
{\small In recent years, the construction of non-GRS type linear codes has attracted considerable attention due to that they can effectively resist both the Sidelnikov-Shestakov attack and the Wieschebrink attack. Constructing linear complementary dual (LCD) codes and determining the hull of linear codes have long been important topics in coding theory, as they play the crucial role in constructing entanglement-assisted quantum error-correcting codes (EAQECCs), certain communication systems and cryptography. In this paper, by utilizing a class of non-GRS type linear codes, namely, generalized Roth-Lempel (in short, GRL) codes, we firstly construct several classes of Euclidean LCD codes, Hermitian LCD codes, and linear codes with small-dimensional hulls, generalized the main results given by Wu et al. in 2021. We also present an upper bound for the number of a class of Euclidean GRL codes with $1$-dimensional hull, and then for several classes of Hermitian GRL codes, we firstly derive an upper bound for the dimension of the hull, and prove that the bound is attainable. Secondly, as an application, we obtain several families of EAQECCs. Thirdly, we prove that the GRL code is non-GRS for $k>\ell$. Finally, some corresponding examples for LCD MDS codes and LCD NMDS codes are presented.
	}\\
	
	{\bf Keywords.}	{\small Roth-Lempel codes; LCD codes; Hulls; Entanglement-assisted quantum error-correcting codes.}
\section{Introduction}
Let $\mathbb{F}_{q}$ be a finite field with $q$ elements. An $[n, k, d]$ linear code $\mathcal{C}$ over $\mathbb{F}_{q}$ is a linear subspace of $\mathbb{F}_{q}^{n}$ with dimension $k$ and minimum Hamming distance $d$. The Euclidean dual code and the Hermitian dual code of $\mathcal{C}$ are respectively defined by
$$
\mathcal{C}^{\perp_{E}}=\left\{\left(x_{1}, \ldots, x_{n}\right)=\boldsymbol{x}\in\mathbb{F}_q^{n} \mid\langle\boldsymbol{x},\boldsymbol{y}\rangle_{E}=\sum\limits_{i=1}^{n} x_{i} y_{i}=0,  \forall \boldsymbol{y}=\left(y_{1}, \ldots, y_{n}\right) \in \mathcal{C}\right\}$$
and 
$$
\mathcal{C}^{\perp_{H}}=\left\{\left(x_{1}, \ldots, x_{n}\right)=\boldsymbol{x}\in\mathbb{F}_{q^2}^{n} \mid\langle\boldsymbol{x},\boldsymbol{y}\rangle_{H}=\sum\limits_{i=1}^{n} x_{i} y_{i}^{q}=0,  \forall \boldsymbol{y}=\left(y_{1}, \ldots, y_{n}\right) \in \mathcal{C}\right\}.$$

For a linear code $\mathcal{C}$, the hull is defined by $\hull\left(\mathcal{C}\right)=\mathcal{C}\cap\mathcal{C}^{\perp}$, where $\mathcal{C}^{\perp}$ is the dual code of $\mathcal{C}$. The hull plays an important
role for determining the complexity of algorithms to check the permutation equivalence of two linear codes\cite{HullP}, computing the automorphism group of a linear code\cite{HullPG}, calculating the number of shared pairs that required to construct an entanglement-assisted quantum error-correcting code  (EAQECC)\cite{EAQECCHull}, and so on. In particular, these algorithms tend to be highly effective when the dimension of the hull is small. In addition, it is worth mentioning that a special case of the hull of
linear codes is of much interest, i.e., $\hull\left(\mathcal{C}\right)=\left\{\boldsymbol{0}\right\}$, in which $\mathcal{C}$ is called a linear complementary dual (in short, LCD) code. And LCD codes have been widely applied in data
storage, communication systems, and
cryptography\cite{LCDapplication1,LCDapplication2,LCDapplication3}. Thus, determining the value of $\dim\left(\hull\left(\mathcal{C}\right)\right)$, constructing  LCD codes or linear codes with low-dimensional hulls has been interesting \cite{LCD1,LCD2,LCD3,LCD4,Hull1,Hull2,Hull3,Hull4}. 

The well-known Singleton Bound says that $d\leq n-k+1$ for any $[n, k, d]$ code $\mathcal{C}$ over $\mathbb{F}_{q}$, which means that $S\left(\mathcal{C}\right)=n+1-k-d$ is an non-negative integer. If $S\left(\mathcal{C}\right)=0$, then the code $\mathcal{C}$ is maximum distance separable (in short, MDS). If $S\left(\mathcal{C}\right)=1$, then the code $\mathcal{C}$ is almost MDS (in short, AMDS). Especially, if $S\left(\mathcal{C}\right)=S\left(\mathcal{C}^{\perp}\right)=1$, then $\mathcal{C}$ is near MDS (in short, NMDS).

The most well-known class of MDS codes is the generalized Reed-Solomon (in short, GRS) code. If the code $\mathcal{C}$ is not equivalent to any GRS code, then the code $\mathcal{C}$ is called to be non-GRS type. Note that MDS codes constructed from GRS codes are equivalent to GRS codes, and so an MDS code is either a GRS type or a non-GRS type, as the following Figure \ref{fig:mds_classification}.
\begin{figure}[htbp]
	\centering
	\begin{tikzpicture}[
		box/.style={draw, thick, rectangle, minimum height=2cm, minimum width=8cm},
		split line/.style={thick},
		text label/.style={font=\normalsize\bfseries}
		] 
		\node[box] (outer) at (0,0) {}; 
		\draw[split line] (outer.north) -- (outer.south); 
		\node[text label, text=blue] at (-2, 0) {GRS type};
		\node[text label, text=red] at (2, 0) {non-GRS type}; 
		\node[text label, text=black] at (0, -1.5) {MDS codes};
	\end{tikzpicture}
	\caption{Classification of MDS codes} 
	\label{fig:mds_classification}
\end{figure}
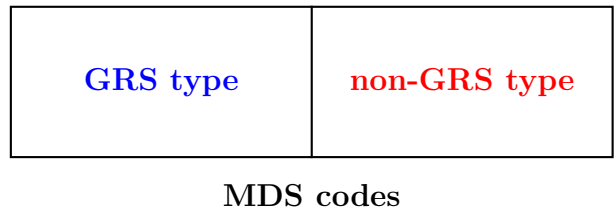\\
Up to now, a lot of results have been obtained on  constructing non-GRS type linear codes\cite{nonGRS1,nonGRS2,nonGRS3,nonGRS4,nonGRS5,nonGRS6,nonGRS7,nonGRS8,Roth1989}.
 
So far, most of LCD codes and EAQECCs have been constructed from a GRS type linear code \cite{LCD1,LCD2,LCD3,LCD4,EAQECC1,EAQECC2,EAQECC3,EAQECC4,EAQECC5}. Thus, constructing LCD codes or EAQECCs is interesting based on non-GRS type linear codes. There have already been some results on constructing Euclidean LCD codes based on non-GRS type linear codes\cite{nonGRSLCD1,nonGRSLCD2,nonGRSLCD3,nonGRSLCD4,nonGRSLCD5}. While, for constructing Hermitian LCD codes and EAQECCs based on non-GRS type linear codes, there are few relevant work  \cite{nonGRSLCD4,nonRSEAQEC}. Especially, in 2021, Wu et al. \cite{nonGRSLCD4} considered the Roth-Lempel (in short, RL) code proposed by Roth R M. and Lempel A.\cite{Roth1989} in 1989, which has the following generator matrix
$$
\begin{pmatrix}\begin{matrix}
		\boldsymbol{G}_{RS}(\boldsymbol{\alpha})
	\end{matrix}&\begin{matrix}
		\boldsymbol{0}_{(k-2)\times 2}\\
		\boldsymbol{T}_{2\times 2}(\delta)
	\end{matrix}
\end{pmatrix}_{k\times (n+2)},
$$
where $\boldsymbol{G}_{RS}(\boldsymbol{\alpha})$ is the generator matrix of the RS code with the evaluation-point sequence $\boldsymbol{\alpha}=\left(\alpha_{1}, \ldots, \alpha_{n}\right) \in \mathbb{F}_{q}^{n}$, and $\boldsymbol{T}_{2\times 2}(\delta)=\begin{pmatrix}
	0&1\\
	1&\delta
\end{pmatrix}$ with $\delta \in\mathbb{F}_{q}$. By taking some special $\boldsymbol{\alpha}$, they proved that there exists Hermitian LCD RL codes. In 2025, the authors\cite{GRL1} introduced the generalized Roth-Lempel (in short, GRL) code, which is a generalization of RL codes, the corresponding linear code over $\mathbb{F}_{q}$ has the generator matrix
$$
\begin{pmatrix}\begin{matrix}
	\boldsymbol{G}_{RS}(\boldsymbol{\alpha})
\end{matrix}&\begin{matrix}
	\boldsymbol{0}_{(k-\ell)\times \ell}\\
	\boldsymbol{A}_{\ell\times\ell}
\end{matrix}
\end{pmatrix}_{k\times (n+\ell)},
$$
where $\boldsymbol{G}_{RS}(\boldsymbol{\alpha})$ is the generator matrix of the RS code with the evaluation-point sequence $\boldsymbol{\alpha}=\left(\alpha_{1}, \ldots, \alpha_{n}\right) \in \mathbb{F}_{q}^{n}$, and $\boldsymbol{A}_{\ell\times\ell}\in\mathrm{GL}_{l}\left(\mathbb{F}_{q}\right)$. So far, for several special matrices $\boldsymbol{A}_{\ell\times\ell}$, the corresponding MDS property\cite{GRL1,nonGRS8}, AMDS property\cite{GRL1,nonGRS8}, NMDS property\cite{GRL2,RLNMDS1,RLNMDS2}, self-dual property\cite{GRL1,nonGRS8}, extendable property\cite{GRL3,nonGRS8}, the existence of LCD codes\cite{nonGRSLCD4} and decoding algorithms of punctured codes \cite{RLdeconding} have been investigated, respectively.

In this paper, different from the work of Wu et al. (2021), which only considered a very special class of $2\times 2$ matrices $\boldsymbol{T}_{2\times 2}(\delta)$ and some special $\boldsymbol{\alpha}$, and they only proved that there exists LCD RL codes, we extend their research to the most general $\ell\times\ell$ matrix $\boldsymbol{A}_{\ell\times\ell}$ and more flexible $\boldsymbol{\alpha}$, and give the specific construction of LCD codes. We construct several classes of Euclidean LCD codes, Hermitian LCD codes, small-dimensional hull linear codes and EAQECCs. And for some Hermitian GRL codes, we also obtain an upper bound for the dimension of the hull and prove that the bound is attainable. 

This paper is organized as follows. In Section \ref{sec2}, we give the definition of the GRL code and some necessary lemmas. In Sections \ref{sec3}-\ref{sec4}, we give some constructions for Euclidean LCD GRL codes, Hermitian LCD GRL codes, small-dimensional hull linear codes and EAQECCs. In Section \ref{sec5}, we prove that the GRL code is non-GRS.
 In Section \ref{sec6}, we conclude the whole paper. In Appendix, some examples are given.
\section{Preliminaries}\label{sec2}
Throughout this paper, for the convenience, we fix some notations as the following.
\begin{itemize}
	\item $q$ is a power of an odd prime.
	\item $\mathbb{F}_{q}$ or $\mathbb{F}_{q^2}$ is the finite field with $q$ or $q^{2}$ elements, respectively. 
	\item For any non-empty set $\left\{x_{1},\ldots,x_{n}\right\}$, $\left\{x_{1},\ldots,x_{n}\right\}_{\bmod\ k}\triangleq\left\{x_{1}(\bmod\ k),\ldots,x_{n}(\bmod\ k)\right\}$.
	\item For any set $A$, $\# A$ denotes the number of elements in $A$.
	\item $d(x)$ denotes the number of positive divisors of the positive integer $x$.
	\item For the prime number $p$ and the positive integer $x$, $v_{p}(x)$ denotes the largest non-negative integer $k$ such that $p^{k}\mid x$ and $p^{k+1}\nmid x$.
	\item $\gcd\left(a,b\right)$ denotes the greatest common divisor for two positive integers $a$ and $b$.
	\item $\dim\left(\mathrm{Hull}_{E}(\mathcal{C})\right)$ or $\dim\left(\mathrm{Hull}_{H}(\mathcal{C})\right)$ denotes the dimension of the Euclidean hull or the Hermitian hull for the linear code $\mathcal{C}$, respectively.
	\item For any matrix $\boldsymbol{G}$, $\overline{\boldsymbol{G}}$ denotes the conjugate matrix of $\boldsymbol{G}$. 
\end{itemize} 

In this section, we recall the definition of the generalized Roth-Lempel code and some necessary lemmas.
\begin{definition}\label{GRLdefinition}{\rm(\cite{GRL1}, Definition 1)}
Let $3\leq \ell+1\leq k+1\leq n\leq q$,  $\boldsymbol{\alpha}=\left(\alpha_{1}, \ldots, \alpha_{n}\right) \in \mathbb{F}_{q}^{n}$ with $\alpha_{i} \neq \alpha_{j}(i \neq j)$ and $\boldsymbol{v}=$ $\left(v_{1}, \ldots, v_{n}\right) \in\left(\mathbb{F}_{q}^{*}\right)^{n}$. The generalized Roth-Lempel (in short, GRL) code $\mathrm{GRL}_{k}(\boldsymbol{\alpha}, \boldsymbol{v},\boldsymbol{A}_{\ell\times \ell})$ is defined as
	$$
	\mathrm{GRL}_{k}(\boldsymbol{\alpha}, \boldsymbol{v},\boldsymbol{A}_{\ell\times \ell})\triangleq\left\{\left(v_{1} f\left(\alpha_{1}\right), \ldots, v_{n} f\left(\alpha_{n}\right),\boldsymbol{\beta}\right) | f(x) \in \mathbb{F}_{q}^{k}[x]\right\},
	$$
	where $\boldsymbol{A}_{\ell\times \ell}=(a_{ij})_{\ell\times \ell}\in\mathrm{GL}_{\ell}\left(\mathbb{F}_{q}\right)$ and 
	$$\begin{aligned}  \boldsymbol{\beta}=&\left(f_{k-\ell},\ldots,f_{k-1}\right)\boldsymbol{A}_{\ell\times \ell}\\
		=&\left(a_{11}f_{k-\ell}+a_{21}f_{k-(\ell-1)}+\cdots+a_{\ell1}f_{k-1},\ldots,a_{1\ell}f_{k-\ell}+a_{2l}f_{k-(\ell-1)}+\cdots+a_{\ell\ell}f_{k-1}\right).
	\end{aligned}$$
\end{definition}
\begin{lemma}\label{monomiallyequivalent}{\rm(\cite{nonGRS8}, Definition 3)}
Let $\mathcal{C}_{1}$ and $\mathcal{C}_{2}$ be two linear codes  of the same code length over $\mathbb{F}_{q}$, and let $\boldsymbol{M}$ be a generator matrix of $\mathcal{C}_{1}$. Then $\mathcal{C}_{1}$ and $\mathcal{C}_{2}$ are monomially equivalent if and only if there exists a monomial matrix $\boldsymbol{D}$ such that $\boldsymbol{MD}$ is a generator matrix of $\mathcal{C}_{2}$.
\end{lemma}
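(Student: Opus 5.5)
The plan is to unwind the definition of monomial equivalence and use only linear algebra. I take the standard definition: $\mathcal{C}_{1}$ and $\mathcal{C}_{2}$ of length $N$ over $\mathbb{F}_{q}$ are monomially equivalent if there is an $N\times N$ monomial matrix $\boldsymbol{D}$ (a permutation matrix times an invertible diagonal matrix) with
$$\mathcal{C}_{2}=\mathcal{C}_{1}\boldsymbol{D}=\left\{\boldsymbol{c}\boldsymbol{D}\mid \boldsymbol{c}\in\mathcal{C}_{1}\right\}.$$
So the statement reduces to: for a generator matrix $\boldsymbol{M}$ of $\mathcal{C}_{1}$, we have $\mathcal{C}_{2}=\mathcal{C}_{1}\boldsymbol{D}$ if and only if $\boldsymbol{M}\boldsymbol{D}$ generates $\mathcal{C}_{2}$. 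I will prove the two directions separately.

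For the \emph{only if} direction, suppose $\mathcal{C}_{2}=\mathcal{C}_{1}\boldsymbol{D}$ and let $\boldsymbol{M}$ be $k\times N$ with $k=\dim\mathcal{C}_{1}$.
\begin{itemize}
\item Every codeword of $\mathcal{C}_{1}$ has the form $\boldsymbol{u}\boldsymbol{M}$ with $\boldsymbol{u}\in\mathbb{F}_{q}^{k}$. Hence
$$\mathcal{C}_{2}=\left\{\boldsymbol{u}(\boldsymbol{M}\boldsymbol{D})\mid \boldsymbol{u}\in\mathbb{F}_{q}^{k}\right\},$$
the row space of $\boldsymbol{M}\boldsymbol{D}$.
\item Since $\boldsymbol{D}$ is invertible, $\rank(\boldsymbol{M}\boldsymbol{D})=\rank(\boldsymbol{M})=k$. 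So the rows of $\boldsymbol{M}\boldsymbol{D}$ are linearly independent, and $\boldsymbol{M}\boldsymbol{D}$ is a generator matrix of $\mathcal{C}_{2}$.
\end{itemize}

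For the \emph{if} direction, suppose $\boldsymbol{M}\boldsymbol{D}$ is a generator matrix of $\mathcal{C}_{2}$. Then $\mathcal{C}_{2}$ is the row space of $\boldsymbol{M}\boldsymbol{D}$, which by the same computation equals $\mathcal{C}_{1}\boldsymbol{D}$. Hence $\mathcal{C}_{1}$ and $\mathcal{C}_{2}$ are monomially equivalent via $\boldsymbol{D}$.

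It is also worth noting that the criterion does not depend on which generator matrix of $\mathcal{C}_{1}$ is chosen. If $\boldsymbol{M}'=\boldsymbol{P}\boldsymbol{M}$ with $\boldsymbol{P}\in\mathrm{GL}_{k}(\mathbb{F}_{q})$, then $\boldsymbol{M}'\boldsymbol{D}=\boldsymbol{P}(\boldsymbol{M}\boldsymbol{D})$ has the same row space as $\boldsymbol{M}\boldsymbol{D}$.

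There is no real obstacle here. The only point needing care is to fix which definition of monomial equivalence is in force and to use invertibility of $\boldsymbol{D}$ to preserve the rank. If one instead defines equivalence coordinatewise, as $\mathcal{C}_{2}=\{(\lambda_{1}c_{\sigma(1)},\ldots,\lambda_{N}c_{\sigma(N)})\}$ with $\lambda_{i}\in\mathbb{F}_{q}^{*}$ and $\sigma$ a permutation, I would first check that this map is exactly $\boldsymbol{c}\mapsto\boldsymbol{c}\boldsymbol{D}$ for a suitable monomial $\boldsymbol{D}$. Then the argument above applies unchanged.
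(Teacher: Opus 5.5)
Your proof is correct. The paper gives no argument for this statement. It imports it as Definition 3 of \cite{nonGRS8}, so in the paper the generator-matrix criterion simply \emph{is} the definition of monomial equivalence, and the statement holds by fiat.

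Your route starts instead from the codeword-level definition $\mathcal{C}_{2}=\mathcal{C}_{1}\boldsymbol{D}$ and shows that the two formulations agree. It uses only that $\mathcal{C}_{1}$ is the row space of $\boldsymbol{M}$ and that $\boldsymbol{D}$ is invertible, so $\boldsymbol{M}\boldsymbol{D}$ keeps full row rank. Your side remark that the criterion does not depend on the chosen $\boldsymbol{M}$ is exactly what the paper's definition needs in order to be well posed.

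Your version also makes clear what the equivalence in Remark~\ref{GRLmonomial} actually transfers; there $\boldsymbol{G}_{\boldsymbol{v},n}=\boldsymbol{G}_{\boldsymbol{1},n}\,\mathrm{diag}(v_{1},\ldots,v_{n},1,\ldots,1)$. It transfers the weight distribution of the code and, via $(\mathcal{C}_{1}\boldsymbol{D})^{\perp_{E}}=\mathcal{C}_{1}^{\perp_{E}}(\boldsymbol{D}^{-1})^{T}$, of its dual, hence minimum distance and MDS/AMDS/NMDS-type properties.

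It does not in general transfer Gram-matrix properties. We have $(\boldsymbol{M}\boldsymbol{D})(\boldsymbol{M}\boldsymbol{D})^{T}=\boldsymbol{M}\boldsymbol{D}\boldsymbol{D}^{T}\boldsymbol{M}^{T}$, and $\boldsymbol{D}\boldsymbol{D}^{T}$ is a diagonal matrix of nonzero squares, not the identity. So this lemma alone does not justify the reductions to $\boldsymbol{v}=\boldsymbol{1}$ that the paper later makes in its LCD theorems.
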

\begin{remark}\label{GRLmonomial}
By Definition \ref{GRLdefinition}, the code $\mathrm{GRL}_{k}(\boldsymbol{\alpha}, \boldsymbol{v},\boldsymbol{A}_{\ell\times \ell})$ and the code $\mathrm{GRL}_{k}(\boldsymbol{\alpha}, \boldsymbol{1},\boldsymbol{A}_{l\times l})$ have the generator matrix
\begin{equation}\label{GRLvgeneratormatrix}
\boldsymbol{G}_{\boldsymbol{v},n}=\begin{pmatrix}
	v_{1}&\cdots&v_{n}&0&\cdots&0\\
	v_{1}\alpha_{1}&\cdots&v_{n}\alpha_{n}&0&\cdots&0\\
	\vdots&\ddots&\vdots&\vdots&\ddots&\vdots\\
	v_{1}\alpha_{1}^{k-(\ell+1)}&\cdots&v_{n}\alpha_{n}^{k-(\ell+1)}&0&\cdots&0\\
	v_{1}\alpha_{1}^{k-\ell}&\cdots&v_{n}\alpha_{n}^{k-\ell}&a_{11}&\cdots&a_{1\ell}\\
	\vdots&\ddots&\vdots&\vdots&\ddots&\vdots\\
	v_{1}\alpha_{1}^{k-1}&\cdots&v_{n}\alpha_{n}^{k-1}&a_{\ell1}&\cdots&a_{\ell\ell}\\
\end{pmatrix}
\end{equation}
and 
\begin{equation}\label{GRL1generatormatrix}
\boldsymbol{G}_{\boldsymbol{1},n}=\begin{pmatrix}
	1&\cdots&1&0&\cdots&0\\
	\alpha_{1}&\cdots&\alpha_{n}&0&\cdots&0\\
	\vdots&\ddots&\vdots&\vdots&\ddots&\vdots\\
	\alpha_{1}^{k-(\ell+1)}&\cdots&\alpha_{n}^{k-(\ell+1)}&0&\cdots&0\\
	\alpha_{1}^{k-\ell}&\cdots&\alpha_{n}^{k-\ell}&a_{11}&\cdots&a_{1\ell}\\
	\vdots&\ddots&\vdots&\vdots&\ddots&\vdots\\
	\alpha_{1}^{k-1}&\cdots&\alpha_{n}^{k-1}&a_{\ell1}&\cdots&a_{\ell\ell}\\
\end{pmatrix},
\end{equation}
respectively. And so by Lemma \ref{monomiallyequivalent}, the code $\mathrm{GRL}_{k}(\boldsymbol{\alpha}, \boldsymbol{v},\boldsymbol{A}_{\ell\times \ell})$ and the code $\mathrm{GRL}_{k}(\boldsymbol{\alpha}, \boldsymbol{1},\boldsymbol{A}_{l\times l})$ are monomially equivalent.
\end{remark}

The following Lemma $\ref{LCDequivalent}$ gives a characterization for a linear code to be an Euclidean LCD code or a Hermitian LCD code.
\begin{lemma}\label{LCDequivalent} {\rm(\cite{LCD1}, Proposition 2)} If $\boldsymbol{G}$ is a generator matrix for the $[n,k]$ linear code $\mathcal{C}$ over $\mathbb{F}_{q}$(resp.$\mathbb{F}_{q^2}$), then $\mathcal{C}$ is an Euclidean (resp. Hermitian) LCD code if and only if the $k\times k$ matrix $\boldsymbol{G}\boldsymbol{G}^{T}$ (resp. $\overline{\boldsymbol{G}}$) is nonsingular over $\mathbb{F}_{q}$(resp.$\mathbb{F}_{q^2}$).
\end{lemma}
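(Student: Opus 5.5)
We prove the two directions of Lemma \ref{LCDequivalent} together, treating the Euclidean and Hermitian cases in parallel. Write $\langle\cdot,\cdot\rangle$ for either $\langle\cdot,\cdot\rangle_{E}$ or $\langle\cdot,\cdot\rangle_{H}$. Let $\boldsymbol{M}=\boldsymbol{G}\boldsymbol{G}^{T}$ in the Euclidean case and $\boldsymbol{M}=\boldsymbol{G}\overline{\boldsymbol{G}}^{T}$ in the Hermitian case; here $\overline{\boldsymbol{G}}$ is obtained by raising every entry to the $q$-th power, which is how we read the garbled ``resp. $\overline{\boldsymbol{G}}$'' in the statement. Since the rows of $\boldsymbol{G}$ form a basis of $\mathcal{C}$, every codeword is $\boldsymbol{x}=\boldsymbol{u}\boldsymbol{G}$ for a unique row vector $\boldsymbol{u}$. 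Moreover, $\boldsymbol{x}\in\mathcal{C}^{\perp}$ if and only if $\boldsymbol{x}$ is orthogonal to every row of $\boldsymbol{G}$. In the Euclidean case this reads $\boldsymbol{x}\boldsymbol{G}^{T}=\boldsymbol{0}$, and in the Hermitian case $\boldsymbol{x}\overline{\boldsymbol{G}}^{T}=\boldsymbol{0}$. Both are immediate from the definitions of $\mathcal{C}^{\perp_{E}}$ and $\mathcal{C}^{\perp_{H}}$ and from sesquilinearity, i.e.\ linearity in the second argument up to the Frobenius map.

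\textbf{The hull as a kernel.} Combining the two facts, $\boldsymbol{u}\boldsymbol{G}\in\hull(\mathcal{C})$ if and only if $\boldsymbol{u}\boldsymbol{M}=\boldsymbol{0}$. So the linear map $\boldsymbol{u}\mapsto\boldsymbol{u}\boldsymbol{G}$ restricts to a bijection from the left kernel of $\boldsymbol{M}$ onto $\hull(\mathcal{C})$. This bijection is linear over $\mathbb{F}_{q}$, respectively $\mathbb{F}_{q^2}$. In the Hermitian case, $\mathcal{C}^{\perp_{H}}$ is indeed an $\mathbb{F}_{q^2}$-subspace, since the form is linear in its first argument. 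Injectivity of the map follows because $\boldsymbol{G}$ has full row rank $k$.

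\textbf{Dimension formula and conclusion.} It follows that
\[
\dim\hull(\mathcal{C})=k-\rank(\boldsymbol{M}).
\]
Hence $\hull(\mathcal{C})=\{\boldsymbol{0}\}$ if and only if $\rank(\boldsymbol{M})=k$, i.e.\ if and only if the $k\times k$ matrix $\boldsymbol{M}$ is nonsingular. This is exactly the LCD criterion in both the Euclidean and Hermitian settings. Nothing here depends on the choice of generator matrix: replacing $\boldsymbol{G}$ by $\boldsymbol{P}\boldsymbol{G}$ with $\boldsymbol{P}$ invertible changes $\boldsymbol{M}$ to $\boldsymbol{P}\boldsymbol{M}\boldsymbol{P}^{T}$, respectively $\boldsymbol{P}\boldsymbol{M}\overline{\boldsymbol{P}}^{T}$, which has the same rank.

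The only step needing care is the Hermitian orthogonality condition: we must verify that $\langle\boldsymbol{x},\boldsymbol{g}_i\rangle_{H}=\sum_j x_j g_{ij}^{q}$ is exactly the $i$-th entry of $\boldsymbol{x}\overline{\boldsymbol{G}}^{T}$, so the conjugation sits on $\boldsymbol{G}$ and not on $\boldsymbol{u}$. We expect this to be the main point to check rather than a real obstacle. The formula $\dim\hull(\mathcal{C})=k-\rank(\boldsymbol{M})$ obtained along the way is also what we will use later for the small-dimensional hull results.
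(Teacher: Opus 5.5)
Your argument is correct and complete. The paper gives no proof of this lemma; it imports it from \cite{LCD1} (Massey's criterion and its Hermitian analogue), and the standard proof of that result is exactly your identification of $\hull(\mathcal{C})$ with the image of the left kernel of $\boldsymbol{G}\boldsymbol{G}^{T}$ (resp.\ $\boldsymbol{G}\overline{\boldsymbol{G}}^{T}$) under $\boldsymbol{u}\mapsto\boldsymbol{u}\boldsymbol{G}$.

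Your reading of the garbled ``resp.\ $\overline{\boldsymbol{G}}$'' as $\boldsymbol{G}\overline{\boldsymbol{G}}^{T}$ is the intended one; it is the matrix the paper actually computes in the proof of Theorem~\ref{HermitianLCDRL1}. Your by-product $\dim\hull(\mathcal{C})=k-\rank(\boldsymbol{G}\boldsymbol{G}^{T})$ (resp.\ $k-\rank(\boldsymbol{G}\overline{\boldsymbol{G}}^{T})$) is the content of Lemmas~\ref{Ehull} and~\ref{Hhull}, since $\hull(\mathcal{C})=\hull(\mathcal{C}^{\perp})$.
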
 

The following Lemma $\ref{Fq*sum}$ is very important for our constructions.
\begin{lemma}\label{Fq*sum}\rm(\cite{LCD1})
Let $\mathbb{F}_{q}^{*}=\langle\gamma\rangle$(resp. $\mathbb{F}_{q^2}^{*}=\langle\gamma\rangle$), $s$ be a positive integer with $s\mid q-1$(resp. $s\mid q^2-1$), and $\alpha_{i}=\gamma^{\frac{q-1}{s}i}$ (resp. $\alpha_{i}=\gamma^{\frac{q^2-1}{s}i}$) for $1\leq i\leq k$, then for any integer $t$ and $\beta\in\mathbb{F}_{q}^{*}$(resp. $\beta\in\mathbb{F}_{q^2}^{*}$), we have
$$\sum\limits_{i=1}^{s}\left(\beta\alpha_{i}\right)^{t}=\begin{cases}
	\beta^{t}s,&\text{if}\ s\mid t;\\
	0,&\text{otherwise}.
\end{cases}$$ 
 \end{lemma}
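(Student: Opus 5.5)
The statement to prove is Lemma \ref{Fq*sum}: for the $s$-th roots of unity $\alpha_i$, $\sum_{i=1}^{s}(\beta\alpha_i)^t$ equals $\beta^t s$ when $s\mid t$ and $0$ otherwise. I treat the $\mathbb{F}_q$ case; the $\mathbb{F}_{q^2}$ case is identical with $q$ replaced by $q^2$. (The hypothesis writes the range as $1\leq i\leq k$, but the sum runs over $1\leq i\leq s$, so I take $\alpha_1,\ldots,\alpha_s$ to be defined.)

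\emph{Reduction.} Set $\omega=\gamma^{\frac{q-1}{s}}$. Since $\gamma$ has order $q-1$, the element $\omega$ has exact multiplicative order $s$, and $\alpha_i=\omega^i$. Pulling out the common factor gives
$$\sum_{i=1}^{s}(\beta\alpha_i)^t=\beta^t\sum_{i=1}^{s}\omega^{it}.$$
This holds for every integer $t$, including negative ones, because $\beta$ and $\omega$ are nonzero.

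\emph{Case $s\mid t$.} Then $\omega^t=1$, so each of the $s$ terms equals $1$ and the sum is $s\cdot 1_{\mathbb{F}_q}$. The total is $\beta^t s$, with $s$ read as an element of $\mathbb{F}_q$.

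\emph{Case $s\nmid t$.} Let $\zeta=\omega^t$. Since $\omega$ has order exactly $s$ and $s\nmid t$, we have $\zeta\neq 1$, while $\zeta^s=(\omega^s)^t=1$. The geometric series formula then gives
$$\sum_{i=1}^{s}\zeta^i=\zeta\cdot\frac{\zeta^s-1}{\zeta-1}=0.$$
Alternatively, $\zeta\sum_i\zeta^i=\sum_i\zeta^i$ because multiplication by $\zeta$ permutes the terms cyclically, and $\zeta\neq1$ forces the sum to vanish.

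There is no real obstacle here. The only point needing care is that $\omega$ has exact order $s$, which is what ensures $\zeta\neq1$ when $s\nmid t$.
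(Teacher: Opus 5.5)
Your proof is correct and is the standard argument for this fact: write the sum as $\beta^t\sum_{i=1}^{s}\omega^{it}$, where $\omega=\gamma^{(q-1)/s}$ has exact order $s$, and then apply the geometric-series (or cyclic-permutation) argument. The paper gives no proof of its own and simply imports the lemma from \cite{LCD1}; you were also right to read the index range $1\leq i\leq k$ in the hypothesis as a typo for $1\leq i\leq s$.
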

 
The following Lemma $\ref{nonGRS}$ provides a method to determine whether a linear code is non-RS type.
\begin{lemma}\label{nonGRS}
	{\rm(\cite{Roth1985}, Theorem 1)} Let $\boldsymbol{\alpha}=\left(\alpha_{1},\ldots\alpha_{n}\right) \in \mathbb{F}_{q}^{n}$ with $\alpha_{i} \neq \alpha_{j}(i \neq j)$. Suppose that $\boldsymbol{B}$ is a $k \times(n-k)$ matrix and $\boldsymbol{G}=\left(\boldsymbol{E}_{k} |\boldsymbol{B}\right)$ is a $k \times n$ matrix over $\mathbb{F}_{q}$, where $\boldsymbol{E}_{k}$ is the $k \times k$ identity matrix. Then $\boldsymbol{G}$ generates the $\mathrm{RS}$ code $\operatorname{RS}_{k}(\boldsymbol{\alpha})$ if and only if for $1 \leq i \leq k$ and $1 \leq j \leq n-k$, the $(i, j)$-th entry of $\boldsymbol{B}$ is given by $$\frac{\eta_{k+j} \eta_{i}^{-1}}{\alpha_{k+j}-\alpha_{i}},$$ where
	$\eta_{i}=\prod\limits_{s=1, s \neq i}^{k}\left(\alpha_{i}-\alpha_{s}\right)$ and $ \eta_{k+j}=\prod\limits_{s=1}^{k}\left(\alpha_{k+j}-\alpha_{s}\right).$
\end{lemma}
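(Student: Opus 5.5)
This statement is quoted from Roth and Seroussi and is used as a black box. If a justification is wanted, I would argue through the uniqueness of systematic generator matrices together with Lagrange interpolation.

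First, a linear code of dimension $k$ whose first $k$ coordinates form an information set has exactly one generator matrix of the form $(\boldsymbol{E}_k\mid\boldsymbol{B})$. So $\boldsymbol{G}$ generates $\mathrm{RS}_k(\boldsymbol{\alpha})$ if and only if $\boldsymbol{B}$ equals the systematic part of the RS code. The first $k$ coordinates do form an information set: any $k$ columns of the Vandermonde generator matrix are independent because the $\alpha_i$ are distinct.

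Next, I compute that systematic form. The $i$-th row must be the codeword $(f_i(\alpha_1),\ldots,f_i(\alpha_n))$, where $\deg f_i<k$ and $f_i(\alpha_s)=\delta_{is}$ for $1\le s\le k$. By Lagrange interpolation this forces
$$f_i(x)=\prod_{s\ne i,\,s\le k}\frac{x-\alpha_s}{\alpha_i-\alpha_s}.$$
Evaluating at $\alpha_{k+j}$ gives
$$f_i(\alpha_{k+j})=\frac{\prod_{s\le k}(\alpha_{k+j}-\alpha_s)}{(\alpha_{k+j}-\alpha_i)\,\eta_i}=\frac{\eta_{k+j}\eta_i^{-1}}{\alpha_{k+j}-\alpha_i},$$
which is exactly the entry stated in the lemma.

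Both directions of the equivalence then follow from the uniqueness noted above. No step is a real obstacle; the only point needing care is that uniqueness claim, namely that the first $k$ coordinates form an information set.
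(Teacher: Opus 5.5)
Your proposal is correct. The paper gives no proof of this lemma, since it is quoted as Theorem~1 of the cited reference. Your argument, uniqueness of the systematic generator matrix combined with the Lagrange basis polynomials, is the standard proof, and it is the same computation the paper carries out in the proof of Theorem~\ref{nonRS}: there the unnormalized polynomials $f_i(x)=\prod_{j\neq i}(x-\alpha_j)$ give $f_i(\alpha_j)=\eta_j\delta_{ij}$ for $j\le k$ and $f_i(\alpha_j)=\eta_j/(\alpha_j-\alpha_i)$ for $j>k$, and factoring out $\mathrm{diag}(\eta_1,\ldots,\eta_k)$ yields $(\boldsymbol{E}_k\mid\boldsymbol{B})$.
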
 
 
 The following Lemma \ref{nonzerosFqsolution} provides an explicit formula for the number of $k$-tuples $\left(x_1,\ldots,x_{k}\right)\in\mathbb{F}_{q}^{k}$ such that $x_{1}^{2}+\cdots+x_{k}^{2}=c\in\mathbb{F}_{q}$ holds.
\begin{lemma}\label{nonzerosFqsolution}\rm(\cite{FengRQbook2025})
For any element $c\in\mathbb{F}_{q}$, let $v(c)=\begin{cases}
q-1,&c=0;\\
-1,&c\in\mathbb{F}_{q}^{*}
\end{cases}$, and $N_{f}\left(k,c,q\right)$ denote the number of $k$-tuples $\left(x_1,\ldots,x_{k}\right)\in\mathbb{F}_{q}^{k}\backslash\left\{\boldsymbol{0}\right\}$ such that $x_{1}^{2}+\cdots+x_{k}^{2}=c$ holds, then
$$
N_{f}\left(k,c,q\right)=\begin{cases}
q^{n-1}+v(b)q^{\frac{k}{2}-1}\eta\left((-1)^{\frac{k}{2}}\right),&2\mid k;\\
q^{n-1}+v(b)q^{\frac{k-1}{2}}\eta\left((-1)^{\frac{k-1}{2}}c\right),&2\nmid k.\\
\end{cases}
$$
\end{lemma}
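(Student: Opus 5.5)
The plan is to use the standard additive-character (Gauss sum) method. Let $\psi$ be the canonical additive character of $\mathbb{F}_{q}$, let $\eta$ be the quadratic character (with $\eta(0)=0$), and put $G=\sum_{x\in\mathbb{F}_q}\psi(x^2)$. I will first count all solutions $N(k,c)$ in $\mathbb{F}_q^k$, including $\boldsymbol{0}$. I read the displayed formula as having $q^{k-1}$ in place of $q^{n-1}$ and $v(c)$ in place of $v(b)$. The passage to $\mathbb{F}_q^k\backslash\{\boldsymbol{0}\}$ will be handled at the end.

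\emph{Character sum.} Orthogonality of characters gives
$$N(k,c)=\frac1q\sum_{y\in\mathbb{F}_q}\sum_{\boldsymbol{x}\in\mathbb{F}_q^k}\psi\bigl(y(x_1^2+\cdots+x_k^2-c)\bigr)=q^{k-1}+\frac1q\sum_{y\in\mathbb{F}_q^*}\psi(-yc)\,G(y)^k,$$
where $G(y)=\sum_x\psi(yx^2)$. I will use two classical facts about quadratic Gauss sums: $G(y)=\eta(y)G$ for $y\neq0$, and $G^2=\eta(-1)q$.

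\emph{Case $2\mid k$.} Here $G(y)^k=G^k=\eta(-1)^{k/2}q^{k/2}=\eta\bigl((-1)^{k/2}\bigr)q^{k/2}$, which does not depend on $y$. The remaining sum is $\sum_{y\neq0}\psi(-yc)$, which equals $q-1$ if $c=0$ and $-1$ otherwise, i.e. $v(c)$. This yields
$$N(k,c)=q^{k-1}+v(c)\,q^{\frac k2-1}\,\eta\bigl((-1)^{k/2}\bigr).$$

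\emph{Case $2\nmid k$.} Here $G(y)^k=\eta(y)G^k=\eta(y)\,\eta(-1)^{\frac{k-1}{2}}q^{\frac{k-1}{2}}G$. The remaining sum is $\sum_{y\neq0}\eta(y)\psi(-yc)$. For $c=0$ it vanishes, because $\eta$ is a nontrivial multiplicative character. For $c\neq0$ it equals $\eta(-c)G$, by substituting $y\mapsto -y/c$. Using $G^2=\eta(-1)q$, one obtains
$$N(k,c)=q^{k-1}+q^{\frac{k-1}{2}}\,\eta\bigl((-1)^{\frac{k-1}{2}}c\bigr).$$
This is exactly the stated formula, since $v(c)=-1$ for $c\neq0$ contributes $-\eta(-1)\eta(-c)=\ldots$; I will simply check the signs directly. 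For $c=0$ the extra term is $0$, which is consistent with the convention $\eta(0)=0$ in the displayed expression.

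\emph{Main obstacle: conventions.} The delicate part is not the computation itself but reconciling it with the statement as printed, namely the sign bookkeeping in the odd case and the treatment of the zero vector. For $c\neq0$, the vector $\boldsymbol{0}$ is not a solution, so $N_f(k,c,q)=N(k,c)$. For $c=0$, excluding $\boldsymbol{0}$ subtracts $1$. I would therefore state the $c=0$ case carefully: either the formula is meant to count all of $\mathbb{F}_q^k$, or one subtracts $1$. I expect this convention to match how the lemma is applied later in the paper.
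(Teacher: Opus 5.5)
Your Gauss-sum computation is correct in both parities. The gap is the step where you say the odd case is ``exactly the stated formula''. That claim is false, and the sign check you postpone would show it. For $2\nmid k$ and $c\neq0$ you derived
\[
N(k,c)=q^{k-1}+q^{\frac{k-1}{2}}\,\eta\bigl((-1)^{\frac{k-1}{2}}c\bigr).
\]
The printed odd case carries the factor $v(c)$, which equals $-1$ for $c\neq 0$, so it asserts $q^{k-1}-q^{\frac{k-1}{2}}\eta\bigl((-1)^{\frac{k-1}{2}}c\bigr)$. Your reconciliation via $-\eta(-1)\eta(-c)$ would need $-\eta(c)=\eta(c)$, which is impossible for $c\neq0$.

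Your formula is the true one. Take $q=3$, $k=1$, $c=1$: the equation $x_1^2=1$ has the two solutions $\pm1$. Your formula gives $1+1=2$, while the printed one gives $1-1=0$. So the odd case of the statement is false as written. The classical result (Lidl--Niederreiter, Theorems 6.26--6.27) has no factor $v(c)$ in the odd case. Dropping that factor changes nothing at $c=0$, since $\eta(0)=0$.

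The zero vector is likewise not a convention that can go either way. With $N_f$ counting tuples in $\mathbb{F}_q^k\backslash\{\boldsymbol{0}\}$, the printed formulas are off by one at $c=0$:
\begin{itemize}
\item for $k=1$, $c=0$ there are no nonzero solutions, but the formula gives $1$;
\item for $q=5$, $k=2$, $c=0$ there are $8$ nonzero solutions, but the formula gives $9$.
\end{itemize}
What your argument actually proves, and what you should state as the lemma, is the following. The number of solutions in all of $\mathbb{F}_q^k$ is $q^{k-1}+v(c)q^{\frac k2-1}\eta\bigl((-1)^{\frac k2}\bigr)$ for even $k$, and $q^{k-1}+q^{\frac{k-1}{2}}\eta\bigl((-1)^{\frac{k-1}{2}}c\bigr)$ for odd $k$. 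Then $N_f(k,c,q)$ equals this count minus $1$ when $c=0$, and equals it otherwise.

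The paper gives no proof here, only a citation. In its one application (Theorem \ref{EHull1conut}) one has $c=\gamma^{\delta k+\frac{q-1}{2}}k=-\gamma^{\delta k}k\neq 0$, because $p\nmid k$. So the zero-vector correction is irrelevant there. However, evaluating $N_f$ by the printed closed form gives the wrong sign whenever $\ell$ is odd.
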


In particular, for $\left(x_1,\ldots,x_{k}\right)\in\left(\mathbb{F}_{q}^{*}\right)^{k}$, Feng et al. \cite{FengRQ2025} gave the following Lemma \ref{nonzerosFq*solution}.
\begin{lemma}\label{nonzerosFq*solution}\rm (\cite{FengRQ2025}, Theorems 2.6-2.7)
For any element $c\in\mathbb{F}_{q}$, let $N_{f}^{*}\left(k,c,q\right)$ denote the number of $k$-tuples $\left(x_1,\ldots,x_{k}\right)\in\left(\mathbb{F}_{q}^{*}\right)^{k}$ such that $x_{1}^{2}+\cdots+x_{k}^{2}=c$ holds, then the following statements are true,

$(1)$ if $q \equiv 1 \pmod{4}$, then
$$
N_{f}^{*}(k, c, q) =
\begin{cases}
	\dfrac{2(q-1)^k + (q-1)\left((\sqrt{q}-1)^k + (-1-\sqrt{q})^k\right)}{2q}, & \text{if } c = 0; \\
	\dfrac{2(q-1)^k + (\sqrt{q}-1)^{k+1} + (-1-\sqrt{q})^{k+1}}{2q}, & \text{if } c = a^2 \text{ for some } a \in \mathbb{F}_q^*; \\
	\dfrac{2(q-1)^k + (1-q)\left((\sqrt{q}-1)^{k-1} + (-1-\sqrt{q})^{k-1}\right)}{2q}, & \text{otherwise};
\end{cases}
$$

$(2)$ if $q \equiv 3 \pmod{4}$, then
$$
N_{f}^{*}(k, c, q) =
\begin{cases}
	\dfrac{2(q-1)^k + (q-1)\left((\sqrt{-q}-1)^k + (-1-\sqrt{-q})^k\right)}{2q}, & \text{if } c = 0; \\
	\dfrac{2(q-1)^k + (q+1)\left((\sqrt{-q}-1)^{k-1} + (-1-\sqrt{-q})^{k-1}\right)}{2q}, & \text{if } c = a^2 \text{ for some } a \in \mathbb{F}_q^*; \\
	\dfrac{2(q-1)^k + (\sqrt{-q}-1)^{k+1} + (-1-\sqrt{-q})^{k+1}}{2q}, & \text{otherwise}.
\end{cases}
$$
\end{lemma}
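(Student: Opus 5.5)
The plan is to count solutions with additive characters and reduce everything to the quadratic Gauss sum. Let $\psi$ be the canonical additive character of $\mathbb{F}_{q}$, let $\eta$ be the quadratic character, and let $G=\sum_{x\in\mathbb{F}_{q}}\psi(x^{2})=\sum_{x\in\mathbb{F}_{q}^{*}}\eta(x)\psi(x)$ be the quadratic Gauss sum. I will use the classical facts $G^{2}=\eta(-1)q$ and $\sum_{x\in\mathbb{F}_{q}^{*}}\psi(yx^{2})=\eta(y)G-1$ for $y\in\mathbb{F}_{q}^{*}$. Orthogonality of characters gives
$$N_{f}^{*}(k,c,q)=\frac{1}{q}\sum_{y\in\mathbb{F}_{q}}\ \sum_{\boldsymbol{x}\in(\mathbb{F}_{q}^{*})^{k}}\psi\Bigl(y\bigl(x_{1}^{2}+\cdots+x_{k}^{2}-c\bigr)\Bigr)=\frac{1}{q}\Bigl((q-1)^{k}+\sum_{y\in\mathbb{F}_{q}^{*}}\psi(-yc)\bigl(\eta(y)G-1\bigr)^{k}\Bigr).$$

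Next I split the sum over $y\neq 0$ according to whether $y$ is a square or a nonsquare. Define
$$A=\sum_{y\ \text{square}}\psi(-yc),\qquad B=\sum_{y\ \text{nonsquare}}\psi(-yc).$$
Then
$$N_{f}^{*}(k,c,q)=\frac{1}{q}\Bigl((q-1)^{k}+A(G-1)^{k}+B(-G-1)^{k}\Bigr).$$
To evaluate $A$ and $B$:
\begin{itemize}
\item If $c=0$, then $A=B=\frac{q-1}{2}$.
\item If $c\neq 0$, writing squares as $x^{2}$ with $x\in\mathbb{F}_{q}^{*}$ (each hit twice) gives $A=\frac{1}{2}\bigl(\eta(-c)G-1\bigr)$.
\item If $c\neq 0$, then $B=-1-A=\frac{1}{2}\bigl(-\eta(-c)G-1\bigr)$.
\end{itemize}

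Substituting these values yields the three cases.
\begin{itemize}
\item When $c=0$ we directly get $\frac{2(q-1)^{k}+(q-1)\bigl((G-1)^{k}+(-G-1)^{k}\bigr)}{2q}$.
\item When $\eta(-c)=1$ we get $\frac{2(q-1)^{k}+(G-1)^{k+1}+(-G-1)^{k+1}}{2q}$.
\item When $\eta(-c)=-1$, factor out $(G-1)(-G-1)=1-G^{2}=1-\eta(-1)q$. This gives $\frac{2(q-1)^{k}+(1-\eta(-1)q)\bigl((G-1)^{k-1}+(-G-1)^{k-1}\bigr)}{2q}$.
\end{itemize}
All three expressions are symmetric under $G\mapsto -G$, so the sign ambiguity of $G$ is irrelevant. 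It therefore suffices to know $G=\pm\sqrt{q}$ or $G=\pm\sqrt{-q}$.

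Finally I translate $\eta(-c)$ into "$c$ is a square". If $q\equiv 1\pmod 4$, then $\eta(-1)=1$, so $\eta(-c)=\eta(c)$ and $G=\pm\sqrt{q}$. This gives part $(1)$ verbatim, with $1-q$ in the nonsquare case. If $q\equiv 3\pmod 4$, then $\eta(-1)=-1$, so a square $c$ has $\eta(-c)=-1$ and $G=\pm\sqrt{-q}$. The roles of the two $c\neq 0$ cases are then swapped and the factor becomes $1+q$, which is exactly part $(2)$. The only delicate points are the evaluation of $A$, including the factor $\frac12$ from the two square roots, and keeping track of this swap when $q\equiv 3\pmod 4$. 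The rest is routine algebra.
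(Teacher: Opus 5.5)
Your argument is correct and complete. The following steps all check out:
\begin{itemize}
\item the orthogonality step;
\item the per-coordinate factor $\sum_{x\in\mathbb{F}_q^{*}}\psi(yx^{2})=\eta(y)G-1$;
\item the values of $A$ and $B$, including the factor $\tfrac12$;
\item the identity $(G-1)(-G-1)=1-\eta(-1)q$;
\item the exchange of the two $c\neq 0$ cases when $\eta(-1)=-1$.
\end{itemize}
The invariance under $G\mapsto -G$ also correctly makes the sign of the Gauss sum irrelevant.

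The paper gives no proof of its own for this lemma. It simply imports the result from Feng et al., so there is nothing in the text to match step by step.

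The obvious alternative would be inclusion--exclusion on the support, $N_f^{*}(k,c,q)=\sum_{j=0}^{k}(-1)^{k-j}\binom{k}{j}N(j,c)$. Here $N(j,c)$ counts solutions in all of $\mathbb{F}_q^{j}$ and would be taken from the classical Lidl--Niederreiter formulas. That route reuses known results, but it forces a split by the parity of $j$ and a binomial resummation to recover the $(\pm G-1)^{k}$ terms. It also needs the correct form of those counts, which Lemma~\ref{nonzerosFqsolution} as printed does not quite give.

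Your computation is really the same inclusion--exclusion, performed before summing over $y$: expanding $(\eta(y)G-1)^{k}$ binomially is exactly that. This is why the closed forms drop out at once, and why you need only $G^{2}=\eta(-1)q$ rather than the explicit value of the Gauss sum.
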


For an $\left[n,k,d\right]$ linear code, the following Lemmas \ref{Ehull}-\ref{EAQECCsHhull} provide a method for calculating the dimension of the Euclidean hull or the Hermitian hull, and constructing EAQECCs.
\begin{lemma}\label{Ehull}\rm (\cite{EAQECCHull}, Proposition 3.1)
	Let $\mathcal{C}$ be an $\left[n,k,d\right]_{q}$ linear code with the generator matrix $\boldsymbol{G}$ and the parity
	check matrix $\boldsymbol{H}$. Then, we have
	$$\rank\left(\boldsymbol{H}\boldsymbol{H}^{\perp_{E}}\right)=n-k-\dim\left(\mathrm{Hull}_{E}\left(\mathcal{C}\right)\right)$$
	and 
	$$\rank\left(\boldsymbol{G}\boldsymbol{G}^{\perp_{E}}\right)=k-\dim\left(\mathrm{Hull}_{E}\left(\mathcal{C}^{\perp_{E}}\right)\right).$$
\end{lemma}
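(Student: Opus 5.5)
The plan is to interpret each Gram matrix as a linear map and identify its kernel with the hull. I will use the convention that $\boldsymbol{H}\boldsymbol{H}^{\perp_E}$ and $\boldsymbol{G}\boldsymbol{G}^{\perp_E}$ denote $\boldsymbol{H}\boldsymbol{H}^{T}$ and $\boldsymbol{G}\boldsymbol{G}^{T}$. The argument also uses one observation: $(\mathcal{C}^{\perp_E})^{\perp_E}=\mathcal{C}$ in finite dimension. Hence
$$\mathrm{Hull}_E(\mathcal{C}^{\perp_E})=\mathcal{C}^{\perp_E}\cap\mathcal{C}=\mathrm{Hull}_E(\mathcal{C}),$$
so the two hull dimensions in the statement coincide, and it is enough to prove one rank formula for an arbitrary code with a full-rank generator matrix.

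First I treat $\boldsymbol{G}$, a $k\times n$ matrix of rank $k$. Consider the linear map $\varphi:\mathbb{F}_q^{k}\to\mathbb{F}_q^{k}$, $\boldsymbol{x}\mapsto \boldsymbol{x}\boldsymbol{G}\boldsymbol{G}^{T}$. By rank--nullity, $\rank(\boldsymbol{G}\boldsymbol{G}^{T})=k-\dim\ker\varphi$.

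Next I identify $\ker\varphi$. The condition $\boldsymbol{x}\in\ker\varphi$ says $\langle \boldsymbol{x}\boldsymbol{G},\boldsymbol{g}_i\rangle_E=0$ for every row $\boldsymbol{g}_i$ of $\boldsymbol{G}$. Since the rows span $\mathcal{C}$, this is equivalent to $\boldsymbol{x}\boldsymbol{G}\in\mathcal{C}^{\perp_E}$. The vector $\boldsymbol{x}\boldsymbol{G}$ always lies in $\mathcal{C}$, so $\boldsymbol{x}\boldsymbol{G}\in\mathrm{Hull}_E(\mathcal{C})$. Conversely, every element of the hull lies in $\mathcal{C}$ and so has the form $\boldsymbol{x}\boldsymbol{G}$ for some $\boldsymbol{x}\in\ker\varphi$. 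Because $\boldsymbol{G}$ has full row rank, $\boldsymbol{x}\mapsto\boldsymbol{x}\boldsymbol{G}$ is injective. It therefore restricts to an isomorphism $\ker\varphi\cong\mathrm{Hull}_E(\mathcal{C})$, which gives
$$\rank(\boldsymbol{G}\boldsymbol{G}^{T})=k-\dim(\mathrm{Hull}_E(\mathcal{C}))=k-\dim(\mathrm{Hull}_E(\mathcal{C}^{\perp_E})).$$

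For $\boldsymbol{H}$, I apply the same argument to $\mathcal{C}^{\perp_E}$, an $[n,n-k]$ code with full-rank generator matrix $\boldsymbol{H}$. This yields $\rank(\boldsymbol{H}\boldsymbol{H}^{T})=n-k-\dim(\mathrm{Hull}_E(\mathcal{C}^{\perp_E}))$, which equals $n-k-\dim(\mathrm{Hull}_E(\mathcal{C}))$ by the observation above.

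No step is a real obstacle. The only points needing care are the full-rank hypothesis, which gives injectivity, and the double-dual identity, which reconciles the hull of $\mathcal{C}$ with that of $\mathcal{C}^{\perp_E}$ as the statement is phrased. The same proof, with $\boldsymbol{G}^{T}$ replaced by $\overline{\boldsymbol{G}}^{T}$, gives the Hermitian analogue.
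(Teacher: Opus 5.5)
Your proof is correct and complete. You identify the left kernel of $\boldsymbol{x}\mapsto\boldsymbol{x}\boldsymbol{G}\boldsymbol{G}^{T}$ with $\mathrm{Hull}_{E}(\mathcal{C})$ via $\boldsymbol{x}\mapsto\boldsymbol{x}\boldsymbol{G}$, and use $(\mathcal{C}^{\perp_E})^{\perp_E}=\mathcal{C}$ to get $\mathrm{Hull}_{E}(\mathcal{C}^{\perp_E})=\mathrm{Hull}_{E}(\mathcal{C})$. Together these give both formulas exactly as the statement phrases them, and the Hermitian version goes through verbatim.

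The paper gives no proof of its own here; the lemma is simply quoted from \cite{EAQECCHull}. The usual alternative argument proceeds in two steps. First, extend a basis of the hull to a basis of $\mathcal{C}$ (resp.\ $\mathcal{C}^{\perp_E}$). The Gram matrix of this adapted generator matrix is then block-diagonal, with a zero block of size $\dim(\mathrm{Hull}_{E}(\mathcal{C}))$. Second, show by a separate orthogonality argument that the complementary block is nonsingular. The passage back to an arbitrary $\boldsymbol{G}$ uses the fact that replacing $\boldsymbol{G}$ by $\boldsymbol{P}\boldsymbol{G}$ with $\boldsymbol{P}$ invertible does not change the rank.

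Your rank--nullity route does both steps at once and needs no adapted basis. It also gives a little more. Instead of invoking injectivity, count the dimension of the preimage of the hull under $\boldsymbol{x}\mapsto\boldsymbol{x}\boldsymbol{G}$. For any $m\times n$ matrix whose rows merely span $\mathcal{C}$, this preimage has dimension $\dim(\mathrm{Hull}_{E}(\mathcal{C}))+(m-k)$, and the extra $m-k$ cancels in rank--nullity. So the formula holds even without the full-row-rank hypothesis you flagged.
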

\begin{lemma}\label{EAQECCsEhull}\rm (\cite{EAQECCHull}, Corollary 3.1)
	Let $\mathcal{C}$ and $\mathcal{C}^{\perp_{E}}$ be the classical linear code and its Euclidean dual code
	with the parameters $\left[n,k,d\right]_{q}$ and $\left[n,k,d^{\perp_{E}}\right]_{q}$, respectively. Then there exist two EAQECCs with the parameters $$\left[\left[n,k-\dim\left(\mathrm{Hull}_{E}\left(\mathcal{C}\right)\right),d,n-k-\dim\left(\mathrm{Hull}_{E}\left(\mathcal{C}\right)\right)\right]\right]_{q}$$ and $$\left[\left[n,n-k-\dim\left(\mathrm{Hull}_{E}\left(\mathcal{C}\right)\right),d^{\perp_{E}},k-\dim\left(\mathrm{Hull}_{E}\left(\mathcal{C}\right)\right)\right]\right]_{q},$$
	respectively. Moreover, if $\mathcal{C}$ is MDS, then the above two EAQECCs are also MDS.
\end{lemma}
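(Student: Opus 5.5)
This lemma is quoted from \cite{EAQECCHull}, so I will derive it from Lemma \ref{Ehull} together with the standard entanglement-assisted stabilizer construction. The construction I have in mind (Wilde--Brun, and Galindo et al.) says: if $\boldsymbol{H}$ is an $(n-k)\times n$ parity check matrix of a classical $[n,k,d]_q$ code $\mathcal{C}$, then there is an EAQECC with parameters $[[n,\,2k-n+c,\,d;\,c]]_q$, where $c=\rank(\boldsymbol{H}\boldsymbol{H}^{T})$. The proof therefore reduces to computing $c$ in terms of the hull and then simplifying the parameters.

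\textbf{First code.} Take $\mathcal{C}$ itself with parity check matrix $\boldsymbol{H}$. Reading $\boldsymbol{H}\boldsymbol{H}^{\perp_E}$ in Lemma \ref{Ehull} as $\boldsymbol{H}\boldsymbol{H}^{T}$, that lemma gives
$$c=n-k-\dim(\mathrm{Hull}_E(\mathcal{C})).$$
The number of logical qudits is then
$$2k-n+c=k-\dim(\mathrm{Hull}_E(\mathcal{C})),$$
and the minimum distance is $d$. This gives the first parameter set.

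\textbf{Second code.} Apply the same construction to $\mathcal{C}^{\perp_E}$, whose parity check matrix is a generator matrix $\boldsymbol{G}$ of $\mathcal{C}$. By Lemma \ref{Ehull},
$$c=\rank(\boldsymbol{G}\boldsymbol{G}^{T})=k-\dim(\mathrm{Hull}_E(\mathcal{C}^{\perp_E})).$$
The hull is symmetric: $\mathcal{C}^{\perp_E}\cap(\mathcal{C}^{\perp_E})^{\perp_E}=\mathcal{C}^{\perp_E}\cap\mathcal{C}$, so $\dim(\mathrm{Hull}_E(\mathcal{C}^{\perp_E}))=\dim(\mathrm{Hull}_E(\mathcal{C}))$. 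The dual code has dimension $n-k$, so the number of logical qudits is
$$2(n-k)-n+c=n-k-\dim(\mathrm{Hull}_E(\mathcal{C})),$$
and the distance is $d^{\perp_E}$. This gives the second parameter set.

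\textbf{MDS claim.} The entanglement-assisted Singleton bound for $[[n,\kappa,d;c]]$ codes, in the regime $d\le (n+2)/2$, reads $2(d-1)\le n-\kappa+c$. For the first code,
$$n-\kappa+c=n-\big(k-h\big)+\big(n-k-h\big)=2(n-k),\qquad h=\dim(\mathrm{Hull}_E(\mathcal{C})).$$
If $\mathcal{C}$ is MDS then $d=n-k+1$, so $2(d-1)=2(n-k)$ and equality holds. The dual of an MDS code is MDS, so the same computation applies to the second code.

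The main obstacle is the input I am taking as given: the correctness of the entanglement-assisted stabilizer construction and the identification of the minimal number of ebits with $\rank(\boldsymbol{H}\boldsymbol{H}^{T})$. That step requires a symplectic Gram--Schmidt decomposition of the nonisotropic part of the stabilizer. I would cite it from \cite{EAQECCHull} rather than reprove it. A second point to check is that the distance claim $d$ is correct in the non-degenerate sense under which the MDS notion is stated.
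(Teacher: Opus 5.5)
Your proposal is correct and follows the standard derivation behind this result; the paper gives no proof of its own and only cites \cite{EAQECCHull}. That derivation uses the Wilde--Brun construction with $c=\rank(\boldsymbol{H}\boldsymbol{H}^{T})$, Lemma \ref{Ehull} applied to $\mathcal{C}$ and to $\mathcal{C}^{\perp_E}$ together with $\mathrm{Hull}_E(\mathcal{C}^{\perp_E})=\mathrm{Hull}_E(\mathcal{C})$, and the entanglement-assisted Singleton bound in the regime $d\le (n+2)/2$ for the MDS claim. You also rightly take $n-k$ as the dimension of $\mathcal{C}^{\perp_E}$, which corrects the typo ``$[n,k,d^{\perp_E}]_q$'' in the statement.
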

\begin{lemma}\label{Hhull}\rm (\cite{EAQECCHull}, Proposition 3.2)
	Let $\mathcal{C}$ be the $\left[n,k,d\right]_{q^2}$ linear code with the generator matrix $\boldsymbol{G}$ and the parity 	check matrix $\boldsymbol{H}$. Then, we have
	$$\rank\left(\boldsymbol{H}\boldsymbol{H}^{\perp_{H}}\right)=n-k-\dim\left(\mathrm{Hull}_{H}\left(\mathcal{C}\right)\right)$$
	and 
	$$\rank\left(\boldsymbol{G}\boldsymbol{G}^{\perp_{H}}\right)=k-\dim\left(\mathrm{Hull}_{H}\left(\mathcal{C}^{\perp_{H}}\right)\right).$$
\end{lemma}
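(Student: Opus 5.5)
The plan is to prove both identities by the same linear-algebra argument, applied once to $\mathcal{C}$ and once to $\mathcal{C}^{\perp_{H}}$. Throughout, $\boldsymbol{M}^{\perp_{H}}$ denotes the conjugate transpose $\overline{\boldsymbol{M}}^{T}$.

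First consider a general $[n,m]_{q^2}$ code $\mathcal{D}$ with an $m\times n$ generator matrix $\boldsymbol{M}$ of full row rank $m$. For $\boldsymbol{u}\in\mathbb{F}_{q^2}^{m}$, the $j$-th entry of $\boldsymbol{u}\boldsymbol{M}\overline{\boldsymbol{M}}^{T}$ equals $\langle \boldsymbol{u}\boldsymbol{M},\boldsymbol{m}_j\rangle_{H}$, where $\boldsymbol{m}_j$ is the $j$-th row of $\boldsymbol{M}$. The form $\langle\cdot,\cdot\rangle_H$ is semilinear in its second argument, so a vector is Hermitian-orthogonal to every row of $\boldsymbol{M}$ exactly when it is orthogonal to all of $\mathcal{D}$. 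Hence
$$\boldsymbol{u}\boldsymbol{M}\overline{\boldsymbol{M}}^{T}=\boldsymbol{0}\iff \boldsymbol{u}\boldsymbol{M}\in\mathcal{D}^{\perp_{H}}.$$
Since $\boldsymbol{u}\mapsto\boldsymbol{u}\boldsymbol{M}$ is a linear bijection $\mathbb{F}_{q^2}^{m}\to\mathcal{D}$, it maps the left null space of $\boldsymbol{M}\overline{\boldsymbol{M}}^{T}$ isomorphically onto $\mathcal{D}\cap\mathcal{D}^{\perp_H}=\mathrm{Hull}_H(\mathcal{D})$. By rank–nullity this gives
$$\rank\left(\boldsymbol{M}\overline{\boldsymbol{M}}^{T}\right)=m-\dim\left(\mathrm{Hull}_H(\mathcal{D})\right).$$

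Next I record that $(\mathcal{C}^{\perp_H})^{\perp_H}=\mathcal{C}$. The inclusion $\mathcal{C}\subseteq(\mathcal{C}^{\perp_H})^{\perp_H}$ is immediate from $\langle \boldsymbol{x},\boldsymbol{y}\rangle_H=\left(\langle \boldsymbol{y},\boldsymbol{x}\rangle_H\right)^q$. For the reverse inclusion, note that $\boldsymbol{x}\in\mathcal{C}^{\perp_H}$ iff $\boldsymbol{x}^{q}\in\mathcal{C}^{\perp_E}$, where $\boldsymbol{x}^q$ is the componentwise $q$-th power. So $\dim\mathcal{C}^{\perp_H}=n-\dim\mathcal{C}$, and applying this twice gives equality of dimensions. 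Consequently $\mathrm{Hull}_H(\mathcal{C}^{\perp_H})=\mathcal{C}^{\perp_H}\cap\mathcal{C}=\mathrm{Hull}_H(\mathcal{C})$.

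Finally, apply the general identity twice. With $\mathcal{D}=\mathcal{C}$, $\boldsymbol{M}=\boldsymbol{G}$ and $m=k$, we get $\rank(\boldsymbol{G}\boldsymbol{G}^{\perp_H})=k-\dim\mathrm{Hull}_H(\mathcal{C})$, which equals $k-\dim\mathrm{Hull}_H(\mathcal{C}^{\perp_H})$ by the previous step. With $\mathcal{D}=\mathcal{C}^{\perp_H}$, whose generator matrix is the parity check matrix $\boldsymbol{H}$ with $m=n-k$, we get $\rank(\boldsymbol{H}\boldsymbol{H}^{\perp_H})=n-k-\dim\mathrm{Hull}_H(\mathcal{C}^{\perp_H})=n-k-\dim\mathrm{Hull}_H(\mathcal{C})$.

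The only delicate step is the bookkeeping of conjugation: the Hermitian form is conjugate-linear in its second slot, and one must check that this does not affect either the kernel identification or the dimension count for the dual. Both points are handled by the Frobenius twist $\boldsymbol{x}\mapsto\boldsymbol{x}^q$, which is a bijection preserving dimensions of subspaces.
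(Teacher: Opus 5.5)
The paper gives no proof of this lemma. It is quoted from Proposition 3.2 of the cited EAQECC reference, so there is no in-paper argument to compare with, and your proposal supplies one.

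Your argument is a correct, self-contained proof. It identifies the left kernel of $\boldsymbol{M}\overline{\boldsymbol{M}}^{T}$ with $\mathrm{Hull}_H(\mathcal{D})$ through the injective map $\boldsymbol{u}\mapsto\boldsymbol{u}\boldsymbol{M}$, then applies rank--nullity. It also proves biduality $(\mathcal{C}^{\perp_H})^{\perp_H}=\mathcal{C}$ properly via the Frobenius twist, which gives $\mathrm{Hull}_H(\mathcal{C})=\mathrm{Hull}_H(\mathcal{C}^{\perp_H})$. Together these show that the two identities are one fact, applied once to $\mathcal{C}$ and once to its dual. You read the paper's nonstandard notation $\boldsymbol{M}^{\perp_H}$ correctly as the conjugate transpose. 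Your formulation $\rank(\boldsymbol{G}\overline{\boldsymbol{G}}^{T})=k-\dim\mathrm{Hull}_H(\mathcal{C})$ is exactly the form the paper actually uses in Section 4, where it computes $\boldsymbol{G}_{\boldsymbol{1},k}\overline{\boldsymbol{G}_{\boldsymbol{1},k}}^{T}$.

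One step needs an extra sentence: ``$\mathcal{C}^{\perp_H}$, whose generator matrix is the parity check matrix $\boldsymbol{H}$''. This holds only if the parity check matrix is defined with respect to the Hermitian form. Under the usual convention $\boldsymbol{G}\boldsymbol{H}^{T}=\boldsymbol{0}$, $\boldsymbol{H}$ generates $\mathcal{C}^{\perp_E}$ instead. By your own computation, $\mathcal{C}^{\perp_H}=\overline{\mathcal{C}^{\perp_E}}$, so $\mathcal{C}^{\perp_H}$ is generated by $\overline{\boldsymbol{H}}$, which in general spans a different space. Your general identity then gives $\rank(\overline{\boldsymbol{H}}\boldsymbol{H}^{T})=n-k-\dim\mathrm{Hull}_H(\mathcal{C})$. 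Since entrywise $q$-th powering is a field automorphism, $\overline{\boldsymbol{H}}\boldsymbol{H}^{T}=\overline{\boldsymbol{H}\overline{\boldsymbol{H}}^{T}}$ has the same rank as $\boldsymbol{H}\overline{\boldsymbol{H}}^{T}$. An equivalent fix is to apply your identity to $\mathcal{D}=\mathcal{C}^{\perp_E}$ directly and note that $\mathrm{Hull}_H(\mathcal{C}^{\perp_E})=\mathcal{C}^{\perp_E}\cap\overline{\mathcal{C}}=\overline{\mathrm{Hull}_H(\mathcal{C})}$. The conclusion therefore holds under either convention, but your last paragraph silently assumes the Hermitian one and should say so or add this line.
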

\begin{lemma}\label{EAQECCsHhull}\rm (\cite{EAQECCHull}, Corollary 3.2)
	Let $\mathcal{C}$ and $\mathcal{C}^{\perp_{H}}$ be a classical linear code and its Hermitian dual
	with the parameters $\left[n,k,d\right]_{q^2}$ and $\left[n,k,d^{\perp_{H}}\right]_{q^2}$, respectively. Then there exists two EAQECCs with the parameters $$\left[\left[n,k-\dim\left(\mathrm{Hull}_{H}\left(\mathcal{C}\right)\right),d,n-k-\dim\left(\mathrm{Hull}_{H}\left(\mathcal{C}\right)\right)\right]\right]_{q}$$ and $$\left[\left[n,n-k-\dim\left(\mathrm{Hull}_{H}\left(\mathcal{C}\right)\right),d^{\perp_{H}},k-\dim\left(\mathrm{Hull}_{H}\left(\mathcal{C}\right)\right)\right]\right]_{q},$$
	respectively. Moreover, if $\mathcal{C}$ is MDS, then the above two EAQECCs are also MDS.
\end{lemma}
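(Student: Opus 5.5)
The plan is to derive this from the standard Hermitian construction of entanglement-assisted quantum codes, in the form due to Brun--Devetak--Hsieh and Wilde--Brun. Hermitian version: if $\mathcal{C}$ is an $[n,k,d]_{q^2}$ linear code with parity check matrix $\boldsymbol{H}$, then there is an EAQECC with parameters
$$\left[\left[n,\,2k-n+c,\,d;\,c\right]\right]_{q},\qquad c=\rank\left(\boldsymbol{H}\overline{\boldsymbol{H}}^{T}\right)=\rank\left(\boldsymbol{H}\boldsymbol{H}^{\perp_{H}}\right).$$
I take this as the input. It comes from making the stabilizer generators associated with the rows of $\boldsymbol{H}$ commute, which is done by extending them with $c$ maximally entangled pairs. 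The remaining work is to express $c$ through the Hermitian hull and to check the Singleton-type equality.

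Write $h=\dim\left(\mathrm{Hull}_{H}(\mathcal{C})\right)$. By Lemma \ref{Hhull}, $c=\rank(\boldsymbol{H}\boldsymbol{H}^{\perp_H})=n-k-h$. Hence the number of logical qudits is
$$2k-n+(n-k-h)=k-h,$$
and the first code has parameters $[[n,k-h,d,n-k-h]]_q$.

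For the second code I apply the same construction to $\mathcal{C}^{\perp_H}$. This code has dimension $n-k$, minimum distance $d^{\perp_H}$, and parity check matrix $\boldsymbol{G}$. Lemma \ref{Hhull} gives $c'=\rank(\boldsymbol{G}\boldsymbol{G}^{\perp_H})=k-\dim(\mathrm{Hull}_H(\mathcal{C}^{\perp_H}))$. Since $(\mathcal{C}^{\perp_H})^{\perp_H}=\mathcal{C}$, the two hulls coincide, so $c'=k-h$. The number of logical qudits is then
$$2(n-k)-n+(k-h)=n-k-h,$$
which yields $[[n,n-k-h,d^{\perp_H},k-h]]_q$.

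For the MDS claim I use the entanglement-assisted Singleton bound $2(d-1)\le n-K+c$, valid when $d\le (n+2)/2$ as in the cited setting. If $\mathcal{C}$ is MDS, then $d=n-k+1$. For the first code,
$$n-(k-h)+(n-k-h)=2(n-k)=2(d-1),$$
so the bound is attained. The dual of an MDS code is MDS, so $d^{\perp_H}=k+1$, and
$$n-(n-k-h)+(k-h)=2k=2(d^{\perp_H}-1).$$

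The only substantive step is the underlying existence theorem for EAQECCs from a $q^2$-ary code via the Hermitian form. I would cite it from the literature rather than rederive it. Everything after that is bookkeeping with Lemma \ref{Hhull} and the identity $\mathrm{Hull}_H(\mathcal{C}^{\perp_H})=\mathrm{Hull}_H(\mathcal{C})$.
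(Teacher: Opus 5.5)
Your proposal is correct and follows the same route as the source: the paper gives no proof of its own and only quotes Corollary 3.2. You substitute $c=\rank(\boldsymbol{H}\overline{\boldsymbol{H}}^{T})=n-k-h$ from Lemma \ref{Hhull} into the Hermitian $[[n,2k-n+c,d;c]]_q$ construction, repeat for $\mathcal{C}^{\perp_H}$ (rightly taken to have dimension $n-k$, not the $k$ misprinted in the statement) using $\mathrm{Hull}_H(\mathcal{C}^{\perp_H})=\mathrm{Hull}_H(\mathcal{C})$, and check equality in the entanglement-assisted Singleton bound. One point to make explicit: what you prove, and all that ``MDS'' can mean for both codes at once, is the equality $2(d-1)=n-K+c$, because the range condition $d\le\frac{n+2}{2}$ holds for both $d=n-k+1$ and $d^{\perp_H}=k+1$ only when $n=2k$.
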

\section{Euclidean LCD GRL codes and their applications}\label{sec3}
 Throughout this section, we fix $k\mid q-1$, $\mathbb{F}_{q}^{*}=\mathbb{F}_{q}\backslash\left\{0\right\}=\langle\gamma\rangle$ and $\alpha_{i}=\gamma^{\frac{q-1}{k}i}(1\leq i\leq k)$. In this section, by taking a special vector $\boldsymbol{\alpha}=\left(\alpha_{1},\ldots,\alpha_{n}\right)$, we construct four classes of Euclidean LCD GRL codes with the parameters $\left[n+\ell,k\right]$, get some GRL codes with small-dimensional hull, and then obtain several families of EAQECCs.

\subsection{The first class of LCD GRL codes with the parameters $\left[k+\ell,k\right]$}
In this subsection, by taking $\boldsymbol{\alpha}=\left(\gamma^{\delta}\alpha_{1},\ldots,\gamma^{\delta}\alpha_{k}\right)$, we construct two classes of Euclidean LCD GRL codes, get a class of GRL codes with $1$-dimensional hull, and then obtain an upper bound for the number of GRL codes with $1$-dimensional hull.
\begin{theorem}\label{EuclideanLCDRL1}
Let $\boldsymbol{\alpha}=\left(\gamma^{\delta}\alpha_{1},\ldots,\gamma^{\delta}\alpha_{k}\right)$ with $1\leq \delta\leq q-1$. Then the following two statements are true,
	
	$(1)$ if $\ell<\frac{k}{2} $, then the code $\mathrm{GRL}_{k}(\boldsymbol{\alpha}, \boldsymbol{v},\boldsymbol{A}_{\ell\times \ell})$ is Euclidean LCD;
	
	$(2)$ if $\ell=\frac{k}{2}$ and $\gamma^{\delta k}k+\sum\limits_{i=1}^{\ell}a_{1i}^{2}\in\mathbb{F}_{q}^{*}$, then the code $\mathrm{GRL}_{k}(\boldsymbol{\alpha}, \boldsymbol{v},\boldsymbol{A}_{\ell\times \ell})$ is Euclidean LCD.
\end{theorem}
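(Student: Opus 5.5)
We would work with the generator matrix $\boldsymbol{G}=\boldsymbol{G}_{\boldsymbol{1},k}$ of (\ref{GRL1generatormatrix}) and apply Lemma \ref{LCDequivalent}: it suffices to show that the $k\times k$ matrix $\boldsymbol{M}=\boldsymbol{G}\boldsymbol{G}^{T}$ is nonsingular. Index rows and columns by $0,1,\ldots,k-1$, the row index being the power of $\alpha$. The matrix splits as
$$\boldsymbol{M}=\boldsymbol{B}+\boldsymbol{E},\qquad \boldsymbol{B}_{ij}=\sum_{t=1}^{k}\left(\gamma^{\delta}\alpha_{t}\right)^{i+j},\qquad \boldsymbol{E}=\begin{pmatrix}\boldsymbol{0}&\boldsymbol{0}\\ \boldsymbol{0}&\boldsymbol{A}\boldsymbol{A}^{T}\end{pmatrix},$$
where $\boldsymbol{E}$ is supported on the bottom-right $\ell\times\ell$ block, i.e.\ rows and columns $k-\ell,\ldots,k-1$.

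\emph{Caveat about $\boldsymbol{v}$.} With a general $\boldsymbol{v}$ the sums defining $\boldsymbol{B}$ acquire weights $v_t^{2}$, and Euclidean LCD-ness is not obviously preserved by the monomial equivalence of Remark \ref{GRLmonomial}. I would therefore carry out the argument for $\boldsymbol{v}=\boldsymbol{1}$; it extends verbatim to any $\boldsymbol{v}$ with all $v_t^2$ equal. This reduction is the one point where the statement as written needs care, and I expect it to be the main obstacle.

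\emph{Computing $\boldsymbol{B}$.} By Lemma \ref{Fq*sum} with $s=k$ and $\beta=\gamma^{\delta}$, we get $\boldsymbol{B}_{ij}=\gamma^{\delta(i+j)}k$ if $k\mid i+j$, and $0$ otherwise. Since $0\le i+j\le 2k-2$, the nonzero entries are:
\begin{itemize}
\item $\boldsymbol{B}_{00}=k$;
\item the anti-diagonal entries $\boldsymbol{B}_{i,k-i}=\gamma^{\delta k}k$ for $1\le i\le k-1$.
\end{itemize}
Note that $k\neq0$ in $\mathbb{F}_q$ because $k\mid q-1$. Hence $\boldsymbol{B}$ is a monomial matrix.

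\emph{Case (1), $\ell<\frac{k}{2}$.} Then $\ell<k-\ell$, so rows $1,\ldots,\ell$ lie outside the support of $\boldsymbol{E}$. In $\boldsymbol{M}$, each row $i\in\{1,\ldots,\ell\}$ therefore has a single nonzero entry $\gamma^{\delta k}k$, in column $k-i$. Expanding the determinant along these rows deletes rows $1,\ldots,\ell$ and columns $k-\ell,\ldots,k-1$. By symmetry of $\boldsymbol{M}$, columns $1,\ldots,\ell$ now each have a single nonzero entry, in rows $k-i$, and we expand along them too. Both deletions remove the entire support of $\boldsymbol{E}$. What remains is the entry $k$ at position $(0,0)$ together with the anti-diagonal block on indices $\ell+1,\ldots,k-\ell-1$. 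This gives
$$\det\boldsymbol{M}=\pm k\left(\gamma^{\delta k}k\right)^{k-1}\neq0.$$

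\emph{Case (2), $\ell=\frac{k}{2}$.} The same expansion works for rows and columns $1,\ldots,\ell-1$. The only index that behaves differently is $\ell=k-\ell$: the entry $\boldsymbol{M}_{\ell\ell}$ sits inside the $\boldsymbol{E}$-block and equals $\gamma^{\delta k}k+(\boldsymbol{A}\boldsymbol{A}^{T})_{11}=\gamma^{\delta k}k+\sum_{i=1}^{\ell}a_{1i}^{2}$. After deleting rows and columns $1,\ldots,\ell-1$ and $\ell+1,\ldots,k-1$ by the single-entry expansions, only the diagonal entries at indices $0$ and $\ell$ remain. Hence
$$\det\boldsymbol{M}=\pm k\left(\gamma^{\delta k}k\right)^{k-2}\left(\gamma^{\delta k}k+\sum_{i=1}^{\ell}a_{1i}^{2}\right),$$
which is nonzero exactly under the stated hypothesis.

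The part that needs the most bookkeeping is checking that the successive single-entry expansions really remove all of $\boldsymbol{E}$ except the single entry $\boldsymbol{M}_{\ell\ell}$ in case (2).
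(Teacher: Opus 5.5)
Your computation is the paper's own proof. The paper also passes to $\boldsymbol G_{\boldsymbol 1,k}\boldsymbol G_{\boldsymbol 1,k}^T$ and evaluates the power sums with Lemma \ref{Fq*sum}. It gets the same pattern: $k$ in the corner, $\gamma^{\delta k}k$ on the anti-diagonal, and $\boldsymbol A\boldsymbol A^T$ in the bottom-right block, which meets the anti-diagonal only at $(\ell,\ell)$ when $\ell=\frac k2$. Your single-entry expansions make explicit the paper's bare ``and so $\rank=k$'', and your determinant formulas are correct.

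The only real divergence is the treatment of $\boldsymbol v$. The paper disposes of it in one line: ``by Remark \ref{GRLmonomial} we only focus on $\boldsymbol v=\boldsymbol 1$.'' That is exactly the monomial-equivalence reduction you distrusted, and your suspicion is correct. The Euclidean hull is not a monomial invariant, and the statement is false for general $\boldsymbol v$, already in the unconditional case (1).

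Here is a counterexample. Take $q=11$, $k=5$, $\ell=2$, $\delta=10$, so the evaluation points are the five nonzero squares of $\mathbb F_{11}$. Take $\boldsymbol A=\boldsymbol E_2$, and let $v_t=3$ at the point $1$ and $v_t=1$ elsewhere. Then every sum $\sum_t v_t^2x_t^m$ is the old power sum plus $8$. With your indexing this gives $\boldsymbol G_{\boldsymbol v}\boldsymbol G_{\boldsymbol v}^T=\boldsymbol D+8\boldsymbol 1\boldsymbol 1^T$, where $\boldsymbol D$ has $5$ at $(0,0),(1,4),(4,1),(2,3),(3,2)$ and $1$ at $(3,3),(4,4)$. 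One checks that $\det\boldsymbol D=1$ and $\boldsymbol D^{-1}\boldsymbol 1=(9,5,5,9,9)^T$. Hence $\det(\boldsymbol G_{\boldsymbol v}\boldsymbol G_{\boldsymbol v}^T)=1\cdot(1+8\cdot 37)=297\equiv 0\pmod{11}$, so this GRL code is not Euclidean LCD.

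So your ``main obstacle'' cannot be overcome. The correct scope is $\boldsymbol v=\boldsymbol 1$, and your proof establishes exactly that.

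One small correction to your extension remark. If $v_t^2=c$ for all $t$, then $\boldsymbol B$ scales by $c$ while $\boldsymbol A\boldsymbol A^T$ does not. Part (1) survives, but part (2) then needs $c\gamma^{\delta k}k+\sum_{i=1}^{\ell}a_{1i}^2\neq 0$. So the result carries over verbatim only for $v_t\in\{\pm1\}$.
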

{\bf Proof}. By Remark \ref{GRLmonomial}, we only focus on  $\mathrm{GRL}_{k}(\boldsymbol{\alpha}, \boldsymbol{1},\boldsymbol{A}_{l\times l})$ with the generator matrix $\boldsymbol{G}_{\boldsymbol{1},k}$ given by $(\ref{GRL1generatormatrix})$. Furthermore, by Lemma \ref{LCDequivalent}, we only need to prove that the $k\times k$ matrix $\boldsymbol{G}_{\boldsymbol{1},k}\boldsymbol{G}_{\boldsymbol{1},k}^{T}$ is nonsingular over $\mathbb{F}_{q}$, i.e.,   $\rank\left(\boldsymbol{G}_{\boldsymbol{1},k}\boldsymbol{G}_{\boldsymbol{1},k}^{T}\right)=k$. In fact, note that 
$$\scriptsize\begin{aligned}	
&\boldsymbol{G}_{\boldsymbol{1},k}\boldsymbol{G}_{\boldsymbol{1},k}^{T}\\
=&\begin{pmatrix}
		k&\sum\limits_{i=1}^{k}\left(\gamma^{\delta}\alpha_{i}\right)&\cdots&\sum\limits_{i=1}^{k}\left(\gamma^{\delta}\alpha_{i}\right)^{k-\ell-1}&\sum\limits_{i=1}^{k}\left(\gamma^{\delta}\alpha_{i}\right)^{k-\ell}&\cdots&\sum\limits_{i=1}^{k}\left(\gamma^{\delta}\alpha_{i}\right)^{k-1}\\
		\sum\limits_{i=1}^{k}\left(\gamma^{\delta}\alpha_{i}\right)&\sum\limits_{i=1}^{k}\left(\gamma^{\delta}\alpha_{i}\right)^{2}&\cdots&\sum\limits_{i=1}^{k}\left(\gamma^{\delta}\alpha_{i}\right)^{k-\ell}&\sum\limits_{i=1}^{k}\left(\gamma^{\delta}\alpha_{i}\right)^{k-\ell+1}&\cdots&\sum\limits_{i=1}^{k}\left(\gamma^{\delta}\alpha_{i}\right)^{k}\\
		\vdots&\vdots&\ddots&\vdots&\vdots&\ddots&\vdots\\
		\sum\limits_{i=1}^{k}\left(\gamma^{\delta}\alpha_{i}\right)^{k-(\ell+1)}&\sum\limits_{i=1}^{k}\left(\gamma^{\delta}\alpha_{i}\right)^{k-\ell}&\cdots&\sum\limits_{i=1}^{k}\left(\gamma^{\delta}\alpha_{i}\right)^{2k-2\ell-2}&\sum\limits_{i=1}^{k}\left(\gamma^{\delta}\alpha_{i}\right)^{2k-2\ell-1}&\cdots&\sum\limits_{i=1}^{k}\left(\gamma^{\delta}\alpha_{i}\right)^{2k-\ell-2}\\
		\sum\limits_{i=1}^{k}\left(\gamma^{\delta}\alpha_{i}\right)^{k-\ell}&\sum\limits_{i=1}^{k}\left(\gamma^{\delta}\alpha_{i}\right)^{k-\ell+1}&\cdots&\sum\limits_{i=1}^{k}\left(\gamma^{\delta}\alpha_{i}\right)^{2k-2\ell-1}&\sum\limits_{i=1}^{k}\left(\gamma^{\delta}\alpha_{i}\right)^{2k-2\ell}+\sum\limits_{i=1}^{\ell}a_{1i}^{2}&\cdots&\sum\limits_{i=1}^{k}\left(\gamma^{\delta}\alpha_{i}\right)^{2k-\ell-1}+\sum\limits_{i=1}^{\ell}a_{1i}a_{\ell i}\\
		\vdots&\vdots&\ddots&\vdots&\vdots&\ddots&\vdots\\
		\sum\limits_{i=1}^{k}\left(\gamma^{\delta}\alpha_{i}\right)^{k-1}&\sum\limits_{i=1}^{k}\left(\gamma^{\delta}\alpha_{i}\right)^{k}&\cdots&\sum\limits_{i=1}^{k}\left(\gamma^{\delta}\alpha_{i}\right)^{2k-\ell-2}&\sum\limits_{i=1}^{k}\left(\gamma^{\delta}\alpha_{i}\right)^{2k-\ell-1}+\sum\limits_{i=1}^{\ell}a_{\ell i}a_{1i }&\cdots&\sum\limits_{i=1}^{k}\left(\gamma^{\delta}\alpha_{i}\right)^{2k-2}+\sum\limits_{i=1}^{\ell}a_{\ell i}^{2}\\
	\end{pmatrix}.
\end{aligned}
$$ 

{\bf For (1).} By $\ell<\frac{k}{2}$, we have $2k-2\ell>k$, furthermore, by Lemma \ref{Fq*sum}, we can get 
$$\footnotesize\boldsymbol{G}_{\boldsymbol{1},k}\boldsymbol{G}_{\boldsymbol{1},k}^{T}=\begin{pmatrix}
		k&0&\cdots&0&0&\cdots&0&0&\cdots&0&0&\cdots&0\\
		0&0&\cdots&0&0&\cdots&0&0&\cdots&0&0&\cdots&\gamma^{\delta k}k\\
		\vdots&\vdots& &\vdots&\vdots& &\vdots&\vdots& &\vdots&\vdots& &\vdots\\ 
		0&0&\cdots&0&0&\cdots&0&0&\cdots&0&\gamma^{\delta k}k&\cdots&0\\ 
		0&0&\cdots&0&0&\cdots&0&0&\cdots&\gamma^{\delta k}k&0&\cdots&0\\
		\vdots&\vdots& &\vdots&\vdots& &\vdots&\vdots& &\vdots&\vdots& &\vdots\\  
		0&0&\cdots&0&0&\cdots&0&\gamma^{\delta k}k&\cdots&0&0&\cdots&0\\ 
		0&0&\cdots&0&0&\cdots&\gamma^{\delta k}k&0&\cdots&0&0&\cdots&0\\
		\vdots&\vdots& &\vdots&\vdots& &\vdots&\vdots& &\vdots&\vdots& &\vdots\\
		0&0&\cdots&0&\gamma^{\delta k}k&\cdots&0&0&\cdots&0&0&\cdots&0\\
		0&0&\cdots&\gamma^{\delta k}k&0&\cdots&0&0&\cdots&0&\sum\limits_{i=1}^{\ell}a_{1i}^{2}&\cdots&\sum\limits_{i=1}^{\ell}a_{1i}a_{\ell i}\\
		\vdots&\vdots& &\vdots&\vdots& &\vdots&\vdots& &\vdots&\vdots& &\vdots\\
		0&\gamma^{\delta k}k&\cdots&0&0&\cdots&0&0&\cdots&0&\sum\limits_{i=1}^{\ell}a_{\ell i}a_{1i }&\cdots&\sum\limits_{i=1}^{\ell}a_{\ell i}^{2}\\
	\end{pmatrix}.$$
Now from $k\mid q-1$ and $\mathbb{F}_{q}^{*}=\langle\gamma\rangle$, we have $\gamma^{\delta k}k\in\mathbb{F}_{q}^{*}$, and so $\rank\left(\boldsymbol{G}_{\boldsymbol{1},k}\boldsymbol{G}_{\boldsymbol{1},k}^{T}\right)=k$. Furthermore, by Lemma \ref{LCDequivalent}, the code $\mathrm{GRL}_{k}(\boldsymbol{\alpha}, \boldsymbol{v},\boldsymbol{A}_{l\times l})$ is Euclidean LCD.

{\bf For (2).} By $\ell=\frac{k}{2}$, we can get $2k-2\ell=k$, furthermore, by Lemma \ref{Fq*sum}, we have 
$$\boldsymbol{G}_{\boldsymbol{1},k}\boldsymbol{G}_{\boldsymbol{1},k}^{T}\\
=\begin{pmatrix}
	k&0&\cdots&0&0&0&\cdots&0\\
	0&0&\cdots&0&0&0&\cdots&\gamma^{\delta k}k\\
	\vdots&\vdots& &\vdots&\vdots&\vdots& &\vdots\\ 
	0&0&\cdots&0&0&\gamma^{\delta k}k&\cdots&0\\
	0&0&\cdots&0&\gamma^{\delta k}k+\sum\limits_{i=1}^{\ell}a_{1i}^{2}&\sum\limits_{i=1}^{\ell}a_{1i}a_{2i}&\cdots&\sum\limits_{i=1}^{\ell}a_{1i}a_{\ell i}\\
	0&0&\cdots&\gamma^{\delta k}k&\sum\limits_{i=1}^{\ell}a_{2i}a_{1i}&\sum\limits_{i=1}^{\ell}a_{2i}^{2}&\cdots&\sum\limits_{i=1}^{\ell}a_{2i}a_{\ell i}\\
	\vdots&\vdots& &\vdots&\vdots&\vdots& &\vdots\\
	0&\gamma^{\delta k}k&\cdots&0&\sum\limits_{i=1}^{\ell}a_{\ell i}a_{1i }&\sum\limits_{i=1}^{\ell}a_{\ell i}a_{2i }&\cdots&\sum\limits_{i=1}^{\ell}a_{\ell i}^{2}\\
\end{pmatrix}.
$$  
Now by $k\mid q-1$, $\mathbb{F}_{q}^{*}=\langle\gamma\rangle$ and $\gamma^{\delta k}k+\sum\limits_{i=1}^{\ell}a_{1i}^{2}\in\mathbb{F}_{q}^{*}$, we can get $\rank\left(\boldsymbol{G}_{\boldsymbol{1},k}\boldsymbol{G}_{\boldsymbol{1},k}^{T}\right)=k$, thus by Lemma \ref{LCDequivalent}, the code $\mathrm{GRL}_{k}(\boldsymbol{\alpha}, \boldsymbol{v},\boldsymbol{A}_{\ell\times \ell})$ is Euclidean LCD.

This completes the proof of Theorem $\ref{EuclideanLCDRL1}$.

$\hfill\Box$

\begin{remark} 
	By taking $\delta=q-1$ and $\boldsymbol{A}_{l\times l}=\begin{pmatrix}
		0&1\\
		1&\tau
	\end{pmatrix}$ with $\tau\in\mathbb{F}_{q}$ in Theorem \ref{EuclideanLCDRL1} $(1)$, the corresponding result is just Lemma 3.5 $(1)$ in \cite{nonGRSLCD4}.
\end{remark}

By Lemma \ref{Ehull}, it's easy to obtain the following 
\begin{theorem}\label{ELCD1Hull1}
Let $\boldsymbol{\alpha}=\left(\gamma^{\delta }\alpha_{1},\ldots,\gamma^{\delta }\alpha_{k}\right)$, $\ell=\frac{k}{2}$ and $\gamma^{\delta k}k+\sum\limits_{i=1}^{\ell}a_{1i}^{2}=0$. Then $$\dim\left(\mathrm{Hull}_{E}\left(\mathrm{GRL}_{k}(\boldsymbol{\alpha}, \boldsymbol{v},\boldsymbol{A}_{\ell\times \ell})\right)\right)=1.$$
\end{theorem}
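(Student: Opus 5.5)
We compute the Euclidean hull through the rank of the Gram matrix $\boldsymbol{G}_{\boldsymbol{1},k}\boldsymbol{G}_{\boldsymbol{1},k}^{T}$. The relevant identity is the second formula of Lemma \ref{Ehull}, $\rank(\boldsymbol{G}\boldsymbol{G}^{T})=k-\dim\left(\mathrm{Hull}_{E}(\mathcal{C}^{\perp_E})\right)$. Since $\mathcal{C}\cap\mathcal{C}^{\perp_E}=\mathcal{C}^{\perp_E}\cap(\mathcal{C}^{\perp_E})^{\perp_E}$, we have $\mathrm{Hull}_{E}(\mathcal{C}^{\perp_E})=\mathrm{Hull}_{E}(\mathcal{C})$. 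So it suffices to show that the Gram matrix has rank exactly $k-1$.

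I would reduce to $\boldsymbol{v}=\boldsymbol{1}$ via Remark \ref{GRLmonomial}, exactly as in the proof of Theorem \ref{EuclideanLCDRL1}, and take $\boldsymbol{G}=\boldsymbol{G}_{\boldsymbol{1},k}$. This is the one step where I am unsure the argument goes through as stated. A general monomial rescaling $\boldsymbol{v}$ replaces the Gram matrix by $\boldsymbol{G}\,\mathrm{diag}(v_1^2,\ldots,v_k^2,1,\ldots,1)\,\boldsymbol{G}^{T}$, and for general $\boldsymbol{v}$ its rank need not equal the rank for $\boldsymbol{v}=\boldsymbol{1}$. So either the statement is meant for $\boldsymbol{v}=\boldsymbol{1}$ (or for $v_i^2$ constant), or this reduction needs a separate justification.

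For $\boldsymbol{v}=\boldsymbol{1}$ the Gram matrix was already computed in part (2) of the proof of Theorem \ref{EuclideanLCDRL1}. The only change now is that its $(\ell+1,\ell+1)$ entry $\gamma^{\delta k}k+\sum_{i=1}^{\ell}a_{1i}^2$ is $0$ by hypothesis. Concretely, with rows and columns indexed $1,\ldots,k$ and $\ell=k/2$, Lemma \ref{Fq*sum} gives the following structure.
\begin{itemize}
\item Row $1$ is $k\boldsymbol{e}_1$ with $k\neq 0$ (since $k\mid q-1$), and column $1$ is $k\boldsymbol{e}_1^{T}$.
\item For $2\le i,j\le k$, the power-sum part is $\gamma^{\delta k}k\ne0$ exactly when $i+j=k+2$, and $0$ otherwise.
\item The $\boldsymbol{A}\boldsymbol{A}^{T}$ contributions lie only in the block of rows and columns $\ell+1,\ldots,k$.
\end{itemize}

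The rank count then goes as follows.
\begin{itemize}
\item Rows $2,\ldots,\ell$ each have a single nonzero entry $\gamma^{\delta k}k$, in the distinct columns $k,\ldots,\ell+2$.
\item Row $\ell+1$ has zeros in columns $1,\ldots,\ell$, the entry $0$ in column $\ell+1$, and otherwise entries only in columns $\ell+2,\ldots,k$. Subtracting suitable multiples of rows $2,\ldots,\ell$ therefore reduces it to the zero row, so the rank is at most $k-1$.
\item Rows $\ell+2,\ldots,k$ have nonzero anti-diagonal entries in the distinct columns $\ell,\ldots,2$. Rows $1,\ldots,\ell$ vanish in columns $2,\ldots,\ell$.
\item Hence rows $\{1,\ldots,\ell\}\cup\{\ell+2,\ldots,k\}$ are linearly independent: a vanishing combination has zero coefficients on rows $\ell+2,\ldots,k$ by looking at columns $2,\ldots,\ell$, and then zero coefficients on the rest by looking at columns $1,\ell+2,\ldots,k$.
\end{itemize}
This gives rank exactly $k-1$, and therefore $\dim\left(\mathrm{Hull}_E\right)=1$. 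Apart from the $\boldsymbol{v}$ reduction flagged above, this elimination is routine.
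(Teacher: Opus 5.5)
Your proof is correct and follows the paper's route: you reuse the Gram matrix from part (2) of the proof of Theorem~\ref{EuclideanLCDRL1}, observe that its $(\ell+1,\ell+1)$ entry now vanishes so the rank is exactly $k-1$, and conclude with Lemma~\ref{Ehull}; your explicit elimination is more careful than the paper's ``it's easy''. Your doubt about the $\boldsymbol{v}$-reduction is also well founded, since Remark~\ref{GRLmonomial} does not preserve Euclidean hulls, so the argument only covers $v_i=\pm1$, and ``$v_i^2$ constant'' is not enough: for $v_i^2=c\neq 1$ the power-sum block is scaled by $c$ while $\boldsymbol{A}\boldsymbol{A}^{T}$ is not, the $(\ell+1,\ell+1)$ entry becomes $(c-1)\gamma^{\delta k}k\neq 0$, the code is LCD, and so the statement as printed fails for general $\boldsymbol{v}$.
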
 

The following Theorem \ref{EHull1conut} presents an upper bound for the number of the code $\mathrm{GRL}_{k}(\boldsymbol{\alpha}, \boldsymbol{v},\boldsymbol{A}_{\ell\times \ell})$ with $1$-dimensional hull in Theorem \ref{ELCD1Hull1}.
\begin{theorem}\label{EHull1conut}
If $\boldsymbol{\alpha}=\left(\gamma^{\delta }\alpha_{1},\ldots,\gamma^{\delta }\alpha_{k}\right)$, $\ell=\frac{k}{2}$ and $k+\sum\limits_{i=1}^{\ell}a_{1i}^{2}=0$. Then the following two statements are true,

$(1)$ if $\left(a_{11},\ldots,a_{1\ell}\right)\in\mathbb{F}_{q}^{\ell}\backslash\left\{\boldsymbol{0}\right\}$, then the number of the code $\mathrm{GRL}_{k}(\boldsymbol{\alpha}, \boldsymbol{v},\boldsymbol{A}_{\ell\times \ell})$ with $1$-dimensional hull is less than or equal to 
$$
\left(d\left(\frac{q-1}{2}\right)-1\right)  \cdot N_{f}\left(\ell,\gamma^{\delta k+\frac{q-1}{2}}k, q\right)\cdot\prod\limits_{i=1}^{\ell-1}\left(q^{\ell}-q^{i}\right);
$$

$(2)$ if $\left(a_{11},\ldots,a_{1\ell}\right)\in\left(\mathbb{F}_{q}^{*}\right)^{\ell}$, then the number of the code $\mathrm{GRL}_{k}(\boldsymbol{\alpha}, \boldsymbol{v},\boldsymbol{A}_{\ell\times \ell})$ with $1$-dimensional hull is less than or equal to 
$$\left(d\left(\frac{q-1}{2}\right)-1\right)\cdot N_{f}^{*}\left(\ell,\gamma^{\delta k+\frac{q-1}{2}}k, q\right)\cdot\prod\limits_{i=1}^{\ell-1}\left(q^{\ell}-q^{i}\right).$$
\end{theorem}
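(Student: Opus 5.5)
The plan is to turn the $1$-dimensional hull condition into an explicit condition on the parameters $(k,\boldsymbol{A}_{\ell\times\ell})$ and then count the parameter choices that satisfy it. Throughout, I read the hypothesis as the condition $\gamma^{\delta k}k+\sum_{i=1}^{\ell}a_{1i}^{2}=0$ of Theorem \ref{ELCD1Hull1}; this matches the term $\gamma^{\delta k+\frac{q-1}{2}}k$ in the bound. The code is determined by its generator matrix, so an upper bound on the number of admissible triples $\left(k,(a_{11},\ldots,a_{1\ell}),(a_{2\cdot},\ldots,a_{\ell\cdot})\right)$ is an upper bound on the number of codes. This is why the statement is an inequality: distinct parameter choices might still give the same code.

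\textbf{Step 1: necessity of the hull condition.} First I show that a $1$-dimensional hull forces $\gamma^{\delta k}k+\sum_{i=1}^{\ell}a_{1i}^{2}=0$. In the proof of Theorem \ref{EuclideanLCDRL1} (2), for $\ell=\frac{k}{2}$, the Gram matrix $\boldsymbol{G}_{\boldsymbol{1},k}\boldsymbol{G}_{\boldsymbol{1},k}^{T}$ has an anti-diagonal block of entries $\gamma^{\delta k}k\neq 0$. The only entry whose vanishing can drop the rank is $\gamma^{\delta k}k+\sum_{i}a_{1i}^{2}$. So if this quantity is nonzero the code is LCD, and if it is zero the rank is $k-1$. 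Lemma \ref{Ehull} then gives hull dimension $1$. Hence the codes to be counted are exactly those with $\sum_{i=1}^{\ell}a_{1i}^{2}=-\gamma^{\delta k}k$. Since $-1=\gamma^{\frac{q-1}{2}}$, this reads
$$\sum_{i=1}^{\ell}a_{1i}^{2}=\gamma^{\delta k+\frac{q-1}{2}}k.$$

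\textbf{Step 2: counting the three factors.}
\begin{enumerate}
\item \emph{Choices of $k$.} We need $k=2\ell$ with $k\mid q-1$, which is equivalent to $\ell\mid\frac{q-1}{2}$. Definition \ref{GRLdefinition} requires $\ell\geq 2$, which excludes $\ell=1$. So there are at most $d\left(\frac{q-1}{2}\right)-1$ admissible values of $\ell$, and hence of $k$.
\item \emph{Choices of the first row.} For each such $k$, the first row $(a_{11},\ldots,a_{1\ell})$ is a solution of the displayed equation. In case (1) it lies in $\mathbb{F}_{q}^{\ell}\backslash\{\boldsymbol{0}\}$, so there are $N_{f}\left(\ell,\gamma^{\delta k+\frac{q-1}{2}}k,q\right)$ choices by Lemma \ref{nonzerosFqsolution}. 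In case (2) it lies in $(\mathbb{F}_{q}^{*})^{\ell}$, so there are $N_{f}^{*}$ choices by Lemma \ref{nonzerosFq*solution}. Note that a nonzero first row is automatically required, since $\boldsymbol{A}_{\ell\times\ell}$ is invertible.
\item \emph{Completing to an invertible matrix.} Given a nonzero first row, rows $2,\ldots,\ell$ must be chosen successively outside the span of the previous rows. This gives $(q^{\ell}-q)(q^{\ell}-q^{2})\cdots(q^{\ell}-q^{\ell-1})=\prod_{i=1}^{\ell-1}(q^{\ell}-q^{i})$ choices.
\end{enumerate}
Multiplying the three factors gives the stated bounds in (1) and (2). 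This uses the convention that $\delta$ is fixed, which is how the expression in the statement depends on $\delta$.

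\textbf{Main obstacle.} The delicate point is the bookkeeping in the first factor. The $N_f$ factor depends on $k$ through $\ell$ and through the right-hand side, so the honest count is a sum over admissible $k$. Writing it as $\left(d\left(\frac{q-1}{2}\right)-1\right)$ times a single $N_f$ value requires interpreting the bound with $\ell$ and $k$ understood as in the statement, or bounding each summand by the displayed value. I would state the count explicitly as a sum over $\ell\mid\frac{q-1}{2}$ with $\ell\geq 2$, and then pass to the product form under that convention.
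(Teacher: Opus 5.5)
Your proposal is correct and follows essentially the same route as the paper: bound the number of codes by the number of matrices $\boldsymbol{G}_{\boldsymbol{1},k}$, which implicitly takes $\boldsymbol{v}=\boldsymbol{1}$ (the paper justifies this via Remark \ref{GRLmonomial}). You then multiply the $d\left(\frac{q-1}{2}\right)-1$ choices of $k=2\ell$, the $N_{f}$ (resp.\ $N_{f}^{*}$) choices of the first row of $\boldsymbol{A}_{\ell\times\ell}$, and the $\prod_{i=1}^{\ell-1}(q^{\ell}-q^{i})$ completions to an invertible matrix. Your Step 1 shows that a $1$-dimensional hull actually forces $\gamma^{\delta k}k+\sum_{i=1}^{\ell}a_{1i}^{2}=0$, and you note that the $N_{f}$ factor depends on $k$, so the product form is only literally justified for fixed $k$; the paper passes over both points by simply citing Theorem \ref{ELCD1Hull1} and multiplying.
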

{\bf Proof.} By the equivalence of linear codes, it's easy to know that the number of the code $\mathrm{GRL}_{k}(\boldsymbol{\alpha}, \boldsymbol{v},\boldsymbol{A}_{\ell\times \ell})$ is less than or equal to the number of the matrix $\boldsymbol{G}_{\boldsymbol{1},k}$ given by $(\ref{GRL1generatormatrix})$. Furthermore, by Theorem \ref{ELCD1Hull1}, we know that the number of the code $\mathrm{GRL}_{k}(\boldsymbol{\alpha}, \boldsymbol{v},\boldsymbol{A}_{\ell\times \ell})$ with $1$-dimensional hull is less than or equal to the number of the matrix $\boldsymbol{G}_{\boldsymbol{1},k}$ satisfying the following three conditions simultaneously,

(i) $\boldsymbol{\alpha}=\left(\gamma^{\delta }\alpha_{1},\ldots,\gamma^{\delta }\alpha_{k}\right)$ with $\alpha_{i}=\gamma^{\frac{q-1}{k}i}(1\leq i\leq k)$;

(ii) $k=2\ell$;

(iii) $\boldsymbol{A}_{l\times l}=(a_{ij})_{\ell\times \ell}\in\mathrm{GL}_{\ell}\left(\mathbb{F}_{q}\right)$ with $\gamma^{\delta k}k+\sum\limits_{i=1}^{\ell}a_{1i}^{2}=0$. 

For the conditions (i) and (ii), it is easy to know that for given $q$, $\gamma$ and $\delta$, the number of the vector $\boldsymbol{\alpha}$ depends on the number $M$ of $k$, i.e., 
$$\begin{aligned}
M=&\#\left\{\boldsymbol{\alpha}=\left(\gamma^{\delta }\alpha_{1},\ldots,\gamma^{\delta }\alpha_{k}\right): \alpha_{i}=\gamma^{\frac{q-1}{k}i}, 1\leq i\leq k, k=2\ell\mid q-1\right\}\\
=&\#\left\{k:2\leq\ell\leq k\mid q-1\right\}\\
=&\#\left\{\ell:2\leq\ell\mid \frac{q-1}{2}\right\}\\
=&d\left(\frac{q-1}{2}\right)-1.
\end{aligned}$$

For the condition (iii), for the convenience, let $\boldsymbol{a}_{i}(1\leq i\leq \ell)$ be the $i$-row of $\boldsymbol{A}_{\ell\times\ell}=\left(a_{ij}\right)_{\ell\times\ell}$. By Lemma \ref{nonzerosFqsolution}, we know that the number of the $\ell$-tuples $\left(a_{11},\ldots,a_{1\ell}\right)\in\mathbb{F}_{q}^{\ell}\backslash\left\{\boldsymbol{0}\right\}$ such that $\gamma^{\delta k}k+\sum\limits_{i=1}^{\ell}a_{1i}^{2}=0$ is $N_{f}\left(\ell,\gamma^{\delta k+\frac{q-1}{2}}k, q\right)$, i.e., the vector $\boldsymbol{a}_{1}$ has $N_{f}\left(\ell,\gamma^{\delta k+\frac{q-1}{2}}k, q\right)$ choices. Note that $\boldsymbol{A}_{\ell\times\ell}=\left(a_{ij}\right)_{\ell\times\ell}\in\mathrm{GL}_{\ell}\left(\mathbb{F}_{q}\right)$ if and only if both $\boldsymbol{a}_{i}$ and $\boldsymbol{a}_{j}$ are $\mathbb{F}_{q}$-linearly independent for any $1\leq i\neq j\leq \ell$. Now for any given vector $\boldsymbol{a}_{1}$, the total number of the vector $\boldsymbol{a}_{2}$ which is $\mathbb{F}_{q}$-linearly independent of $\boldsymbol{a}_{1}$ is $q^{\ell}-q.$ Furthermore, for given two vectors $\boldsymbol{a}_{1}$ and $\boldsymbol{a}_{2}$, the total number of the vector $\boldsymbol{a}_{3}$ which is $\mathbb{F}_{q}$-linearly independent of both $\boldsymbol{a}_{1}$ and $\boldsymbol{a}_{2}$ is $q^{\ell}-q^{2}$. In the similar method as the above, it is easy to know that for given $i-1$ vectors $\boldsymbol{a}_{1},\ldots,\boldsymbol{a}_{i-1}$, the total number of the vector $\boldsymbol{a}_{i}$ which is $\mathbb{F}_{q}$-linearly independent of $\boldsymbol{a}_{1},\ldots,\boldsymbol{a}_{i-1}$ is $q^{\ell}-q^{i}$. And so, there are $$N_{f}\left(\ell,\gamma^{\delta k+\frac{q-1}{2}}k, q\right)\cdot\prod\limits_{i=1}^{\ell-1}\left(q^{\ell}-q^{i}\right)$$ choices for $\boldsymbol{A}_{\ell\times\ell}=\left(a_{ij}\right)_{\ell\times\ell}$ that satisfies the condition (iii). 

In summary of the above discussions, the number of $\boldsymbol{G}_{\boldsymbol{1},k}$ is less than or equal to
$$\left(d\left(\frac{q-1}{2}\right)-1\right)  \cdot N_{f}\left(\ell,\gamma^{\delta k+\frac{q-1}{2}}k, q\right)\cdot\prod\limits_{i=1}^{\ell-1}\left(q^{\ell}-q^{i}\right).$$

For the case of $\left(a_{11},\ldots,a_{1\ell}\right)\in\left(\mathbb{F}_{q}^{*}\right)^{\ell}$, it's easy to prove the corresponding result via the similar proofs presented above.

This completes the proof of Theorem $\ref{EHull1conut}$.
 
$\hfill\Box$
\subsection{The second class of LCD GRL codes with the parameters $\left[k+1+\ell,k\right]$}
In this subsection, by taking $\boldsymbol{\alpha}=\left(0,\gamma^{\delta}\alpha_{1},\ldots,\gamma^{\delta}\alpha_{k}\right)$, in the similar proofs as those for Theorems $\ref{EuclideanLCDRL1}$-$\ref{ELCD1Hull1}$, we construct two classes of Euclidean LCD GRL codes, and then get two classes of GRL codes with $1$-dimensional hull and a class of GRL codes with $2$-dimensional hull, i.e., we prove the following Theorems $\ref{EuclideanLCDRL2}$-\ref{EuclideanLCDRL21}.
\begin{theorem}\label{EuclideanLCDRL2}
	Let $\boldsymbol{\alpha}=\left(0,\gamma^{\delta}\alpha_{1}, \ldots, \gamma^{\delta}\alpha_{k}\right)$ with $\gcd(k+1,q)=1$ and $1\leq\delta\leq q-1$. Then the following two statements are true,
	
	$(1)$ if $\ell<\frac{k}{2}$, then the code $\mathrm{GRL}_{k}(\boldsymbol{\alpha}, \boldsymbol{v},\boldsymbol{A}_{\ell\times \ell})$ is Euclidean LCD;
	
	$(2)$ if $\ell=\frac{k}{2}$ and $\gamma^{\delta k}k+\sum\limits_{i=1}^{\ell}a_{1i}^{2}\in\mathbb{F}_{q}^{*}$, then the code $\mathrm{GRL}_{k}(\boldsymbol{\alpha}, \boldsymbol{v},\boldsymbol{A}_{\ell\times \ell})$ is Euclidean LCD.
\end{theorem}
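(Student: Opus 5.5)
The plan is to follow the proof of Theorem \ref{EuclideanLCDRL1} almost verbatim, tracking the one change caused by the extra evaluation point $0$. By Remark \ref{GRLmonomial} it suffices to treat $\mathrm{GRL}_{k}(\boldsymbol{\alpha},\boldsymbol{1},\boldsymbol{A}_{\ell\times\ell})$ with generator matrix $\boldsymbol{G}_{\boldsymbol{1},k+1}$ from (\ref{GRL1generatormatrix}). By Lemma \ref{LCDequivalent}, it then suffices to show that the $k\times k$ matrix $\boldsymbol{G}_{\boldsymbol{1},k+1}\boldsymbol{G}_{\boldsymbol{1},k+1}^{T}$ is nonsingular.

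Index rows and columns by the exponents $0\le i,j\le k-1$. The $(i,j)$-entry is
$$0^{i+j}+\sum_{s=1}^{k}\left(\gamma^{\delta}\alpha_{s}\right)^{i+j}+\sum_{t=1}^{\ell}a_{i-(k-\ell)+1,t}\,a_{j-(k-\ell)+1,t},$$
where the last sum is present only when $i,j\ge k-\ell$, and we use the convention $0^{0}=1$. The term $0^{i+j}$ vanishes except at $(0,0)$. Hence the Gram matrix coincides with the one computed in the proof of Theorem \ref{EuclideanLCDRL1}, except that its $(0,0)$-entry is $k+1$ instead of $k$.

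Next I evaluate the power sums with Lemma \ref{Fq*sum}. Since $0\le i+j\le 2k-2<2k$, the sum $\sum_{s}(\gamma^{\delta}\alpha_s)^{i+j}$ is nonzero only when $i+j\in\{0,k\}$. For $i+j=0$ it contributes $k$, giving the entry $k+1$ at $(0,0)$. For $i+j=k$ it equals $\gamma^{\delta k}k$, which is nonzero because $k\mid q-1$ forces $p\nmid k$. Consequently row $0$ and column $0$ vanish apart from the entry $k+1$. Because $\gcd(k+1,q)=1$, this entry is nonzero, so
$$\det = (k+1)\det\boldsymbol{M},$$
where $\boldsymbol{M}$ is the $(k-1)\times(k-1)$ block indexed by $1\le i,j\le k-1$.

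The only real step is to check that $\boldsymbol{M}$ is "anti-triangular", i.e. every nonzero entry lies on or below the anti-diagonal $i+j=k$.
\begin{itemize}
\item Case (1), $\ell<\frac{k}{2}$: the $\boldsymbol{A}$-contributions occur only for $i,j\ge k-\ell$, hence for $i+j\ge 2k-2\ell>k$. Thus they sit strictly below the anti-diagonal. Therefore $\det\boldsymbol{M}=\pm(\gamma^{\delta k}k)^{k-1}\neq0$.
\item Case (2), $\ell=\frac{k}{2}$: exactly one $\boldsymbol{A}$-entry lands on the anti-diagonal, namely $(k-\ell,k-\ell)$, where the entry becomes $\gamma^{\delta k}k+\sum_{i=1}^{\ell}a_{1i}^{2}$. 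All other $\boldsymbol{A}$-entries lie strictly below. Therefore $\det\boldsymbol{M}=\pm(\gamma^{\delta k}k)^{k-2}\bigl(\gamma^{\delta k}k+\sum_{i=1}^{\ell}a_{1i}^{2}\bigr)$, which is nonzero by hypothesis.
\end{itemize}

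In both cases the Gram matrix is nonsingular, so Lemma \ref{LCDequivalent} gives the Euclidean LCD property. I expect no genuine obstacle. The main care points are the bookkeeping of which $\boldsymbol{A}$-entries touch the anti-diagonal, and noting that the hypothesis $\gcd(k+1,q)=1$ is used precisely to make the corner entry $k+1$ invertible.
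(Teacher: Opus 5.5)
Your Gram-matrix computation is correct and follows the paper's intended argument. The paper only says the proof is similar to that of Theorem~\ref{EuclideanLCDRL1}. In your computation the point $0$ changes just the $(0,0)$ entry to $k+1$, and row and column $0$ are otherwise zero. The remaining $(k-1)\times(k-1)$ block vanishes above its anti-diagonal, with the single extra term $\sum_{i=1}^{\ell}a_{1i}^{2}$ at position $(\ell,\ell)$ when $\ell=\frac{k}{2}$.

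The gap is in your first sentence, and it is inherited from the paper's proof of Theorem~\ref{EuclideanLCDRL1}. Remark~\ref{GRLmonomial} only gives monomial equivalence, and the Euclidean LCD property is not a monomial invariant. With $\boldsymbol{D}=\mathrm{diag}(v_1,\ldots,v_{k+1},1,\ldots,1)$ one has $\boldsymbol{G}_{\boldsymbol{v},k+1}\boldsymbol{G}_{\boldsymbol{v},k+1}^{T}=\boldsymbol{G}_{\boldsymbol{1},k+1}\boldsymbol{D}^{2}\boldsymbol{G}_{\boldsymbol{1},k+1}^{T}$. The entries therefore become weighted sums $\sum_{s}v_s^{2}x_s^{i+j}$, and Lemma~\ref{Fq*sum} no longer produces the anti-triangular shape. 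Your argument proves the statement only when every $v_s=\pm1$.

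This is not a removable technicality: the statement is false for general $\boldsymbol{v}$. Take $q=11$, $\gamma=2$, $k=5$, $\ell=2$, $\delta=10$, so $\boldsymbol{\alpha}=(0,4,5,9,3,1)$. Take $\boldsymbol{A}_{2\times 2}=\boldsymbol{E}_{2}$ and $\boldsymbol{v}=(1,1,1,1,1,4)$; every hypothesis of part (1) holds. Let $\boldsymbol{M}$ be the Gram matrix for $\boldsymbol{v}=\boldsymbol{1}$: it has $6$ at $(0,0)$, $5$ at the positions with $i+j=5$, $1$ at $(3,3)$ and $(4,4)$, and zeros elsewhere. With $\boldsymbol{e}=(1,1,1,1,1)^{T}$, the Gram matrix for this $\boldsymbol{v}$ is $\boldsymbol{M}+(4^{2}-1)\boldsymbol{e}\boldsymbol{e}^{T}=\boldsymbol{M}+4\boldsymbol{e}\boldsymbol{e}^{T}$ over $\mathbb{F}_{11}$. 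For $\boldsymbol{y}=(2,5,5,9,9)^{T}$ one checks $\boldsymbol{M}\boldsymbol{y}=\boldsymbol{e}$ and $\boldsymbol{e}^{T}\boldsymbol{y}=8$. Hence $(\boldsymbol{M}+4\boldsymbol{e}\boldsymbol{e}^{T})\boldsymbol{y}=33\,\boldsymbol{e}=\boldsymbol{0}$. Equivalently, the codeword of $f(x)=2+5x+5x^{2}+9x^{3}+9x^{4}$ is a nonzero element of the Euclidean hull. To get a true theorem you must either restrict $\boldsymbol{v}$ (for instance $v_s^{2}=1$ for all $s$, where your proof goes through verbatim) or add hypotheses controlling the weighted power sums $\sum_{s}v_s^{2}x_s^{t}$.
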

\begin{remark} 
By taking $\delta=q-1$ and $\boldsymbol{A}_{l\times l}=\begin{pmatrix}
		0&1\\
		1&\tau
	\end{pmatrix}$ with $\tau\in\mathbb{F}_{q}$ in Theorem \ref{EuclideanLCDRL2} (1), the corresponding result is just Lemma 3.5 (2) in \cite{nonGRSLCD4}.
\end{remark}
\begin{theorem}\label{EuclideanLCDRL21}
Let $1\leq \delta\leq q-1$ and $\boldsymbol{\alpha}=\left(0,\gamma^{\delta}\alpha_{1}, \ldots, \gamma^{\delta}\alpha_{k}\right)$ with $p\mid k+1$. Then we have
$$
\dim\left(\mathrm{Hull}_{E}\left(\mathrm{GRL}_{k}(\boldsymbol{\alpha}, \boldsymbol{v},\boldsymbol{A}_{\ell\times \ell})\right)\right)=\begin{cases}
1,&\text{if}\ \ell<\frac{k}{2};\\
&\quad\text{or}\ \ell=\frac{k}{2}\ \text{and}\ \gamma^{\delta k}k+\sum\limits_{i=1}^{\ell}a_{1i}^{2}\in\mathbb{F}_{q}^{*};\\
2,&\text{if}\ \ell=\frac{k}{2}\ \text{and}\ \gamma^{\delta k}k+\sum\limits_{i=1}^{\ell}a_{1i}^{2}=0.\\
\end{cases}
$$
\end{theorem}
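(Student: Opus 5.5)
The plan is to repeat the Gram-matrix computation from the proof of Theorem \ref{EuclideanLCDRL1}, track the single change caused by the evaluation point $0$, and read off the hull dimension from the rank via Lemma \ref{Ehull}. By Remark \ref{GRLmonomial} it suffices to treat $\mathrm{GRL}_{k}(\boldsymbol{\alpha},\boldsymbol{1},\boldsymbol{A}_{\ell\times\ell})$. A monomial map $\boldsymbol{x}\mapsto\boldsymbol{x}\boldsymbol{D}$ does not preserve the Euclidean inner product in general, so I would not rely on monomial equivalence alone. Instead I would note that the argument below only uses that the point $0$ contributes to the $(1,1)$ entry and that sums over the $k$-th roots vanish. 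Both facts should persist with weights $v_i$ if one requires $\sum v_i^2$ over the appropriate points to behave the same way; for the write-up, though, I would argue with $\boldsymbol{G}_{\boldsymbol{1},k+1}$ as the paper does.

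Since $\mathcal{C}\cap\mathcal{C}^{\perp_E}=\mathcal{C}^{\perp_E}\cap(\mathcal{C}^{\perp_E})^{\perp_E}$, Lemma \ref{Ehull} gives
$$\dim\left(\mathrm{Hull}_E(\mathcal{C})\right)=k-\rank\left(\boldsymbol{G}_{\boldsymbol{1},k+1}\boldsymbol{G}_{\boldsymbol{1},k+1}^{T}\right).$$
I therefore need the rank of the Gram matrix. Its $(i,j)$ entry, indexed by exponents $0\le i,j\le k-1$, is $\sum_{x}x^{i+j}$ over the $k+1$ points, plus the $\boldsymbol{A}\boldsymbol{A}^T$ correction in the lower-right $\ell\times\ell$ block. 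The point $0$ contributes only when $i+j=0$, with the convention $0^0=1$. Every other entry is therefore identical to the corresponding entry in the proof of Theorem \ref{EuclideanLCDRL1}, computed by Lemma \ref{Fq*sum}. The only change is that the $(1,1)$ entry becomes $k+1$, which is $0$ in $\mathbb{F}_q$ because $p\mid k+1$. All other entries of the first row and column are sums of $x^t$ with $1\le t\le k-1$, hence $0$. So the first row and column vanish, and the rank equals the rank of the $(k-1)\times(k-1)$ lower-right submatrix $\boldsymbol{M}$.

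When $\ell<\frac{k}{2}$, the matrix $\boldsymbol{M}$ is the anti-diagonal pattern displayed in the proof of Theorem \ref{EuclideanLCDRL1}(1), with entries $\gamma^{\delta k}k\ne0$ plus an $\boldsymbol{A}$-block that sits strictly below the anti-diagonal. Hence $\boldsymbol{M}$ has full rank $k-1$ and the hull has dimension $1$. The main step is the case $\ell=\frac{k}{2}$, where I set $c=\gamma^{\delta k}k+\sum_{i=1}^{\ell}a_{1i}^2$.

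In that case, rows $2,\dots,\ell$ of the Gram matrix are scalar multiples of unit vectors supported on columns $k,\dots,\ell+2$. I would use them to clear columns $\ell+2,\dots,k$ in all other rows. Row $\ell+1$ is zero in columns $2,\dots,\ell$, so after clearing it becomes $c\,\boldsymbol{e}_{\ell+1}$. Rows $\ell+2,\dots,k$ then have distinct pivots $\gamma^{\delta k}k$ in columns $\ell,\dots,2$. This gives
$$\rank(\boldsymbol{M})=(\ell-1)+(\ell-1)+[c\ne0]=k-2+[c\ne0],$$
so the hull dimension is $1$ if $c\in\mathbb{F}_q^*$ and $2$ if $c=0$. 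I expect this elimination bookkeeping to be the only delicate point. Throughout I use that $k\mid q-1$ forces $k\not\equiv0\pmod p$, so $\gamma^{\delta k}k\in\mathbb{F}_q^*$.
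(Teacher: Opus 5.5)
Your argument is correct and follows the paper's route. The paper proves this result only by reference to the method of Theorems \ref{EuclideanLCDRL1}--\ref{ELCD1Hull1}, and that method is yours: compute $\boldsymbol{G}_{\boldsymbol{1},k+1}\boldsymbol{G}_{\boldsymbol{1},k+1}^{T}$ with Lemma \ref{Fq*sum}, note that the point $0$ only turns the $(1,1)$ entry into $k+1=0$, use the anti-triangular structure of the remaining block (with the single anti-diagonal entry $c=\gamma^{\delta k}k+\sum_{i=1}^{\ell}a_{1i}^{2}$ when $\ell=\frac{k}{2}$), and convert rank to hull dimension via Lemma \ref{Ehull}.

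Your hesitation about $\boldsymbol{v}$ is justified and goes beyond the paper, because Remark \ref{GRLmonomial} does not transfer hull dimensions. Taking $v_{1}^{2}\neq 1$ at the point $0$, with all other $v_i=1$, makes the $(1,1)$ entry $v_{1}^{2}-1\neq 0$ and lowers the hull dimension by one. So the statement fails for general $\boldsymbol{v}$, and what you and the paper actually establish is the case $\boldsymbol{v}\in\{\pm 1\}^{k+1}$.
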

\subsection{The third class of LCD GRL codes with the parameters $\left[2k+\ell,k\right]$}
In this subsection, by taking $$\boldsymbol{\alpha}=\left(\gamma^{s}\alpha_{1},...,\gamma^{s}\alpha_{k},\gamma^{t}\alpha_{1},...,\gamma^{t}\alpha_{k}\right)$$
with $1\leq s\neq t\leq q-1$, we construct two classes of Euclidean LCD GRL codes, and then get a class of GRL codes with $1$-dimensional hull.

Firstly, we present the following key lemma. 

\begin{lemma}\label{kq-1sttaking}
	If $q-1$, $k$, $s$ and $t$ satisfy $\frac{q-1}{k}\nmid s-t$ and
	$$v_{2}\left(s-t\right)\neq v_{2}(q-1)-v_{2}(k)-1,$$ then any two components of the vector $\boldsymbol{\alpha}=\left(\gamma^{s}\alpha_{1},...,\gamma^{s}\alpha_{k},\gamma^{t}\alpha_{1},...,\gamma^{t}\alpha_{k}\right)$ are distinct over $\mathbb{F}_{q}$ and $\gamma^{sk}+\gamma^{t k}\in\mathbb{F}_{q}^{*}$. 
\end{lemma}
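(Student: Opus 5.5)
The plan is to treat the two claims separately. Throughout, use that $\gamma$ has order $q-1$ and that $\alpha_i=\gamma^{\frac{q-1}{k}i}$ $(1\leq i\leq k)$ runs over the subgroup $H=\langle\gamma^{\frac{q-1}{k}}\rangle$ of order $k$.

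\emph{Distinctness.} Within each block the entries are distinct: the $\alpha_i$ are distinct elements of $H$, and multiplying by $\gamma^{s}$ (resp. $\gamma^{t}$) is injective. So it suffices to rule out a coincidence $\gamma^{s}\alpha_i=\gamma^{t}\alpha_j$ across the blocks. Such an equality gives $\gamma^{s-t}=\alpha_j\alpha_i^{-1}\in H$, i.e. $\gamma^{s-t}$ is a power of $\gamma^{\frac{q-1}{k}}$. Since $\gamma$ has order $q-1$, this happens if and only if $\frac{q-1}{k}\mid s-t$. That is excluded by hypothesis, so the cosets $\gamma^{s}H$ and $\gamma^{t}H$ are disjoint.

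\emph{Nonvanishing of $\gamma^{sk}+\gamma^{tk}$.} Factor $\gamma^{sk}+\gamma^{tk}=\gamma^{tk}\left(\gamma^{(s-t)k}+1\right)$. It is then enough to show $\gamma^{(s-t)k}\neq-1$. Since $q$ is odd, $-1=\gamma^{\frac{q-1}{2}}$. Hence $\gamma^{(s-t)k}=-1$ if and only if
$$(s-t)k\equiv \tfrac{q-1}{2}\pmod{q-1}.$$
I will show that this congruence forces $v_2(s-t)=v_2(q-1)-v_2(k)-1$.

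Write $(s-t)k=\frac{q-1}{2}(2m+1)$ for some integer $m$. Then $v_2((s-t)k)=v_2(q-1)-1$, so $v_2(s-t)=v_2(q-1)-1-v_2(k)$. This contradicts the second hypothesis, so $\gamma^{(s-t)k}\neq -1$ and $\gamma^{sk}+\gamma^{tk}\in\mathbb{F}_q^{*}$.

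One small point: if $s-t$ is negative, $v_2(s-t)$ should be read as $v_2(|s-t|)$. If $s-t=0$ it is excluded since $s\neq t$.

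The only place needing any care is this $2$-adic step, where one uses that $-1$ is exactly the element of order $2$, namely $\gamma^{\frac{q-1}{2}}$. The remaining steps are routine.
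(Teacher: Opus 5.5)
Your proposal is correct and follows essentially the same route as the paper. For distinctness you reduce to $\frac{q-1}{k}\nmid s-t$ via the disjoint cosets $\gamma^{s}H$ and $\gamma^{t}H$, where the paper uses solvability of the congruence $x\frac{q-1}{k}\equiv s-t\pmod{q-1}$. For the nonvanishing of $\gamma^{sk}+\gamma^{tk}$ you use the same $2$-adic valuation argument on $(s-t)k\equiv\frac{q-1}{2}\pmod{q-1}$.
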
	
{\bf Proof}. Firstly, by $\mathbb{F}_{q}^{*}=\langle\gamma\rangle$, we know that for any integers $m$ and $n$, $\gamma^{m}=\gamma^{n}$ if and only if $m\equiv n (\bmod\ q-1)$. And then for any $1\leq i\neq j\leq k\mid q-1$, $\gamma^{\frac{q-1}{k}i}\neq \gamma^{\frac{q-1}{k}j}$, i.e., $\alpha_{i}\neq \alpha_{j}.$ Note that $\gamma\in\mathbb{F}_{q}^{*}$, thus for any $1\leq i\neq j\leq k$, we have $\gamma^{s}\alpha_{i}\neq \gamma^{s}\alpha_{j}$ and $\gamma^{t}\alpha_{i}\neq \gamma^{t}\alpha_{j}$. Hence, we only need to prove that $\gamma^{s}\alpha_{i}\neq \gamma^{t}\alpha_{j}$ for any $s\neq t$ and $1\leq i,j\leq k$. In fact, by $\alpha_{i}\neq \alpha_{j}(1\leq i\leq k)$ and $\alpha_{i}=\gamma^{\frac{q-1}{k}i}(1\leq i\leq k)$, it's easy to know that $\gamma^{s}\alpha_{i}\neq \gamma^{t}\alpha_{j}$ if and only if for any $1\leq i,j\leq k$, $\gamma^{s}\alpha_{i}\neq \gamma^{t}\alpha_{j}$, i.e., for any $1-k\leq x\leq k-1$, 
$$x\frac{q-1}{k}\not\equiv s-t(\bmod\ q-1).$$
It's well-known that the binary linear Diophantine equation $x\frac{q-1}{k}\equiv s-t(\bmod\ q-1)$ is solvable on $\mathbb{F}_{q}$ if and only if $\gcd\left(\frac{q-1}{k},q-1\right)\mid s-t,$ i.e., $\frac{q-1}{k}\mid s-t$. And so, any two components of the vector $\boldsymbol{\alpha}=\left(\gamma^{s}\alpha_{1},...,\gamma^{s}\alpha_{k},\gamma^{t}\alpha_{1},...,\gamma^{t}\alpha_{k}\right)$ with  $\alpha_{i}=\gamma^{\frac{q-1}{k}i}$ are distinct over $\mathbb{F}_{q}$ if and only if $\frac{q-1}{k}\nmid s-t$.

Secondly, by $\mathbb{F}_{q}^{*}=\langle\gamma\rangle$, we have $\mathrm{ord}(\gamma)=q-1$, namely, $\gamma^{q-1}=1$, it means $\gamma^{\frac{q-1}{2}}=-1.$
Note that $\gamma^{sk}+\gamma^{t k}=0$ if and only if $\gamma^{sk}=-\gamma^{t k}=\gamma^{\frac{q-1}{2}+tk}$, i.e., $$sk\equiv \frac{q-1}{2}+tk(\bmod\ q-1).$$ 
For the convenience, we set $r=\frac{q-1}{2}$, i.e., $q-1=2r$. Thus $sk\equiv \frac{q-1}{2}+tk(\bmod\ q-1)$ if and only if there exists some $w\in\mathbb{Z}$ such that $ (s-t)k=(2w+1)r.$ It means $$v_{2}\left((s-t)k\right)=v_{2}\left((2w+1)r\right)=v_{2}(r)=v_{2}\left(\frac{q-1}{2}\right)=v_{2}(q-1)-1.$$
Note that $v_{2}((s-t)k)=v_{2}(s-t)+v_{2}(k)$, thus we know that if $(s-t)k\equiv \frac{q-1}{2}(\bmod\ q-1),$ then $v_{2}(s-t)+1=v_{2}(q-1)-v_{2}(k).$ 
And so, $(s-t) k\not\equiv \frac{q-1}{2}(\bmod\ q-1),$ i.e., $\gamma^{sk}+\gamma^{tk}\in\mathbb{F}_{q}^{*}$ if $v_{2}(s-t)+1\neq v_{2}(q-1)-v_{2}(k)$.

This completes the proof of Lemma $\ref{kq-1sttaking}$.

$\hfill\Box$

Especially, when $s=q-1$ and $t=\delta$ with $1\leq \delta \leq q-2$, i.e., $\boldsymbol{\alpha}=\left(\alpha_{1},...,\alpha_{k},\gamma^{\delta}\alpha_{1},...,\gamma^{\delta}\alpha_{k}\right)$, we can obtain the more precise result as the following 
\begin{lemma}\label{kq-1deltataking}
	If $q-1$, $k$, $\delta$ satisfy one of the following conditions, then any two components of the vector $\boldsymbol{\alpha}=\left(\alpha_{1},...,\alpha_{k},\gamma^{\delta}\alpha_{1},...,\gamma^{\delta}\alpha_{k}\right)$ are distinct over $\mathbb{F}_{q}$ and $1+\gamma^{\delta k}\in\mathbb{F}_{q}^{*}$.
	
	$(1)$ $v_{2}\left(k\right)=v_{2}\left(q-1\right)\geq 1$ and $\delta=2^{\mu}(\mu\geq 1)$;
	
	$(2)$ $v_{2}\left(k\right)<v_{2}\left(q-1\right)$ and $\delta=2^{v_{2}\left(q-1\right)-v_{2}\left(k\right)-\mu}(1<\mu\leq v_{2}\left(q-1\right)-v_{2}\left(k\right))$;
	
	$(3)$ $v_{2}\left(q-1\right)-v_{2}\left(k\right)\neq 1$, and there exists an odd prime $p^{\prime}$ such that $v_{p^{\prime}}\left(q-1\right)=v_{p^{\prime}}\left(k\right)$ and   $\delta=p_{i}^{v_{p^{\prime}}\left(q-1\right)+\mu}(\mu\geq 1)$;
	
	$(4)$ $\delta=2^{v_{2}\left(q-1\right)+\mu}(\mu\geq 0)$, and there exists an odd prime $p^{\prime}$ such that $v_{p^{\prime}}\left(k\right)<v_{p^{\prime}}\left(q-1\right)$;
	
	$(5)$ $v_{2}\left(q-1\right)-v_{2}\left(k\right)\neq 1$, and there exists an odd prime $p^{\prime}$ such that $v_{p^{\prime}}\left(k\right)<v_{p^{\prime}}\left(q-1\right)$ and $\delta=\left(p^{\prime}\right)^{v_{p^{\prime}}\left(q-1\right)-v_{p^{\prime}}\left(k\right)-\mu}(1\leq\mu\leq v_{p^{\prime}}\left(q-1\right)-v_{p^{\prime}}\left(k\right))$.
\end{lemma}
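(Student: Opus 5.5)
The plan is to derive Lemma \ref{kq-1deltataking} from Lemma \ref{kq-1sttaking} with $s=q-1$ and $t=\delta$. Since $\gamma^{q-1}=1$, the vector is then exactly $\left(\alpha_{1},\ldots,\alpha_{k},\gamma^{\delta}\alpha_{1},\ldots,\gamma^{\delta}\alpha_{k}\right)$, and $\gamma^{sk}+\gamma^{tk}=1+\gamma^{\delta k}$. The two hypotheses of Lemma \ref{kq-1sttaking} should be rewritten in terms of $\delta$, because that is what each of (1)--(5) controls.

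\textbf{Distinctness.} The proof of Lemma \ref{kq-1sttaking} shows that distinctness is equivalent to $\frac{q-1}{k}\nmid s-t=q-1-\delta$. Since $\frac{q-1}{k}\mid q-1$, this is equivalent to $\frac{q-1}{k}\nmid\delta$.

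\textbf{Nonvanishing of $1+\gamma^{\delta k}$.} I will not use the hypothesis $v_{2}(s-t)\neq\cdots$ literally. Instead I use the criterion from the second half of that proof. Write $r=\frac{q-1}{2}$. Then $1+\gamma^{\delta k}=0$ iff $\delta k\equiv r \pmod{q-1}$, i.e.\ iff $\delta k$ is an odd multiple of $r$. This forces $v_{2}(\delta)+v_{2}(k)=v_{2}(q-1)-1$. Put $e=v_{2}(q-1)-v_{2}(k)$. It therefore suffices to check
$$\frac{q-1}{k}\nmid \delta \qquad\text{and}\qquad v_{2}(\delta)\neq e-1 .$$
I will also record the standing fact $\frac{q-1}{k}\geq 2$. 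This holds because the code length satisfies $2k\leq q$, so $k<q-1$. It is needed whenever $\frac{q-1}{k}$ turns out to be coprime to $\delta$.

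\textbf{Case check.}
\begin{itemize}
\item Case (1): here $e=0$, so $\frac{q-1}{k}$ is odd and at least $2$. Hence it cannot divide $2^{\mu}$. Also $v_{2}(\delta)=\mu\geq 1\neq -1$.
\item Case (2): $v_{2}\left(\frac{q-1}{k}\right)=e>e-\mu=v_{2}(\delta)$, so $\frac{q-1}{k}\nmid\delta$. Moreover $e-\mu\neq e-1$ because $\mu>1$.
\item Case (3): $p'\nmid\frac{q-1}{k}$, so $\gcd\left(\frac{q-1}{k},\delta\right)=1$. Together with $\frac{q-1}{k}\geq 2$ this gives $\frac{q-1}{k}\nmid\delta$. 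Also $v_{2}(\delta)=0\neq e-1$ because $e\neq 1$.
\item Case (4): $p'\mid\frac{q-1}{k}$ but $\delta$ is a power of $2$, so $\frac{q-1}{k}\nmid\delta$. Also $v_{2}(\delta)\geq v_{2}(q-1)>e-1$.
\item Case (5): $v_{p'}(\delta)<v_{p'}\left(\frac{q-1}{k}\right)$, so $\frac{q-1}{k}\nmid\delta$. Also $v_{2}(\delta)=0\neq e-1$ because $e\neq 1$.
\end{itemize}

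The main obstacle is not the arithmetic but the bookkeeping. First, one must justify replacing the $v_{2}(s-t)$ hypothesis by the condition on $v_{2}(\delta)$, which means re-deriving it from $\delta k\equiv r \pmod{q-1}$. Second, one must make sure $\frac{q-1}{k}\neq 1$ is available; I would invoke $n=2k\leq q$ from Definition \ref{GRLdefinition}. Third, one must tacitly require $1\leq\delta\leq q-2$ in each case, so that the stated powers are legitimate exponents.
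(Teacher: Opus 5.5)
Your proposal is correct and takes the paper's route: the paper obtains this lemma by specializing Lemma \ref{kq-1sttaking} to $s=q-1$, $t=\delta$, and your reduction to the two conditions $\frac{q-1}{k}\nmid\delta$ and $v_2(\delta)\neq v_2(q-1)-v_2(k)-1$, followed by the five case checks, is that specialization carried out explicitly. Your appeal to $2k\leq q$ to get $\frac{q-1}{k}\geq 2$ is genuinely needed, since for $k=q-1$ cases (1) and (3) are false as stated; the restriction $1\leq\delta\leq q-2$ is not needed, because both of your criteria hold for every positive integer $\delta$.
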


Based on the above Lemmas $\ref{kq-1sttaking}$-$\ref{kq-1deltataking}$, in the similar proofs as those for Theorems $\ref{EuclideanLCDRL1}$-\ref{ELCD1Hull1}, one can obtain the following Theorems \ref{EuclideanLCDRL31}-\ref{ELCD3Hull1}.
\begin{theorem}\label{EuclideanLCDRL31}
	Let $\gcd(2k,q)=1$, $\boldsymbol{\alpha}=\left(\gamma^{s}\alpha_{1},...,\gamma^{s}\alpha_{k},\gamma^{t}\alpha_{1},...,\gamma^{t}\alpha_{k}\right) \in \mathbb{F}_{q}^{2k}$ with $\frac{q-1}{k}\nmid s-t$ and
	$v_{2}\left(s-t\right)\neq v_{2}(q-1)-v_{2}(k)-1$. Then the code $\mathrm{GRL}_{k}(\boldsymbol{\alpha}, \boldsymbol{v},\boldsymbol{A}_{\ell\times \ell})$ is Euclidean LCD for $\ell<\frac{k}{2}$, or $\ell=\frac{k}{2}$ and $k\left(\gamma^{s k}+\gamma^{t k}\right)+\sum\limits_{i=1}^{\ell}a_{1i}^{2}\in\mathbb{F}_{q}^{*}$.
\end{theorem}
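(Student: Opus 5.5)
By Remark \ref{GRLmonomial} it suffices to treat $\mathrm{GRL}_{k}(\boldsymbol{\alpha}, \boldsymbol{1},\boldsymbol{A}_{\ell\times \ell})$ with generator matrix $\boldsymbol{G}_{\boldsymbol{1},2k}$ from (\ref{GRL1generatormatrix}), where now $n=2k$. By Lemma \ref{LCDequivalent} the goal is to show that the $k\times k$ matrix $\boldsymbol{G}_{\boldsymbol{1},2k}\boldsymbol{G}_{\boldsymbol{1},2k}^{T}$ is nonsingular. First, Lemma \ref{kq-1sttaking} guarantees that, under the hypotheses $\frac{q-1}{k}\nmid s-t$ and $v_{2}(s-t)\neq v_{2}(q-1)-v_{2}(k)-1$, the $2k$ entries of $\boldsymbol{\alpha}$ are pairwise distinct. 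Hence the code is well defined; we also need $2k\le q$, which holds since $k\mid q-1$ and $s\neq t$ give disjoint cosets. The same lemma also gives $\gamma^{sk}+\gamma^{tk}\in\mathbb{F}_q^{*}$.

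Next compute the entries. The $(i,j)$ entry ($0\le i,j\le k-1$) is the power sum
$$S_{m}=\sum_{r=1}^{k}(\gamma^{s}\alpha_{r})^{m}+\sum_{r=1}^{k}(\gamma^{t}\alpha_{r})^{m}$$
with $m=i+j$. When both $i,j\ge k-\ell$, we add $\sum_{u=1}^{\ell}a_{i-k+\ell+1,u}a_{j-k+\ell+1,u}$. Applying Lemma \ref{Fq*sum} twice, with $\beta=\gamma^{s}$ and $\beta=\gamma^{t}$, gives
\begin{itemize}
\item $S_{0}=2k$,
\item $S_{k}=k(\gamma^{sk}+\gamma^{tk})$,
\item $S_{m}=0$ for $0<m\le 2k-2$ with $m\neq k$.
\end{itemize}
Here $S_{0}=2k\neq 0$ because $\gcd(2k,q)=1$, and $S_{k}\neq 0$ because $\gcd(k,q)=1$ and $\gamma^{sk}+\gamma^{tk}\in\mathbb{F}_q^{*}$. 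Write $c=k(\gamma^{sk}+\gamma^{tk})$. The matrix then has exactly the shape seen in the proof of Theorem \ref{EuclideanLCDRL1}: entry $2k$ at position $(0,0)$, $c$ on the anti-diagonal $i+j=k$ for $1\le i,j\le k-1$, and the Gram block $\boldsymbol{A}\boldsymbol{A}^{T}$ added in the bottom-right $\ell\times\ell$ corner.

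For $\ell<\frac{k}{2}$, the anti-diagonal positions $(i,k-i)$ with $i\ge k-\ell$ have $k-i\le \ell<k-\ell$. So the added block never meets the anti-diagonal, and it lies entirely in rows and columns whose anti-diagonal partners lie outside the block. The step needing the most care is checking that nonsingularity still follows even though the block entries are arbitrary. Expand along the first row, which contributes the factor $2k$. Then, for each column $j\le k-\ell-1$ (these columns contain only the single anti-diagonal entry $c$ in row $k-j\ge \ell+1$), expand along that column. After these expansions the determinant reduces to a lower/upper triangular shape, giving $\det=\pm 2k\,c^{k-1}\neq0$.

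For $\ell=\frac{k}{2}$ the block's top-left corner sits at $(\ell,\ell)$, which is on the anti-diagonal. The only coupling is the entry $c+\sum_{u}a_{1u}^{2}$ there. The same elimination then gives $\det=\pm 2k\,c^{k-2}\bigl(c+\sum_{u=1}^{\ell}a_{1u}^{2}\bigr)$, which is nonzero by hypothesis. In both cases Lemma \ref{LCDequivalent} yields that the code is Euclidean LCD.
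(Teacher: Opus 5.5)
Your proposal is correct and follows the paper's route. You compute $\boldsymbol{G}\boldsymbol{G}^{T}$ with Lemma \ref{Fq*sum} applied to both cosets and get the same anti-diagonal shape as in Theorem \ref{EuclideanLCDRL1}, with $2k$ and $c=k(\gamma^{sk}+\gamma^{tk})$ replacing $k$ and $\gamma^{\delta k}k$, nonzero by $\gcd(2k,q)=1$ and Lemma \ref{kq-1sttaking}. Your explicit determinants $\pm 2k\,c^{k-1}$ and $\pm 2k\,c^{k-2}\bigl(c+\sum_{u}a_{1u}^{2}\bigr)$ make precise the rank claim the paper only asserts. One small slip: $s\neq t$ alone does not make the two cosets disjoint (e.g.\ $s-t=\frac{q-1}{k}$ gives the same coset); it is the hypothesis $\frac{q-1}{k}\nmid s-t$ that gives pairwise distinctness, and that is what yields $2k\le q-1$.
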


\begin{theorem}\label{EuclideanLCDRL3}
Let $\gcd(2k,q)=1$, $\boldsymbol{\alpha}=\left(\alpha_{1},...,\alpha_{k},\gamma^{\delta}\alpha_{1},...,\gamma^{\delta}\alpha_{k}\right) \in \mathbb{F}_{q}^{2k}$. If $k,q-1,\delta$ satisfy one of the following
	conditions, then the code $\mathrm{GRL}_{k}(\boldsymbol{\alpha}, \boldsymbol{v},\boldsymbol{A}_{\ell\times \ell})$ is Euclidean LCD for $\ell<\frac{k}{2}$, or $\ell=\frac{k}{2}$ and $k\left(1+\gamma^{\delta k}\right)+\sum\limits_{i=1}^{\ell}a_{1i}^{2}\in\mathbb{F}_{q}^{*}$.
	
$(1)$ $v_{2}\left(k\right)=v_{2}\left(q-1\right)\geq 1$ and $\delta=2^{\mu}(\mu\geq 1)$;

$(2)$ $v_{2}\left(k\right)<v_{2}\left(q-1\right)$ and $\delta=2^{v_{2}\left(q-1\right)-v_{2}\left(k\right)-\mu}(1<\mu\leq v_{2}\left(q-1\right)-v_{2}\left(k\right))$;

$(3)$ $v_{2}\left(q-1\right)-v_{2}\left(k\right)\neq 1$, there exists an odd  prime $p^{\prime}$ such that $v_{p^{\prime}}\left(q-1\right)=v_{p^{\prime}}\left(k\right)$, and   $\delta=\left(p^{\prime}\right)^{v_{p^{\prime}}\left(q-1\right)+\mu}(\mu\geq 1)$;

$(4)$ $\delta=2^{v_{2}\left(q-1\right)+\mu}(\mu\geq 0)$ and there exists an odd prime $p^{\prime}$ such that $v_{p^{\prime}}\left(k\right)<v_{p^{\prime}}\left(q-1\right)$;

$(5)$ $v_{2}\left(q-1\right)-v_{2}\left(k\right)\neq 1$, there exists an odd prime $p^{\prime}$ such that $v_{p^{\prime}}\left(k\right)<v_{p^{\prime}}\left(q-1\right)$, and $\delta=\left(p^{\prime}\right)^{v_{p^{\prime}}\left(q-1\right)-v_{p^{\prime}}\left(k\right)-\mu}(1\leq\mu\leq v_{p^{\prime}}\left(q-1\right)-v_{p^{\prime}}\left(k\right))$.
\end{theorem}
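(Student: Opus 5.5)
The plan is to reduce Theorem \ref{EuclideanLCDRL3} to Theorem \ref{EuclideanLCDRL31} with $s=q-1$ and $t=\delta$, using Lemma \ref{kq-1deltataking} to supply the needed hypotheses. By Remark \ref{GRLmonomial} and Lemma \ref{LCDequivalent}, it suffices to show that $\boldsymbol{G}_{\boldsymbol{1},2k}\boldsymbol{G}_{\boldsymbol{1},2k}^{T}$ is nonsingular. I will redo this computation directly rather than only cite Theorem \ref{EuclideanLCDRL31}, since the two verifications are identical.

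First, Lemma \ref{kq-1deltataking} shows that under each of the conditions $(1)$--$(5)$ the entries of $\boldsymbol{\alpha}=(\alpha_1,\ldots,\alpha_k,\gamma^{\delta}\alpha_1,\ldots,\gamma^{\delta}\alpha_k)$ are pairwise distinct. So the GRL code is well defined with $n=2k\le q$. The lemma also gives $1+\gamma^{\delta k}\in\mathbb{F}_q^*$. Since $\gcd(2k,q)=1$, the integer $k$ is nonzero in $\mathbb{F}_q$, and hence $c:=k(1+\gamma^{\delta k})\in\mathbb{F}_q^*$.

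Next I compute the $(i,j)$ entry of the Gram matrix, indexing rows from $0$ to $k-1$. The polynomial part is the power sum $S_m=\sum_{i=1}^{k}\alpha_i^m+\sum_{i=1}^{k}(\gamma^{\delta}\alpha_i)^m$ with $m=i+j$. By Lemma \ref{Fq*sum}, applied once with $\beta=1$ and once with $\beta=\gamma^\delta$, we get $S_m=2k$ for $m=0$, $S_m=c$ for $m=k$, and $S_m=0$ for every other $0\le m\le 2k-2$. The matrix $\boldsymbol{A}$ contributes $\boldsymbol{A}\boldsymbol{A}^{T}$ in the lower-right $\ell\times\ell$ block, which covers indices $i,j\ge k-\ell$. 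The Gram matrix therefore has $2k$ at position $(0,0)$, the constant $c$ on the anti-diagonal $i+j=k$ with $1\le i,j\le k-1$, and the block $\boldsymbol{A}\boldsymbol{A}^T$ added in the bottom-right corner.

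If $\ell<\frac{k}{2}$, then $2(k-\ell)>k$, so the $\boldsymbol{A}\boldsymbol{A}^T$ block never meets the anti-diagonal. It occupies rows and columns $k-\ell,\ldots,k-1$, and in those rows the anti-diagonal entries sit in columns $1,\ldots,\ell$, outside the block. I order rows and columns so that row $0$ comes first and the anti-diagonal pairs $(i,k-i)$ follow. Expanding along the first row and then along the columns $1,\ldots,\ell$, each of which has a single nonzero entry $c$, gives a block-triangular structure. The determinant is $\pm 2k\cdot c^{k-1}\neq0$, since $2k\in\mathbb{F}_q^*$ by $\gcd(2k,q)=1$.

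If $\ell=\frac{k}{2}$, the block meets the anti-diagonal only at the single position $(k-\ell,k-\ell)=(\ell,\ell)$, which becomes $c+\sum_i a_{1i}^2$. Rows $1,\ldots,\ell-1$ each contain one nonzero entry $c$, in columns $k-1,\ldots,\ell+1$. Eliminating these rows together with the corresponding columns leaves a triangular structure whose diagonal is $2k$, the entry $c+\sum_{i}a_{1i}^2$, and copies of $c$. The hypothesis makes this determinant nonzero.

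The main obstacle is writing out the cofactor and block-triangular argument cleanly when $\ell=\frac{k}{2}$, because there the $\boldsymbol{A}\boldsymbol{A}^T$ entries other than the $(\ell,\ell)$ one are irrelevant but not zero. I would handle this by permuting to an explicit lower block-triangular form, mirroring the displayed matrix in the proof of Theorem \ref{EuclideanLCDRL1}$(2)$ with $\gamma^{\delta k}k$ replaced by $c$.
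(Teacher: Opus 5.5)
Your route is the paper's own. You use Lemma \ref{kq-1deltataking} for distinctness and for $1+\gamma^{\delta k}\neq0$, then redo the Gram-matrix computation of Theorem \ref{EuclideanLCDRL1} with $\gamma^{\delta k}k$ replaced by $c=k(1+\gamma^{\delta k})$. For $\boldsymbol{v}=\boldsymbol{1}$ your determinant evaluation is correct: it is $\pm 2k\,c^{k-1}$ if $\ell<\frac{k}{2}$ and $\pm 2k\,c^{k-2}\bigl(c+\sum_{i}a_{1i}^{2}\bigr)$ if $\ell=\frac{k}{2}$.

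The genuine gap is the opening reduction. Remark \ref{GRLmonomial} does not let you replace $\boldsymbol{v}$ by $\boldsymbol{1}$, because the Euclidean LCD property is not invariant under monomial equivalence. Indeed $\boldsymbol{G}_{\boldsymbol{v},2k}\boldsymbol{G}_{\boldsymbol{v},2k}^{T}=\boldsymbol{G}_{\boldsymbol{1},2k}\,\mathrm{diag}(v_{1}^{2},\ldots,v_{2k}^{2},1,\ldots,1)\,\boldsymbol{G}_{\boldsymbol{1},2k}^{T}$. The relevant sums are therefore $\sum_{i}v_{i}^{2}\alpha_{i}^{m}$, which Lemma \ref{Fq*sum} does not control. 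The paper's proof makes the same reduction, and the statement is in fact false for general $\boldsymbol{v}$:
\begin{itemize}
\item Let $q\equiv1\pmod 4$, choose $\iota\in\mathbb{F}_{q}$ with $\iota^{2}=-1$, and put $v_{i}=1$ for $i\le k$ and $v_{i}=\iota$ for $k<i\le 2k$.
\item The first row $\boldsymbol{g}_{0}$ of $\boldsymbol{G}_{\boldsymbol{v},2k}$ vanishes on the last $\ell$ coordinates, since $\ell<k$.
\item For every row $\boldsymbol{g}_{j}$ one gets $\langle\boldsymbol{g}_{0},\boldsymbol{g}_{j}\rangle_{E}=\sum_{i=1}^{k}\alpha_{i}^{j}-\sum_{i=1}^{k}(\gamma^{\delta}\alpha_{i})^{j}$. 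By Lemma \ref{Fq*sum} this is $k-k=0$ for $j=0$ and $0$ for $1\le j\le k-1$.
\item Hence $\boldsymbol{g}_{0}\in\mathrm{Hull}_{E}$, so the code is not LCD.
\end{itemize}
A concrete instance satisfying every hypothesis is $q=13$, $\gamma=2$, $k=4$, $\ell=2$, $\delta=2$ (condition $(1)$), $\boldsymbol{A}=\boldsymbol{E}_{2}$ and $\boldsymbol{v}=(1,1,1,1,5,5,5,5)$. Here $\boldsymbol{\alpha}=(8,12,5,1,6,9,7,4)$ and $k(1+\gamma^{\delta k})+\sum_{i}a_{1i}^{2}=1+1=2\neq0$, yet the code is not LCD. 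So your argument, like the paper's, proves the theorem only when $v_{i}^{2}=1$ for all $i$ (e.g.\ $\boldsymbol{v}=\boldsymbol{1}$), and the statement must be restricted accordingly.

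A smaller point: you infer $n=2k\le q$ from Lemma \ref{kq-1deltataking}, but cases $(1)$ and $(3)$ of that lemma fail when $k=q-1$. Then $\frac{q-1}{k}=1$ divides $\delta$, and both halves of $\boldsymbol{\alpha}$ equal $\mathbb{F}_{q}^{*}$. Instead, take $2k\le q$ from Definition \ref{GRLdefinition} as given. It forces $\frac{q-1}{k}\ge2$. Then $\frac{q-1}{k}$ is odd in case $(1)$ and prime to $p^{\prime}$ in case $(3)$, so it cannot divide $\delta$.
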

\begin{theorem}\label{ELCD3Hull1}
Let $\gcd\left(2k,q\right)=1$ and $\boldsymbol{\alpha}=\left(\alpha_{1},...,\alpha_{k},\gamma^{\delta}\alpha_{1},...,\gamma^{\delta}\alpha_{k}\right) \in \mathbb{F}_{q}^{2k}$. If $k,q-1,\delta$ satisfy one of the following
conditions, $\ell=\frac{k}{2}$ and $k\left(1+\gamma^{\delta k}\right)+\sum\limits_{i=1}^{\ell}a_{1i}^{2}=0$, then $$\dim\left(\mathrm{Hull}_{E}\left(\mathrm{GRL}_{k}(\boldsymbol{\alpha}, \boldsymbol{v},\boldsymbol{A}_{\ell\times \ell})\right)\right)=1.$$ 
	
$(1)$ $v_{2}\left(k\right)=v_{2}\left(q-1\right)\geq 1$ and $\delta=2^{\mu}(\mu\geq 1)$;

$(2)$ $v_{2}\left(k\right)<v_{2}\left(q-1\right)$ and $\delta=2^{v_{2}\left(q-1\right)-v_{2}\left(k\right)-\mu}(1<\mu\leq v_{2}\left(q-1\right)-v_{2}\left(k\right))$;

$(3)$ $v_{2}\left(q-1\right)-v_{2}\left(k\right)\neq 1$, there exists an odd  prime $p^{\prime}$ such that $v_{p^{\prime}}\left(q-1\right)=v_{p^{\prime}}\left(k\right)$, and   $\delta=\left(p^{\prime}\right)^{v_{p^{\prime}}\left(q-1\right)+\mu}(\mu\geq 1)$;

$(4)$ $\delta=2^{v_{2}\left(q-1\right)+\mu}(\mu\geq 0)$ and there exists an odd prime $p^{\prime}$ such that $v_{p^{\prime}}\left(k\right)<v_{p^{\prime}}\left(q-1\right)$;

$(5)$ $v_{2}\left(q-1\right)-v_{2}\left(k\right)\neq 1$, there exists an odd prime $p^{\prime}$ such that $v_{p^{\prime}}\left(k\right)<v_{p^{\prime}}\left(q-1\right)$, and $\delta=\left(p^{\prime}\right)^{v_{p^{\prime}}\left(q-1\right)-v_{p^{\prime}}\left(k\right)-\mu}(1\leq\mu\leq v_{p^{\prime}}\left(q-1\right)-v_{p^{\prime}}\left(k\right))$.
\end{theorem}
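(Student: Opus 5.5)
The plan is to follow the proofs of Theorems \ref{EuclideanLCDRL1}--\ref{ELCD1Hull1}: compute the Gram matrix explicitly and show that it has rank exactly $k-1$. By Remark \ref{GRLmonomial} we may work with $\boldsymbol{G}=\boldsymbol{G}_{\boldsymbol{1},2k}$. Since $\mathrm{Hull}_{E}(\mathcal{C})=\mathrm{Hull}_{E}(\mathcal{C}^{\perp_E})$, Lemma \ref{Ehull} gives
$$\dim\left(\mathrm{Hull}_{E}(\mathcal{C})\right)=k-\rank\left(\boldsymbol{G}\boldsymbol{G}^{T}\right).$$
So it suffices to prove $\rank(\boldsymbol{G}\boldsymbol{G}^{T})=k-1$.

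\textbf{Step 1: the Gram matrix.} Index rows and columns by $0,\dots,k-1$. The $(i,j)$ entry is
$$\sum_{m=1}^{k}\alpha_m^{i+j}+\sum_{m=1}^{k}(\gamma^{\delta}\alpha_m)^{i+j},$$
plus $(\boldsymbol{A}\boldsymbol{A}^{T})_{i-(k-\ell)+1,\,j-(k-\ell)+1}$ when $i,j\geq k-\ell=\ell$. By Lemma \ref{Fq*sum}, the power-sum part vanishes unless $k\mid i+j$. Since $0\le i+j\le 2k-2$, this leaves only two cases:
\begin{itemize}
\item $i+j=0$, which gives $2k$;
\item $i+j=k$, which gives $c:=k(1+\gamma^{\delta k})$.
\end{itemize}
We have $2k\neq0$ because $\gcd(2k,q)=1$. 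Each of the conditions (1)--(5) is exactly a hypothesis of Lemma \ref{kq-1deltataking}, so that lemma gives both that the entries of $\boldsymbol{\alpha}$ are distinct (the code is well defined) and that $1+\gamma^{\delta k}\neq0$; hence $c\in\mathbb{F}_q^{*}$. With $\ell=k/2$, the resulting matrix has:
\begin{itemize}
\item entry $2k$ at $(0,0)$;
\item entry $c$ on the anti-diagonal $i+j=k$ with $i,j\ge1$;
\item the block $\boldsymbol{A}\boldsymbol{A}^T$ added in rows and columns $\ell,\dots,2\ell-1$.
\end{itemize}
The position $(\ell,\ell)$ lies on both the anti-diagonal and the block, so its entry is $c+\sum_i a_{1i}^2$, which is $0$ by hypothesis.

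\textbf{Step 2: lower bound $\rank\geq k-1$.} Delete row $\ell$ and column $\ell$. Row $0$ contributes an isolated pivot $2k$. For $1\le i\le\ell-1$, row $i$ has its only nonzero entry $c$ in column $k-i\in\{\ell+1,\dots,2\ell-1\}$. For $\ell+1\le i\le 2\ell-1$, row $i$ has $c$ in column $k-i\in\{1,\dots,\ell-1\}$, together with $\boldsymbol{A}\boldsymbol{A}^T$ entries only in columns $\ge\ell+1$ (after the deletion). Ordering the columns as $\{1,\dots,\ell-1\}$ followed by $\{\ell+1,\dots,2\ell-1\}$, the remaining $(2\ell-2)\times(2\ell-2)$ block has the form $\begin{pmatrix}\boldsymbol{0}&c\boldsymbol{J}\\ c\boldsymbol{J}&\ast\end{pmatrix}$, where $\boldsymbol{J}$ is an anti-identity. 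Its determinant is $\pm c^{2\ell-2}\neq0$. Hence this $(k-1)\times(k-1)$ minor is nonzero.

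\textbf{Step 3: upper bound $\rank\le k-1$.} Look at column $\ell$:
\begin{itemize}
\item rows $0,\dots,\ell-1$ give $i+\ell<k$ and $i+\ell\neq0$, so these entries are $0$;
\item row $\ell$ has entry $0$ by hypothesis;
\item rows $\ell+1\le i\le 2\ell-1$ carry only $(\boldsymbol{A}\boldsymbol{A}^T)_{i-\ell+1,1}$.
\end{itemize}
Each column $j\in\{1,\dots,\ell-1\}$ has a single nonzero entry $c$, located in row $k-j$; these rows are exactly $\ell+1,\dots,2\ell-1$. Therefore column $\ell$ is an explicit $\mathbb{F}_q$-combination of columns $1,\dots,\ell-1$, so $\det(\boldsymbol{G}\boldsymbol{G}^T)=0$. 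Combining with Step 2 gives $\rank=k-1$, and hence the hull has dimension $1$.

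The main obstacle is the bookkeeping in Step 1: one must check carefully that no other entries survive. This needs $i+j\in\{0,k\}$ to be the only multiples of $k$ in range, and it needs the $\boldsymbol{A}\boldsymbol{A}^T$ block to sit exactly on indices $\ell,\dots,2\ell-1$, so that it overlaps the anti-diagonal only at $(\ell,\ell)$. The nonvanishing of $c$ is supplied directly by Lemma \ref{kq-1deltataking}.
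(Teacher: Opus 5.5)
Your Steps 1--3 are correct and follow the paper's route. The paper only says the result follows ``in the similar proofs as those for Theorems \ref{EuclideanLCDRL1}--\ref{ELCD1Hull1}'' via Lemma \ref{Ehull}. Your explicit nonzero $(k-1)\times(k-1)$ minor and the dependence of column $\ell$ on columns $1,\dots,\ell-1$ supply the rank-$(k-1)$ argument that the paper leaves implicit.

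The gap is your first sentence, ``by Remark \ref{GRLmonomial} we may work with $\boldsymbol{G}_{\boldsymbol{1},2k}$''. Monomial equivalence preserves $[n,k,d]$ but not Euclidean inner products, so it does not preserve the hull. Since $\boldsymbol{G}_{\boldsymbol{v},2k}=\boldsymbol{G}_{\boldsymbol{1},2k}\boldsymbol{D}$ with $\boldsymbol{D}=\mathrm{diag}(v_1,\dots,v_{2k},1,\dots,1)$, the matrix you must study is $\boldsymbol{G}_{\boldsymbol{1},2k}\boldsymbol{D}^{2}\boldsymbol{G}_{\boldsymbol{1},2k}^{T}$. Its power-sum entries are $\sum_m v_m^2\alpha_m^{i+j}$, and Lemma \ref{Fq*sum} no longer makes them vanish unless all $v_m^2$ coincide.

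This step cannot be patched, because the statement itself fails for general $\boldsymbol{v}$. Take $\boldsymbol{v}=\mu\boldsymbol{1}$ with $\mu^2\neq 1$; such $\mu$ exists because $k=2\ell\geq 4$ and $k\mid q-1$ force $q\geq 5$. A concrete instance is $q=25$, $k=8$, $\delta=2$ under condition $(1)$. Your Step 1 then goes through with every power-sum entry multiplied by $\mu^2$. The $(\ell,\ell)$ entry becomes $\mu^2 c+\sum_i a_{1i}^2=(\mu^2-1)c\neq 0$. Expanding along column $0$, columns $1,\dots,\ell-1$, column $\ell$ and rows $1,\dots,\ell-1$, exactly as in your Steps 2--3, gives $\det=\pm\, 2k\mu^2(\mu^2c)^{2\ell-2}(\mu^2-1)c\neq 0$. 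So this code is Euclidean LCD, not of hull dimension $1$.

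The paper makes the identical reduction in the proof of Theorem \ref{EuclideanLCDRL1}, so you inherited this flaw rather than introduced it. Still, your argument, like the paper's, proves the theorem only for $\boldsymbol{v}$ with $v_i^2=1$ for all $i$ (e.g.\ $\boldsymbol{v}=\boldsymbol{1}$). If all $v_i^2$ equal a common $\lambda$, the same proof works with the hypothesis replaced by $\lambda k(1+\gamma^{\delta k})+\sum_i a_{1i}^2=0$. State that restriction explicitly instead of appealing to monomial equivalence.
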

\begin{remark}\label{EuclideanLCDRL3Remark}
$(1)$ By taking $\delta=1$, $\mu=0$ and $\boldsymbol{A}_{l\times l}=\begin{pmatrix}
0&1\\
1&\tau
\end{pmatrix}$ with $\tau\in\mathbb{F}_{q}$ in Theorem \ref{EuclideanLCDRL3} (4), one can immediately get Lemma 3.5 (3) of the reference \cite{nonGRSLCD4}.

$(2)$ It's easy to know that in Theorems \ref{EuclideanLCDRL31}-\ref{ELCD3Hull1}, the condition $\gcd(2k,q)=1$ always holds  when $k\mid q-1$ and $q$ is an odd prime. Otherwise, if $p\mid k$, then by $k\mid q-1$, we have $p\mid q-1$, and so $p\mid -1$, which is a contradiction. 
\end{remark}
\subsection{Euclidean LCD GRL codes with the parameters $\left[3k+\ell,k\right]$}
In this subsection, by taking $$\boldsymbol{\alpha}=\left(\alpha_{1},...,\alpha_{k},\gamma\alpha_{1},...,\gamma\alpha_{k},\gamma^{2}\alpha_{1},...,\gamma^{2}\alpha_{k}\right),$$ 
we construct two classes of Euclidean LCD GRL codes, and then get three classes of GRL codes with $1$-dimensional hull and a class of GRL codes with $2$-dimensional hull.

\begin{theorem}\label{EuclideanLCDRL4}
	Let $\gcd(3k,q)=1$, $\boldsymbol{\alpha}=\left(\alpha_{1},...,\alpha_{k},\gamma\alpha_{1},...,\gamma\alpha_{k},\gamma^{2}\alpha_{1},...,\gamma^{2}\alpha_{k}\right) \in \mathbb{F}_{q}^{3k}$ and $q-1\notin\left\{k,2k,3k\right\}$. Then the code $\mathrm{GRL}_{k}(\boldsymbol{\alpha}, \boldsymbol{v},\boldsymbol{A}_{\ell\times \ell})$ is Euclidean LCD for $\ell<\frac{k}{2}$, or $\ell=\frac{k}{2}$ and $k\left(1+\gamma^{ k}+\gamma^{ 2k}\right)+\sum\limits_{i=1}^{\ell}a_{1i}^{2}\in\mathbb{F}_{q}^{*}$.
\end{theorem}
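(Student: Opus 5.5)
The plan mirrors the proof of Theorem \ref{EuclideanLCDRL1}. By Remark \ref{GRLmonomial} it suffices to treat $\mathrm{GRL}_{k}(\boldsymbol{\alpha}, \boldsymbol{1},\boldsymbol{A}_{\ell\times \ell})$ with generator matrix $\boldsymbol{G}_{\boldsymbol{1},3k}$. By Lemma \ref{LCDequivalent}, we then need to show that $\boldsymbol{G}_{\boldsymbol{1},3k}\boldsymbol{G}_{\boldsymbol{1},3k}^{T}$ is nonsingular.

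\textbf{Step 1 (distinct evaluation points).} We check that the $3k$ components of $\boldsymbol{\alpha}$ are pairwise distinct, which is required by Definition \ref{GRLdefinition}. As in Lemma \ref{kq-1sttaking}, we have $\gamma^{a}\alpha_i=\gamma^{b}\alpha_j$ with $a,b\in\{0,1,2\}$ if and only if $a-b\equiv x\frac{q-1}{k}\pmod{q-1}$ for some integer $x$. Since $|a-b|\in\{1,2\}$, this forces $\frac{q-1}{k}\mid 1$ or $\frac{q-1}{k}\mid 2$, that is, $q-1\in\{k,2k\}$. This is excluded by hypothesis, and the remaining exclusion $q-1\neq 3k$ also guarantees $3k\le q-1<q$, so the length condition holds.

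\textbf{Step 2 (power sums).} The $(i,j)$ entry of the Gram matrix is $\sum_{m}\alpha_m^{i+j-2}$ plus the $\boldsymbol{A}\boldsymbol{A}^{T}$ correction in the bottom-right $\ell\times\ell$ block. By Lemma \ref{Fq*sum} applied with $\beta=1,\gamma,\gamma^2$, we get
\[
\sum_{m=1}^{3k}\alpha_m^{t}=\begin{cases} k\left(1+\gamma^{t}+\gamma^{2t}\right), & k\mid t,\\ 0, & \text{otherwise}. \end{cases}
\]
The exponents $t=i+j-2$ range over $[0,2k-2]$, so only $t=0$ and $t=k$ matter.
\begin{itemize}
\item For $t=0$ the value is $3k$, which is nonzero because $\gcd(3k,q)=1$.
\item For $t=k$ the value is $c:=k(1+\gamma^{k}+\gamma^{2k})$.
\end{itemize}

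\textbf{Step 3 (the key obstacle: $c\neq 0$).} We must show that $1+\gamma^{k}+\gamma^{2k}\neq0$. Suppose it vanishes. Multiplying by $\gamma^{k}-1$ gives $\gamma^{3k}=1$. If $\gamma^{k}=1$, the sum equals $3\neq0$ since $p\neq 3$, a contradiction. Otherwise $\gamma^{k}$ has order exactly $3$, which means $(q-1)/\gcd(k,q-1)=(q-1)/k=3$, i.e.\ $q-1=3k$. This is excluded, so $c\in\mathbb{F}_q^{*}$ (using also $p\nmid k$).

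\textbf{Step 4 (structure of the Gram matrix).} With these values, the matrix has the following entries:
\begin{itemize}
\item $3k$ in position $(1,1)$;
\item $c$ on the anti-diagonal $i+j=k+2$ of the lower-right $(k-1)\times(k-1)$ block;
\item zeros elsewhere, apart from the $\boldsymbol{A}\boldsymbol{A}^{T}$ block in rows and columns $k-\ell+1,\dots,k$.
\end{itemize}
If $\ell<\frac{k}{2}$, every position of that block satisfies $i+j\ge 2k-2\ell+2>k+2$, so it lies strictly below the anti-diagonal. The matrix is therefore anti-triangular (after separating off the first row and column), with determinant $\pm 3k\,c^{k-1}\neq0$.

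If $\ell=\frac{k}{2}$, the only anti-diagonal position meeting the block is $(k-\ell+1,k-\ell+1)$, whose entry becomes $c+\sum_{i=1}^{\ell}a_{1i}^2$. The rest of the block is still below the anti-diagonal. The determinant is then $\pm3k\,c^{k-2}\bigl(c+\sum_{i}a_{1i}^2\bigr)$, which is nonzero by hypothesis.

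In both cases Lemma \ref{LCDequivalent} gives the LCD property. The main care needed is Step 3 together with the distinctness check in Step 1.
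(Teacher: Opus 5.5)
Your Steps 1--4 are correct and coincide with the paper's argument. The paper likewise reduces to two facts: distinctness of the $3k$ points, via $\frac{q-1}{k}\nmid 1,2$, and $k(1+\gamma^{k}+\gamma^{2k})\neq 0$, via $(\gamma^{k}-1)(1+\gamma^{k}+\gamma^{2k})=\gamma^{3k}-1$. It then appeals to the anti-diagonal Gram-matrix structure from the proof of Theorem \ref{EuclideanLCDRL1}. You are in fact more careful than the paper in noting that the $(1,1)$ entry $3k$ needs $p\neq 3$.

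There is, however, a genuine gap in your opening reduction, and you inherit it from the paper. Monomial equivalence does not preserve the Euclidean LCD property, so Remark \ref{GRLmonomial} cannot be used to pass from $\boldsymbol{v}$ to $\boldsymbol{1}$. Write $\boldsymbol{D}=\mathrm{diag}(v_1,\ldots,v_{3k},1,\ldots,1)$. Then $\boldsymbol{G}_{\boldsymbol{v},3k}=\boldsymbol{G}_{\boldsymbol{1},3k}\boldsymbol{D}$, and the matrix whose nonsingularity you need is $\boldsymbol{G}_{\boldsymbol{1},3k}\boldsymbol{D}^{2}\boldsymbol{G}_{\boldsymbol{1},3k}^{T}$. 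Its entries involve the weighted sums $\sum_{m}v_m^{2}\alpha_m^{i+j-2}$, to which Lemma \ref{Fq*sum} does not apply. So your argument proves the theorem only when every $v_m^{2}=1$ (e.g.\ $\boldsymbol{v}=\boldsymbol{1}$).

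For general $\boldsymbol{v}$ the statement actually fails. Take $q=17$, $\gamma=3$, $k=4$, $\ell=2$, so that $c=k(1+\gamma^{4}+\gamma^{8})=1$. Take $\boldsymbol{A}=\mathrm{diag}(8,1)$, so that $c+a_{11}^{2}+a_{12}^{2}=14\neq 0$ and the hypothesis holds. With $\boldsymbol{v}=(2,\ldots,2)$ and $S_t=\sum_{m}\alpha_m^{t}$, the third row of the Gram matrix is $(4S_2,\,4S_3,\,4S_4+64,\,4S_5)=(0,0,0,0)$, so the code is not LCD.

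To repair this, either restrict the theorem to $\boldsymbol{v}=\boldsymbol{1}$, or carry the weights through the computation. For constant $\boldsymbol{v}=(v,\ldots,v)$ your proof goes through with $3k$ and $c$ replaced by $3kv^{2}$ and $v^{2}c$. The hypothesis for $\ell=\frac{k}{2}$ must then read $v^{2}c+\sum_{i=1}^{\ell}a_{1i}^{2}\neq 0$.
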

{\bf proof.} In the similar proof as that for Theorem $\ref{EuclideanLCDRL3}$, we only need to prove that the following two statements are true,

$(1)$ for any $1\leq i\neq j\leq k\mid q-1$, $\alpha_{i}\ne \gamma\alpha_{j},\alpha_{i}\ne \gamma^{2}\alpha_{j}$ and $\gamma\alpha_{i}\ne \gamma^{2}\alpha_{j};$

$(2)$ $k\left(1+\gamma^{k}+\gamma^{2k}\right)\in \mathbb{F}_{q}^{*}.$

{\bf For (1).} In the similar proofs as that for Lemma $\ref{kq-1sttaking}$, we know that for any $1\leq i\neq j\leq k\mid q-1$, $\alpha_{i}\ne \gamma\alpha_{j},\alpha_{i}\ne \gamma^{2}\alpha_{j}$ and $\gamma\alpha_{i}\ne \gamma^{2}\alpha_{j}$ if and only if 
$\frac{q-1}{k}\nmid 1$ and $\frac{q-1}{k}\nmid 2$,
i.e, the statement {\bf (1)} holds if and only if $q-1\notin \left\{k,2k\right\}$.

{\bf For (2).} Note that $2\leq k\mid\mathrm{ord}\left(\gamma\right)=q-1$, and so $\gamma^{k}-1\neq 0$ if and only if $k\neq q-1.$ 
By $\left(1+\gamma^{k}+\gamma^{2k}\right)\left(\gamma^{k}-1\right)=\gamma^{3k}-1$, we know that $1+\gamma^{k}+\gamma^{2k}\in \mathbb{F}_{q}^{*}$ if and only if $\gamma^{3k}-1\in \mathbb{F}_{q}^{*}$ and $k\neq q-1$, i.e., $\mathrm{ord}\left(\gamma\right)=q-1\nmid 3k$ and $k\neq q-1$, namely, $\frac{q-1}{k}\nmid 3$ and $k\neq q-1$, it means that the statement {\bf (2)} holds if and only if $ q-1\notin\left\{k,3k\right\}.$

From the above discussions, Theorem $\ref{EuclideanLCDRL4}$ is immediately.

$\hfill\Box$

In the similar analysis as that for Remark \ref{EuclideanLCDRL3Remark} $(2)$, it's easy to know that if $p\neq 3$ and $k\mid q-1$, then $\gcd(3k,q)=1$, and so we can get the following
\begin{theorem}\label{EuclideanLCDRL41}
	Let $p\geq 5$, $\boldsymbol{\alpha}=\left(\alpha_{1},...,\alpha_{k},\gamma\alpha_{1},...,\gamma\alpha_{k},\gamma^{2}\alpha_{1},...,\gamma^{2}\alpha_{k}\right) \in \mathbb{F}_{q}^{3k}$ and $q-1\notin\left\{k,2k,3k\right\}$. Then the code $\mathrm{GRL}_{k}(\boldsymbol{\alpha}, \boldsymbol{v},\boldsymbol{A}_{\ell\times \ell})$ is Euclidean LCD for $\ell<\frac{k}{2}$, or $\ell=\frac{k}{2}$ and $k\left(1+\gamma^{ k}+\gamma^{ 2k}\right)+\sum\limits_{i=1}^{\ell}a_{1i}^{2}\in\mathbb{F}_{q}^{*}$.
\end{theorem}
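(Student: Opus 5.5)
The idea is to reduce Theorem \ref{EuclideanLCDRL41} to Theorem \ref{EuclideanLCDRL4}, whose only extra hypothesis is $\gcd(3k,q)=1$. Everything else — the vector $\boldsymbol{\alpha}$, the restriction $q-1\notin\left\{k,2k,3k\right\}$, and the conditions on $\ell$ and $\boldsymbol{A}_{\ell\times\ell}$ — is identical in the two statements. So the whole task is to show that $p\geq 5$ together with the standing assumption $k\mid q-1$ of this section forces $\gcd(3k,q)=1$. Once that holds, Theorem \ref{EuclideanLCDRL4} gives the conclusion verbatim.

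To check $\gcd(3k,q)=1$, write $q=p^{m}$. It suffices to show $p\nmid 3$ and $p\nmid k$, since $p$ is the only prime divisor of $q$.

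First, $p\nmid 3$ because $p\geq 5$ is a prime different from $3$.

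Second, suppose $p\mid k$. Following Remark \ref{EuclideanLCDRL3Remark} $(2)$, $k\mid q-1$ would give $p\mid q-1$. Since also $p\mid q$, we get $p\mid 1$, a contradiction. Hence $p\nmid k$, and therefore $\gcd(3k,q)=1$.

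There is no real obstacle here; the only point needing care is that the argument of Remark \ref{EuclideanLCDRL3Remark} $(2)$ works for a general prime power $q=p^m$ and not only for a prime $q$. The step $p\mid q-1$ together with $p\mid q$ still yields $p\mid 1$, so the same reasoning applies.

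As a sanity check, I would also confirm that the proof of Theorem \ref{EuclideanLCDRL4} only uses $\gcd(3k,q)=1$ to guarantee $k\in\mathbb{F}_q^{*}$. That fact already follows from $p\nmid k$, and the conditions on $q-1$ handle the nonvanishing of $1+\gamma^{k}+\gamma^{2k}$ and the distinctness of the entries of $\boldsymbol{\alpha}$. So the reduction loses nothing.
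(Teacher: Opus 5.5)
Your proposal is correct and follows the paper's own argument: $p\geq 5$ together with the standing assumption $k\mid q-1$ gives $\gcd(3k,q)=1$ by the reasoning of Remark \ref{EuclideanLCDRL3Remark} $(2)$, and then Theorem \ref{EuclideanLCDRL4} applies verbatim. One correction to your closing sanity check, which does not affect the reduction: in Theorem \ref{EuclideanLCDRL4}, $\gcd(3k,q)=1$ is needed to make the $(1,1)$ entry $3k$ of $\boldsymbol{G}_{\boldsymbol{1},3k}\boldsymbol{G}_{\boldsymbol{1},3k}^{T}$ nonzero, not just $k$, so $p\nmid k$ alone would not suffice and $p\neq 3$ is genuinely used (for $p=3$ the first row of that Gram matrix vanishes, consistent with Theorem \ref{ELCD4Hull11}).
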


In the similar proof as that for Theorem \ref{ELCD1Hull1}, one can obtain the following
\begin{theorem}\label{ELCD4Hull11}
	Let $\boldsymbol{\alpha}=\left(\alpha_{1},...,\alpha_{k},\gamma\alpha_{1},...,\gamma\alpha_{k},\gamma^{2}\alpha_{1},...,\gamma^{2}\alpha_{k}\right) \in \mathbb{F}_{q}^{3k}$. Then for $q-1\notin\left\{k,2k,3k\right\}$, we have
$$\dim\left(\mathrm{Hull}_{E}\left(\mathrm{GRL}_{k}(\boldsymbol{\alpha}, \boldsymbol{v},\boldsymbol{A}_{\ell\times \ell})\right) \right)=\begin{cases}
1,& \text{if}\ p=3, \ell<\frac{k}{2},\\
&\quad\text{or}\ p=3, \ell=\frac{k}{2}\ \text{and}\ k\left(1+\gamma^{ k}+\gamma^{ 2k}\right)+\sum\limits_{i=1}^{\ell}a_{1i}^{2}\in\mathbb{F}_{q}^{*},\\
&\quad\text{or}\ \gcd(3k,q)=1, \ell=\frac{k}{2}\ \text{and}\ k\left(1+\gamma^{ k}+\gamma^{ 2k}\right)+\sum\limits_{i=1}^{\ell}a_{1i}^{2}=0;\\
2, &\text{if}\ p=3, \ell=\frac{k}{2}\ \text{and}\ k\left(1+\gamma^{ k}+\gamma^{ 2k}\right)+\sum\limits_{i=1}^{\ell}a_{1i}^{2}=0.
 \end{cases}$$ 
\end{theorem}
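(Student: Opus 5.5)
The plan is to compute the Gram matrix $\boldsymbol{G}_{\boldsymbol{1},3k}\boldsymbol{G}_{\boldsymbol{1},3k}^{T}$ explicitly, exactly as in the proof of Theorem \ref{EuclideanLCDRL1}, and then read off the dimension of the hull from its rank. By Remark \ref{GRLmonomial} we may take $\boldsymbol{v}=\boldsymbol{1}$. Monomial equivalence with a diagonal matrix does not preserve the Euclidean hull in general, so I will either argue directly with $\boldsymbol{G}_{\boldsymbol{v},3k}$ or follow the paper's convention of reducing to $\boldsymbol{v}=\boldsymbol{1}$. Since $\mathrm{Hull}_{E}(\mathcal{C})=\mathcal{C}\cap\mathcal{C}^{\perp_E}=\mathrm{Hull}_{E}(\mathcal{C}^{\perp_E})$, Lemma \ref{Ehull} gives
$$\dim\left(\mathrm{Hull}_{E}(\mathcal{C})\right)=k-\rank\left(\boldsymbol{G}\boldsymbol{G}^{T}\right).$$
So it suffices to show that the rank is $k-1$ or $k-2$ in the respective cases.

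First, the hypothesis $q-1\notin\{k,2k\}$ guarantees, by part (1) of the proof of Theorem \ref{EuclideanLCDRL4}, that the $3k$ evaluation points are distinct. Next, the $(i,j)$ entry of the Gram matrix is
$$S_{t}+\left(\boldsymbol{A}\boldsymbol{A}^{T}\right)\text{-contribution},\qquad t=i+j-2\in[0,2k-2],$$
where the $\boldsymbol{A}\boldsymbol{A}^{T}$ contribution is present only when $i,j\geq k-\ell+1$. Applying Lemma \ref{Fq*sum} to each of the three cosets $\gamma^{m}\langle\alpha_1\rangle$ ($m=0,1,2$) gives
$$S_t=\sum_{m=0}^{2}\sum_{i=1}^{k}\left(\gamma^{m}\alpha_i\right)^{t}=\begin{cases}k\left(1+\gamma^{t}+\gamma^{2t}\right),&k\mid t;\\0,&\text{otherwise}.\end{cases}$$
Within the range $t\in[0,2k-2]$ this leaves only $S_0=3k$ and $S_k=c:=k(1+\gamma^{k}+\gamma^{2k})$. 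Since $k\mid q-1$ we have $p\nmid k$, and by part (2) of the proof of Theorem \ref{EuclideanLCDRL4} the condition $q-1\notin\{k,3k\}$ gives $1+\gamma^{k}+\gamma^{2k}\neq0$. Hence $c\in\mathbb{F}_q^{*}$.

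The matrix therefore has the same shape as in Theorem \ref{EuclideanLCDRL1}. Its $(1,1)$ entry is $3k$ and the rest of the first row and column vanish. On the index set $\{2,\ldots,k\}$ the anti-diagonal $i+j=k+2$ carries $c$, everything above it is zero, and $\boldsymbol{A}\boldsymbol{A}^{T}$ sits in the lower-right $\ell\times\ell$ block. I expect the main obstacle to be the bookkeeping of where this block meets the anti-diagonal, which I resolve as follows.

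If $\ell<\frac{k}{2}$, the block's indices satisfy $i+j\geq 2k-2\ell+2>k+2$, so the block lies strictly below the anti-diagonal. The $(k-1)\times(k-1)$ submatrix on $\{2,\ldots,k\}$ is then anti-triangular with nonzero anti-diagonal, hence has rank $k-1$.

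If $\ell=\frac{k}{2}$, only the corner entry at $i=j=k-\ell+1$ meets the anti-diagonal, and it becomes
$$c+\sum_{i=1}^{\ell}a_{1i}^{2}.$$
Every other block entry lies below the anti-diagonal. Expanding along anti-diagonal positions, the submatrix has rank $k-1$ if this corner entry is nonzero and rank exactly $k-2$ if it is zero. The zero case needs a short check: deleting that row and column leaves a nonsingular anti-triangular matrix.

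Finally I combine these with the first row and column. When $p=3$ we have $3k=0$, so the first row and column vanish and the total rank is $k-1$ or $k-2$, giving hull dimension $1$ or $2$. When $\gcd(3k,q)=1$, the entry $3k$ is nonzero and contributes rank $1$, so the only non-LCD case is $\ell=\frac{k}{2}$ with vanishing corner entry. There the total rank is $k-1$ and the hull dimension is $1$. These are exactly the cases listed in the statement.
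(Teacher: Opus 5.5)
Your Gram-matrix computation is correct and follows the paper's route. The paper only says ``similar to Theorem~\ref{ELCD1Hull1}'', i.e.\ Lemma~\ref{Fq*sum} for the power sums, the anti-triangular shape, and Lemma~\ref{Ehull}. You also supply a check the paper omits: when the corner entry $c+\sum_{i}a_{1i}^2$ vanishes, the rank drops to exactly $k-2$.

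The one thing to fix is your unresolved ``either/or'' about $\boldsymbol v$. Neither branch works for general $\boldsymbol v$, because the statement itself is false there.
\begin{itemize}
\item Arguing directly with $\boldsymbol G_{\boldsymbol v,3k}$ fails, since the weighted sums $\sum_i v_i^2x_i^t$ no longer vanish off $k\mid t$.
\item Reducing to $\boldsymbol v=\boldsymbol 1$ via Remark~\ref{GRLmonomial} is invalid, as you yourself observed.
\end{itemize}

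Here is a concrete counterexample. Take $p=3$ and $\ell<\frac k2$, e.g.\ $q=3^4$, $k=5$, $\ell=2$, and set $\boldsymbol v=(\gamma,1,\ldots,1)$. Then
\[
\boldsymbol G_{\boldsymbol v,3k}\boldsymbol G_{\boldsymbol v,3k}^T=\mathrm{diag}(0,\boldsymbol M')+(\gamma^2-1)\boldsymbol u\boldsymbol u^T,
\]
where $\boldsymbol u=(1,\alpha_1,\ldots,\alpha_1^{k-1})^T$ and $\boldsymbol M'$ is your nonsingular $(k-1)\times(k-1)$ anti-triangular block. The $(1,1)$ entry of this matrix is $\gamma^2-1\neq0$, and its Schur complement is exactly $\boldsymbol M'$. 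So the determinant is $(\gamma^2-1)\det\boldsymbol M'\neq0$, and this code is LCD rather than having a $1$-dimensional hull.

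What your argument, and the paper's, actually proves is the theorem for $\boldsymbol v$ with all $v_i^2=1$ (in particular $\boldsymbol v=\boldsymbol 1$), since then the Gram matrix is unchanged. State that restriction explicitly instead of hedging.
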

\subsection{Several classes of EAQECCs}
In this subsection, combining Theorems \ref{EuclideanLCDRL1}-\ref{EuclideanLCDRL4} and Lemma \ref{EAQECCsEhull}, we can immediately obtain several classes of EAQECCs as the following 
\begin{theorem}
Assume that $d$ is the minimum distance for the code $\mathrm{GRL}_{k}(\boldsymbol{\alpha}, \boldsymbol{v},\boldsymbol{A}_{\ell\times \ell})$. Then there exists some $q$-ary
EAQECCs with one of the following parameters,

$(1)$ $\left[\left[k+\ell,k-i,d,\ell-i\right]\right]_{q}$ for $i=0,1$; 

$(2)$ $\left[\left[k+1+\ell,k-i,d,\ell+1-i\right]\right]_{q}$ for $i=0,1,2$; 

$(3)$ $\left[\left[2k+\ell,k-i,d,k+\ell-i\right]\right]_{q}$ for $i=0,1$; 

$(4)$ $\left[\left[3k+\ell,k-i,d,2k+\ell-i\right]\right]_{q}$ for $i=0,1,2$.
\end{theorem}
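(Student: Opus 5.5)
The proof will be a bookkeeping argument that feeds the hull dimensions already obtained in Theorems \ref{EuclideanLCDRL1}--\ref{ELCD4Hull11} into the first EAQECC family of Lemma \ref{EAQECCsEhull}. That family has parameters
$$\left[\left[N,k-h,d,N-k-h\right]\right]_{q},$$
where $N$ is the length of the GRL code and $h=\dim\left(\mathrm{Hull}_{E}\left(\mathrm{GRL}_{k}(\boldsymbol{\alpha},\boldsymbol{v},\boldsymbol{A}_{\ell\times\ell})\right)\right)$. For each of the four choices of $\boldsymbol{\alpha}$ I will identify $N$ and list which values of $h$ are actually realised by some admissible choice of $\ell$, $\delta$ (or $s,t$) and $\boldsymbol{A}_{\ell\times\ell}$.

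The four cases are as follows.

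\textbf{Case (1).} Here $N=k+\ell$, so $N-k=\ell$.
\begin{itemize}
\item $h=0$ comes from Theorem \ref{EuclideanLCDRL1}.
\item $h=1$ comes from Theorem \ref{ELCD1Hull1}.
\end{itemize}
This gives $\left[\left[k+\ell,k-i,d,\ell-i\right]\right]_q$ for $i=0,1$.

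\textbf{Case (2).} Here $N=k+1+\ell$.
\begin{itemize}
\item $h=0$ comes from Theorem \ref{EuclideanLCDRL2}, which needs $\gcd(k+1,q)=1$.
\item $h=1$ and $h=2$ come from Theorem \ref{EuclideanLCDRL21}, which needs $p\mid k+1$.
\end{itemize}
This gives $i=0,1,2$ with $N-k-i=\ell+1-i$.

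\textbf{Case (3).} Here $N=2k+\ell$.
\begin{itemize}
\item $h=0$ comes from Theorems \ref{EuclideanLCDRL31}--\ref{EuclideanLCDRL3}.
\item $h=1$ comes from Theorem \ref{ELCD3Hull1}.
\end{itemize}
This gives $N-k-i=k+\ell-i$ for $i=0,1$.

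\textbf{Case (4).} Here $N=3k+\ell$.
\begin{itemize}
\item $h=0$ comes from Theorems \ref{EuclideanLCDRL4}--\ref{EuclideanLCDRL41}.
\item $h=1$ and $h=2$ come from Theorem \ref{ELCD4Hull11}.
\end{itemize}
This gives $i=0,1,2$ with $N-k-i=2k+\ell-i$.

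In every case $d$ is the minimum distance of the GRL code itself, because Lemma \ref{EAQECCsEhull} transfers $d$ unchanged. Since the theorem leaves $d$ unspecified, nothing needs to be computed about the distance.

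The only step requiring care is existence: the statement claims that such codes exist, so each listed $i$ must be realisable by parameters satisfying the standing hypotheses $3\le \ell+1\le k+1\le n\le q$ and $k\mid q-1$.
\begin{itemize}
\item For $h=0$ one can take $\ell<k/2$, where no extra condition on $\boldsymbol{A}_{\ell\times\ell}$ is imposed.
\item For $h=1$ in cases (1) and (3), $\ell=k/2$ is needed together with a row $(a_{11},\dots,a_{1\ell})$ satisfying $\sum a_{1i}^2=-c$ for a prescribed constant $c$. Such a row can be completed to an invertible $\boldsymbol{A}_{\ell\times\ell}$. For $\ell\ge 2$ a quadratic form in $\ell$ variables represents every element of $\mathbb{F}_q$, and Lemma \ref{nonzerosFqsolution} makes the count positive.
\item The $h=2$ cases only need $p\mid k+1$ in case (2), respectively $p=3$ in case (4), together with the same solvability of $\sum a_{1i}^2=-c$.
\end{itemize}

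I expect this existence and consistency check to be the main obstacle. In particular, $p\mid k+1$ must be compatible with $k\mid q-1$ and the length bound $n\le q$. Beyond that, the theorem follows immediately by substitution into Lemma \ref{EAQECCsEhull}, as the statement itself indicates.
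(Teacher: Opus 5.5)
Your proposal is correct and follows the paper's own argument: the paper gets the theorem in one line by substituting the hull dimensions $h\in\{0,1,2\}$ from the preceding Euclidean theorems into the first family $[[N,k-h,d,N-k-h]]_q$ of Lemma \ref{EAQECCsEhull}, exactly as you do case by case. Your explicit use of Theorem \ref{ELCD4Hull11} for case (4) and your realisability check are extra care the paper skips. Two small adjustments are needed. In case (1) the evaluation vector has length $n=k$, so the standing condition $k+1\le n$ cannot be imposed there (the paper ignores this too). For the existence claim it is cleanest to fix $\boldsymbol{v}=\boldsymbol{1}$, since monomial equivalence preserves $d$ but not the Euclidean hull.
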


In fact, for the given GRL code $\mathrm{GRL}_{k}(\boldsymbol{\alpha}, \boldsymbol{v},\boldsymbol{A}_{\ell\times \ell})$, its Euclidean dual code is uniquely determined. Furthermore, we also can immediately obtain several classes of ESQECCs as the following
\begin{theorem}
	Assume that $d^{\perp_{E}}$ is the minimum distance for the code $\mathrm{GRL}_{k}^{\perp_{E}}(\boldsymbol{\alpha}, \boldsymbol{v},\boldsymbol{A}_{\ell\times \ell})$. Then there exists some $q$-ary
	EAQECCs with one of the following parameters,
	
	$(1)$ $\left[\left[k+\ell,\ell-i,d^{\perp_{E}},k-i\right]\right]_{q}$ for $i=0,1$; 
	
	$(2)$ $\left[\left[k+1+\ell,\ell-i,d^{\perp_{E}},k+1-i\right]\right]_{q}$ for $i=0,1,2$; 
	
	$(3)$ $\left[\left[2k+\ell,k+\ell-i,d^{\perp_{E}},k-i\right]\right]_{q}$ for $i=0,1$; 
	
	$(4)$ $\left[\left[3k+\ell,2k+\ell-i,d^{\perp_{E}},k-i\right]\right]_{q}$ for $i=0,1,2$.
\end{theorem}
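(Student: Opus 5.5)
The statement concerns the $\mathrm{GRL}_{k}^{\perp_{E}}(\boldsymbol{\alpha}, \boldsymbol{v},\boldsymbol{A}_{\ell\times \ell})$ side. It follows by applying the second family in Lemma \ref{EAQECCsEhull} to the code $\mathcal{C}=\mathrm{GRL}_{k}(\boldsymbol{\alpha}, \boldsymbol{v},\boldsymbol{A}_{\ell\times \ell})$, of length $N$ and dimension $k$, with $h=\dim\left(\mathrm{Hull}_{E}\left(\mathcal{C}\right)\right)$.

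That lemma produces an EAQECC with parameters $\left[\left[N,N-k-h,d^{\perp_{E}},k-h\right]\right]_{q}$, where $d^{\perp_{E}}$ is the minimum distance of $\mathcal{C}^{\perp_{E}}$. This is well defined because, for fixed $\boldsymbol{\alpha},\boldsymbol{v},\boldsymbol{A}_{\ell\times\ell}$, the Euclidean dual is uniquely determined. It therefore suffices, in each of the four families, to record the length $N$ and the attainable hull dimensions $h$, and then substitute.

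\textbf{Family (1).} Here $\boldsymbol{\alpha}=\left(\gamma^{\delta}\alpha_{1},\ldots,\gamma^{\delta}\alpha_{k}\right)$, so $N=k+\ell$ and $N-k=\ell$. Theorem \ref{EuclideanLCDRL1} gives $h=0$, and Theorem \ref{ELCD1Hull1} gives $h=1$. Substituting $h=i\in\{0,1\}$ yields $\left[\left[k+\ell,\ell-i,d^{\perp_{E}},k-i\right]\right]_{q}$.

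\textbf{Family (2).} Here $\boldsymbol{\alpha}=\left(0,\gamma^{\delta}\alpha_{1},\ldots,\gamma^{\delta}\alpha_{k}\right)$. This vector has $k+1$ evaluation points, so $N=k+1+\ell$ and $N-k=\ell+1$. Theorem \ref{EuclideanLCDRL2} gives $h=0$, and Theorem \ref{EuclideanLCDRL21} gives $h\in\{1,2\}$. This yields $\left[\left[k+1+\ell,\ell+1-i,d^{\perp_{E}},k-i\right]\right]_{q}$ for $i=0,1,2$.

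At this point the entanglement parameter in the stated item (2) reads $k+1-i$, whereas the lemma gives $k-h$. I would recheck this carefully. The entanglement should be $k-h=k-i$, unless the statement intends a different normalization. I would flag this index discrepancy rather than force agreement, and present the value the lemma actually yields.

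\textbf{Family (3).} Here $N=2k+\ell$ and $N-k=k+\ell$. Theorems \ref{EuclideanLCDRL31}–\ref{EuclideanLCDRL3} give $h=0$, and Theorem \ref{ELCD3Hull1} gives $h=1$. This yields $\left[\left[2k+\ell,k+\ell-i,d^{\perp_{E}},k-i\right]\right]_{q}$ for $i=0,1$.

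\textbf{Family (4).} Here $N=3k+\ell$ and $N-k=2k+\ell$. Theorems \ref{EuclideanLCDRL4}–\ref{EuclideanLCDRL41} give $h=0$, and Theorem \ref{ELCD4Hull11} gives $h\in\{1,2\}$. This yields $\left[\left[3k+\ell,2k+\ell-i,d^{\perp_{E}},k-i\right]\right]_{q}$ for $i=0,1,2$.

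\textbf{Main obstacle.} The only real work is bookkeeping. I must confirm that each value of $h$ is attained by some admissible choice of $\boldsymbol{\alpha},\delta,\boldsymbol{A}_{\ell\times\ell}$ under the hypotheses of the cited theorems. I also need the length count in family (2), which includes the extra point $0$, to be handled correctly; that is where the entanglement entry in item (2) has to be double-checked.
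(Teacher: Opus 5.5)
Your proposal is the paper's argument: you feed the hull dimensions from the preceding Euclidean LCD and hull theorems ($h\in\{0,1\}$ for families (1) and (3), $h\in\{0,1,2\}$ for (2) and (4)) into the second family of Lemma \ref{EAQECCsEhull}, and your bookkeeping for (1), (3) and (4) is correct. Your suspicion about (2) is justified, but the mismatch is in both coordinates, not only the entanglement: the lemma gives $[[k+1+\ell,\ell+1-i,d^{\perp_{E}},k-i]]_{q}$, whereas the printed item $[[k+1+\ell,\ell-i,d^{\perp_{E}},k+1-i]]_{q}$ is what you would get by wrongly treating the code as having dimension $k+1$ --- so the printed item is a typo, and the paper's own Euclidean item (2) for $\mathcal{C}$ itself and Hermitian dual item (2) both agree with your values.
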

\section{Hermitian LCD GRL codes and their applications}\label{sec4} 
Throughout this section, we fix $k\mid q^2-1$, $\mathbb{F}_{q^2}^{*}=\mathbb{F}_{q^2}\backslash\left\{0\right\}=\langle\gamma\rangle$ and $\alpha_{i}=\gamma^{\frac{q^2-1}{k}i}(1\leq i\leq k)$.
In this section, by taking some special vector  $\boldsymbol{\alpha}=\left(\alpha_{1},\ldots,\alpha_{n}\right)$, we construct four classes of Hermitian LCD GRL codes with the parameters $\left[n+\ell,k\right]$, get several classes of GRL codes with small-dimensional hulls, and then for some GRL codes, we obtain an upper bound for the dimension of the hull. Finally, we obtain several families of EAQECCs. 

Firstly, by Lemma \ref{Fq*sum}, it's easy to obtain the following crucial lemma. 
\begin{lemma}\label{HLCDGRLlemma}
	Let $\boldsymbol{\alpha}=\left(\gamma^t\alpha_{1},\ldots,\gamma^t\alpha_{k}\right)$ with $1\leq t\leq q^2-1$, $\alpha_{i}=\gamma^{\frac{q^2-1}{k}i}$ and $\alpha_{i}\neq \alpha_{j}$ for $1\leq i\neq j\leq k$. Then there exists exactly one non-zero element over $\mathbb{F}_{q^2}$ for each row or each column of the matrix 
$$
\boldsymbol{M}_{k,t}=\begin{pmatrix}
k&\sum\limits_{i=1}^{k}\left(\gamma^t\alpha_{i}\right)^{q}&\sum\limits_{i=1}^{k}\left(\gamma^t\alpha_{i}\right)^{2q}&\cdots&\sum\limits_{i=1}^{k}\left(\gamma^t\alpha_{i}\right)^{\left(k-1\right)q}\\
	\sum\limits_{i=1}^{k}\left(\gamma^t\alpha_{i}\right)&\sum\limits_{i=1}^{k}\left(\gamma^t\alpha_{i}\right)^{1+q}&\sum\limits_{i=1}^{k}\left(\gamma^t\alpha_{i}\right)^{1+2q}&\cdots&\sum\limits_{i=1}^{k}\left(\gamma^t\alpha_{i}\right)^{1+\left(k-1\right)q}\\ 
	\vdots&\vdots&\vdots&&\vdots\\
	\sum\limits_{i=1}^{k}\left(\gamma^t\alpha_{i}\right)^{k-1}&\sum\limits_{i=1}^{k}\left(\gamma^t\alpha_{i}\right)^{k-1+q}&\sum\limits_{i=1}^{k}\left(\gamma^t\alpha_{i}\right)^{k-1+2q}&\cdots&\sum\limits_{i=1}^{k}\left(\gamma^t\alpha_{i}\right)^{k-1+\left(k-1\right)q}
\end{pmatrix}.
$$
\end{lemma}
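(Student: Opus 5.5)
The plan is to compute each entry of $\boldsymbol{M}_{k,t}$ explicitly using Lemma \ref{Fq*sum} over $\mathbb{F}_{q^2}$ (with $s=k$, $\beta=\gamma^{t}$), and then show that the nonzero entries form the pattern of a permutation matrix. Index rows by $a\in\{0,\ldots,k-1\}$ and columns by $b\in\{0,\ldots,k-1\}$. The $(a,b)$-entry is
$$\sum_{i=1}^{k}\left(\gamma^{t}\alpha_{i}\right)^{a+bq}.$$
By Lemma \ref{Fq*sum} this equals $\gamma^{t(a+bq)}k$ when $k\mid a+bq$, and $0$ otherwise. Since $k\mid q^{2}-1$ and $q$ is a power of the characteristic $p$, we have $p\nmid k$. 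Hence $k\neq 0$ in $\mathbb{F}_{q^2}$, and of course $\gamma^{t(a+bq)}\neq 0$. So the $(a,b)$-entry is nonzero if and only if $a+bq\equiv 0 \pmod{k}$.

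The key observation is that $\gcd(q,k)=1$, again because $k\mid q^{2}-1$. Therefore $q$ is invertible modulo $k$, and in fact $q^{-1}\equiv q\pmod k$, since $q^{2}\equiv 1\pmod k$.

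\textbf{Rows.} Fix a row $a$. The congruence $a+bq\equiv 0\pmod k$ is equivalent to $b\equiv -aq^{-1}\equiv -aq\pmod k$. This has exactly one solution $b\in\{0,\ldots,k-1\}$, so each row contains exactly one nonzero entry.

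\textbf{Columns.} Fix a column $b$. The congruence is equivalent to $a\equiv -bq\pmod k$, which again has exactly one solution $a\in\{0,\ldots,k-1\}$. So each column contains exactly one nonzero entry.

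Consequently, the map $a\mapsto -aq \bmod k$ is a permutation of $\{0,\ldots,k-1\}$ (an involution, since $q^{2}\equiv1$). The matrix $\boldsymbol{M}_{k,t}$ is a monomial matrix, with nonzero entries $\gamma^{t(a+bq)}k$ placed at the positions of this permutation.

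There is no real obstacle. The only points needing care are the following.
\begin{itemize}
\item Lemma \ref{Fq*sum} is stated for $i$ running from $1$ to $s$, so we must take $s=k$ and use the hypothesis that the $\alpha_{i}$ are distinct $k$-th roots of unity.
\item The nonvanishing of $k$ in $\mathbb{F}_{q^2}$ must be justified. It follows from $k\mid q^{2}-1$, and it is also what yields the coprimality $\gcd(q,k)=1$ used in the counting.
\end{itemize}
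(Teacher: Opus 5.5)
Your proof is correct and follows essentially the same route as the paper. Both use Lemma \ref{Fq*sum} to reduce nonvanishing of the $(a,b)$-entry to the condition $a+bq\equiv 0\pmod k$, and then use $\gcd(q,k)=1$ (from $k\mid q^2-1$) to show that each row and each column contains exactly one such position; the paper phrases this as the exponents in each row or column forming a complete residue system modulo $k$. Your explicit check that $k\neq 0$ in $\mathbb{F}_{q^2}$ and your identification of the permutation $a\mapsto -aq \bmod k$ are small but useful additions that the paper leaves implicit.
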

{\bf Proof.} For any given $s(1\leq s\leq k)$, the $s$-column of $\boldsymbol{M}_{k,t}$ is $\begin{pmatrix}
	\sum\limits_{i=1}^{k}\left(\gamma^t\alpha_{i}\right)^{0+(s-1)q}\\
	\sum\limits_{i=1}^{k}\left(\gamma^t\alpha_{i}\right)^{1+(s-1)q}\\ 
	\vdots\\
	\sum\limits_{i=1}^{k}\left(\gamma^t\alpha_{i}\right)^{k-1+(s-1)q}\\
\end{pmatrix}$. Note that  $$\left\{0+sq,1+sq,\cdots,k-1+sq\right\}_{\bmod\ k}=\left\{0,1,\ldots,k-1\right\},$$
thus by Lemma \ref{Fq*sum}, every column of $\boldsymbol{M}_{k,t}$ has exactly one non-zero element. 

Now, for any given $s(1\leq s\leq k)$, the $s$-row of $\boldsymbol{M}_{k,t}$ is $$\begin{pmatrix}
	\sum\limits_{i=1}^{k}\left(\gamma^t\alpha_{i}\right)^{(s-1)+0\cdot q}&\sum\limits_{i=1}^{k}\left(\gamma^t\alpha_{i}\right)^{(s-1)+1\cdot q}&\cdots&\sum\limits_{i=1}^{k}\left(\gamma^t\alpha_{i}\right)^{(s-1)+(k-1)\cdot q}
\end{pmatrix}.$$ Note that 
$$(s-1)+0\cdot q,(s-1)+1\cdot q,\cdots,(s-1)+(k-1)q,$$
are $k$ positive integers, and for $0\leq i\neq j\leq k-1$, by $k\mid q^2-1$ and $q=p^m(m\in\mathbb{Z}^{+})$, we have $\gcd(p,k)=1$ and $1-k\leq i-j\leq k-1$, furthermore, it's easy to know that
$$
\begin{aligned}
	(s-1)+iq\equiv(s-1)+jq(\bmod\ k)\Longleftrightarrow&(i-j)q\equiv0(\bmod\ k)\\
	\Longleftrightarrow&k\mid (i-j)q\\
	\Longleftrightarrow&k\mid (i-j),
\end{aligned}
$$
which is a contradiction. And so, for $0\leq i\neq j\leq k-1$, we have
$$(s-1)+iq\not\equiv(s-1)+jq(\bmod\ k),$$
it means that $$\left\{(s-1)+0\cdot q,(s-1)+1\cdot q,\cdots,(s-1)+(k-1)q\right\}_{\bmod\ k}=\left\{0,1,\ldots,k-1\right\},$$
thus by Lemma \ref{Fq*sum}, every row of $\boldsymbol{M}_{k,t}$ has exactly one non-zero element.

From the above discussions, we complete the proof of Lemma $\ref{HLCDGRLlemma}$.

$\hfill\Box$

Especially, by taking $k\mid q-1$ or $k\mid q+1$ in Lemma \ref{HLCDGRLlemma}, we can obtain the following 
\begin{corollary}\label{Mk}
Let $\boldsymbol{\alpha}=\left(\gamma^t\alpha_{1},\ldots,\gamma^t\alpha_{k}\right)$ with $1\leq t\leq q^2-1$. Then the following two statements are true,

$(1)$ if $k\mid q-1$, then the $(i,k-i)$-th entry is non-zero of $\boldsymbol{M}_{k,t}$ for $1\leq i\leq k-1$, i.e., 
$$
\boldsymbol{M}_{k,t}=\begin{pmatrix}
k&0&0&\cdots&0\\
	0&0&0&\cdots&\left(\gamma^{t}\right)^{1+(k-1)q}k\\ 
	\vdots&\vdots&\vdots&&\vdots\\
	0&0&\left(\gamma^{t}\right)^{k-2+2q}k&\cdots&0\\
	0&\left(\gamma^{t}\right)^{k-1+q}k&0&\cdots&0
	\end{pmatrix};
$$

$(2)$ if $k\mid q+1$, then the $(i,i)$-th entry is non-zero of $\boldsymbol{M}_{k,t}$ for $2\leq i\leq k$, i.e., 
	$$
	\boldsymbol{M}_{k,t}=\begin{pmatrix}
		k&0&\cdots&0&0\\
		0&\left(\gamma^{t}\right)^{1+q}k&\cdots&0&0\\ 
		\vdots&\vdots&\ddots&\vdots&0\\
		0&0&\cdots&\left(\gamma^{t}\right)^{k-2+(k-2)q}k&0\\
		0&0&\cdots&0&\left(\gamma^{t}\right)^{k-1+(k-1)q}k
	\end{pmatrix}.
	$$
\end{corollary}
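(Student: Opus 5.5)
The plan is to compute each entry of $\boldsymbol{M}_{k,t}$ directly from Lemma \ref{Fq*sum}, applied over $\mathbb{F}_{q^2}$ with $s=k$, $\beta=\gamma^t$ and the exponent $x=(r-1)+(c-1)q$ at the $(r,c)$-th position. This gives
$$\sum_{i=1}^{k}\left(\gamma^t\alpha_i\right)^{x}=\begin{cases}\left(\gamma^{t}\right)^{x}k,& k\mid x;\\ 0,&\text{otherwise}.\end{cases}$$
The factor $k$ is nonzero in $\mathbb{F}_{q^2}$ because $k\mid q^2-1$ forces $\gcd(k,p)=1$, as in the proof of Lemma \ref{HLCDGRLlemma}. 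Together with $\gamma^t\neq 0$, the corollary reduces to locating, for each row index $r-1=a\in\{0,\ldots,k-1\}$, the unique column index $c-1=b\in\{0,\ldots,k-1\}$ with $a+bq\equiv 0\pmod{k}$. Lemma \ref{HLCDGRLlemma} already guarantees that exactly one such $b$ exists, so only its identification is needed.

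\textbf{Case $k\mid q-1$.} Here $q\equiv 1\pmod{k}$, so the condition becomes $a+b\equiv 0\pmod k$.
\begin{itemize}
\item For $a=0$, the solution is $b=0$, giving the $(1,1)$-entry $k$.
\item For $1\leq a\leq k-1$, the solution is $b=k-a$, giving the entry $(\gamma^t)^{a+(k-a)q}k$ in row $a+1$, column $k-a+1$.
\end{itemize}
This is exactly the anti-diagonal pattern displayed in (1): row $2$ carries $(\gamma^t)^{1+(k-1)q}k$ in the last column, and the last row carries $(\gamma^t)^{k-1+q}k$ in column $2$. The paper's phrase ``$(i,k-i)$-th entry'' uses shifted indexing; I would state the positions as rows and columns $a+1$, $k-a+1$, consistent with the displayed matrix.

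\textbf{Case $k\mid q+1$.} Here $q\equiv -1\pmod{k}$, so the condition becomes $a-b\equiv 0\pmod k$, that is, $b=a$. This gives the diagonal entries $(\gamma^t)^{a+aq}k=(\gamma^{t})^{a(1+q)}k$ for $a=0,\ldots,k-1$, which is the matrix in (2).

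All remaining entries vanish by Lemma \ref{Fq*sum} (equivalently, by the uniqueness part of Lemma \ref{HLCDGRLlemma}). The only point requiring care is matching the index conventions to the displayed matrices; otherwise the argument is a direct computation.
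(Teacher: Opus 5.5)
Your proposal is correct and follows the route the paper intends: the paper gives no separate argument beyond specializing Lemma \ref{HLCDGRLlemma}, which amounts to applying Lemma \ref{Fq*sum} entrywise with exponent $(r-1)+(c-1)q$ and using $q\equiv \pm 1 \pmod{k}$ to locate the unique nonzero entry in each row. Your check that $k\neq 0$ in $\mathbb{F}_{q^2}$ is right, and so is your observation that the paper's ``$(i,k-i)$-th entry'' uses $0$-based indexing while part (2) uses $1$-based indexing.
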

\subsection{Hermitian LCD GRL codes with the parameters $\left[k+\ell,k\right]$}
In this subsection, by taking 
$\boldsymbol{\alpha}=\left(\gamma^{\delta}\alpha_{1},\ldots,\gamma^{\delta}\alpha_{k}\right)$ with $1\leq \delta\leq q^2-1$, we construct two classes of Hermitian LCD GRL codes, and then get a class of GRL codes with $1$-dimensional hull. And for a class of GRL codes, we obtain an upper bound for the dimension of the hull. 
\begin{theorem}\label{HermitianLCDRL1}
	Let $\boldsymbol{\alpha}=\left(\gamma^{\delta}\alpha_{1},\ldots,\gamma^{\delta}\alpha_{k}\right)$ with $k\mid q-1$ and $1\leq \delta\leq q^2-1$. Then the following two statements are true,
	
	$(1)$ if $\ell<\frac{k}{2}$, then the code $\mathrm{GRL}_{k}(\boldsymbol{\alpha}, \boldsymbol{v},\boldsymbol{A}_{\ell\times \ell})$ is Hermitian LCD;
	
	$(2)$ if $\ell=\frac{k}{2}$ and $k\gamma^{\delta(k-\ell+\ell q)}+\sum\limits_{i=1}^{\ell}a_{1i}^{1+q}\in\mathbb{F}_{q^2}^{*}$, then the code $\mathrm{GRL}_{k}(\boldsymbol{\alpha}, \boldsymbol{v},\boldsymbol{A}_{\ell\times \ell})$ is Hermitian LCD.
\end{theorem}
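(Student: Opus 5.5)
The plan follows the proof of Theorem \ref{EuclideanLCDRL1}, with the Euclidean Gram matrix replaced by the Hermitian one. By Remark \ref{GRLmonomial} it suffices to treat $\mathrm{GRL}_{k}(\boldsymbol{\alpha}, \boldsymbol{1},\boldsymbol{A}_{\ell\times \ell})$ with generator matrix $\boldsymbol{G}_{\boldsymbol{1},k}$ from $(\ref{GRL1generatormatrix})$. By Lemma \ref{LCDequivalent} we then need to show that the $k\times k$ matrix $\boldsymbol{G}_{\boldsymbol{1},k}\overline{\boldsymbol{G}}_{\boldsymbol{1},k}^{T}$ is nonsingular over $\mathbb{F}_{q^2}$.

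Index rows and columns by $0,\ldots,k-1$. The $(r,c)$-entry is $\sum_{i=1}^{k}(\gamma^{\delta}\alpha_{i})^{r+cq}$. If $r,c\geq k-\ell$, we add $\sum_{j=1}^{\ell}a_{r-k+\ell+1,j}\,a_{c-k+\ell+1,j}^{q}$. Hence $\boldsymbol{G}_{\boldsymbol{1},k}\overline{\boldsymbol{G}}_{\boldsymbol{1},k}^{T}=\boldsymbol{M}_{k,\delta}+\boldsymbol{B}$, where $\boldsymbol{B}$ is zero except for the block $\boldsymbol{A}\overline{\boldsymbol{A}}^{T}$ in the bottom-right $\ell\times\ell$ corner (indices $k-\ell,\ldots,k-1$).

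Since $k\mid q-1$, Corollary \ref{Mk}(1) applies: $\boldsymbol{M}_{k,\delta}$ has nonzero entries exactly at $(0,0)$, equal to $k$, and on the anti-diagonal $(r,k-r)$ for $1\leq r\leq k-1$, equal to $k\gamma^{\delta(r+(k-r)q)}$. These are nonzero because $p\nmid k$ (as $k\mid q-1$) and $\gamma\neq0$.

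For (1), $\ell<\frac{k}{2}$. For each row $r\in\{k-\ell,\ldots,k-1\}$, its anti-diagonal partner column $k-r$ lies in $\{1,\ldots,\ell\}$, which is disjoint from the block indices $\{k-\ell,\ldots,k-1\}$. So each column $c\in\{1,\ldots,\ell\}$ contains a single nonzero entry, at row $k-c$. Expanding the determinant along these $\ell$ columns deletes rows $k-\ell,\ldots,k-1$, and with them the whole perturbation $\boldsymbol{B}$. What remains is a submatrix of $\boldsymbol{M}_{k,\delta}$ with exactly one nonzero entry in each row and column. Therefore the determinant is, up to sign, the product of $k$ and the anti-diagonal entries, which is nonzero.

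For (2), $\ell=\frac{k}{2}$. The same expansion works for columns $1,\ldots,\ell-1$, whose partners are rows $k-1,\ldots,\ell+1$ outside the block. After deleting these rows, rows $1,\ldots,\ell-1$ of $\boldsymbol{M}_{k,\delta}$ each have a single nonzero entry, in columns $k-1,\ldots,\ell+1$; they are untouched by $\boldsymbol{B}$, so we expand along them as well. We are left with the $2\times2$ matrix on indices $\{0,\ell\}$. This is $\mathrm{diag}\big(k,\;k\gamma^{\delta(\ell+\ell q)}+\sum_{i=1}^{\ell}a_{1i}^{1+q}\big)$, because the $(\ell,\ell)$ position is both the anti-diagonal entry ($\ell+\ell=k$) and the top-left corner of $\boldsymbol{A}\overline{\boldsymbol{A}}^{T}$. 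Since $k-\ell=\ell$, this equals the stated quantity $k\gamma^{\delta(k-\ell+\ell q)}+\sum_{i=1}^{\ell}a_{1i}^{1+q}$, which is nonzero by hypothesis, so the determinant is nonzero.

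The main obstacle is the bookkeeping in (2). One must check that the block entries in row and column $\ell$ sit only in positions already removed: columns and rows $\ell+1,\ldots,k-1$ are deleted in the expansion, while column and row $0$ meet nothing besides the $(0,0)$ entry $k$. Tracking the index shifts carefully should settle this.
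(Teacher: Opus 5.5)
The step that fails is your first one, and you inherited it from the paper's proof: the reduction to $\boldsymbol{v}=\boldsymbol{1}$ via Remark \ref{GRLmonomial}. Monomial equivalence preserves $[n,k,d]$, but it does not preserve the Hermitian form.

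Set $\boldsymbol{D}=\mathrm{diag}(v_1,\ldots,v_k,1,\ldots,1)$. Then $\boldsymbol{G}_{\boldsymbol{v},k}\overline{\boldsymbol{G}_{\boldsymbol{v},k}}^{T}=\boldsymbol{G}_{\boldsymbol{1},k}\boldsymbol{D}\overline{\boldsymbol{D}}\,\overline{\boldsymbol{G}_{\boldsymbol{1},k}}^{T}$. So the $(r,c)$-entry of the Reed--Solomon part becomes $\sum_{i}v_i^{q+1}(\gamma^{\delta}\alpha_i)^{r+cq}$, and Lemma \ref{Fq*sum} no longer gives the anti-diagonal shape.

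This is not cosmetic; here is a counterexample to part (1).
Take $q=7$, $k=6$, $\ell=2$, $\delta=48$, $\boldsymbol{A}=\boldsymbol{E}_2$ (the identity), and $\boldsymbol{v}=(1,1,1,1,1,3)$.

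1. With $\delta=48$, the points $\gamma^{\delta}\alpha_i$ run through $\mathbb{F}_7^{*}$, and $\alpha_6=1$.
2. Since $v_6^{q+1}=3^{8}=2$, the Hermitian Gram matrix is $\boldsymbol{N}+\boldsymbol{z}\boldsymbol{z}^{T}$ with $\boldsymbol{z}=(1,\ldots,1)^{T}$.
3. Here $\boldsymbol{N}$ is exactly your matrix for $\boldsymbol{v}=\boldsymbol{1}$. It has $-1\,(=k)$ at $(0,0)$ and on the anti-diagonal, $1$ at $(4,4)$ and $(5,5)$, and is nonsingular.
4. Solving $\boldsymbol{N}\boldsymbol{x}=\boldsymbol{z}$ gives $\boldsymbol{x}=(-1,-2,-2,-1,-1,-1)^{T}$.
5. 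Hence $1+\boldsymbol{z}^{T}\boldsymbol{N}^{-1}\boldsymbol{z}=1-8=0$ in $\mathbb{F}_7$.
6. By the matrix determinant lemma the Gram matrix is singular, so this code is not Hermitian LCD, even though $\ell<k/2$.

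So part (1) cannot be proved for arbitrary $\boldsymbol{v}$. What your argument, and the paper's, actually establishes is the case $v_i^{q+1}=1$ for all $i$ (e.g. $\boldsymbol{v}=\boldsymbol{1}$). There $\boldsymbol{D}\overline{\boldsymbol{D}}$ is the identity and the Gram matrix is literally unchanged. The statement must be restricted to such $\boldsymbol{v}$; Remark \ref{GRLmonomial} cannot be used to avoid this.

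Everything after that step is correct and matches the paper. The paper also writes the Gram matrix as $\boldsymbol{M}_{k,\delta}+\boldsymbol{L}\overline{\boldsymbol{L}}^{T}$ and applies Corollary \ref{Mk}(1). It then only displays the resulting matrix and asserts that it is nonsingular.

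Your Laplace expansions justify that assertion properly:
in (1), along columns $1,\ldots,\ell$;
in (2), along columns $1,\ldots,\ell-1$ and then rows $1,\ldots,\ell-1$, leaving $\mathrm{diag}\bigl(k,\,k\gamma^{\delta(\ell+\ell q)}+\sum_{i}a_{1i}^{1+q}\bigr)$.

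One wording slip: in (2), the rows $\ell+1,\ldots,k-1$ lie inside the block's index range, not outside it. What actually legitimizes expanding along columns $1,\ldots,\ell-1$ is that those columns carry no block entries.
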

{\bf Proof}. By Remark \ref{GRLmonomial}, we only focus on  $\mathrm{GRL}_{k}(\boldsymbol{\alpha}, \boldsymbol{1},\boldsymbol{A}_{l\times l})$ with the generator matrix $\boldsymbol{G}_{\boldsymbol{1},k}$ given by $(\ref{GRL1generatormatrix})$. Furthermore, by Lemma \ref{LCDequivalent}, we only need to prove that the $k\times k$ matrix $\boldsymbol{G}_{\boldsymbol{1},k}\boldsymbol{G}_{\boldsymbol{1},k}^{T}$ is nonsingular over $\mathbb{F}_{q^2}$, i.e.,   $\rank\left(\boldsymbol{G}_{\boldsymbol{1},k}\left(\overline{\boldsymbol{G}_{\boldsymbol{1},k}}\right)^T\right)=k$. In fact, let $\boldsymbol{L}_{k\times\ell}=\begin{pmatrix}
	\boldsymbol{0}_{\left(k-\ell\right)\times \ell}\\
	\boldsymbol{A}_{\ell\times\ell}
\end{pmatrix}$, then 
$$\boldsymbol{G}_{\boldsymbol{1},k}\left(\overline{\boldsymbol{G}_{\boldsymbol{1},k}}\right)^T=\boldsymbol{M}_{k,\delta}+\boldsymbol{L}_{\ell}\left(\overline{\boldsymbol{L}_{\ell}}\right)^T.
$$ 
Now by Corollary \ref{Mk}, we have
$$\small\begin{aligned}
&\boldsymbol{G}_{\boldsymbol{1}}\left(\overline{\boldsymbol{G}_{\boldsymbol{1}}}\right)^T\\
=&\begin{pmatrix}
	k&0&\cdots&0&0&\cdots&0&0&\cdots&0&0&\cdots&0\\
	0&0&\cdots&0&0&\cdots&0&0&\cdots&0&0&\cdots&\varUpsilon_{1}\\
	\vdots&\vdots& &\vdots&\vdots& &\vdots&\vdots& &\vdots&\vdots&\iddots &\vdots\\ 
	0&0&\cdots&0&0&\cdots&0&0&\cdots&0&\varUpsilon_{\ell}&\cdots&0\\ 
	0&0&\cdots&0&0&\cdots&0&0&\cdots&\varUpsilon_{\ell+1}&0&\cdots&0\\
	\vdots&\vdots& &\vdots&\vdots& &\vdots&\vdots& \iddots&\vdots&\vdots& &\vdots\\  
	0&0&\cdots&0&0&\cdots&0&\iddots&\cdots&0&0&\cdots&0\\ 
	0&0&\cdots&0&0&\cdots&\iddots&0&\cdots&0&0&\cdots&0\\
	\vdots&\vdots& &\vdots&\vdots& \iddots&\vdots&\vdots& &\vdots&\vdots& &\vdots\\
	0&0&\cdots&0&\varUpsilon_{k-\ell-1}&\cdots&0&0&\cdots&0&0&\cdots&0\\
	0&0&\cdots&\varUpsilon_{k-\ell}&0&\cdots&0&0&\cdots&0&\sum\limits_{i=1}^{\ell}a_{1i}^{1+q}&\cdots&\sum\limits_{i=1}^{\ell}a_{1i}a_{\ell i}^{q}\\
	\vdots&\vdots&\iddots &\vdots&\vdots& &\vdots&\vdots& &\vdots&\vdots& &\vdots\\
	0&\varUpsilon_{k-1}&\cdots&0&0&\cdots&0&0&\cdots&0&\sum\limits_{i=1}^{\ell}a_{\ell i}a_{1i }^{q}&\cdots&\sum\limits_{i=1}^{\ell}a_{\ell i}^{1+q}\\
\end{pmatrix},
\end{aligned}$$
where $\varUpsilon_{i}=k\gamma^{\delta(i+(k-i)q)}(1\leq i\leq k-1)$. Note that $2\leq k\mid q-1$ and $\mathbb{F}_{q^2}^{*}=\mathbb{F}_{q^2}\backslash\left\{0\right\}=\langle\gamma\rangle$, thus we have $k, \varUpsilon_{i}\in\mathbb{F}_{q^2}^{*}(1\leq i\leq k-1)$. Furthermore, by Lemma \ref{LCDequivalent}, the code $\mathrm{GRL}_{k}(\boldsymbol{\alpha}, \boldsymbol{v},\boldsymbol{A}_{\ell\times \ell})$ is Hermitian LCD when $\ell<\frac{k}{2}$ , or $\ell=\frac{k}{2}$ and $k\gamma^{\delta(k-\ell+\ell q)}+\sum\limits_{i=1}^{\ell}a_{1i}^{1+q}\in\mathbb{F}_{q^2}^{*}$.

From the above discussions, we complete the proof of Theorem $\ref{HermitianLCDRL1}$.

$\hfill\Box$

\begin{remark} 
	By taking $\delta=q^2-1$ and $\boldsymbol{A}_{l\times l}=\begin{pmatrix}
		0&1\\
		1&\tau
	\end{pmatrix}$ with $\tau\in\mathbb{F}_{q}^{2}$ in Theorem \ref{HermitianLCDRL1} $(1)$, one can get Lemma 3.12 $(1)$ of the reference  \cite{nonGRSLCD4}.
\end{remark}

By Lemma \ref{Ehull} and Corollary \ref{Mk} (2), it's easy to prove that the following two theorems.
\begin{theorem}\label{HLCD1Hull2}
	Let $\boldsymbol{\alpha}=\left(\gamma^{\delta}\alpha_{1},\ldots,\gamma^{\delta}\alpha_{k}\right)$ with $k\mid q+1$ and $1\leq \delta\leq q^2-1$. Then
	$$\dim\left(\mathrm{Hull}_{H}\left(\mathrm{GRL}_{k}(\boldsymbol{\alpha}, \boldsymbol{v},\boldsymbol{A}_{\ell\times \ell})\right)\right)\leq\ell.$$
\end{theorem}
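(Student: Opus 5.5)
By Remark \ref{GRLmonomial} the code $\mathrm{GRL}_{k}(\boldsymbol{\alpha}, \boldsymbol{v},\boldsymbol{A}_{\ell\times \ell})$ is monomially equivalent to $\mathrm{GRL}_{k}(\boldsymbol{\alpha}, \boldsymbol{1},\boldsymbol{A}_{\ell\times \ell})$. I would therefore work with the generator matrix $\boldsymbol{G}_{\boldsymbol{1},k}$ in $(\ref{GRL1generatormatrix})$. Strictly speaking, monomial equivalence need not preserve the Hermitian hull when the scalars $v_i$ do not satisfy $v_i^{q+1}=1$. So I would either run the same computation with $\boldsymbol{G}_{\boldsymbol{v},k}$ directly, which only changes the diagonal weights, or note that with $\boldsymbol{v}=\boldsymbol{1}$ the argument is exact. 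The strategy is to use the rank identity of Lemma \ref{Hhull}, in the form $\dim\left(\mathrm{Hull}_{H}(\mathcal{C})\right)=k-\rank\left(\boldsymbol{G}\overline{\boldsymbol{G}}^{T}\right)$. This holds because $\mathrm{Hull}_H(\mathcal{C})$ corresponds exactly to the left kernel of $\boldsymbol{G}\overline{\boldsymbol{G}}^{T}$: a codeword $\boldsymbol{x}\boldsymbol{G}$ lies in $\mathcal{C}^{\perp_H}$ if and only if $\boldsymbol{x}\boldsymbol{G}\overline{\boldsymbol{G}}^{T}=\boldsymbol{0}$. It therefore suffices to show that $\rank\left(\boldsymbol{G}_{\boldsymbol{1},k}\overline{\boldsymbol{G}_{\boldsymbol{1},k}}^{T}\right)\geq k-\ell$.

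As in the proof of Theorem \ref{HermitianLCDRL1}, I would write $\boldsymbol{L}_{k\times\ell}=\begin{pmatrix}\boldsymbol{0}_{(k-\ell)\times\ell}\\ \boldsymbol{A}_{\ell\times\ell}\end{pmatrix}$, so that
$$\boldsymbol{G}_{\boldsymbol{1},k}\left(\overline{\boldsymbol{G}_{\boldsymbol{1},k}}\right)^{T}=\boldsymbol{M}_{k,\delta}+\boldsymbol{L}_{k\times\ell}\left(\overline{\boldsymbol{L}_{k\times\ell}}\right)^{T}.$$
Since $k\mid q+1$, Corollary \ref{Mk} (2) says that $\boldsymbol{M}_{k,\delta}$ is diagonal. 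Its diagonal entries are $k$ and $k\gamma^{\delta(i+iq)}$ for $1\le i\le k-1$. All of these are nonzero, because $k\mid q+1$ forces $\gcd(k,p)=1$ and $\gamma\neq 0$. Hence $\boldsymbol{M}_{k,\delta}$ is nonsingular.

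The correction term $\boldsymbol{L}_{k\times\ell}\overline{\boldsymbol{L}_{k\times\ell}}^{T}=\begin{pmatrix}\boldsymbol{0}&\boldsymbol{0}\\ \boldsymbol{0}&\boldsymbol{A}\overline{\boldsymbol{A}}^{T}\end{pmatrix}$ is supported only on the last $\ell$ rows and columns. Consequently the upper-left $(k-\ell)\times(k-\ell)$ block of the sum coincides with the corresponding block of $\boldsymbol{M}_{k,\delta}$. That block is diagonal with nonzero entries, so it is a nonsingular minor. This gives $\rank\geq k-\ell$ and hence $\dim\mathrm{Hull}_{H}\leq\ell$.

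An alternative, equivalent argument is subadditivity of rank: $\rank(\boldsymbol{M}+\boldsymbol{N})\geq\rank(\boldsymbol{M})-\rank(\boldsymbol{N})\geq k-\ell$, since $\rank\left(\boldsymbol{L}\overline{\boldsymbol{L}}^{T}\right)\le\ell$.

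The only step needing care is confirming that the diagonal entries in Corollary \ref{Mk} (2) are genuinely nonzero. This means checking that $k$ is invertible in $\mathbb{F}_{q^2}$, and correctly reading the exponents $i+iq\equiv 0 \pmod k$, which follows from $q\equiv -1\pmod k$. Once that is settled, the bound is immediate. I would not attempt to address attainability here; it presumably belongs to a separate theorem.
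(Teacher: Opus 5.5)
Your core argument is correct and is the one the paper intends. The paper gets the bound from the rank identity for $\boldsymbol{G}\overline{\boldsymbol{G}}^{T}$ together with Corollary \ref{Mk} (2): $\boldsymbol{M}_{k,\delta}$ is diagonal and nonsingular when $k\mid q+1$, and adding $\boldsymbol{L}\overline{\boldsymbol{L}}^{T}$ only disturbs the last $\ell$ rows and columns, so the rank is at least $k-\ell$.

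One claim in your treatment of $\boldsymbol{v}$ needs correcting. You are right that the reduction via Remark \ref{GRLmonomial} does not preserve Hermitian hulls in general; the paper uses it without comment. However, working with $\boldsymbol{G}_{\boldsymbol{v},k}$ directly does not merely change diagonal weights. The matrix replacing $\boldsymbol{M}_{k,\delta}$ is $\boldsymbol{M}_{\boldsymbol{v}}=\left(\sum_{i=1}^{k}v_i^{q+1}(\gamma^{\delta}\alpha_i)^{a+bq}\right)_{0\le a,b\le k-1}$. With non-uniform weights the orthogonality of Lemma \ref{Fq*sum} is lost, so $\boldsymbol{M}_{\boldsymbol{v}}$ is not diagonal. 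Its upper-left $(k-\ell)\times(k-\ell)$ block can even be singular: for $k-\ell=1$ it is $\sum_{i}v_i^{q+1}$, which can vanish.

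Your subadditivity argument still works for every $\boldsymbol{v}$. Indeed $\boldsymbol{M}_{\boldsymbol{v}}=\boldsymbol{V}\,\mathrm{diag}(v_1^{q+1},\ldots,v_k^{q+1})\,\overline{\boldsymbol{V}}^{T}$, where $\boldsymbol{V}$ is the invertible $k\times k$ Vandermonde matrix on the distinct nodes $\gamma^{\delta}\alpha_i$. Hence $\rank(\boldsymbol{M}_{\boldsymbol{v}}+\boldsymbol{L}\overline{\boldsymbol{L}}^{T})\ge k-\ell$.

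In fact the statement follows without any computation. The code $\mathcal{C}$ has length $k+\ell$ and dimension $k$, so $\mathrm{Hull}_{H}(\mathcal{C})\subseteq\mathcal{C}^{\perp_{H}}$ with $\dim\mathcal{C}^{\perp_{H}}=\ell$. Neither $k\mid q+1$ nor the special choice of $\boldsymbol{\alpha}$ is needed.
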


\begin{remark}
By taking $\left(q,k,\ell,\delta\right)=\left(9,5,3,1\right)$ and $\boldsymbol{A}_{\ell\times \ell}=\begin{pmatrix}
\gamma^2&0&0\\
0&\gamma^3&0\\
0&0&\gamma^4\\
\end{pmatrix}$ in Theorem \ref{HLCD1Hull1}, and basing on the Magma program, we have $$\dim\left(\mathrm{Hull}_{H}\left(\mathrm{GRL}_{k}(\boldsymbol{\alpha}, \boldsymbol{v},\boldsymbol{A}_{\ell\times \ell})\right)\right)=3,$$
which means that the bound in Theorem \ref{HLCD1Hull1} is attainable.
\end{remark}

\begin{theorem}\label{HLCD1Hull1}
	Let $k\mid q-1$, $\boldsymbol{\alpha}=\left(\gamma^{\delta}\alpha_{1},\ldots,\gamma^{\delta}\alpha_{k}\right)$ with $1\leq \delta\leq q^2-1$. If $\ell=\frac{k}{2}$ and $k\gamma^{\delta(k-\ell+\ell q)}+\sum\limits_{i=1}^{\ell}a_{1i}^{1+q}=0$, then $\dim\left(\mathrm{Hull}_{H}\left(\mathrm{GRL}_{k}(\boldsymbol{\alpha}, \boldsymbol{v},\boldsymbol{A}_{\ell\times \ell})\right)\right)=1.$ 
\end{theorem}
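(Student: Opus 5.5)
The plan is to compute the rank of the Hermitian Gram matrix $\boldsymbol{G}_{\boldsymbol{1},k}\left(\overline{\boldsymbol{G}_{\boldsymbol{1},k}}\right)^T$ and show it is exactly $k-1$.

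\textbf{Reduction.} By Remark \ref{GRLmonomial} we work with $\boldsymbol{G}_{\boldsymbol{1},k}$. Strictly, that remark gives monomial equivalence, and the Hermitian hull is invariant only under diagonal scalings $v_i$ with $v_i^{q+1}=1$. So I will either assume the paper's convention that $\dim(\mathrm{Hull}_H)$ is read off $\boldsymbol{G}_{\boldsymbol{1},k}$, or restrict to $\boldsymbol{v}$ with all $v_i^{q+1}=1$.

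Next, Lemma \ref{Hhull} gives $\rank\bigl(\boldsymbol{G}\overline{\boldsymbol{G}}^{T}\bigr)=k-\dim\bigl(\mathrm{Hull}_H(\mathcal{C}^{\perp_H})\bigr)$. Since $(\mathcal{C}^{\perp_H})^{\perp_H}=\mathcal{C}$, we have $\mathrm{Hull}_H(\mathcal{C}^{\perp_H})=\mathrm{Hull}_H(\mathcal{C})$. Hence it suffices to show $\rank\bigl(\boldsymbol{G}_{\boldsymbol{1},k}\overline{\boldsymbol{G}_{\boldsymbol{1},k}}^{T}\bigr)=k-1$.

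\textbf{Shape of the matrix.} As in the proof of Theorem \ref{HermitianLCDRL1}, Corollary \ref{Mk} $(1)$ gives $\boldsymbol{G}_{\boldsymbol{1},k}\overline{\boldsymbol{G}_{\boldsymbol{1},k}}^{T}=\boldsymbol{M}_{k,\delta}+\boldsymbol{L}\overline{\boldsymbol{L}}^{T}$. The entries of the first summand are as follows.
\begin{itemize}
\item Entry $(1,1)$ equals $k$.
\item For $2\le r\le k$, row $r$ has the single entry $\varUpsilon_{r-1}$ in column $k+2-r$.
\end{itemize}
All of these are nonzero, since $k\mid q-1$ forces $k\neq 0$ in $\mathbb{F}_{q^2}$. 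The second summand is supported in the block formed by rows and columns $\ell+1,\dots,k$.

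Put $\ell=k/2$. The anti-diagonal entry of row $\ell+1$ then sits in column $k+1-\ell=\ell+1$, the top-left corner of the $\boldsymbol{A}$-block. That corner entry equals
$$\varUpsilon_\ell+\sum_{i=1}^{\ell}a_{1i}^{1+q}=k\gamma^{\delta(k-\ell+\ell q)}+\sum_{i=1}^{\ell}a_{1i}^{1+q},$$
using $\ell=k-\ell$. By hypothesis this entry is $0$.

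\textbf{Upper bound: rank $\le k-1$.} Rows $1,\dots,\ell$ are supported only in the column set $S=\{1,\ell+2,\dots,k\}$, which has $\ell$ elements. Consequently the Laplace expansion along these rows gives
$$\det=\pm\det(\text{rows }1..\ell,\ S)\cdot\det(\text{rows }\ell+1..k,\ \text{columns }2..\ell+1).$$
In the second factor, row $\ell+1$ is zero in columns $2,\dots,\ell$, because its only anti-diagonal entry lies in column $\ell+1$. Its entry in column $\ell+1$ is the vanishing corner entry. So that row is zero, the determinant vanishes, and the rank is at most $k-1$.

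\textbf{Lower bound: rank $\ge k-1$.} Delete row $\ell+1$ and column $\ell+1$. The remaining $(k-1)\times(k-1)$ matrix splits into two groups of rows.
\begin{itemize}
\item Rows $\{1,\ell+2,\dots,k\}$ restricted to columns $\{1,\dots,\ell\}$ form a monomial matrix with entries $k,\varUpsilon_{\ell+1},\dots,\varUpsilon_{k-1}$.
\item Rows $\{2,\dots,\ell\}$ vanish in columns $1,\dots,\ell$. In columns $\ell+2,\dots,k$ they form a monomial matrix with entries $\varUpsilon_1,\dots,\varUpsilon_{\ell-1}$.
\end{itemize}
After reordering rows and columns, the submatrix is therefore block triangular with nonsingular diagonal blocks. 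It is nonsingular, so the rank is exactly $k-1$ and $\dim\mathrm{Hull}_H=1$.

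\textbf{Main obstacle.} The delicate step is the index bookkeeping. One must check that the anti-diagonal entry of row $\ell+1$ lands exactly on the corner $(\ell+1,\ell+1)$ of the $\boldsymbol{A}$-block. One must also check that $\varUpsilon_\ell=k\gamma^{\delta(\ell+(k-\ell)q)}$ coincides with the expression $k\gamma^{\delta(k-\ell+\ell q)}$ in the hypothesis, which holds precisely because $\ell=k-\ell$. Once the support pattern is established, both the vanishing of the full determinant and the nonsingularity of the minor follow from the block-triangular structure.
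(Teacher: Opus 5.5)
Your argument is correct and follows the paper's own route, which the paper only sketches as ``easy'': you write $\boldsymbol{G}_{\boldsymbol{1},k}\overline{\boldsymbol{G}_{\boldsymbol{1},k}}^{T}=\boldsymbol{M}_{k,\delta}+\boldsymbol{L}\overline{\boldsymbol{L}}^{T}$ via Corollary \ref{Mk} $(1)$, note that the hypothesis makes the $(\ell+1,\ell+1)$ corner vanish so the rank is $k-1$, and conclude with Lemma \ref{Hhull}, with your Laplace-expansion bound and explicit nonsingular $(k-1)\times(k-1)$ minor supplying the details. Your caveat about $\boldsymbol{v}$ is also justified and genuinely needed: if $v_i^{q+1}=c\neq 1$ for all $i$, the Gram matrix becomes $c\boldsymbol{M}_{k,\delta}+\boldsymbol{L}\overline{\boldsymbol{L}}^{T}$ and the corner becomes $(c-1)\varUpsilon_\ell\neq 0$, so by your own block argument the code is Hermitian LCD and the statement fails for such $\boldsymbol{v}$ (the paper's reduction via Remark \ref{GRLmonomial} does not preserve Hermitian hulls).
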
 

\subsection{Hermitian LCD GRL codes with the parameters $\left[k+1+\ell,k\right]$}
In this subsection, by taking 
$\boldsymbol{\alpha}=\left(0,\gamma^{\delta}\alpha_{1},\ldots,\gamma^{\delta}\alpha_{k}\right)$ with $1\leq \delta\leq q^2-1$, in the similar proofs as those for Theorems \ref{HermitianLCDRL1}-\ref{HLCD1Hull1}, we construct two classes of Hermitian LCD GRL codes, and then get two classes of GRL codes with $1$-dimensional hull and a class of GRL codes with $2$-dimensional hull. And for a class of GRL codes, we obtain an upper bound for the dimension of the hull   as the following Theorems \ref{HermitianLCDRL2}-\ref{HLCD2Hull2}. 
\begin{theorem}\label{HermitianLCDRL2}
	Let $k\mid q-1$, $\gcd(k+1,q)=1$ and  $\boldsymbol{\alpha}=\left(0,\gamma^{\delta}\alpha_{1}, \ldots, \gamma^{\delta}\alpha_{k}\right)$ with $1\leq \delta\leq q^2-1$. then the following two statements are true,
	
	$(1)$ if $\ell<\frac{k}{2}$, then the code $\mathrm{GRL}_{k}(\boldsymbol{\alpha}, \boldsymbol{v},\boldsymbol{A}_{\ell\times \ell})$ is Hermitian LCD;
	
	$(2)$ if $\ell=\frac{k}{2}$ and $k\gamma^{\delta(k-\ell+\ell q)}+\sum\limits_{i=1}^{\ell}a_{1i}^{1+q}\in\mathbb{F}_{q^2}^{*}$, then the code $\mathrm{GRL}_{k}(\boldsymbol{\alpha}, \boldsymbol{v},\boldsymbol{A}_{\ell\times \ell})$ is Hermitian LCD.
\end{theorem}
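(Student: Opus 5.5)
We follow the proof of Theorem \ref{HermitianLCDRL1}. By Remark \ref{GRLmonomial} it suffices to treat $\mathrm{GRL}_{k}(\boldsymbol{\alpha},\boldsymbol{1},\boldsymbol{A}_{\ell\times\ell})$, whose generator matrix $\boldsymbol{G}_{\boldsymbol{1},k+1}$ is given by $(\ref{GRL1generatormatrix})$ with $n=k+1$. By Lemma \ref{LCDequivalent}, we must show that the $k\times k$ matrix $\boldsymbol{G}_{\boldsymbol{1},k+1}\left(\overline{\boldsymbol{G}_{\boldsymbol{1},k+1}}\right)^T$ is nonsingular over $\mathbb{F}_{q^2}$.

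\textbf{Step 1: decompose the Gram matrix.} The first coordinate $\alpha=0$ contributes the column $(1,0,\ldots,0)^T$ to $\boldsymbol{G}_{\boldsymbol{1},k+1}$, with the convention $0^0=1$. This column adds $1$ to the $(1,1)$-entry and nothing else. The remaining $k$ coordinates $\gamma^{\delta}\alpha_i$ contribute exactly $\boldsymbol{M}_{k,\delta}$, and the last $\ell$ columns contribute $\boldsymbol{L}\overline{\boldsymbol{L}}^T$ with $\boldsymbol{L}=\begin{pmatrix}\boldsymbol{0}_{(k-\ell)\times\ell}\\ \boldsymbol{A}_{\ell\times\ell}\end{pmatrix}$. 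Hence
$$\boldsymbol{G}_{\boldsymbol{1},k+1}\left(\overline{\boldsymbol{G}_{\boldsymbol{1},k+1}}\right)^T=\boldsymbol{M}_{k,\delta}+\boldsymbol{E}_{11}+\boldsymbol{L}\overline{\boldsymbol{L}}^T .$$

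\textbf{Step 2: read off the shape.} Since $k\mid q-1$, Corollary \ref{Mk} (1) shows that $\boldsymbol{M}_{k,\delta}$ has $(1,1)$-entry $k$, anti-diagonal entries $\varUpsilon_i=k\gamma^{\delta(i+(k-i)q)}\neq0$ at positions $(i+1,k-i+1)$ for $1\leq i\leq k-1$, and zeros elsewhere. Adding $\boldsymbol{E}_{11}$ changes the $(1,1)$-entry to $k+1$, which is nonzero in $\mathbb{F}_{q^2}$ exactly because $\gcd(k+1,q)=1$. This is the only new ingredient compared with Theorem \ref{HermitianLCDRL1}. The matrix therefore has the same shape as in that proof: entry $k+1$ at the top-left, the anti-diagonal block of $\varUpsilon_i$, and the $\ell\times\ell$ block $\boldsymbol{A}\overline{\boldsymbol{A}}^T$ added in the lower-right corner.

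\textbf{Step 3: check nonsingularity.} The first row and column isolate the entry $k+1$, so we may delete them and work with the $(k-1)\times(k-1)$ anti-diagonal matrix perturbed in its last $\ell$ rows and columns.

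If $\ell<\frac{k}{2}$, the perturbed rows $k-\ell+1,\ldots,k$ have their anti-diagonal entries in columns $2,\ldots,\ell+1$. These columns lie outside the perturbed column range $k-\ell+1,\ldots,k$, since $\ell+1\leq k-\ell$. Reordering rows then gives a block lower-triangular form whose diagonal entries are the $\varUpsilon_i$, so the determinant is $\pm(k+1)\prod\varUpsilon_i\neq0$.

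If $\ell=\frac{k}{2}$, the two index ranges meet only at position $(k-\ell+1,k-\ell+1)=(\ell+1,\ell+1)$. That entry becomes $\varUpsilon_{\ell}+\sum_{i=1}^{\ell}a_{1i}^{1+q}=k\gamma^{\delta(k-\ell+\ell q)}+\sum_{i=1}^{\ell}a_{1i}^{1+q}$, using $\ell=k-\ell$. The same triangularization (which may need care) then gives a determinant equal, up to sign, to $(k+1)$ times this pivot times the remaining $\varUpsilon_i$, which is nonzero by hypothesis.

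The main point needing care is this last triangularization when $\ell=\frac{k}{2}$. One must confirm that the off-anti-diagonal entries of $\boldsymbol{A}\overline{\boldsymbol{A}}^T$ lie strictly on one side of the anti-diagonal pivots after reordering, exactly as implicitly used in Theorem \ref{HermitianLCDRL1}. To settle it we will expand along each row $2,\ldots,\ell$, each of which has a single nonzero entry, and then along the matching columns, which reduces the determinant to a triangular one.
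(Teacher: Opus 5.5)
Your computation is correct for $\boldsymbol{v}=\boldsymbol{1}$, and it follows the route the paper intends; the paper proves this theorem ``similarly'' to Theorem \ref{HermitianLCDRL1}. The ingredients match: the Gram matrix is $\boldsymbol{M}_{k,\delta}+\boldsymbol{E}_{11}+\boldsymbol{L}\overline{\boldsymbol{L}}^{T}$, the corner entry $k+1\neq0$ comes from $\gcd(k+1,q)=1$, and the anti-diagonal $\varUpsilon_i$ handle the rest. Your expansion for $\ell=\frac{k}{2}$ (rows $2,\ldots,\ell$, then columns $2,\ldots,\ell$, leaving the single pivot $\varUpsilon_\ell+\sum_{i}a_{1i}^{1+q}$ at $(\ell+1,\ell+1)$) is right. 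It makes explicit what the paper only reads off the displayed matrix.

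The gap is your first step, which you inherit from the paper. Remark \ref{GRLmonomial} does not justify reducing to $\boldsymbol{v}=\boldsymbol{1}$, because Hermitian LCD-ness is not invariant under monomial equivalence. For general $\boldsymbol{v}$,
$\boldsymbol{G}_{\boldsymbol{v},k+1}\overline{\boldsymbol{G}_{\boldsymbol{v},k+1}}^{T}=\boldsymbol{V}\,\mathrm{diag}\left(v_1^{q+1},\ldots,v_{k+1}^{q+1}\right)\overline{\boldsymbol{V}}^{T}+\boldsymbol{L}\overline{\boldsymbol{L}}^{T}$,
where $\boldsymbol{V}$ is the evaluation part of $\boldsymbol{G}_{\boldsymbol{1},k+1}$. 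The power sums become weighted, so Lemma \ref{Fq*sum} no longer produces $\boldsymbol{M}_{k,\delta}+\boldsymbol{E}_{11}$.

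The statement in fact fails for general $\boldsymbol{v}$. Choose $\boldsymbol{v}=(v_1,1,\ldots,1)$ with $v_1^{q+1}=-k$, which is possible because $p\nmid k$ and the norm map $\mathbb{F}_{q^2}^{*}\to\mathbb{F}_{q}^{*}$ is onto.
\begin{itemize}
\item The codeword of $f=1$ is $\boldsymbol{c}_1=(v_1,1,\ldots,1,0,\ldots,0)$; its tail vanishes since $k-\ell\geq1$.
\item For every $g(x)=\sum_{j}g_jx^j$ of degree $<k$, using $\sum_{s}(\gamma^{\delta}\alpha_s)^j=0$ for $1\leq j\leq k-1$, one gets $\langle\boldsymbol{c}_g,\boldsymbol{c}_1\rangle_H=g_0\left(v_1^{q+1}+k\right)=0$.
\item Hence $\boldsymbol{c}_1$ lies in the Hermitian hull in both cases (1) and (2). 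Equivalently, the first row of the Gram matrix is zero.
\end{itemize}
Notice that $\gcd(k+1,q)=1$ is exactly the condition $v_1^{q+1}+k\neq0$ specialized to $v_1=1$.

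So your argument proves the theorem for all $\boldsymbol{v}$ with $v_s^{q+1}=1$ for every $s$. In that case the Gram matrix equals $\boldsymbol{G}_{\boldsymbol{1},k+1}\overline{\boldsymbol{G}_{\boldsymbol{1},k+1}}^{T}$ and your computation applies verbatim. You should replace the monomial-equivalence reduction with this observation and restrict the statement accordingly.
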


\begin{remark} 
	By taking $q^2-1\mid\delta$ and $\boldsymbol{A}_{l\times l}=\begin{pmatrix}
		0&1\\
		1&\tau
	\end{pmatrix}$ with $\tau\in\mathbb{F}_{q}^{2}$ in Theorem \ref{HermitianLCDRL2} $(1)$, one can get Lemma 3.12 $(2)$ of the reference \cite{nonGRSLCD4}.
\end{remark}
\begin{theorem}\label{HermitianLCDRL21}
	Let $p\mid k+1$ and  $\boldsymbol{\alpha}=\left(0,\gamma^{\delta}\alpha_{1}, \ldots, \gamma^{\delta}\alpha_{k}\right)$ with $1\leq \delta\leq q^2-1$. Then we have
$$
\dim\left(\mathrm{Hull}_{H}\left(\mathrm{GRL}_{k}(\boldsymbol{\alpha}, \boldsymbol{v},\boldsymbol{A}_{\ell\times \ell})\right)\right)=\begin{cases}
	1,&\text{if}\ \ell<\frac{k}{2};\\
	&\quad\text{or}\ \ell=\frac{k}{2}\ \text{and}\ k\gamma^{\delta(k-\ell+\ell q)}+\sum\limits_{i=1}^{\ell}a_{1i}^{1+q}\in\mathbb{F}_{q^2}^{*};\\
	2,&\text{if}\ \ell=\frac{k}{2}\ \text{and}\ k\gamma^{\delta(k-\ell+\ell q)}+\sum\limits_{i=1}^{\ell}a_{1i}^{1+q}=0.\\
\end{cases}
$$
\end{theorem}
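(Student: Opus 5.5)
Throughout I use $k\mid q-1$, as in Theorems \ref{HermitianLCDRL1}--\ref{HermitianLCDRL2}: Corollary \ref{Mk}(1) is needed to describe $\boldsymbol{M}_{k,\delta}$, and the statement as given does not list this hypothesis explicitly. The plan is to reduce the hull dimension to a rank computation. By Remark \ref{GRLmonomial} I will work with $\mathrm{GRL}_{k}(\boldsymbol{\alpha},\boldsymbol{1},\boldsymbol{A}_{\ell\times\ell})$ and its generator matrix $\boldsymbol{G}=\boldsymbol{G}_{\boldsymbol{1},k+1}$ from (\ref{GRL1generatormatrix})$; this code is a $k$-dimensional subspace of $\mathbb{F}_{q^2}^{\,k+1+\ell}$. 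The monomial matrix in that reduction has entries $v_i\in\mathbb{F}_{q^2}^{*}$, which need not satisfy $v_i^{1+q}=1$. So I first need to confirm that the Hermitian hull dimension is unchanged by this substitution, or equivalently that the computation can be redone with the weights $v_i^{1+q}$ inserted. I would check this before anything else.

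Since $\mathcal{C}\cap\mathcal{C}^{\perp_H}=\mathcal{C}^{\perp_H}\cap(\mathcal{C}^{\perp_H})^{\perp_H}$, Lemma \ref{Hhull} gives
$$\dim\left(\mathrm{Hull}_{H}(\mathcal{C})\right)=k-\rank\left(\boldsymbol{G}\,\overline{\boldsymbol{G}}^{T}\right).$$
So it suffices to show that $\rank(\boldsymbol{G}\overline{\boldsymbol{G}}^{T})$ equals $k-1$ in the first case and $k-2$ in the second.

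Next I compute $\boldsymbol{G}\overline{\boldsymbol{G}}^{T}$ exactly as in the proof of Theorem \ref{HermitianLCDRL1}. The only difference is the extra evaluation point $0$. Its column is $(1,0,\ldots,0)^{T}$, since $0^{0}=1$, so it contributes only to the $(1,1)$ entry. Hence
$$\boldsymbol{G}\overline{\boldsymbol{G}}^{T}=\boldsymbol{M}_{k,\delta}+\mathrm{diag}(1,0,\ldots,0)+\boldsymbol{L}\overline{\boldsymbol{L}}^{T}.$$
By Corollary \ref{Mk}(1), the $(1,1)$ entry is $k+1$, and this vanishes because $p\mid k+1$. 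Every other entry of row $1$ and column $1$ is already zero. Therefore row and column $1$ are entirely zero, and the rank is the rank of the lower-right $(k-1)\times(k-1)$ block.

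That block is the anti-diagonal matrix with entries $\varUpsilon_{1},\ldots,\varUpsilon_{k-1}\in\mathbb{F}_{q^2}^{*}$, with $\boldsymbol{A}\overline{\boldsymbol{A}}^{T}$ added in the bottom-right $\ell\times\ell$ corner.

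\textbf{Case $\ell<\frac{k}{2}$.} The $\varUpsilon_{k-\ell},\ldots,\varUpsilon_{k-1}$ sit in columns $2,\ldots,\ell+1$, which are disjoint from the $\boldsymbol{A}$-block columns. These rows can be used to clear the $\boldsymbol{A}$-block contributions exactly as in Theorem \ref{HermitianLCDRL1}(1). The block is then a permuted nonzero diagonal, so the rank is $k-1$ and the hull has dimension $1$.

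\textbf{Case $\ell=\frac{k}{2}$.} The anti-diagonal meets the $\boldsymbol{A}$-block at position $(\ell+1,\ell+1)$. There the entry is $k\gamma^{\delta(k-\ell+\ell q)}+\sum_{i}a_{1i}^{1+q}$, and as in Theorem \ref{HermitianLCDRL1}(2) it controls the rank.
\begin{itemize}
\item If this entry is nonzero, the argument of Theorem \ref{HermitianLCDRL1}(2) applies verbatim and gives rank $k-1$, so the hull has dimension $1$.
\item If it is zero, row $\ell+1$ retains nonzero entries only in columns $\ell+2,\ldots,k$. Each of these columns contains a pivot $\varUpsilon_{i}$ from rows $2,\ldots,\ell$. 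Subtracting suitable multiples of those rows kills row $\ell+1$ entirely. The remaining $k-2$ rows keep distinct nonzero pivots $\varUpsilon_i$ in distinct columns, so the rank is exactly $k-2$ and the hull has dimension $2$.
\end{itemize}

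The main obstacle is this last elimination. I must check that clearing row $\ell+1$ via rows $2,\ldots,\ell$ does not disturb the pivot structure. This holds because rows $2,\ldots,\ell$ have no entry in column $\ell+1$, and their only nonzero entries in columns $2,\ldots,\ell$ are $\varUpsilon$'s that row $\ell+1$ does not use. With that checked, the dimension $2$ follows.
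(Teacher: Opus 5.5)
\textbf{Main gap: the reduction from $\boldsymbol{v}$ to $\boldsymbol{1}$ fails.} Your rank computation for $\boldsymbol{v}=\boldsymbol{1}$ is correct and is the paper's route. The paper defers to the proofs of Theorems \ref{HermitianLCDRL1}--\ref{HLCD1Hull1}: Corollary \ref{Mk}(1), the point $0$ adding $1$ to the $(1,1)$ entry so that $p\mid k+1$ kills row and column $1$, the anti-diagonal pivot count, and Lemma \ref{Hhull}. You are also right that $k\mid q-1$ must be assumed. The gap is the step you postponed, and it cannot be closed because it is false. Monomial equivalence does not preserve Hermitian hull dimension, so Remark \ref{GRLmonomial} (which the paper itself invokes) does not let you replace $\boldsymbol{v}$ by $\boldsymbol{1}$. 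Indeed, $\boldsymbol{G}_{\boldsymbol{v},k+1}\overline{\boldsymbol{G}_{\boldsymbol{v},k+1}}^{T}=\boldsymbol{G}\,\mathrm{diag}(v_{1}^{q+1},\ldots,v_{k+1}^{q+1},1,\ldots,1)\,\overline{\boldsymbol{G}}^{T}$. This equals $\boldsymbol{G}\overline{\boldsymbol{G}}^{T}$ when all $v_{i}^{q+1}=1$, but not otherwise.

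\textbf{Counterexample.} Take $v_{2}=\cdots=v_{k+1}=1$, and choose $v_{1}$ (the weight at the point $0$) with $v_{1}^{q+1}=c\in\mathbb{F}_{q}^{*}\setminus\{1\}$; this exists because the norm $\mathbb{F}_{q^2}^{*}\to\mathbb{F}_{q}^{*}$ is onto. The first column of $\boldsymbol{G}_{\boldsymbol{v},k+1}$ is $(v_{1},0,\ldots,0)^{T}$, so only the $(1,1)$ entry changes. It becomes $k+c=c-1\neq0$, while the rest of row and column $1$ stays zero. Hence for $\ell<\frac{k}{2}$ the rank is $1+(k-1)=k$, and the code is Hermitian LCD rather than having hull dimension $1$. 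For $\ell=\frac{k}{2}$ the claimed values $1,2$ drop to $0,1$. So the statement is false for arbitrary $\boldsymbol{v}$.

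\textbf{What you actually prove.} Your argument establishes the theorem under the extra hypotheses $k\mid q-1$ and $v_{i}^{q+1}=1$ for every $i$ (for instance $\boldsymbol{v}=\boldsymbol{1}$). In that case the code is the image of $\mathrm{GRL}_{k}(\boldsymbol{\alpha},\boldsymbol{1},\boldsymbol{A}_{\ell\times\ell})$ under a Hermitian isometry. State the result that way rather than trying to verify the reduction.

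The hypothesis $k\mid q-1$ is also essential. For $k\mid q+1$, Corollary \ref{Mk}(2) makes $\boldsymbol{M}_{k,\delta}$ diagonal. Then a diagonal $\boldsymbol{A}_{\ell\times\ell}$ with $a_{ii}^{q+1}=-k\gamma^{\delta(k-\ell+i-1)(q+1)}$ gives hull dimension $\ell+1$.

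\textbf{A small slip.} In your last paragraph, the $\varUpsilon$'s of rows $2,\ldots,\ell$ lie in columns $\ell+2,\ldots,k$, not in columns $2,\ldots,\ell$. Independence of the surviving $k-2$ rows does not follow just from the pivots being in distinct columns. It follows from the block-triangular shape on the pivot columns: rows $2,\ldots,\ell$ vanish on columns $2,\ldots,\ell$, which is where rows $\ell+2,\ldots,k$ have their pivots.
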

\begin{theorem}\label{HLCD2Hull2}
	Let $\boldsymbol{\alpha}=\left(0,\gamma^{\delta}\alpha_{1},\ldots,\gamma^{\delta}\alpha_{k}\right)$ with $(k+1,q)=1$, $k\mid q+1$ and $1\leq \delta\leq q^2-1$. Then
	$$\dim\left(\mathrm{Hull}_{H}\left(\mathrm{GRL}_{k}(\boldsymbol{\alpha}, \boldsymbol{v},\boldsymbol{A}_{\ell\times \ell})\right)\right)\leq\ell.$$
\end{theorem}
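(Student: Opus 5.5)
The plan is to work with $\mathrm{GRL}_{k}(\boldsymbol{\alpha}, \boldsymbol{1},\boldsymbol{A}_{\ell\times \ell})$, which is monomially equivalent to the original code by Remark \ref{GRLmonomial}. Its generator matrix $\boldsymbol{G}=\boldsymbol{G}_{\boldsymbol{1},k+1}$ from $(\ref{GRL1generatormatrix})$ has first column $(1,0,\ldots,0)^T$, coming from $\alpha=0$. Since a monomial matrix with nonzero entries need not preserve the Hermitian hull dimension, I would first observe that the hull dimension depends only on $\boldsymbol{G}\overline{\boldsymbol{G}}^T$ up to congruence. In the convention where $\dim(\mathrm{Hull}_H(\mathcal{C}))=k-\rank(\boldsymbol{G}\overline{\boldsymbol{G}}^T)$, it therefore suffices to prove $\rank(\boldsymbol{G}\overline{\boldsymbol{G}}^T)\geq k-\ell$. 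I would take this as the working interpretation, matching how Theorem \ref{HLCD1Hull2} is obtained from Lemma \ref{Hhull} and Corollary \ref{Mk}(2).

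Next I decompose, as in the proof of Theorem \ref{HermitianLCDRL1}:
$$\boldsymbol{G}\overline{\boldsymbol{G}}^T=\boldsymbol{E}_{11}+\boldsymbol{M}_{k,\delta}+\boldsymbol{L}\overline{\boldsymbol{L}}^T,$$
where $\boldsymbol{E}_{11}$ is the matrix unit contributed by the column for $\alpha=0$ (only $0^0=1$ survives), and $\boldsymbol{L}=\begin{pmatrix}\boldsymbol{0}_{(k-\ell)\times\ell}\\ \boldsymbol{A}_{\ell\times\ell}\end{pmatrix}$. Since $k\mid q+1$, Corollary \ref{Mk}(2) gives
$$\boldsymbol{M}_{k,\delta}=\mathrm{diag}\bigl(k,\ k\gamma^{\delta(1+q)},\ \ldots,\ k\gamma^{\delta(k-1)(1+q)}\bigr).$$
Hence $\boldsymbol{E}_{11}+\boldsymbol{M}_{k,\delta}=\mathrm{diag}\bigl(k+1,\ k\gamma^{\delta(1+q)},\ldots\bigr)$. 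Here $k+1\neq 0$ because $\gcd(k+1,q)=1$, and $k\neq0$ because $k\mid q+1$ forces $p\nmid k$. So $\boldsymbol{D}=\boldsymbol{E}_{11}+\boldsymbol{M}_{k,\delta}$ is an invertible diagonal matrix.

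The key step is a rank comparison. $\boldsymbol{L}\overline{\boldsymbol{L}}^T$ is zero outside its bottom-right $\ell\times\ell$ block, so the top-left $(k-\ell)\times(k-\ell)$ block of $\boldsymbol{G}\overline{\boldsymbol{G}}^T$ equals that of $\boldsymbol{D}$. This block is an invertible diagonal matrix, so it is a nonsingular principal submatrix of size $k-\ell$. Therefore $\rank(\boldsymbol{G}\overline{\boldsymbol{G}}^T)\geq k-\ell$. Alternatively, subadditivity of rank gives the same bound: $\rank(\boldsymbol{D}+\boldsymbol{L}\overline{\boldsymbol{L}}^T)\ge \rank\boldsymbol{D}-\rank(\boldsymbol{L}\overline{\boldsymbol{L}}^T)\ge k-\ell$. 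Combined with the rank–hull relation, this gives $\dim\mathrm{Hull}_H\le \ell$.

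The main obstacle is justifying the rank–hull relation. This needs $\dim(\mathrm{Hull}_H(\mathcal{C}))=k-\rank(\boldsymbol{G}\overline{\boldsymbol{G}}^{T})$ for the code generated by $\boldsymbol{G}$, and invariance of that dimension when passing from $\boldsymbol{v}$ to $\boldsymbol{1}$. The first part I would prove directly: a codeword $\boldsymbol{x}\boldsymbol{G}$ lies in the Hermitian hull iff $\boldsymbol{x}\boldsymbol{G}\overline{\boldsymbol{G}}^T=\boldsymbol{0}$, so the hull is isomorphic to the left kernel of $\boldsymbol{G}\overline{\boldsymbol{G}}^T$. For general $\boldsymbol{v}$, I would either choose $v_i^{q+1}$ constant (e.g.\ $v_i\in\mathbb{F}_{q}^*$-multiples of norm one), or simply redo the computation with weights $v_i^{1+q}$. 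If the weights are not constant, the diagonal structure from Lemma \ref{Fq*sum} is lost. In that case I would state the result for $\boldsymbol{v}$ with constant norm, which is the convention implicitly used in Theorem \ref{HLCD1Hull2}.
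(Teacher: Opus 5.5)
Your argument is the one the paper intends. The paper gives no separate proof of this theorem; it refers to the proofs of Theorems \ref{HermitianLCDRL1}--\ref{HLCD1Hull1}. That argument passes to $\boldsymbol{v}=\boldsymbol{1}$ via Remark \ref{GRLmonomial} and writes $\boldsymbol{G}\overline{\boldsymbol{G}}^T$ as the diagonal matrix from Corollary \ref{Mk}(2), plus $1$ in position $(1,1)$ from the point $0$, plus $\boldsymbol{L}\overline{\boldsymbol{L}}^T$. The extra $1$ is where $\gcd(k+1,q)=1$ enters. One then reads off $\rank\geq k-\ell$ and applies Lemma \ref{Hhull}. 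Your nonsingular leading $(k-\ell)\times(k-\ell)$ block and your direct proof that $\dim\mathrm{Hull}_H(\mathcal{C})=k-\rank(\boldsymbol{G}\overline{\boldsymbol{G}}^T)$ carry this out correctly.

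The one place you depart from the paper is $\boldsymbol{v}$, and there you are right and the paper is not. Monomial equivalence does not preserve Hermitian hulls, so Remark \ref{GRLmonomial} does not justify replacing $\boldsymbol{v}$ by $\boldsymbol{1}$. Your restriction to constant $v_i^{q+1}$ is not just a limitation of your method: the statement is false for arbitrary $\boldsymbol{v}$.

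Here is a counterexample. Take $q=5$, $k=3$, $\ell=2$, $\delta=24$, so $\boldsymbol{\alpha}=(0,\zeta,\zeta^2,1)$ with $\zeta$ a primitive cube root of unity in $\mathbb{F}_{25}$. All hypotheses hold, since $3\mid 6$ and $\gcd(4,5)=1$. Let $\boldsymbol{v}=(\theta,1,1,1)$ and $\boldsymbol{A}_{2\times 2}=\mathrm{diag}(\theta,\theta)$, where $\theta^{6}=2$; such $\theta$ exists because the norm $\mathbb{F}_{25}^*\to\mathbb{F}_5^*$ is onto. The rows of the generator matrix are $(\theta,1,1,1,0,0)$, $(0,\zeta,\zeta^2,1,\theta,0)$ and $(0,\zeta^2,\zeta,1,0,\theta)$. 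Each row has Hermitian norm $2+3=0$. Using $\overline{\zeta}=\zeta^5=\zeta^2$, every pairwise Hermitian product reduces to $1+\zeta+\zeta^2=0$. So this $[6,3]$ code is Hermitian self-dual, and its hull has dimension $3>\ell$.

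You should therefore state the hypothesis $v_1^{q+1}=\cdots=v_{k+1}^{q+1}$ explicitly. This class contains $\boldsymbol{v}=\boldsymbol{1}$, and your argument proves the bound for all of it. Do not describe it as a convention implicit in Theorem \ref{HLCD1Hull2}: the paper makes no such restriction and simply relies on the invalid reduction.
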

\begin{remark}
By taking $\left(q,k,\ell,\delta\right)=\left(11,6,3,1\right)$ and $\boldsymbol{A}_{\ell\times \ell}=\begin{pmatrix}
		\gamma^4&0&0\\
		0&\gamma^5&0\\
		0&0&\gamma^6\\
	\end{pmatrix}$ in Theorem \ref{HLCD2Hull2}, and basing on the Magma program, we have $$\dim\left(\mathrm{Hull}_{H}\left(\mathrm{GRL}_{k}(\boldsymbol{\alpha}, \boldsymbol{v},\boldsymbol{A}_{\ell\times \ell})\right)\right)=3,$$
which means that the bound in Theorem \ref{HLCD2Hull2} is attainable.
\end{remark}
\subsection{Hermitian LCD GRL codes with the parameters $\left[2k+\ell,k\right]$}
In this subsection, by taking $$\boldsymbol{\alpha}=\left(\gamma^{s}\alpha_{1},...,\gamma^{s}\alpha_{k},\gamma^{t}\alpha_{1},...,\gamma^{t}\alpha_{k}\right)$$
with $1\leq s\neq t\leq q^2-1$, in the similar proof as that for Theorem \ref{EuclideanLCDRL31}, we construct two classes of Hermitain LCD GRL codes, and for a class of GRL codes, we obtain an upper bound for the dimension of the hull .

Firstly, in the similar proofs as those for Lemmas \ref{kq-1sttaking}-\ref{kq-1deltataking}, we can obtain the following key lemmas and corollarys.

\begin{lemma}\label{kq2-1sttaking}
Any two components of the vector $\boldsymbol{\alpha}=\left(\gamma^{s}\alpha_{1},...,\gamma^{s}\alpha_{k},\gamma^{t}\alpha_{1},...,\gamma^{t}\alpha_{k}\right)$ with $1\leq s\neq t\leq q^2-1$ are distinct if and only if $\frac{q^2-1}{k}\nmid s-t.$
\end{lemma}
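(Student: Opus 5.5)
The argument follows the first half of the proof of Lemma \ref{kq-1sttaking}, with $q-1$ replaced by $q^2-1$. The key fact is that, since $\mathbb{F}_{q^2}^{*}=\langle\gamma\rangle$ has order $q^2-1$, we have $\gamma^{m}=\gamma^{n}$ if and only if $m\equiv n \pmod{q^2-1}$.

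First, the components within one block are distinct. For $1\leq i\neq j\leq k$ we have $\alpha_i=\gamma^{\frac{q^2-1}{k}i}\neq\gamma^{\frac{q^2-1}{k}j}=\alpha_j$, because $0<|i-j|\frac{q^2-1}{k}<q^2-1$. Multiplying by the nonzero scalar $\gamma^{s}$ or $\gamma^{t}$ preserves this, so $\gamma^{s}\alpha_i\neq\gamma^{s}\alpha_j$ and $\gamma^{t}\alpha_i\neq\gamma^{t}\alpha_j$. Hence only the cross comparisons $\gamma^{s}\alpha_i$ versus $\gamma^{t}\alpha_j$ for $1\leq i,j\leq k$ remain.

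Next, rewrite the cross collisions as a congruence. We have $\gamma^{s}\alpha_i=\gamma^{t}\alpha_j$ if and only if
$$s-t\equiv (j-i)\frac{q^2-1}{k}\pmod{q^2-1}.$$
As $(i,j)$ ranges over $\{1,\ldots,k\}^2$, the difference $x=j-i$ ranges over $\{1-k,\ldots,k-1\}$. The residues $x\frac{q^2-1}{k}\bmod (q^2-1)$ for these $x$ are exactly the multiples of $\frac{q^2-1}{k}$ modulo $q^2-1$, because $\{0,1,\ldots,k-1\}$ already covers all $k$ such residues.

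Finally, conclude both directions. A collision exists if and only if $s-t$ is congruent modulo $q^2-1$ to a multiple of $\frac{q^2-1}{k}$. Since $\frac{q^2-1}{k}$ divides $q^2-1$, this holds if and only if $\frac{q^2-1}{k}\mid s-t$; equivalently, the congruence $x\frac{q^2-1}{k}\equiv s-t \pmod{q^2-1}$ is solvable exactly when $\gcd\left(\frac{q^2-1}{k},q^2-1\right)=\frac{q^2-1}{k}$ divides $s-t$. Therefore all $2k$ components are pairwise distinct if and only if $\frac{q^2-1}{k}\nmid s-t$. For the "only if" direction we must check that a solvable congruence produces a witness $x$ in the allowed range. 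This holds because any solution can be reduced modulo $k$ to some $x\in\{0,\ldots,k-1\}$, and we then take $i=1$, $j=x+1$.

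No step is a real obstacle. The one place needing care is the range of $x$: the paper's Lemma \ref{kq-1sttaking} glosses over it, so the reduction modulo $k$ above should be stated explicitly.
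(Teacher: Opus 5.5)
Your proposal is correct and follows the paper's route. The paper obtains this lemma by repeating the first half of the proof of Lemma \ref{kq-1sttaking} with $q-1$ replaced by $q^2-1$: it reduces cross collisions to the congruence $x\frac{q^2-1}{k}\equiv s-t \pmod{q^2-1}$ and applies the solvability criterion. Your explicit reduction of a solution modulo $k$, which puts the witness $x$ in the admissible range, is a worthwhile tightening of a point the paper leaves implicit.
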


\begin{corollary}\label{kq2-1deltataking}
Any two components of the vector $\boldsymbol{\alpha}=\left(\alpha_{1},...,\alpha_{k},\gamma^{\delta}\alpha_{1},...,\gamma^{\delta}\alpha_{k}\right)$ with $1\leq \delta\leq q^2-1$ are distinct if and only if $\frac{q^2-1}{k}\nmid\delta.$
\end{corollary}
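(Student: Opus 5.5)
The statement to prove is Corollary \ref{kq2-1deltataking}: the components of $\boldsymbol{\alpha}=\left(\alpha_{1},\ldots,\alpha_{k},\gamma^{\delta}\alpha_{1},\ldots,\gamma^{\delta}\alpha_{k}\right)$ are pairwise distinct if and only if $\frac{q^2-1}{k}\nmid\delta$. The plan is to specialize Lemma \ref{kq2-1sttaking} by taking $s=q^2-1$ and $t=\delta$.

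\textbf{Reduction.} Since $\gamma^{q^2-1}=1$, the first block $\gamma^{q^2-1}\alpha_i$ equals $\alpha_i$. The hypothesis $s\neq t$ holds whenever $\delta\neq q^2-1$. Lemma \ref{kq2-1sttaking} then gives distinctness if and only if $\frac{q^2-1}{k}\nmid (q^2-1)-\delta$. Because $\frac{q^2-1}{k}$ divides $q^2-1$, this is equivalent to $\frac{q^2-1}{k}\nmid\delta$.

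\textbf{Edge case $\delta=q^2-1$.} Lemma \ref{kq2-1sttaking} does not apply here. However, the two blocks then coincide, so the components are not distinct. On the other side, $\frac{q^2-1}{k}\mid\delta$, so the equivalence still holds.

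\textbf{Self-contained check.} If one prefers not to rely on Lemma \ref{kq2-1sttaking}, the argument can be repeated directly, as in Lemma \ref{kq-1sttaking}.
\begin{itemize}
\item Within each block, entries are distinct because $\alpha_i=\gamma^{\frac{q^2-1}{k}i}$ with $1\le i\le k$ are distinct powers of a generator.
\item A cross-block collision $\alpha_i=\gamma^{\delta}\alpha_j$ happens if and only if $\delta\equiv \frac{q^2-1}{k}(i-j)\pmod{q^2-1}$.
\item As $i-j$ ranges over the residues modulo $k$, the right side ranges exactly over the multiples of $\frac{q^2-1}{k}$ modulo $q^2-1$.
\end{itemize}
So a collision exists if and only if $\frac{q^2-1}{k}\mid\delta$.

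The only point requiring care is this last step: the differences $i-j$ with $1\le i,j\le k$ cover every residue class modulo $k$, including $0$ via $i=j$. That is what makes the condition both necessary and sufficient, rather than only sufficient.
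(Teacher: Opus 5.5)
Your proposal is correct and follows the paper's route: the paper obtains this corollary by specializing Lemma \ref{kq2-1sttaking} to $s=q^2-1$, $t=\delta$, just as Lemma \ref{kq-1deltataking} specializes Lemma \ref{kq-1sttaking}. Your separate treatment of $\delta=q^2-1$, which $s\neq t$ excludes, and your check that the differences $i-j$ cover every residue modulo $k$ fill in details the paper leaves implicit.
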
 
\begin{lemma}\label{kq2-1st}
	Let $$T=\left\{v_{2}(q^2-1)-1-v_{2}(i+(k-i)q)|1\leq i\leq k-1\right\}.$$ Then for any $1\leq i\leq k-1$, $\gamma^{s\left(i+(k-i)q\right)}+\gamma^{t\left(i+(k-i)q\right)}\in\mathbb{F}_{q^2}^{*}$  if and only if $(s-t)\left(i+(k-i)q\right)\not\equiv \frac{q^2-1}{2}(\bmod\ q^2-1)$. Especially, for $v_{2}(s-t)\notin T$, we have $\gamma^{s\left(i+(k-i)q\right)}+\gamma^{t\left(i+(k-i)q\right)}\in\mathbb{F}_{q^2}^{*}$ for any $1\leq i\leq k-1$.
\end{lemma}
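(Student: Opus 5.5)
The plan follows the proof of Lemma \ref{kq-1sttaking}, with $q-1$ replaced by $q^2-1$ and the exponent $k$ replaced by $e_i:=i+(k-i)q$.

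\textbf{Step 1 (the equivalence).} Fix $1\leq i\leq k-1$. Since $\mathbb{F}_{q^2}^{*}=\langle\gamma\rangle$ has order $q^2-1$, which is even, we have $\gamma^{\frac{q^2-1}{2}}=-1$. Hence
$$\gamma^{se_i}+\gamma^{te_i}=0\iff\gamma^{se_i}=\gamma^{\frac{q^2-1}{2}+te_i}\iff(s-t)e_i\equiv\frac{q^2-1}{2}\pmod{q^2-1}.$$
This follows because $\gamma^m=\gamma^n$ exactly when $m\equiv n \pmod{q^2-1}$. Taking the negation gives the stated equivalence. No case analysis is needed here.

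\textbf{Step 2 (the $2$-adic criterion).} Write $r=\frac{q^2-1}{2}$, so $q^2-1=2r$. If $(s-t)e_i\equiv r\pmod{2r}$, then $(s-t)e_i=(2w+1)r$ for some $w\in\mathbb{Z}$. Comparing $2$-adic valuations gives
$$v_2(s-t)+v_2(e_i)=v_2(r)=v_2(q^2-1)-1,$$
that is, $v_2(s-t)=v_2(q^2-1)-1-v_2(e_i)\in T$.

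Taking the contrapositive, if $v_2(s-t)\notin T$, then the congruence fails for every $i$. Step 1 then gives $\gamma^{se_i}+\gamma^{te_i}\in\mathbb{F}_{q^2}^{*}$ for all $1\leq i\leq k-1$.

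\textbf{Points to check.} The valuations must be well defined. First, $s\neq t$ gives $s-t\neq0$, so $v_2(s-t)$ is well defined; the valuation of a negative integer is taken of its absolute value. Second, $e_i\geq1$, so $v_2(e_i)$ is finite. Some elements of $T$ may be negative; this is harmless, since such values simply can never equal $v_2(s-t)$.

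The only mild obstacle is the direction of the valuation argument. We need only the necessary condition (congruence implies the valuation equality), not the converse. The argument therefore works uniformly in $i$, which explains why $T$ is defined as a set over all $i$.
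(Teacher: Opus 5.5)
Your proposal is correct and is essentially the paper's argument. The paper proves this lemma by direct analogy with Lemma \ref{kq-1sttaking}: it uses $\gamma^{\frac{q^2-1}{2}}=-1$ to reduce vanishing of the sum to the congruence $(s-t)e_i\equiv\frac{q^2-1}{2}\pmod{q^2-1}$, and then compares $2$-adic valuations in $(s-t)e_i=(2w+1)\frac{q^2-1}{2}$, exactly as your Steps 1 and 2 do.
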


\begin{corollary}\label{kq2-1delta}
	Let $$T=\left\{v_{2}(q^2-1)-1-v_{2}(i+(k-i)q)|1\leq i\leq k-1\right\}.$$ Then for any $1\leq i\leq k-1$, $1+\gamma^{\delta\left(i+(k-i)q\right)}\in\mathbb{F}_{q^2}^{*}$ if and only if $\delta\left(i+(k-i)q\right)\not\equiv \frac{q^2-1}{2}(\bmod\ q^2-1)$. Especially, for $v_{2}(\delta)\notin T$, we have $1+\gamma^{\delta\left(i+(k-i)q\right)}\in\mathbb{F}_{q^2}^{*}$ for any $1\leq i\leq k-1$.
\end{corollary}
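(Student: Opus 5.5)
The statement to prove is Corollary \ref{kq2-1delta}. I would derive it as the specialization $s=q^2-1$, $t=\delta$ of Lemma \ref{kq2-1st}. Since $\gamma^{q^2-1}=1$, we get $\gamma^{s(i+(k-i)q)}=1$, so $\gamma^{s(i+(k-i)q)}+\gamma^{t(i+(k-i)q)}=1+\gamma^{\delta(i+(k-i)q)}$. Also $s-t\equiv -\delta \pmod{q^2-1}$, and because $\frac{q^2-1}{2}\equiv-\frac{q^2-1}{2}\pmod{q^2-1}$, the congruence condition involving $s-t$ becomes the one involving $\delta$. Since Lemma \ref{kq2-1st} is only asserted by analogy, I will instead give a short direct argument, which is self-contained.

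\textbf{Step 1 (the equivalence).} Fix $1\leq i\leq k-1$ and put $e_i=i+(k-i)q$. Since $\mathbb{F}_{q^2}^{*}=\langle\gamma\rangle$ has order $q^2-1$ and $q$ is odd, we have $\gamma^{\frac{q^2-1}{2}}=-1$, and $\gamma^{m}=\gamma^{n}$ if and only if $m\equiv n\pmod{q^2-1}$. Then
$$1+\gamma^{\delta e_i}=0\iff\gamma^{\delta e_i}=\gamma^{\frac{q^2-1}{2}}\iff\delta e_i\equiv\frac{q^2-1}{2}\pmod{q^2-1}.$$
Since $1+\gamma^{\delta e_i}$ lies in $\mathbb{F}_{q^2}$, it belongs to $\mathbb{F}_{q^2}^{*}$ exactly when it is nonzero, which gives the first assertion.

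\textbf{Step 2 (the sufficient 2-adic condition).} Suppose $\delta e_i\equiv\frac{q^2-1}{2}\pmod{q^2-1}$. Writing $r=\frac{q^2-1}{2}$, there is an integer $w$ with $\delta e_i=r+2wr=(2w+1)r$. Taking $2$-adic valuations gives
$$v_2(\delta)+v_2(e_i)=v_2(r)=v_2(q^2-1)-1,$$
that is, $v_2(\delta)=v_2(q^2-1)-1-v_2(e_i)\in T$. By contraposition, if $v_2(\delta)\notin T$, then for every $1\leq i\leq k-1$ the congruence fails, and Step 1 gives $1+\gamma^{\delta e_i}\in\mathbb{F}_{q^2}^{*}$.

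\textbf{Main obstacle.} The only delicate point is bookkeeping. Here $e_i>0$, so its valuation is well defined. The valuation equality uses that $2w+1$ is odd, which is the same step as in the proof of Lemma \ref{kq-1sttaking}. We need only the implication from the congruence to membership in $T$, not its converse, so no further analysis is required.
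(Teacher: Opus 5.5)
Your proof is correct and follows essentially the route the paper intends. The paper gives no separate proof and defers to the argument of Lemma \ref{kq-1sttaking}, whose second half is exactly your Steps 1--2: identify $-1=\gamma^{(q^2-1)/2}$, write $\delta e_i=(2w+1)\frac{q^2-1}{2}$, and compare $2$-adic valuations. Arguing directly rather than specializing Lemma \ref{kq2-1st} was also sensible, because the substitution $s=q^2-1$, $t=\delta$ would literally give the hypothesis $v_2(q^2-1-\delta)\notin T$; this matches $v_2(\delta)\notin T$ only once you observe that every element of $T$ is below $v_2(q^2-1)$.
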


Basing on the above Lemmas $\ref{kq2-1sttaking}$-$\ref{kq2-1delta}$, we can obtain the following Theorems \ref{HermitianLCDRL3}-\ref{HermitianLCDRL31}.
\begin{theorem}\label{HermitianLCDRL3}
	Let
	$$T=\left\{v_{2}(q^2-1)-1-v_{2}(i+(k-i)q)|1\leq i\leq k-1\right\},$$ and $\boldsymbol{\alpha}=\left(\gamma^{s}\alpha_{1},...,\gamma^{s}\alpha_{k},\gamma^{t}\alpha_{1},...,\gamma^{t}\alpha_{k}\right)\in \mathbb{F}_{q^2}^{2k}$ with $1\leq s\neq t\leq q^2-1$. If $v_{2}(s-t)\notin T$, $\frac{q^2-1}{k}\nmid s-t,$ $\gcd(2k,q)=1$ and $k\mid q-1$, then the code $\mathrm{GRL}_{k}(\boldsymbol{\alpha}, \boldsymbol{v},\boldsymbol{A}_{\ell\times \ell})$ is Hermitian LCD for $\ell<\frac{k}{2}$ or $\ell=\frac{k}{2}$ and $k\left(\gamma^{s\left(k-\ell+\ell q\right)}+\gamma^{t\left(k-\ell+\ell q\right)}\right)+\sum\limits_{i=1}^{\ell}a_{1i}^{1+q}\in\mathbb{F}_{q^2}^{*}$.
\end{theorem}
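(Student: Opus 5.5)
By Remark \ref{GRLmonomial} it suffices to treat $\mathrm{GRL}_{k}(\boldsymbol{\alpha}, \boldsymbol{1},\boldsymbol{A}_{\ell\times \ell})$ with generator matrix $\boldsymbol{G}_{\boldsymbol{1},2k}$ from (\ref{GRL1generatormatrix}). By Lemma \ref{LCDequivalent}, the goal is to show that $\boldsymbol{G}_{\boldsymbol{1},2k}\left(\overline{\boldsymbol{G}_{\boldsymbol{1},2k}}\right)^{T}$ is nonsingular over $\mathbb{F}_{q^2}$. The first step is to note that the code is well defined. Lemma \ref{kq2-1sttaking} together with $\frac{q^2-1}{k}\nmid s-t$ shows that the $2k$ evaluation points are pairwise distinct. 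Next, split the columns of $\boldsymbol{G}_{\boldsymbol{1},2k}$ into the $s$-block, the $t$-block and the last $\ell$ columns. With $\boldsymbol{L}_{k\times\ell}=\begin{pmatrix}\boldsymbol{0}_{(k-\ell)\times\ell}\\ \boldsymbol{A}_{\ell\times\ell}\end{pmatrix}$ this gives
$$\boldsymbol{G}_{\boldsymbol{1},2k}\left(\overline{\boldsymbol{G}_{\boldsymbol{1},2k}}\right)^{T}=\boldsymbol{M}_{k,s}+\boldsymbol{M}_{k,t}+\boldsymbol{L}_{k\times\ell}\left(\overline{\boldsymbol{L}_{k\times\ell}}\right)^{T},$$
where $\boldsymbol{M}_{k,s},\boldsymbol{M}_{k,t}$ are the matrices of Lemma \ref{HLCDGRLlemma}.

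Since $k\mid q-1$, Corollary \ref{Mk} $(1)$ applies to both $\boldsymbol{M}_{k,s}$ and $\boldsymbol{M}_{k,t}$. Each has the entry $k$ at position $(1,1)$ and is otherwise supported on the same anti-diagonal positions. The row indexed by exponent $i$ pairs with the column indexed by $k-i$, and the entry there is $k\gamma^{s(i+(k-i)q)}$, resp. $k\gamma^{t(i+(k-i)q)}$, for $1\leq i\leq k-1$. Hence $\boldsymbol{M}_{k,s}+\boldsymbol{M}_{k,t}$ has the entry $2k$ at $(1,1)$ and the entries
$$\varUpsilon_i=k\left(\gamma^{s(i+(k-i)q)}+\gamma^{t(i+(k-i)q)}\right),\quad 1\leq i\leq k-1,$$
on the anti-diagonal, with zeros elsewhere. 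Because $\gcd(2k,q)=1$, we have $2k\neq0$ and $k\neq 0$ in $\mathbb{F}_{q^2}$. Because $v_2(s-t)\notin T$, Lemma \ref{kq2-1st} gives $\gamma^{s(i+(k-i)q)}+\gamma^{t(i+(k-i)q)}\in\mathbb{F}_{q^2}^{*}$ for every $1\leq i\leq k-1$. So every $\varUpsilon_i\in\mathbb{F}_{q^2}^{*}$.

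It remains to add $\boldsymbol{L}\overline{\boldsymbol{L}}^{T}$, which is supported only on the lower-right $\ell\times\ell$ block, in rows and columns $k-\ell+1,\ldots,k$. The argument is the same block-structure one as in Theorem \ref{HermitianLCDRL1}.

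If $\ell<\frac{k}{2}$, the anti-diagonal positions in those last $\ell$ rows lie in columns $2,\ldots,\ell+1$, which are disjoint from the block. Row $1$ and column $1$ are isolated with pivot $2k$. Rows $2,\ldots,\ell+1$ have a single nonzero entry each, in the last $\ell$ columns, because the block only touches rows $\geq k-\ell+1>\ell+1$. Expanding along these rows and then along the corresponding last-$\ell$ rows shows that the determinant is, up to sign, $2k\prod_{i=1}^{k-1}\varUpsilon_i\neq0$. The perturbation in the block is eliminated, since after removing the rows $2,\ldots,\ell+1$ and the last $\ell$ columns, the block entries are gone.

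If $\ell=\frac{k}{2}$, the block meets the anti-diagonal in exactly one position. Row $k-\ell+1$ (exponent $k-\ell$) pairs with column $\ell+1=k-\ell+1$, which is the top-left corner of the block. The entry there becomes $\varUpsilon_{k-\ell}+\sum_{i=1}^{\ell}a_{1i}^{1+q}$, and this is nonzero by hypothesis. The rows $2,\ldots,\ell$ again have single nonzero entries, in columns $k,\ldots,\ell+2$, so expanding along them removes the remaining block columns. The determinant is then $\pm 2k\left(\varUpsilon_{k-\ell}+\sum_{i}a_{1i}^{1+q}\right)\prod_{i\neq k-\ell}\varUpsilon_i\neq0$.

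The main obstacle is the bookkeeping in this determinant expansion: verifying carefully which anti-diagonal positions overlap the $\boldsymbol{A}\overline{\boldsymbol{A}}^T$ block when $\ell=\frac{k}{2}$, and checking that the condition in the statement is exactly the corner entry. Everything else follows directly from Corollary \ref{Mk} and Lemma \ref{kq2-1st}.
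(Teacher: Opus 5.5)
Apart from its first step, your argument is the route the paper intends: the decomposition $\boldsymbol{M}_{k,s}+\boldsymbol{M}_{k,t}+\boldsymbol{L}\overline{\boldsymbol{L}}^{T}$, Corollary \ref{Mk} $(1)$, Lemma \ref{kq2-1st}, and the block argument of Theorem \ref{HermitianLCDRL1}. For $\boldsymbol{v}=\boldsymbol{1}$ it is correct.

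One detail is missing in the case $\ell=\frac{k}{2}$. After you expand along rows $2,\ldots,\ell$, the corner column $\ell+1$ still contains entries of $\boldsymbol{A}\overline{\boldsymbol{A}}^{T}$ in rows $k-\ell+2,\ldots,k$. These drop out only when you next expand along row $k-\ell+1$, whose corner entry is by then its only nonzero entry.

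The genuine gap is the reduction ``by Remark \ref{GRLmonomial} it suffices to treat $\boldsymbol{v}=\boldsymbol{1}$''. That remark only gives monomial equivalence, and the Hermitian LCD property is not a monomial invariant: scaling a coordinate by $\lambda$ scales its contribution to the Hermitian form by $\lambda^{q+1}$. Concretely,
\[
\boldsymbol{G}_{\boldsymbol{v},2k}\overline{\boldsymbol{G}_{\boldsymbol{v},2k}}^{T}=\boldsymbol{G}_{\boldsymbol{1},2k}\,\mathrm{diag}\left(v_{1}^{q+1},\ldots,v_{2k}^{q+1},1,\ldots,1\right)\overline{\boldsymbol{G}_{\boldsymbol{1},2k}}^{T}.
\]
Its entries are weighted sums $\sum_{c}v_{c}^{q+1}x_{c}^{a+bq}$ over the evaluation points $x_c$, plus the $\boldsymbol{A}\overline{\boldsymbol{A}}^{T}$ block, and Lemma \ref{Fq*sum} and Corollary \ref{Mk} say nothing about such sums. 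The paper's template proof of Theorem \ref{HermitianLCDRL1} makes the same reduction, so you inherited this defect from the paper rather than introducing it.

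It is not a removable omission: the statement fails for general $\boldsymbol{v}$. Put $\boldsymbol{N}=\boldsymbol{G}_{\boldsymbol{1},2k}\overline{\boldsymbol{G}_{\boldsymbol{1},2k}}^{T}$, which is nonsingular by your argument, and let $\boldsymbol{g}_{c}$ be the $c$-th column of $\boldsymbol{G}_{\boldsymbol{1},2k}$. If only $v_{c}$ differs from $1$, the matrix determinant lemma gives
\[
\det\left(\boldsymbol{G}_{\boldsymbol{v},2k}\overline{\boldsymbol{G}_{\boldsymbol{v},2k}}^{T}\right)=\det(\boldsymbol{N})\left(1+\left(v_{c}^{q+1}-1\right)h_{c}\right),\qquad h_{c}=\overline{\boldsymbol{g}_{c}}^{T}\boldsymbol{N}^{-1}\boldsymbol{g}_{c}\in\mathbb{F}_{q}.
\]
Here $h_c\in\mathbb{F}_q$ because $\boldsymbol{N}$ is Hermitian. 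Whenever some $h_{c}\notin\{0,1\}$, choose $v_{c}$ with $v_{c}^{q+1}=1-h_{c}^{-1}\in\mathbb{F}_{q}^{*}$; this is possible since the norm map is onto. The resulting code meets all hypotheses and is not Hermitian LCD.

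This occurs, for instance, for $q=19$, $k=9$, $\ell=2$, $\boldsymbol{A}=\boldsymbol{E}_{2}$, $(s,t)=(2,1)$. Here $T=\{2\}$, $v_{2}(s-t)=0$ and $40\nmid 1$, so all hypotheses hold.
\begin{itemize}
\item Set $\boldsymbol{D}=\boldsymbol{M}_{k,s}+\boldsymbol{M}_{k,t}$ and $\boldsymbol{B}=\boldsymbol{L}\overline{\boldsymbol{L}}^{T}$. Then $(\boldsymbol{D}^{-1}\boldsymbol{B})^{2}=\boldsymbol{0}$, so $\boldsymbol{N}^{-1}=\boldsymbol{D}^{-1}-\boldsymbol{D}^{-1}\boldsymbol{B}\boldsymbol{D}^{-1}$.
\item For the column with evaluation point $\gamma^{s}\alpha_{i}$ this gives $h_{c}=C-P(\alpha_{i})$, with $C$ independent of $i$ and $P(y)=c_{2}y^{2}+c_{4}y^{4}$, $c_{2}c_{4}\neq0$.
\item The degree-$8$ polynomial $(P-C)(P-C+1)$ cannot vanish at all nine $9$-th roots of unity, so some $h_c\notin\{0,1\}$.
\end{itemize}

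So your proof, like the paper's, establishes the theorem only for $\boldsymbol{v}$ with $v_{c}^{q+1}=1$ for all $c$, in particular $\boldsymbol{v}=\boldsymbol{1}$. For general $\boldsymbol{v}$ the weights $v_{c}^{q+1}$ must be carried through the Gram matrix, and the conclusion can fail.
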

\begin{theorem}\label{HermitianLCDRL31}
	Let
	$$T=\left\{v_{2}(q^2-1)-1-v_{2}(i+(k-i)q)|1\leq i\leq k-1\right\},$$ and $\boldsymbol{\alpha}=\left(\alpha_{1},...,\alpha_{k},\gamma^{\delta}\alpha_{1},...,\gamma^{\delta}\alpha_{k}\right) \in \mathbb{F}_{q^2}^{2k}$ with $1\leq \delta\leq q^2-1$. If $v_{2}(\delta)\notin T$, $\frac{q^2-1}{k}\nmid\delta,$ $\gcd(2k,q)=1$ and $k\mid q-1$, then the code $\mathrm{GRL}_{k}(\boldsymbol{\alpha}, \boldsymbol{v},\boldsymbol{A}_{\ell\times \ell})$ is Hermitian LCD for $\ell<\frac{k}{2}$ or $\ell=\frac{k}{2}$ and $k\left(1+\gamma^{\delta\left(k-\ell+\ell q\right)}\right)+\sum\limits_{i=1}^{\ell}a_{1i}^{1+q}\in\mathbb{F}_{q^2}^{*}$.
\end{theorem}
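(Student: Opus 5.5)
Following the proof of Theorem \ref{HermitianLCDRL1}, Remark \ref{GRLmonomial} lets us replace $\mathrm{GRL}_{k}(\boldsymbol{\alpha}, \boldsymbol{v},\boldsymbol{A}_{\ell\times \ell})$ by $\mathrm{GRL}_{k}(\boldsymbol{\alpha}, \boldsymbol{1},\boldsymbol{A}_{\ell\times \ell})$ with generator matrix $\boldsymbol{G}_{\boldsymbol{1},2k}$ from (\ref{GRL1generatormatrix}). By Lemma \ref{LCDequivalent}, it then suffices to show that $\boldsymbol{G}_{\boldsymbol{1},2k}\left(\overline{\boldsymbol{G}_{\boldsymbol{1},2k}}\right)^{T}$ is nonsingular over $\mathbb{F}_{q^2}$. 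Before this, Corollary \ref{kq2-1deltataking} guarantees that the $2k$ evaluation points $\alpha_{1},\ldots,\alpha_{k},\gamma^{\delta}\alpha_{1},\ldots,\gamma^{\delta}\alpha_{k}$ are pairwise distinct, since $\frac{q^2-1}{k}\nmid\delta$. Hence the code is a genuine GRL code in the sense of Definition \ref{GRLdefinition}.

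We split the evaluation points into the two blocks $(\gamma^{0}\alpha_i)$ and $(\gamma^{\delta}\alpha_i)$. With $\boldsymbol{L}_{k\times\ell}=\begin{pmatrix}\boldsymbol{0}_{(k-\ell)\times\ell}\\ \boldsymbol{A}_{\ell\times\ell}\end{pmatrix}$ as before, we obtain
$$\boldsymbol{G}_{\boldsymbol{1},2k}\left(\overline{\boldsymbol{G}_{\boldsymbol{1},2k}}\right)^{T}=\boldsymbol{M}_{k,q^2-1}+\boldsymbol{M}_{k,\delta}+\boldsymbol{L}_{k\times\ell}\left(\overline{\boldsymbol{L}_{k\times\ell}}\right)^{T}.$$
Since $k\mid q-1$, Corollary \ref{Mk}(1) shows that both $\boldsymbol{M}$'s have the same support. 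That support is the $(1,1)$ entry $k$ together with the anti-diagonal positions $(i+1,k-i+1)$ for $1\leq i\leq k-1$. The sum therefore carries $2k$ at position $(1,1)$, and at the anti-diagonal positions it carries $k\bigl(1+\gamma^{\delta(i+(k-i)q)}\bigr)$, up to the exponent convention $\varUpsilon_i$ used in Theorem \ref{HermitianLCDRL1}.

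The nonvanishing of these entries follows from the hypotheses. First, $\gcd(2k,q)=1$ gives $2k\neq0$ and $k\neq 0$ in $\mathbb{F}_{q^2}$. Second, Corollary \ref{kq2-1delta} together with $v_2(\delta)\notin T$ gives $1+\gamma^{\delta(i+(k-i)q)}\in\mathbb{F}_{q^2}^{*}$ for every $1\le i\le k-1$. So the matrix has exactly the same shape as in the proof of Theorem \ref{HermitianLCDRL1}. Its first row and column vanish except for the nonzero $(1,1)$ entry. Each row $2,\ldots,k-\ell$ has a single nonzero anti-diagonal entry, which lies in one of the last $\ell$ columns. The bottom-right $\ell\times\ell$ block is $\boldsymbol{A}\overline{\boldsymbol{A}}^{T}$ plus whatever anti-diagonal entries fall inside it.

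If $\ell<\frac{k}{2}$, the anti-diagonal pairs rows $\{2,\ldots,\ell+1\}$ with columns in the last $\ell$ block, and rows in the last $\ell$ block with columns $\{2,\ldots,\ell+1\}$. These pairings are disjoint from the block $\boldsymbol{A}\overline{\boldsymbol{A}}^{T}$, because $k+1-\ell>\ell+1$. We then eliminate as follows. The rows $2,\ldots,k-\ell$, each with a single nonzero entry, can be used to clear the corresponding columns. After that, the perturbation $\boldsymbol{A}\overline{\boldsymbol{A}}^{T}$ sits only in columns already pivoted by the rows $2,\ldots,\ell+1$, so it can be cleared by row operations. What remains is a permuted diagonal with nonzero entries, so the determinant is nonzero.

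If $\ell=\frac{k}{2}$, the anti-diagonal meets the block exactly once. The entry at position $(k-\ell+1,k-\ell+1)$, which is the first row and column of the $\boldsymbol{A}$-block, becomes $k\bigl(1+\gamma^{\delta(k-\ell+\ell q)}\bigr)+\sum_{i}a_{1i}^{1+q}$. This quantity is nonzero by hypothesis. The same elimination applies: clear the rest of the block row and column using the anti-diagonal pivots, leaving this entry as a single isolated pivot.

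The main obstacle is bookkeeping. We must confirm that the nonzero positions of $\boldsymbol{M}_{k,q^2-1}$ and $\boldsymbol{M}_{k,\delta}$ coincide, which holds because Lemma \ref{Fq*sum} only depends on the residues modulo $k$. We must also check that the exponent attached to each position is exactly $i+(k-i)q$ as listed in $T$, so that Corollary \ref{kq2-1delta} applies position by position. Once this is settled, the determinant argument is identical to that of Theorem \ref{HermitianLCDRL1}, and Lemma \ref{LCDequivalent} finishes the proof.
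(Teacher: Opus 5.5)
Your proposal is correct and follows the paper's intended route. The paper proves this theorem only by pointing back to Theorem~\ref{HermitianLCDRL1} together with Corollaries~\ref{kq2-1deltataking} and~\ref{kq2-1delta}, which is exactly your setup: the decomposition $\boldsymbol{M}_{k,q^2-1}+\boldsymbol{M}_{k,\delta}+\boldsymbol{L}\overline{\boldsymbol{L}}^{T}$, the $(1,1)$ entry $2k$, and the anti-diagonal entries $k\bigl(1+\gamma^{\delta(i+(k-i)q)}\bigr)$. Your explicit elimination simply fills in the rank computation the paper leaves implicit.

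One harmless slip: for $\ell<\frac{k}{2}$, the rows $\ell+2,\ldots,k-\ell$ have their anti-diagonal entry in the middle columns $\ell+2,\ldots,k-\ell$, not in the last $\ell$ columns. Your elimination only uses the fact that each such row has a single nonzero entry, so the argument stands.
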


Especially, when $\delta=2^{v_{2}\left(q^2-1\right)+\mu}(\mu\geq 0)$, it's easy to get $v_{2}(\delta)\notin T$. Furthermore, we can obtain the following 
\begin{theorem}\label{HermitianLCDRL32}
	Let $\boldsymbol{\alpha}=\left(\alpha_{1},...,\alpha_{k},\gamma^{\delta}\alpha_{1},...,\gamma^{\delta}\alpha_{k}\right) \in \mathbb{F}_{q^2}^{2k}$ with $1\leq \delta\leq q^2-1$ and $\delta=2^{v_{2}\left(q^2-1\right)+\mu}(\mu\geq 0)$. If   $\gcd(2k,q)=1$, $k\mid q-1$, and there exists an odd prime $p^{\prime}$ such that $v_{p^{\prime}}\left(k\right)<v_{p^{\prime}}\left(q^2-1\right)$, then the code $\mathrm{GRL}_{k}(\boldsymbol{\alpha}, \boldsymbol{v},\boldsymbol{A}_{\ell\times \ell})$ is Hermitian LCD for $\ell<\frac{k}{2}$ or $\ell=\frac{k}{2}$ and $k\left(1+\gamma^{\delta\left(k-\ell+\ell q\right)}\right)+\sum\limits_{i=1}^{\ell}a_{1i}^{1+q}\in\mathbb{F}_{q^2}^{*}$.
\end{theorem}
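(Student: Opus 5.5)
The plan is to reduce Theorem \ref{HermitianLCDRL32} to Theorem \ref{HermitianLCDRL31}. Its hypotheses $\gcd(2k,q)=1$ and $k\mid q-1$ are assumed outright, so only two conditions remain to be verified: $v_{2}(\delta)\notin T$ and $\frac{q^2-1}{k}\nmid\delta$. Once both hold, Theorem \ref{HermitianLCDRL31} gives the Hermitian LCD property in both regimes, $\ell<\frac{k}{2}$ and $\ell=\frac{k}{2}$ with $k\left(1+\gamma^{\delta(k-\ell+\ell q)}\right)+\sum_{i=1}^{\ell}a_{1i}^{1+q}\in\mathbb{F}_{q^2}^{*}$.

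\textbf{Step 1: $v_2(\delta)\notin T$.} Since $\delta=2^{v_2(q^2-1)+\mu}$, we have $v_2(\delta)=v_2(q^2-1)+\mu\geq v_2(q^2-1)$. Every element of $T$ has the form $v_2(q^2-1)-1-v_2(i+(k-i)q)$, which is at most $v_2(q^2-1)-1$. Hence $v_2(\delta)$ exceeds every element of $T$. (If one reads $\delta$ modulo $q^2-1$, the same parity argument applies.) Corollary \ref{kq2-1delta} then gives $1+\gamma^{\delta(i+(k-i)q)}\neq 0$ for all $1\leq i\leq k-1$.

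\textbf{Step 2: $\frac{q^2-1}{k}\nmid\delta$.} Because $\delta$ is a power of $2$, it suffices to show that $\frac{q^2-1}{k}$ has an odd prime factor. The hypothesis provides an odd prime $p'$ with $v_{p'}(k)<v_{p'}(q^2-1)$, so $p'\mid\frac{q^2-1}{k}$. Since $p'\nmid\delta$, we get $\frac{q^2-1}{k}\nmid\delta$, and Corollary \ref{kq2-1deltataking} shows that the entries of $\boldsymbol{\alpha}$ are distinct.

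\textbf{Step 3: conclude.} Invoke Theorem \ref{HermitianLCDRL31}. The only delicate point is the range constraint $1\leq\delta\leq q^2-1$. Since $2^{v_2(q^2-1)}\leq q^2-1$, this holds at least for small $\mu$; for larger $\mu$, the exponents only matter modulo $q^2-1$, and reducing $\delta$ modulo $q^2-1$ preserves the fact that $p'\nmid\delta$. I expect no real obstacle beyond this bookkeeping. Step 2 is where the odd-prime hypothesis is used, and it is essentially a one-line divisibility argument.
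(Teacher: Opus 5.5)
Your proposal is correct and follows the paper's own route. The paper obtains this theorem by noting that $\delta=2^{v_2(q^2-1)+\mu}$ forces $v_2(\delta)\notin T$ and then invoking Theorem \ref{HermitianLCDRL31}; it leaves implicit the check $\frac{q^2-1}{k}\nmid\delta$, which you correctly derive from the odd prime $p'$. Your Step 3 discussion of reducing $\delta$ modulo $q^2-1$ is harmless but unnecessary, since $1\leq\delta\leq q^2-1$ is already a hypothesis of the statement.
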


\begin{remark} 
	By taking $\mu=0$ and $\boldsymbol{A}_{l\times l}=\begin{pmatrix}
		0&1\\
		1&\tau
	\end{pmatrix}$ with $\tau\in\mathbb{F}_{q}^{2}$ in Theorem \ref{HermitianLCDRL32}, the corresponding result is just Lemma 3.12 $(3)$ in \cite{nonGRSLCD4}.
\end{remark}

Similarly, if there exists an odd prime $p^{\prime}$ such that $v_{p^{\prime}}\left(k\right)=v_{p^{\prime}}\left(q^2-1\right)$ and $\delta=\left(p^{\prime}\right)^{\mu}(\mu\geq 1)$, we have $v_{2}(\delta)=0$. Furthermore, we can obtain the following 
\begin{theorem}\label{HermitianLCDRL33}
	Let $\boldsymbol{\alpha}=\left(\alpha_{1},...,\alpha_{k},\gamma^{\delta}\alpha_{1},...,\gamma^{\delta}\alpha_{k}\right) \in \mathbb{F}_{q^2}^{2k}$ with $1\leq \delta\leq q^2-1$. If $\gcd(2k,q)=1$, $k\mid q-1$, $0\notin T$, and there exists an odd prime $p^{\prime}$ such that $v_{p^{\prime}}\left(k\right)=v_{p^{\prime}}\left(q^2-1\right)$ and $\delta=\left(p^{\prime}\right)^{\mu}(\mu\geq 1)$, then the code $\mathrm{GRL}_{k}(\boldsymbol{\alpha}, \boldsymbol{v},\boldsymbol{A}_{\ell\times \ell})$ is Hermitian LCD for $\ell<\frac{k}{2}$ or $\ell=\frac{k}{2}$ and $k\left(1+\gamma^{\delta\left(k-\ell+\ell q\right)}\right)+\sum\limits_{i=1}^{\ell}a_{1i}^{1+q}\in\mathbb{F}_{q^2}^{*}$.
\end{theorem}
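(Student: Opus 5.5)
The plan is to reduce Theorem \ref{HermitianLCDRL33} to Theorem \ref{HermitianLCDRL31}. That theorem already yields the Hermitian LCD property under four hypotheses: $v_{2}(\delta)\notin T$, $\frac{q^2-1}{k}\nmid\delta$, $\gcd(2k,q)=1$ and $k\mid q-1$. The last two are assumed here, so only the first two need checking.

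\textbf{Step 1: $v_{2}(\delta)\notin T$.} Since $\delta=(p^{\prime})^{\mu}$ with $p^{\prime}$ odd, we have $v_{2}(\delta)=0$. The hypothesis $0\notin T$ then gives this condition directly. By Corollary \ref{kq2-1delta}, it follows that $1+\gamma^{\delta(i+(k-i)q)}\in\mathbb{F}_{q^2}^{*}$ for every $1\leq i\leq k-1$.

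\textbf{Step 2: $\frac{q^2-1}{k}\nmid\delta$.} By Corollary \ref{kq2-1deltataking}, this condition says exactly that the $2k$ entries of $\boldsymbol{\alpha}$ are distinct. Suppose instead that $\frac{q^2-1}{k}\mid (p^{\prime})^{\mu}$. Then every prime divisor of $\frac{q^2-1}{k}$ equals $p^{\prime}$. Because $v_{p^{\prime}}(k)=v_{p^{\prime}}(q^2-1)$, the $p^{\prime}$-part of $\frac{q^2-1}{k}$ is trivial, so $\frac{q^2-1}{k}=1$, i.e.\ $k=q^2-1$. But $k\mid q-1$ forces $k\leq q-1<q^2-1$, a contradiction. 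Hence $\frac{q^2-1}{k}\nmid\delta$.

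\textbf{Step 3: conclusion.} With both hypotheses verified, Theorem \ref{HermitianLCDRL31} gives the Hermitian LCD property when $\ell<\frac{k}{2}$. It also gives it when $\ell=\frac{k}{2}$ and $k\bigl(1+\gamma^{\delta(k-\ell+\ell q)}\bigr)+\sum_{i=1}^{\ell}a_{1i}^{1+q}\in\mathbb{F}_{q^2}^{*}$. If one prefers not to treat Theorem \ref{HermitianLCDRL31} as a black box, the underlying computation is as follows. Write $\boldsymbol{G}_{\boldsymbol{1}}\overline{\boldsymbol{G}_{\boldsymbol{1}}}^{T}=\boldsymbol{M}_{k,0}+\boldsymbol{M}_{k,\delta}+\boldsymbol{L}\overline{\boldsymbol{L}}^{T}$. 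Corollary \ref{Mk}(1) makes this matrix anti-diagonal outside the first row and column, plus the $\ell\times\ell$ block coming from $\boldsymbol{A}$. Its entries are $2k$ and $k(1+\gamma^{\delta(i+(k-i)q)})$, all nonzero by Step 1 and $\gcd(2k,q)=1$. The argument then finishes exactly as in the proof of Theorem \ref{HermitianLCDRL1}.

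The only non-routine point is Step 2, the distinctness of the evaluation points. The hypothesis on $v_{p^{\prime}}$ is precisely what rules out $\frac{q^2-1}{k}$ dividing a power of $p^{\prime}$.
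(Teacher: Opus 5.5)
Your proof is correct and follows the paper's route: $v_{2}(\delta)=0\notin T$, then apply Theorem \ref{HermitianLCDRL31}. Your Step 2 also supplies the distinctness check $\frac{q^2-1}{k}\nmid\delta$, which the paper leaves implicit. That check does not actually need the $v_{p^{\prime}}$ hypothesis: $k\mid q-1$ gives $q+1\mid\frac{q^2-1}{k}$, so $\frac{q^2-1}{k}$ is even and cannot divide the odd number $\delta=(p^{\prime})^{\mu}$.
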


In the similar proofs as those for Theorem \ref{HLCD1Hull1} and Theorem \ref{HermitianLCDRL3}, one can obtain the following
\begin{theorem}\label{HLCD3Hull1}
Let
$$N_{\ell}=\left\{v_{2}(q-1)-v_{2}(i)-1|1\leq i\leq k-\ell-1\right\},$$ and $\boldsymbol{\alpha}=\left(\gamma^{s}\alpha_{1},...,\gamma^{s}\alpha_{k},\gamma^{t}\alpha_{1},...,\gamma^{t}\alpha_{k}\right)\in \mathbb{F}_{q^2}^{2k}$ with $k\mid q+1$ and $1\leq s\neq t \leq q^2-1$.  If $v_{2}(s-t)\notin N_{\ell}$ and $\frac{q^2-1}{k}\nmid s-t$, then
	$$\dim\left(\mathrm{Hull}_{H}\left(\mathrm{GRL}_{k}(\boldsymbol{\alpha}, \boldsymbol{v},\boldsymbol{A}_{\ell\times \ell})\right)\right)\leq\ell.$$
\end{theorem}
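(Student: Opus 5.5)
We bound the hull through the rank of the Hermitian Gram matrix, following the pattern of the proof of Theorem \ref{HermitianLCDRL1}. By Remark \ref{GRLmonomial} we may take $\boldsymbol{v}=\boldsymbol{1}$. By Lemma \ref{kq2-1sttaking}, the hypothesis $\frac{q^2-1}{k}\nmid s-t$ guarantees that the $2k$ evaluation points are pairwise distinct, so the code is well defined with $n=2k$. Let $\boldsymbol{G}=\boldsymbol{G}_{\boldsymbol{1},2k}$ and $\boldsymbol{L}_{k\times\ell}=\begin{pmatrix}\boldsymbol{0}_{(k-\ell)\times\ell}\\ \boldsymbol{A}_{\ell\times\ell}\end{pmatrix}$. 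Splitting the columns of $\boldsymbol{G}$ into the two cosets and the last $\ell$ columns gives
$$\boldsymbol{G}\left(\overline{\boldsymbol{G}}\right)^{T}=\boldsymbol{M}_{k,s}+\boldsymbol{M}_{k,t}+\boldsymbol{L}_{k\times\ell}\left(\overline{\boldsymbol{L}_{k\times\ell}}\right)^{T}.$$
Since $k\mid q+1$, Corollary \ref{Mk} $(2)$ shows that $\boldsymbol{M}_{k,s}$ and $\boldsymbol{M}_{k,t}$ are diagonal. Their sum is therefore the diagonal matrix with $(i+1,i+1)$-th entry
$$k\left(\gamma^{si(1+q)}+\gamma^{ti(1+q)}\right),\qquad 0\leq i\leq k-1.$$
The term $\boldsymbol{L}\overline{\boldsymbol{L}}^{T}$ is supported only on the bottom-right $\ell\times\ell$ block. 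Hence $\boldsymbol{G}\overline{\boldsymbol{G}}^{T}$ is block diagonal: the upper-left $(k-\ell)\times(k-\ell)$ block is diagonal with the entries for $0\leq i\leq k-\ell-1$, and the lower-right block is $\ell\times\ell$.

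Next we show that these $k-\ell$ diagonal entries are nonzero. First, $k\ne 0$ in $\mathbb{F}_{q^2}$, because $k\mid q+1$ forces $\gcd(k,p)=1$. For $i=0$ the entry is $2k$, which is nonzero because $q$ is odd. For $1\leq i\leq k-\ell-1$ we argue as in Lemma \ref{kq-1sttaking}, using $\gamma^{\frac{q^2-1}{2}}=-1$. The entry vanishes if and only if
$$(s-t)i(q+1)\equiv\tfrac{q^2-1}{2}\pmod{q^2-1},$$
that is, if and only if $(s-t)i\equiv\frac{q-1}{2}\pmod{q-1}$. Writing this as $(s-t)i=(2w+1)\frac{q-1}{2}$ and comparing $2$-adic valuations gives $v_2(s-t)+v_2(i)=v_2(q-1)-1$, i.e.\ $v_2(s-t)\in N_\ell$. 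This is excluded by hypothesis, so every diagonal entry of the upper block is nonzero.

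Consequently $\rank\left(\boldsymbol{G}\overline{\boldsymbol{G}}^{T}\right)\geq k-\ell$. By Lemma \ref{Hhull},
$$\dim\left(\mathrm{Hull}_H\left(\mathcal{C}^{\perp_H}\right)\right)=k-\rank\left(\boldsymbol{G}\overline{\boldsymbol{G}}^{T}\right)\leq\ell.$$
Since $\mathrm{Hull}_H\left(\mathcal{C}^{\perp_H}\right)=\mathcal{C}^{\perp_H}\cap\mathcal{C}=\mathrm{Hull}_H(\mathcal{C})$, the claimed bound follows. The main obstacle is the vanishing criterion in the previous paragraph: one must reduce the congruence modulo $q^2-1$ to one modulo $q-1$ by cancelling the factor $q+1$, and track carefully that only the indices $i\leq k-\ell-1$ matter. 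This is exactly why $N_\ell$ ranges over $1\leq i\leq k-\ell-1$.
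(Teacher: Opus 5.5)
\textbf{The step that fails.} Your computation for $\boldsymbol{v}=\boldsymbol{1}$ is correct, and it is the route the paper intends. The Gram matrix splits as $\boldsymbol{M}_{k,s}+\boldsymbol{M}_{k,t}+\boldsymbol{L}\overline{\boldsymbol{L}}^{T}$, and Corollary \ref{Mk} $(2)$ makes the first two terms diagonal. The $2$-adic argument of Lemma \ref{kq-1sttaking}, applied to the exponents $i(q+1)$ with $1\leq i\leq k-\ell-1$, then shows that the upper $(k-\ell)\times(k-\ell)$ block is nonsingular. The gap is your first step, which the paper also takes. Remark \ref{GRLmonomial} only gives monomial equivalence, and the Hermitian hull dimension is not invariant under monomial equivalence. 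For general $\boldsymbol{v}$ the Gram matrix is $\boldsymbol{G}_{\boldsymbol{1},2k}\boldsymbol{D}\overline{\boldsymbol{G}_{\boldsymbol{1},2k}}^{T}$ with $\boldsymbol{D}=\mathrm{diag}\left(v_{1}^{q+1},\ldots,v_{2k}^{q+1},1,\ldots,1\right)$. Each $v_{j}^{q+1}$ can be any element of $\mathbb{F}_{q}^{*}$, so passing to $\boldsymbol{v}=\boldsymbol{1}$ changes the quantity you are bounding.

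\textbf{Why it cannot be repaired.} The statement is false for general $\boldsymbol{v}$. Take $q=11$, $k=6$, $\ell=2$ and $(s,t)=(11,1)$. Then $k\mid q+1$, $\frac{q^2-1}{k}=20\nmid 10=s-t$, and $v_{2}(s-t)=1\notin N_{2}=\{0,-1\}$, so all hypotheses hold. Let $v_{j}=1$ for $1\leq j\leq 6$ and $v_{j}=\gamma^{5}$ for $7\leq j\leq 12$, so that $v_{j}^{q+1}=\gamma^{60}=-1$ on the second coset. The evaluation part of the Gram matrix is then $\boldsymbol{M}_{k,s}-\boldsymbol{M}_{k,t}$. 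By Corollary \ref{Mk} $(2)$ this is diagonal with entries $k\gamma^{ti(q+1)}\left(\gamma^{(s-t)i(q+1)}-1\right)$, and these are all $0$ because $(s-t)(q+1)=q^{2}-1$. Hence the Gram matrix is $\boldsymbol{0}_{4\times 4}\oplus\boldsymbol{A}\overline{\boldsymbol{A}}^{T}$, which has rank $2$. The Hermitian hull therefore has dimension $4>\ell$.

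The same happens whenever $s-t=q-1$, $k$ is a proper divisor of $q+1$, and $\ell<\frac{k}{2}$. Note that $(q-1)\mid s-t$ makes $v_2(s-t)\notin N_\ell$ automatic.

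\textbf{What is actually proved.} Your argument, and the paper's, establishes the bound when $v_{j}^{q+1}$ takes one common value $c$ for all $1\leq j\leq 2k$, in particular for $\boldsymbol{v}=\boldsymbol{1}$. In that case the upper block is simply multiplied by $c$ and stays nonsingular. The theorem needs this restriction on $\boldsymbol{v}$.
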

\begin{remark}
By taking $\left(q,k,\ell,s,t\right)=\left(3^2,5,3,9,1\right)$ and $\boldsymbol{A}_{\ell\times \ell}=\begin{pmatrix}
		\gamma^6&0&0\\
		0&\gamma^7&0\\
		0&0&\gamma^8\\
	\end{pmatrix}$ in Theorem \ref{HLCD3Hull1}, and basing on the Magma program, we have $$\dim\left(\mathrm{Hull}_{H}\left(\mathrm{GRL}_{k}(\boldsymbol{\alpha}, \boldsymbol{v},\boldsymbol{A}_{\ell\times \ell})\right)\right)=3,$$
which means that the bound in Theorem \ref{HLCD3Hull1} is attainable.
\end{remark}
\subsection{Hermitian LCD GRL codes with the parameters $\left[(\delta+1)k+\ell,k\right]$}
In this subsection, by taking 
$$\boldsymbol{\alpha}=\left(\alpha_{1},...,\alpha_{k},\gamma\alpha_{1},...,\gamma\alpha_{k},\ldots,\gamma^{\delta}\alpha_{1},...,\gamma^{\delta}\alpha_{k}\right)$$
with $1\leq \delta\leq q$, we construct two classes of Hermitian LCD GRL codes, and then get two classes of GRL codes with $1$-dimensional hull and a class of GRL codes with $2$-dimensional hull. And for a class of GRL codes, we obtain an upper bound for the dimension of the hull.
\begin{theorem}\label{HermitianLCDRL4}
	Let $\Delta_{i}=i+(k-i)q(1\leq i\leq k-1)$ and $$\boldsymbol{\alpha}=\left(\alpha_{1},...,\alpha_{k},\gamma\alpha_{1},...,\gamma\alpha_{k},\ldots,\gamma^{\delta}\alpha_{1},...,\gamma^{\delta}\alpha_{k}\right)$$ with $1\leq\delta\leq q$. If  $\gcd\left(\left(\delta+1\right)k,q\right)=1$, $k\mid q-1$, and  $\frac{q^2-1}{\gcd\left(q^2-1,\Delta_{i}\right)}\nmid \delta+1$ for any $1\leq i\leq k-1$. Then the code $\mathrm{GRL}_{k}(\boldsymbol{\alpha}, \boldsymbol{v},\boldsymbol{A}_{\ell\times \ell})$ is Hermitian LCD for $\ell<\frac{k}{2}$ or $\ell=\frac{k}{2}$ and $k\sum\limits_{j=0}^{\delta}\gamma^{j\left(\ell+(k-\ell)q\right)}+\sum\limits_{i=1}^{\ell}a_{1i}^{1+q}\in\mathbb{F}_{q^2}^{*}$.
\end{theorem}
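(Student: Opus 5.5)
We would prove Theorem \ref{HermitianLCDRL4} along the lines of the proof of Theorem \ref{HermitianLCDRL1}. By Remark \ref{GRLmonomial} it suffices to treat $\mathrm{GRL}_{k}(\boldsymbol{\alpha},\boldsymbol{1},\boldsymbol{A}_{\ell\times\ell})$ with generator matrix $\boldsymbol{G}_{\boldsymbol{1},(\delta+1)k}$. By Lemma \ref{LCDequivalent} it then suffices to show that $\boldsymbol{G}_{\boldsymbol{1},(\delta+1)k}\left(\overline{\boldsymbol{G}_{\boldsymbol{1},(\delta+1)k}}\right)^{T}$ is nonsingular.

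\textbf{Step 1: the evaluation points are distinct.} We have $\gamma^{j}\alpha_{i}=\gamma^{j'}\alpha_{i'}$ iff $j-j'\equiv \frac{q^2-1}{k}(i'-i)\pmod{q^2-1}$. Since $k\mid q-1$, we get $\frac{q^2-1}{k}\geq q+1>\delta\geq |j-j'|$. Hence such a congruence forces $j=j'$, and then $i=i'$. This is the argument of Lemma \ref{kq2-1sttaking} and Corollary \ref{kq2-1deltataking}.

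\textbf{Step 2: the Gram matrix.} Splitting the columns into the $\delta+1$ blocks $\gamma^{j}\alpha_{1},\ldots,\gamma^{j}\alpha_{k}$ gives
$$\boldsymbol{G}\overline{\boldsymbol{G}}^{T}=\sum_{j=0}^{\delta}\boldsymbol{M}_{k,j}+\boldsymbol{L}\overline{\boldsymbol{L}}^{T},\qquad \boldsymbol{L}=\begin{pmatrix}\boldsymbol{0}_{(k-\ell)\times\ell}\\ \boldsymbol{A}_{\ell\times\ell}\end{pmatrix}.$$
By Corollary \ref{Mk} $(1)$, each $\boldsymbol{M}_{k,j}$ has the same support: the $(1,1)$-entry and the anti-diagonal positions where the row exponent $i$ pairs with the column exponent $k-i$. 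Indeed, $i+mq\equiv i+m\pmod k$, since $q\equiv 1\pmod k$, so the only nonzero column is $m=k-i$. The corresponding entry is $k\gamma^{j\Delta_{i}}$. Hence the summed matrix has $(1,1)$-entry $(\delta+1)k$, which is nonzero because $\gcd((\delta+1)k,q)=1$. Its anti-diagonal entries are
$$\varUpsilon_i=k\sum_{j=0}^{\delta}\gamma^{j\Delta_i}.$$

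\textbf{Step 3: the anti-diagonal entries are nonzero.} Put $\beta_i=\gamma^{\Delta_i}$, whose order is $\frac{q^2-1}{\gcd(q^2-1,\Delta_i)}$.
\begin{itemize}
\item If $\beta_i\neq1$, then $\varUpsilon_i=k\frac{\beta_i^{\delta+1}-1}{\beta_i-1}$. This is nonzero iff $\mathrm{ord}(\beta_i)\nmid\delta+1$, which is exactly the hypothesis.
\item If $\beta_i=1$, the order is $1$, which divides $\delta+1$. So this case is excluded by the hypothesis.
\end{itemize}
Hence all $\varUpsilon_i\in\mathbb{F}_{q^2}^{*}$, and the matrix has the same shape as the one displayed in the proof of Theorem \ref{HermitianLCDRL1}, with these new $\varUpsilon_i$.

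\textbf{Step 4: nonsingularity.}
\begin{itemize}
\item \emph{Case $\ell<k/2$.} The extra block $\boldsymbol{A}\overline{\boldsymbol{A}}^{T}$ sits in the last $\ell$ rows and columns. The anti-diagonal entries lying in rows $2,\ldots,\ell+1$ and in columns $2,\ldots,\ell+1$ are the only nonzero entries of their rows and columns. Expanding the determinant along these rows and columns removes the $\boldsymbol{A}$-block entirely. What remains has determinant $\pm(\delta+1)k\prod_i\varUpsilon_i\neq0$.
\item \emph{Case $\ell=k/2$.} The anti-diagonal meets the $\boldsymbol{A}$-block only at the corner position $(k-\ell+1,k-\ell+1)$, where the entry is $\varUpsilon_{k-\ell}+\sum_i a_{1i}^{1+q}$. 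Since $\ell=k-\ell$, we have $\Delta_{k-\ell}=\ell+(k-\ell)q$, so this corner equals
$$k\sum_{j}\gamma^{j(\ell+(k-\ell)q)}+\sum_{i=1}^{\ell}a_{1i}^{1+q},$$
which is nonzero by assumption. The same expansion then gives a nonzero determinant.
\end{itemize}

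The main obstacle is Step 4: tracking exactly which anti-diagonal entries collide with the $\boldsymbol{A}\overline{\boldsymbol{A}}^{T}$ block, and checking that the Laplace expansion really isolates the corner entry. We would handle this by a permutation argument: reorder rows so the matrix becomes block lower triangular with diagonal blocks $(\delta+1)k$, the scalars $\varUpsilon_i$, and (when $\ell=k/2$) the single corner scalar.
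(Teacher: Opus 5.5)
Your proof is correct and follows essentially the same route as the paper. Like the paper, you reduce via Remark \ref{GRLmonomial}, Lemma \ref{LCDequivalent} and the anti-diagonal shape of the Gram matrix from Theorem \ref{HermitianLCDRL1} to two facts: the evaluation points are distinct (using $\frac{q^2-1}{k}\geq q+1>\delta$), and the entries $k\sum_{j=0}^{\delta}\gamma^{j\Delta_i}$ are nonzero (via the geometric-sum identity). The differences are minor: you exclude $\gamma^{\Delta_i}=1$ directly from the hypothesis (order $1$ divides $\delta+1$) rather than from the paper's size bound on $\Delta_i$, and you write out the Laplace expansion that the paper leaves implicit.
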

{\bf Proof}. In the similar proof as that for Theorem $\ref{HermitianLCDRL1}$, we only need to prove that the following two statements are true,

$(1)$ for any $1\leq i,j\leq k$ and $0\leq s,t\leq \delta$, $\gamma^{s}\alpha_{i}\neq \gamma^{t}\alpha_{j};$ 

$(2)$ for any $1\leq i\leq k-1$, $\sum\limits_{j=0}^{\delta}\gamma^{j(i+(k-i)q)}\in\mathbb{F}_{q^2}^{*}.$ 

{\bf For (1).} In the similar proof as that for Lemma \ref{kq-1sttaking}, we know that the statements {\bf (1)} holds if and only if $\frac{q^2-1}{k}\nmid t$ for any $-\delta\leq t\leq \delta$. By $q+1\mid q^2-1$ and $k\mid q-1$, we know that if there exists some $-\delta\leq t\leq \delta$ such that $\frac{q^2-1}{k}\mid t$, then $q+1\mid t$, which is contradict with $t\leq \delta\leq q$.

{\bf For (2).} By $k\mid q-1$, $1\leq i \leq k-1$ and $\Delta_{i}=kq+(1-q)i$, we have $$-(q^2-4q+2)\leq\Delta_{i}\leq q^2-2q+1.$$
Since $q$ is an odd prime power, thus for any $1\leq i\leq k-1$, $$-(q^2-1)<\Delta_{i}<q^2-1,$$ i.e., $\gamma^{\Delta_{i}}-1\neq 0$ for any $1\leq i\leq k-1$. Note that
$$\left(\gamma^{\Delta_{i}}-1\right)\sum\limits_{j=0}^{\delta}\gamma^{j\Delta_{i}}=\gamma^{(\delta+1)\Delta_{i}}-1,$$
then $\sum\limits_{j=0}^{\delta}\gamma^{j\Delta_{i}}\in \mathbb{F}_{q^2}^{*}$ if and only if $\gamma^{(\delta+1)\Delta_{i}}-1\in \mathbb{F}_{q^2}^{*}$, i.e., $\mathrm{ord}\left(\gamma\right)=q^2-1\nmid (\delta+1)\Delta_{i}$, namely, $$\frac{q^2-1}{\gcd\left(q^2-1,\Delta_{i}\right)}\nmid \delta+1.$$

From the above discussions, we complete the proof of Theorem $\ref{HermitianLCDRL4}$.

$\hfill\Box$

In the similar proof as that for Theorem \ref{ELCD1Hull1}, one can obtain the following
\begin{theorem}\label{HermitianLCDRL4Hull}
	Let $\Delta_{i}=i+(k-i)q(1\leq i\leq k-1)$,  $$S_{t}=\left\{i\middle|\frac{q^2-1}{\gcd\left(q^2-1,\Delta_{i}\right)}\mid \delta+1,t\leq i\leq k-1\right\}$$
and $$\boldsymbol{\alpha}=\left(\alpha_{1},...,\alpha_{k},\gamma\alpha_{1},...,\gamma\alpha_{k},\ldots,\gamma^{\delta}\alpha_{1},...,\gamma^{\delta}\alpha_{k}\right)$$ with $1\leq \delta\leq q$, $k\mid q-1$. Then we have
$$
\begin{aligned}
&\dim\left(\mathrm{Hull}_{H}\left(\mathrm{GRL}_{k}(\boldsymbol{\alpha}, \boldsymbol{v},\boldsymbol{A}_{\ell\times \ell})\right)\right)\\
=&\begin{cases}
	1,&\ \text{if}\ \# S_{1}=0, p\mid \delta+1\ \text{and}\ \ell<\frac{k}{2};\\
	&\quad\text{or}\ \# S_{1}=0, p\mid \delta+1,\ell=\frac{k}{2},\ \text{and}\ k\sum\limits_{i=0}^{\delta}\gamma^{\delta\left(\ell+(k-\ell)q\right)}+\sum\limits_{i=1}^{\ell}a_{1i}^{1+q}\in\mathbb{F}_{q^2}^{*};\\
	&\quad\text{or}\ \# S_{1}=0,  \gcd\left(k\left(\delta+1\right),q\right)=1,\ell=\frac{k}{2},\ \text{and}\ k\sum\limits_{i=0}^{\delta}\gamma^{\delta\left(\ell+(k-\ell)q\right)}+\sum\limits_{i=1}^{\ell}a_{1i}^{1+q}=0;\\
	2,&\ \text{if}\ \# S_{1}=0, p\mid \delta+1,\ell=\frac{k}{2},\ \text{and}\ k\sum\limits_{i=0}^{\delta}\gamma^{\delta\left(\ell+(k-\ell)q\right)}+\sum\limits_{i=1}^{\ell}a_{1i}^{1+q}=0;\\
s,&\ \text{if}\ \# S_{k-\ell}=s, \gcd\left(k\left(\delta+1\right),q\right)=1,\ell<\frac{k}{2};\\
s+1,&\ \text{if}\ \# S_{k-\ell}=s, p\mid \delta+1,\ell<\frac{k}{2}.
\end{cases}
\end{aligned}
$$ 
\end{theorem}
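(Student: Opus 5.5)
The plan is to compute $\dim\left(\mathrm{Hull}_{H}\right)$ as a rank defect of the Hermitian Gram matrix, following the proof of Theorem \ref{HermitianLCDRL1}. First, by Remark \ref{GRLmonomial}, it suffices to work with $\boldsymbol{G}=\boldsymbol{G}_{\boldsymbol{1},(\delta+1)k}$. For an $[n,k]$ code, Lemma \ref{Hhull} gives
$$k-\rank\left(\boldsymbol{G}\left(\overline{\boldsymbol{G}}\right)^{T}\right)=\dim\left(\mathrm{Hull}_{H}\left(\mathcal{C}^{\perp_{H}}\right)\right),$$
and this equals $\dim\left(\mathrm{Hull}_{H}\left(\mathcal{C}\right)\right)$, since $\mathcal{C}\cap\mathcal{C}^{\perp_H}$ is the same space in both cases. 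So everything reduces to computing the rank of $\boldsymbol{G}\left(\overline{\boldsymbol{G}}\right)^{T}$.

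Next, I would split the evaluation points into the $\delta+1$ blocks $\gamma^{j}\alpha_{1},\ldots,\gamma^{j}\alpha_{k}$ for $0\leq j\leq \delta$. This gives
$$\boldsymbol{G}\left(\overline{\boldsymbol{G}}\right)^{T}=\sum_{j=0}^{\delta}\boldsymbol{M}_{k,j}+\boldsymbol{L}\left(\overline{\boldsymbol{L}}\right)^{T}.$$
Since $k\mid q-1$, Corollary \ref{Mk} $(1)$ shows that every $\boldsymbol{M}_{k,j}$ has the same support: the $(1,1)$ entry $k$, and the anti-diagonal entries in row $i+1$ (for $1\leq i\leq k-1$), equal to $k\gamma^{j\Delta_i}$. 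Hence the sum has $(1,1)$ entry $(\delta+1)k$ and anti-diagonal entries $k\sum_{j=0}^{\delta}\gamma^{j\Delta_i}$. The $(1,1)$ entry vanishes exactly when $p\mid \delta+1$, because $\gcd(k,q)=1$. By the geometric-sum argument in the proof of Theorem \ref{HermitianLCDRL4}, and using $\gamma^{\Delta_i}\neq 1$, the $i$-th anti-diagonal entry vanishes exactly when $i\in S_{1}$. I would also record that $\Delta_{k-i}\equiv q\Delta_i \pmod{q^2-1}$. Consequently the entries in positions $(i,k-i)$ and $(k-i,i)$ vanish simultaneously, so $S_1$ is closed under $i\mapsto k-i$.

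Then I would count the rank. The term $\boldsymbol{L}\left(\overline{\boldsymbol{L}}\right)^{T}$ lives only in the bottom-right $\ell\times\ell$ block, which corresponds to rows and columns of index $\geq k-\ell$, and that block is $\boldsymbol{A}\left(\overline{\boldsymbol{A}}\right)^{T}$, which is invertible.

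In the case $\ell<\frac{k}{2}$, no anti-diagonal position falls inside this block. The matrix is therefore a permutation-like matrix in which each of the $\ell$ anti-diagonal pairs touching the block, together with the block itself, has full rank regardless of the $A$-part. The rank defect is then $1$ if $p\mid\delta+1$ (contributed by the $(1,1)$ entry), plus the number of vanishing anti-diagonal rows not rescued by the $A$-block. The rows $i$ with $i\geq k-\ell$ are rescued by $\boldsymbol{A}$, while their partner rows $k-i$ are not. I expect this bookkeeping to yield exactly $\#S_{k-\ell}$, producing the cases $s$ and $s+1$.

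In the case $\ell=\frac{k}{2}$ with $\#S_1=0$, the only interaction is at the entry $(k-\ell,k-\ell)$, which becomes $k\sum_{j}\gamma^{j\Delta_\ell}+\sum_i a_{1i}^{1+q}$. Reducing the block by Schur complement as in Theorem \ref{HermitianLCDRL1} $(2)$ gives defect $0$ or $1$ according to whether this entry is nonzero; adding the $(1,1)$ defect gives the $1$ and $2$ cases.

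The main obstacle is the $\ell<\frac{k}{2}$, $\#S_{k-\ell}=s$ bookkeeping. I need to show that vanishing anti-diagonal entries with row index in $[k-\ell,k-1]$ produce exactly one zero row or column each, while those with row index below $k-\ell$ contribute nothing extra. To do this, I would first explicitly permute the matrix into a block-diagonal form consisting of $2\times 2$ anti-diagonal blocks and one block coupling $\boldsymbol{A}\left(\overline{\boldsymbol{A}}\right)^{T}$ with the surviving anti-diagonal entries. I would then apply Schur complements to that coupled block, carefully tracking which indices of $S_1$ actually lie in $S_{k-\ell}$.
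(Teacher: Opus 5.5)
Your route, the rank of the Hermitian Gram matrix, is the one the paper intends; the paper itself only says ``similar proof''. Several of your steps are correct:
\begin{itemize}
\item the treatment of the $(1,1)$ entry;
\item the entries $c_i:=k\sum_{j=0}^{\delta}\gamma^{j\Delta_i}$, which vanish iff $i\in S_1$ and satisfy $c_{k-i}=c_i^{q}$;
\item the case $\ell=\frac{k}{2}$, $\#S_1=0$, where you rightly read the statement's $\sum_{i=0}^{\delta}\gamma^{\delta(\ell+(k-\ell)q)}$ as $\sum_{j=0}^{\delta}\gamma^{j\Delta_\ell}$.
\end{itemize}

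\textbf{The gap.} It lies in the $\ell<\frac{k}{2}$ bookkeeping, and it cannot be closed, because both of your expectations there fail. Index the rows and columns of $\boldsymbol{G}\overline{\boldsymbol{G}}^{T}$ by the exponents $0,\ldots,k-1$.

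(a) Vanishing entries below $k-\ell$ do contribute. Take $i\in S_1$ with $\ell+1\le i\le k-\ell-1$. The row of exponent $i$ has a single candidate nonzero entry, $c_i=0$, in column $k-i$, and it receives nothing from $\boldsymbol{L}\overline{\boldsymbol{L}}^{T}$. So it is a zero row, and since $i$ is not a partner of any element of $S_{k-\ell}$, it is not counted by $s$.

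(b) Rows with exponent in $S_{k-\ell}$ are not automatically rescued by $\boldsymbol{A}$. The coupled block lives on the exponents $\{1,\ldots,\ell\}\cup\{k-\ell,\ldots,k-1\}$ and has the form $\begin{pmatrix}\boldsymbol{0}&\boldsymbol{D}_1\\ \boldsymbol{D}_2&\boldsymbol{A}\overline{\boldsymbol{A}}^{T}\end{pmatrix}$. Pivoting on the nonzero $c_i$ in $\boldsymbol{D}_1$ and $\boldsymbol{D}_2$ shows that this block has rank $2(\ell-s)+\mathrm{rank}(\boldsymbol{W})$. Here $s=\#S_{k-\ell}$, and $\boldsymbol{W}$ is the principal submatrix of $\boldsymbol{A}\overline{\boldsymbol{A}}^{T}$ on the rows of $\boldsymbol{A}$ sitting at exponents in $S_{k-\ell}$. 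The matrix $\boldsymbol{W}$ can be singular even though $\boldsymbol{A}\overline{\boldsymbol{A}}^{T}$ is not.

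The correct count is
\[
\dim\left(\mathrm{Hull}_{H}\right)=\epsilon+\#\left(S_{1}\cap\{\ell+1,\ldots,k-\ell-1\}\right)+2s-\mathrm{rank}(\boldsymbol{W}),
\]
where $\epsilon=1$ iff $p\mid\delta+1$. This equals $s+\epsilon$ only if $S_1$ misses the middle range and $\boldsymbol{W}$ is nonsingular.

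\textbf{Counterexamples.} The $s$ and $s+1$ cases of the statement are false as written, so no argument can prove them without extra hypotheses.
\begin{itemize}
\item For (a), take $q=13$, $k=6$, $\ell=2$, $\delta=3$. Then $\Delta_3=42$ and $\frac{168}{\gcd(168,42)}=4\mid\delta+1$, while $\Delta_4=30$ and $\Delta_5=18$ give $28\nmid 4$. Thus $S_{k-\ell}=\emptyset$ and $\gcd(k(\delta+1),q)=1$, so the statement predicts an LCD code. Yet the row of exponent $3$ vanishes for every $\boldsymbol{A}$, and the hull has dimension $1$.
\item For (b), take $q=11$, $k=5$, $\ell=2$, $\delta=7$, which give $S_1=\{1,4\}$ and $s=1$. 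With $\boldsymbol{A}=\begin{pmatrix}1&0\\1&\gamma^{5}\end{pmatrix}$ one gets $\boldsymbol{W}=(1+\gamma^{60})=(0)$. The Gram matrix then has rank $3$, so the hull has dimension $2\neq s$.
\end{itemize}

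What you can actually prove is the displayed formula. Alternatively, you can prove the stated values once you add the hypotheses $S_1\cap\{\ell+1,\ldots,k-\ell-1\}=\emptyset$ and $\det\boldsymbol{W}\neq 0$.
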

\begin{theorem}\label{HLCD4Hull1}
	Let
	$\boldsymbol{\alpha}=\left(\alpha_{1},...,\alpha_{k},\gamma\alpha_{1},...,\gamma\alpha_{k},\ldots,\gamma^{\delta}\alpha_{1},...,\gamma^{\delta}\alpha_{k}\right)$ with $1\leq \delta\leq q$, $k\mid q+1$, $\varLambda_{i}=i(q+1)(1\leq i\leq k-1)$ and 
	$$U_{\ell}=\left\{i\middle|\frac{q^2-1}{\gcd\left(q^2-1,\varLambda_{i}\right)}\mid \delta+1,1\leq i\leq k-\ell-1\right\}.$$ If $\gcd\left((\delta+1)k,q\right)=1$ and $\# U_{\ell}=0$, then
	$$\dim\left(\mathrm{Hull}_{H}\left(\mathrm{GRL}_{k}(\boldsymbol{\alpha}, \boldsymbol{v},\boldsymbol{A}_{\ell\times \ell})\right)\right)\leq\ell.$$
\end{theorem}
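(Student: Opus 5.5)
By Remark \ref{GRLmonomial} we may take $\boldsymbol{v}=\boldsymbol{1}$, and by Lemma \ref{Hhull} (applied with the roles of $\mathcal{C}$ and $\mathcal{C}^{\perp_H}$ interchanged) it suffices to show that
\[
\rank\left(\boldsymbol{G}_{\boldsymbol{1},n}\left(\overline{\boldsymbol{G}_{\boldsymbol{1},n}}\right)^{T}\right)\geq k-\ell ,
\]
since then the hull has dimension at most $\ell$. Here $n=(\delta+1)k$.

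As in the proof of Theorem \ref{HermitianLCDRL1}, we decompose
\[
\boldsymbol{G}_{\boldsymbol{1},n}\left(\overline{\boldsymbol{G}_{\boldsymbol{1},n}}\right)^{T}=\sum_{j=0}^{\delta}\boldsymbol{M}_{k,j}+\boldsymbol{L}\overline{\boldsymbol{L}}^{T},
\]
where $\boldsymbol{L}=\begin{pmatrix}\boldsymbol{0}_{(k-\ell)\times\ell}\\ \boldsymbol{A}_{\ell\times\ell}\end{pmatrix}$. The correction $\boldsymbol{L}\overline{\boldsymbol{L}}^{T}$ is supported only in the bottom-right $\ell\times\ell$ block. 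So the first $k-\ell$ rows of the whole matrix coincide with the first $k-\ell$ rows of $\sum_{j}\boldsymbol{M}_{k,j}$, and it is enough to show these $k-\ell$ rows are linearly independent.

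Since $k\mid q+1$, Corollary \ref{Mk} (2) shows each $\boldsymbol{M}_{k,j}$ is diagonal. The $(1,1)$ entry of $\boldsymbol{M}_{k,j}$ is $k$, and for $2\leq i+1\leq k$ its $(i+1,i+1)$ entry is $k\gamma^{j\varLambda_i}$ with $\varLambda_i=i(q+1)$. Hence $\sum_j \boldsymbol{M}_{k,j}$ is diagonal with entries
\[
(\delta+1)k,\qquad k\sum_{j=0}^{\delta}\gamma^{j\varLambda_i}\quad(1\leq i\leq k-1).
\]
The first entry is nonzero because $\gcd((\delta+1)k,q)=1$. For $1\leq i\leq k-\ell-1$ we repeat the argument of Theorem \ref{HermitianLCDRL4} (2):
\[
\left(\gamma^{\varLambda_i}-1\right)\sum_{j=0}^{\delta}\gamma^{j\varLambda_i}=\gamma^{(\delta+1)\varLambda_i}-1 .
\]
If $\gamma^{\varLambda_i}\neq1$, the sum vanishes iff $q^2-1\mid(\delta+1)\varLambda_i$, i.e.\ iff $\frac{q^2-1}{\gcd(q^2-1,\varLambda_i)}\mid\delta+1$. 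This is excluded by $\#U_\ell=0$. If $\gamma^{\varLambda_i}=1$, the sum equals $\delta+1\neq0$ in $\mathbb{F}_{q^2}$, again by the gcd hypothesis.

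Therefore the top-left $(k-\ell)\times(k-\ell)$ diagonal block is nonsingular, the first $k-\ell$ rows are independent, the rank is at least $k-\ell$, and the bound follows.

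The main obstacle I expect is the bookkeeping of indices: checking that rows $1,\dots,k-\ell$ correspond exactly to $i=0,\dots,k-\ell-1$, which matches the range in $U_\ell$. A second point of care is the degenerate case $\gamma^{\varLambda_i}=1$. This can happen since $\varLambda_i$ may be a multiple of $q^2-1$, for instance when $i\geq q-1$, although $i\leq k-1\leq q$ limits this. In that case $\frac{q^2-1}{\gcd}=1$ divides $\delta+1$, so $\#U_\ell=0$ already excludes it. That makes the separate treatment of this case unnecessary, but it is worth remarking on.
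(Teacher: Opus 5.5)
Everything after your first sentence is correct for $\boldsymbol{v}=\boldsymbol{1}$, and it is the route the paper intends. $\sum_{j=0}^{\delta}\boldsymbol{M}_{k,j}$ is diagonal by Corollary~\ref{Mk}~(2), its first $k-\ell$ diagonal entries are $(\delta+1)k$ and $k\sum_{j=0}^{\delta}\gamma^{j\varLambda_i}$, and the geometric-sum test of Theorem~\ref{HermitianLCDRL4} together with $\#U_\ell=0$ and $\gcd((\delta+1)k,q)=1$ makes these entries nonzero.

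The gap is your first sentence. Remark~\ref{GRLmonomial} only gives monomial equivalence, and monomial equivalence does not preserve Hermitian hulls. Since $\boldsymbol{G}_{\boldsymbol{v},n}=\boldsymbol{G}_{\boldsymbol{1},n}\,\mathrm{diag}(v_1,\ldots,v_n,1,\ldots,1)$, you get $\boldsymbol{G}_{\boldsymbol{v},n}\overline{\boldsymbol{G}_{\boldsymbol{v},n}}^{T}=\boldsymbol{G}_{\boldsymbol{1},n}\,\mathrm{diag}(v_1^{q+1},\ldots,v_n^{q+1},1,\ldots,1)\,\overline{\boldsymbol{G}_{\boldsymbol{1},n}}^{T}$. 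The weights $v_m^{q+1}$ can be any elements of $\mathbb{F}_q^{*}$, and they enter exactly the diagonal entries you need to be nonzero. The paper's proofs open with the same reduction, so this defect comes from the source, but it is still a defect.

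In fact the statement is false for general $\boldsymbol{v}$. Take $q=11$, $k=6$, $\ell=2$, $\delta=5$. The hypotheses hold:
\begin{itemize}
\item $6\mid 12$ and $\gcd(36,11)=1$;
\item the $36$ points $\gamma^{j+20i}$ are distinct;
\item $\frac{120}{\gcd(120,12i)}=10,5,10$ for $i=1,2,3$, none of which divides $6$, so $\#U_2=0$.
\end{itemize}
Let $\beta=\gamma^{12}$, a primitive element of $\mathbb{F}_{11}$, and set $c_j=\prod_{j'\neq j}(\beta^{j}-\beta^{j'})^{-1}\in\mathbb{F}_{11}^{*}$ for $0\le j\le 5$. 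For every coordinate $\gamma^{j}\alpha_i$ of the $j$-th block, choose $v_m$ with $v_m^{12}=c_j$; this is possible because the norm $\mathbb{F}_{121}^{*}\to\mathbb{F}_{11}^{*}$ is onto.

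By Lemma~\ref{Fq*sum} and $a+11b\equiv a-b \pmod 6$, the Reed--Solomon part of $\boldsymbol{G}_{\boldsymbol{v},n}\overline{\boldsymbol{G}_{\boldsymbol{v},n}}^{T}$ is diagonal, with $(a+1,a+1)$ entry $6\sum_{j=0}^{5}c_j\beta^{ja}$. This vanishes for $0\le a\le 4$ by the identity $\sum_{j}x_j^{a}/\prod_{j'\neq j}(x_j-x_{j'})=0$, applied to the six distinct nodes $x_j=\beta^{j}$ with $a\le 4$. So the whole matrix is supported on its bottom-right $2\times 2$ block and has rank at most $2$, giving $\dim\mathrm{Hull}_{H}\ge 4>\ell$.

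Your argument, like the paper's, therefore proves the bound only when $v_1^{q+1}=\cdots=v_n^{q+1}$, for instance $\boldsymbol{v}=\boldsymbol{1}$. In that case the first $k-\ell$ rows are a nonzero scalar multiple of the ones you computed, and the proof goes through. That hypothesis has to be added to the statement.
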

\begin{remark}
By taking $\left(q,k,\ell,\delta\right)=\left(3^2,5,3,3\right)$ and $\boldsymbol{A}_{\ell\times \ell}=\begin{pmatrix}
		\gamma^8&0&0\\
		0&\gamma^5&0\\
		0&0&\gamma^8\\
	\end{pmatrix}$ in Theorem \ref{HLCD4Hull1}, and basing on the Magma program, we have $$\dim\left(\mathrm{Hull}_{H}\left(\mathrm{GRL}_{k}(\boldsymbol{\alpha}, \boldsymbol{v},\boldsymbol{A}_{\ell\times \ell})\right)\right)=3,$$
which means that the bound in Theorem \ref{HLCD4Hull1} is attainable.
\end{remark}
\subsection{Several classes of EAQECCs}
In this subsection, combining Lemma \ref{EAQECCsHhull} and Theorems \ref{HermitianLCDRL1},\ref{HLCD1Hull1}-\ref{HermitianLCDRL21},\ref{HermitianLCDRL3}-\ref{HermitianLCDRL33}, \ref{HermitianLCDRL4}-\ref{HermitianLCDRL4Hull}, we can immediately obtain several classes of ESQECCs as the following 
\begin{theorem}
Assume that $d$ is the minimum distance for the code $\mathrm{GRL}_{k}(\boldsymbol{\alpha}, \boldsymbol{v},\boldsymbol{A}_{\ell\times \ell})$. Then there exists some q-ary
EAQECCs with  one of the following parameters,

$(1)$ $\left[\left[k+\ell,k-i,d,\ell-i\right]\right]_{q}$ for $i=0,1$; 

$(2)$ $\left[\left[k+1+\ell,k-i,d,\ell+1-i\right]\right]_{q}$ for $i=0,1,2$; 

$(3)$ $\left[\left[2k+\ell,k,d,k+\ell\right]\right]_{q}$;

$(4)$ $\left[\left[\left(\delta+1\right)k+\ell,k-i,d,\delta k+\ell-i\right]\right]_{q}$  for  $i=0,1,2,\ldots,s,s+1$ with $s\leq \ell$.
\end{theorem}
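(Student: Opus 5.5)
This statement is a bookkeeping consequence of Lemma \ref{EAQECCsHhull}, applied to the Hermitian hull dimensions computed in Section \ref{sec4}. For each family I will record the length $n'$ and dimension $k$ of the GRL code over $\mathbb{F}_{q^2}$, together with the possible values $h=\dim\left(\mathrm{Hull}_{H}\left(\mathrm{GRL}_{k}(\boldsymbol{\alpha}, \boldsymbol{v},\boldsymbol{A}_{\ell\times \ell})\right)\right)$. Substituting into the first family of Lemma \ref{EAQECCsHhull} gives an EAQECC
$$\left[\left[n',k-h,d,n'-k-h\right]\right]_{q}.$$
The plan is therefore to go through the four families and check that the resulting pairs $(n',h)$ give exactly the listed parameters.

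\textbf{Family (1).} Take $\boldsymbol{\alpha}=\left(\gamma^{\delta}\alpha_{1},\ldots,\gamma^{\delta}\alpha_{k}\right)$, so $n'=k+\ell$ and $n'-k=\ell$. Theorem \ref{HermitianLCDRL1} gives $h=0$, and Theorem \ref{HLCD1Hull1} gives $h=1$. This yields $\left[\left[k+\ell,k-i,d,\ell-i\right]\right]_{q}$ for $i=0,1$.

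\textbf{Family (2).} Take $\boldsymbol{\alpha}=\left(0,\gamma^{\delta}\alpha_{1},\ldots,\gamma^{\delta}\alpha_{k}\right)$, so $n'=k+1+\ell$ and $n'-k=\ell+1$. Theorem \ref{HermitianLCDRL2} gives $h=0$, and Theorem \ref{HermitianLCDRL21} gives $h\in\{1,2\}$. This yields $i=0,1,2$.

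\textbf{Family (3).} Use Theorems \ref{HermitianLCDRL3}--\ref{HermitianLCDRL33}. Here $n'=2k+\ell$ and $h=0$, which gives $\left[\left[2k+\ell,k,d,k+\ell\right]\right]_{q}$.

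\textbf{Family (4).} Take the vector with $\delta+1$ cosets, so $n'=(\delta+1)k+\ell$ and $n'-k=\delta k+\ell$. Theorem \ref{HermitianLCDRL4} gives $h=0$. Theorem \ref{HermitianLCDRL4Hull} gives $h\in\{1,2\}$ in the cases with $\#S_{1}=0$, and $h\in\{s,s+1\}$ where $s=\#S_{k-\ell}$. Since $S_{k-\ell}\subseteq\{k-\ell,\ldots,k-1\}$, we have $s\le \ell$. Collecting these values gives $i=0,1,\ldots,s+1$ with $s\le\ell$.

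The only non-mechanical point is making the ranges of $i$ honest. In family (4) the intermediate values $2<i<s$ are not produced by a single code. I would read the statement as the union over admissible parameter choices: for every $s$ that actually occurs as $\#S_{k-\ell}$, we get both $s$ and $s+1$. I would also note that the case $p\mid\delta+1$ can coexist with the hypotheses of that theorem. This matching of hull values to families is where I expect the main care to be needed.

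Finally, I will state that $d$ denotes the minimum distance of the chosen GRL code, as Lemma \ref{EAQECCsHhull} requires. Nothing beyond this substitution is needed, since Lemma \ref{EAQECCsHhull} already asserts existence of the EAQECC for any linear code over $\mathbb{F}_{q^2}$.
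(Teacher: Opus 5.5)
Your proposal is correct and is the paper's own argument: the paper obtains this theorem "immediately" by substituting the Hermitian hull dimensions from Theorems \ref{HermitianLCDRL1}, \ref{HLCD1Hull1}--\ref{HermitianLCDRL21}, \ref{HermitianLCDRL3}--\ref{HermitianLCDRL33} and \ref{HermitianLCDRL4}--\ref{HermitianLCDRL4Hull} into the first family of Lemma \ref{EAQECCsHhull}, exactly as you do. Your remark on part (4) is a fair clarification that the paper leaves implicit: only $i\in\{0,1,2,s,s+1\}$ is actually produced, with $s=\#S_{k-\ell}\le\ell$, so the listed range has to be read as a union over admissible parameter choices.
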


In fact, for the given GRL code $\mathrm{GRL}_{k}(\boldsymbol{\alpha}, \boldsymbol{v},\boldsymbol{A}_{\ell\times \ell})$, its Hermitian dual code is uniquely determined. Furthermore, we also can immediately obtain several classes of ESQECCs as the following
\begin{theorem}
	Assume that $d^{\perp_H}$ is the minimum distance for the code $\mathrm{GRL}_{k}^{\perp_{H}}(\boldsymbol{\alpha}, \boldsymbol{v},\boldsymbol{A}_{\ell\times \ell})$. Then there exists some q-ary
	EAQECCs with  one of the following parameters,
	
	$(1)$ $\left[\left[k+\ell,\ell-i,d^{\perp_H},k-i\right]\right]_{q}$ for $i=0,1$;
	
	$(2)$ $\left[\left[k+1+\ell,\ell+1-i,d^{\perp_H},k-i\right]\right]_{q}$ for $i=0,1,2$; 
	
	$(3)$ $\left[\left[2k+\ell,k+\ell,d^{\perp_{H}},k\right]\right]_{q}$; 
	
	$(4)$ $\left[\left[\left(\delta+1\right)k+\ell,\delta k+\ell-i,d^{\perp_{H}},k-i\right]\right]_{q}$ for  $i=0,1,2,\ldots,s,s+1$ with $s\leq \ell$.
\end{theorem}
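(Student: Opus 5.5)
The plan is to read off the parameters directly from the second family in Lemma \ref{EAQECCsHhull}, applied to each Hermitian GRL code constructed in this section. For an $[n,k,d]_{q^2}$ code $\mathcal{C}$ with $h=\dim\left(\mathrm{Hull}_{H}(\mathcal{C})\right)$, that lemma gives a $q$-ary EAQECC with parameters $\left[\left[n,n-k-h,d^{\perp_H},k-h\right]\right]_{q}$. Here $d^{\perp_H}$ is the minimum distance of $\mathcal{C}^{\perp_H}$. Since $\mathrm{Hull}_{H}(\mathcal{C}^{\perp_H})=\mathcal{C}^{\perp_H}\cap\mathcal{C}=\mathrm{Hull}_{H}(\mathcal{C})$, it does not matter whether we view the construction as based on $\mathcal{C}$ or on its Hermitian dual; the same $h$ appears. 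So the whole proof reduces to listing, for each code length, the pairs $(n,h)$ that the earlier theorems actually guarantee. We then substitute them into the formula, with $n$ the length of $\mathrm{GRL}_{k}(\boldsymbol{\alpha},\boldsymbol{v},\boldsymbol{A}_{\ell\times\ell})$ and $h$ the hull dimension.

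I will go case by case.

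\textbf{Case $n=k+\ell$.} Theorem \ref{HermitianLCDRL1} gives $h=0$, and Theorem \ref{HLCD1Hull1} gives $h=1$. This yields $\left[\left[k+\ell,\ell-i,d^{\perp_H},k-i\right]\right]_q$ for $i=0,1$, which is item $(1)$.

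\textbf{Case $n=k+1+\ell$.} Theorem \ref{HermitianLCDRL2} gives $h=0$, and Theorem \ref{HermitianLCDRL21} gives $h=1,2$. This yields item $(2)$, since $n-k-h=\ell+1-i$.

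\textbf{Case $n=2k+\ell$.} Theorems \ref{HermitianLCDRL3}--\ref{HermitianLCDRL33} give $h=0$, hence $\left[\left[2k+\ell,k+\ell,d^{\perp_H},k\right]\right]_q$, which is item $(3)$.

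\textbf{Case $n=(\delta+1)k+\ell$.} Theorem \ref{HermitianLCDRL4} gives $h=0$. Theorem \ref{HermitianLCDRL4Hull} gives $h\in\{1,2,s,s+1\}$, where $s=\#S_{k-\ell}$. Substituting gives $n-k-h=\delta k+\ell-h$ and $k-h$, which is item $(4)$. To justify the constraint $s\le\ell$, I would use that $S_{k-\ell}\subseteq\{k-\ell,\ldots,k-1\}$, a set with exactly $\ell$ elements.

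The only delicate point is the bookkeeping in item $(4)$. The phrase ``$i=0,1,\ldots,s,s+1$'' should be read as the hull dimensions actually realized by Theorems \ref{HermitianLCDRL4} and \ref{HermitianLCDRL4Hull}, namely $0,1,2,s,s+1$. I would state this explicitly rather than claim that every intermediate value is attained. I would also check that the hypotheses of each invoked theorem (for instance $k\mid q-1$ and the relevant gcd or divisibility conditions) are compatible with the corresponding length, so that each listed parameter set is nonvacuous.
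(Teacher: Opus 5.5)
Your proposal is correct and is essentially the paper's own argument. The paper obtains this theorem by substituting the hull dimensions from Theorems~\ref{HermitianLCDRL1}, \ref{HLCD1Hull1}--\ref{HermitianLCDRL21}, \ref{HermitianLCDRL3}--\ref{HermitianLCDRL33} and \ref{HermitianLCDRL4}--\ref{HermitianLCDRL4Hull} into the second family of Lemma~\ref{EAQECCsHhull}, exactly as in your case-by-case bookkeeping. Your extra remarks make explicit what the paper leaves implicit: that $\mathrm{Hull}_{H}(\mathcal{C}^{\perp_H})=\mathrm{Hull}_{H}(\mathcal{C})$, that $s\le\ell$ since $S_{k-\ell}\subseteq\{k-\ell,\ldots,k-1\}$, and that item $(4)$ really covers only the realized values $0,1,2,s,s+1$.
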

\section{The non-RS property of the GRL code}\label{sec5}
In this section, by using the Cauchy matrix method proposed in \cite{Roth1985}, we study the non-GRS property of the code $\mathrm{GRL}_{k}(\boldsymbol{\alpha},\boldsymbol{v},\boldsymbol{A}_{\ell\times \ell})$, and prove that the code $\mathrm{GRL}_{k}(\boldsymbol{\alpha},\boldsymbol{v},\boldsymbol{A}_{\ell\times \ell})$ is non-GRS for $k>\ell$. And when $k=\ell$, we give some examples.

 Firstly, we present the following crucial lemma.
\begin{lemma}\label{fi(x)}
Let $\sigma_{i}$ be the $i$-th degree elementary symmetric polynomial and $$
f(x)=\prod\limits_{i=1}^{k}\left(x-\alpha_{i}\right)=\sum\limits_{i=0}^{k} (-1)^{i}\sigma_{i} x^{k-i}.
$$ Then for $$
f_{i}(x)=\sum\limits_{j=1}^{k} f_{i j} x^{j-1}=\prod\limits_{j=1, j \neq i}^{k}\left(x-\alpha_{j}\right)(1 \leq i \leq k)
,$$ we have
$$f_{ij}=\sum\limits_{s=0}^{k-j}(-1)^{s}\sigma_{s}\alpha_{i}^{k-j-s}(1\leq j\leq k).$$
\end{lemma}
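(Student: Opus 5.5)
We need the coefficients of $f_i(x)=f(x)/(x-\alpha_i)$. The plan is to carry out the polynomial division of $f(x)$ by $x-\alpha_i$ explicitly (synthetic division, i.e.\ Horner's scheme) and read off the coefficients in terms of the elementary symmetric polynomials $\sigma_s$ of $\alpha_1,\ldots,\alpha_k$ (with $\sigma_0=1$).

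\textbf{Step 1: the identity.} Since $f(\alpha_i)=0$, we have $f(x)=f(x)-f(\alpha_i)=\sum_{m=0}^{k}(-1)^{k-m}\sigma_{k-m}\left(x^{m}-\alpha_i^{m}\right)$, where we reindex so that $c_m=(-1)^{k-m}\sigma_{k-m}$ is the coefficient of $x^m$.

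\textbf{Step 2: divide term by term.} Use
$$\frac{x^{m}-\alpha_i^{m}}{x-\alpha_i}=\sum_{j=1}^{m}x^{j-1}\alpha_i^{m-j}.$$
This gives
$$f_i(x)=\sum_{m=1}^{k}c_m\sum_{j=1}^{m}x^{j-1}\alpha_i^{m-j}.$$
Interchanging the sums yields
$$f_{ij}=\sum_{m=j}^{k}c_m\alpha_i^{m-j}.$$

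\textbf{Step 3: change variables.} Substitute $s=k-m$, so that $s$ runs over $0,\ldots,k-j$ and $m-j=k-j-s$. This gives
$$f_{ij}=\sum_{s=0}^{k-j}(-1)^{s}\sigma_s\alpha_i^{k-j-s},$$
which is the claimed formula.

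There is no real obstacle. The only care needed is bookkeeping of the index shift between the coefficient $c_m$ of $x^m$ and the $\sigma$-index. As a sanity check, $j=k$ gives $f_{ik}=1$, confirming that $f_i$ is monic of degree $k-1$.

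An equivalent route is to compare coefficients in $f(x)=(x-\alpha_i)f_i(x)$. This gives the recurrence $f_{i,j-1}=\alpha_i f_{ij}+c_{j-1}$, and the formula then follows by downward induction on $j$ starting from $f_{ik}=1$.
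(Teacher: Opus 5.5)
Your proof is correct, and your main route differs from the paper's. The paper compares coefficients in $f(x)=(x-\alpha_i)f_i(x)$. This gives the triangular system $f_{ik}=1$ and $f_{im}-\alpha_i f_{i(m+1)}=(-1)^{k-m}\sigma_{k-m}$ for $1\le m\le k-1$, together with the constant-term equation $-\alpha_i f_{i1}=(-1)^k\sigma_k$. The paper then unrolls this system from the top, writing out only the first few terms and the last two. This is exactly the ``equivalent route'' you sketch at the end.

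Your main argument instead writes $f(x)=f(x)-f(\alpha_i)$ and divides each $x^m-\alpha_i^m$ by $x-\alpha_i$ using the geometric-sum identity. The closed form then comes out of a single interchange of summation, with no induction or pattern-reading. The paper's version leaves its inductive step implicit.

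The two arguments are the same in substance, since both are Horner's scheme. The fact $f(\alpha_i)=0$, which you use up front, appears in the paper only as the redundant constant-term equation. Substituting the formula for $f_{i1}$ into that equation gives exactly $f(\alpha_i)=0$.
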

\textbf{Proof}. Note that 
$$f(x)=\left(x-\alpha_{i}\right)f_{i}(x),$$
then by comparing the coefficients for $x^{i}$ in both sides, we have 
$$
\begin{cases}
f_{ik}=1,\\
f_{i(k-1)}-f_{ik}\alpha_{i}=-\sigma_{1},\\
f_{i(k-2)}-f_{i(k-1)}\alpha_{i}=\sigma_{2},\\
\ \ \ \ \ \ \ \ \ \vdots\\
f_{i1}-f_{i2}\alpha_{i}=(-1)^{k-1}\sigma_{k-1},\\
-f_{i1}\alpha_{i}=(-1)^{k}\sigma_{k}.
\end{cases}
$$
Furthermore, we can get 
$$\begin{cases}
f_{ik}=1,\\
f_{i(k-1)}=\alpha_{i}-\sigma_{1},\\
f_{i(k-2)}=\left(\alpha_{i}-\sigma_{1}\right)\alpha_{i}+\sigma_{2},\\
\ \ \ \ \ \ \ \ \ \vdots\\
f_{i2}=\sum\limits_{s=0}^{k-2}(-1)^{s}\sigma_{s}\alpha_{i}^{k-2-s},\\
f_{i1}=\sum\limits_{s=0}^{k-1}(-1)^{s}\sigma_{s}\alpha_{i}^{k-1-s}.
\end{cases}$$ 

This completes the proof of Lemma $\ref{fi(x)}$.

$\hfill\Box$

Now, we prove that the code $\mathrm{GRL}_{k}(\boldsymbol{\alpha},\boldsymbol{v},\boldsymbol{A}_{\ell\times \ell})$ is non-GRS when $k>\ell$. 
\begin{theorem}\label{nonRS}
	If $k>\ell$, then the code $\mathrm{GRL}_{k}(\boldsymbol{\alpha},\boldsymbol{v},\boldsymbol{A}_{\ell\times \ell})$ is non-GRS.
\end{theorem}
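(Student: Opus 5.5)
We reduce to the code $\mathrm{GRL}_{k}(\boldsymbol{\alpha},\boldsymbol{1},\boldsymbol{A}_{\ell\times\ell})$ with generator matrix $\boldsymbol{G}_{\boldsymbol{1},n}$ from (\ref{GRL1generatormatrix}), which is allowed by Remark \ref{GRLmonomial}. Being GRS is invariant under monomial equivalence: if $\mathcal{C}$ is GRS then it equals $\mathrm{GRS}_k(\boldsymbol{\beta},\boldsymbol{w})$ for some length-$(n+\ell)$ evaluation vector $\boldsymbol{\beta}$ and column multipliers $\boldsymbol{w}$. Hence it suffices to show that no systematic generator matrix of $\mathcal{C}$ can be column-scaled into the form of Lemma \ref{nonGRS}. (Lemma \ref{nonGRS} is stated for RS codes; scaling columns turns the GRS criterion into a statement about Cauchy-like entries up to row and column scalars.) We argue by contradiction and assume $\mathcal{C}$ is GRS.

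First we compute a systematic form on the first $k$ coordinates. Since $n\ge k+1$, the first $k$ columns form a Vandermonde matrix $\boldsymbol{V}$ in $\alpha_1,\dots,\alpha_k$, which is invertible. Then $\boldsymbol{V}^{-1}\boldsymbol{G}_{\boldsymbol{1},n}=(\boldsymbol{E}_k\mid \boldsymbol{B})$. The rows of $\boldsymbol{V}^{-1}$ are given by Lagrange interpolation: row $i$ has entries $f_{ij}/f_i(\alpha_i)$ with $f_{ij}$ as in Lemma \ref{fi(x)}, and $f_i(\alpha_i)=\eta_i$.

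On the RS part, for $k<j\le n$, $\boldsymbol{B}$ has entries $f_i(\alpha_j)/\eta_i=\eta_{j}\eta_i^{-1}/(\alpha_j-\alpha_i)$, which is exactly of Cauchy type.

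On the last $\ell$ columns, the $(i,m)$ entry is $\eta_i^{-1}\sum_{r=1}^{\ell} f_{i,k-\ell+r}\,a_{rm}$, because only rows $k-\ell+1,\dots,k$ of $\boldsymbol{G}$ are nonzero there. By Lemma \ref{fi(x)}, $f_{i,k-\ell+r}=\sum_{s=0}^{\ell-r}(-1)^s\sigma_s\alpha_i^{\ell-r-s}$. So this entry equals $\eta_i^{-1}P_m(\alpha_i)$, where $P_m$ is a polynomial of degree at most $\ell-1<k-1$ whose coefficients depend only on $\sigma_1,\dots,\sigma_{\ell-1}$ and on $\boldsymbol{A}$. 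Moreover $P_m\neq 0$ because $\boldsymbol{A}$ is invertible. For comparison, in the GRS situation the $i$-dependence of such an entry would be $\eta_i^{-1}\,c/(\beta-\alpha_i)$ times a column scalar, i.e.\ $\eta_i^{-1}$ times a rational function with a genuine pole.

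The key step is to show that $P_m(\alpha_i)$ cannot equal $c_m/(\beta_m-\alpha_i)$ for all $1\le i\le k$. If it did, the polynomial $(\beta_m-x)P_m(x)-c_m$, of degree at most $\ell\le k-1$, would vanish at the $k$ distinct points $\alpha_1,\dots,\alpha_k$. It would therefore be identically zero. But that forces $P_m=0$ and $c_m=0$, or, when $\deg P_m=0$, a polynomial $(\beta_m-x)c'$ of degree $1$ equal to a constant, which is again impossible since $c'\ne0$ and $k\ge2$. Either way we get a contradiction.

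This is the step where $k>\ell$ is used. The main obstacle is handling the freedom of the GRS parametrisation carefully. The evaluation points $\beta$ of the putative GRS code need not coincide with $\boldsymbol{\alpha}$ on the first $k$ coordinates, and column multipliers enter the criterion. To control this, we use the standard fact that the Cauchy condition of Lemma \ref{nonGRS} is preserved by Möbius transformations of the evaluation points. We then apply it to $2\times2$ minors: every $2\times2$ minor of $\boldsymbol{B}$ must be nonzero, and the cross-ratio-type relations among entries of $\boldsymbol{B}$ must hold. Since the columns $k+1,\dots,n$ already pin down these relations via the $\alpha$'s (they are at least one column because $n\ge k+1$), the polynomial-degree argument above yields the contradiction. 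We would write out this cross-ratio identity explicitly for one RS column and one $\boldsymbol{A}$-column.
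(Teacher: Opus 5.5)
\textbf{There is a genuine gap, and the statement itself fails when $n=k+1$.}

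Your key step, the degree argument applied to $(\beta_m-x)P_m(x)-c_m$, is exactly the paper's argument. It only rules out a GRS structure that is already aligned with the RS part: one whose evaluation points on the information set are $\alpha_1,\dots,\alpha_k$ and whose row scalars are the $\eta_i^{-1}$. For a general GRS code, the systematic form has entries $u_iw_m/(\beta_{n+m}-\beta_i)$ with unknown $\beta_i$ and $u_i$. Comparing these with $\eta_i^{-1}P_m(\alpha_i)$ gives no polynomial identity in $\alpha_i$. The paper skips exactly this point: it applies Lemma~\ref{nonGRS} as if the putative RS code had evaluation points $\alpha_1,\dots,\alpha_n$ on the first $n$ coordinates.

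The case $\ell=1$ shows that the degree step alone proves nothing. There $P_1=a_{11}$ is a nonzero constant, and your argument would show that $\{(f(\alpha_1),\dots,f(\alpha_n),a_{11}f_{k-1})\}$ is non-GRS. But this is a doubly extended RS code, which is GRS after $x\mapsto 1/(x-c)$ whenever $n<q$. So the reduction you only sketch is the whole difficulty. The reason you give for it is also wrong: a single RS column beyond the information set pins nothing down. Rigidity of GRS evaluation points up to a M\"obius map (equivalently, uniqueness of the rational normal curve through $k+2$ points in general position in $\mathbb{P}^{k-1}$) needs at least $k+2$ RS coordinates, i.e.\ $n\ge k+2$.

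For $n=k+1$ the gap cannot be closed, because the theorem is false. Take $q=13$, $k=3$, $\ell=2$, $n=4$, $\boldsymbol{\alpha}=(1,2,3,7)$, $\boldsymbol{v}=\boldsymbol{1}$ and $\boldsymbol{A}=\begin{pmatrix}1&1\\1&-1\end{pmatrix}$. The code is then $\{(f(1),f(2),f(3),f(7),f_1+f_2,f_1-f_2)\}$.
\begin{itemize}
\item Every column $(x,y,z)^T$ of $\boldsymbol{G}_{\boldsymbol{1},4}$ lies on the nonsingular conic $y^2+5xy-3x^2-2xz-z^2=0$.
\item Explicitly, let $\boldsymbol{P}=\begin{pmatrix}3&-2&0\\3&2&1\\3&5&-1\end{pmatrix}$, which has $\det\boldsymbol{P}=6\neq0$. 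Up to nonzero scalars, the columns are $\boldsymbol{P}(1,m,m^2)^T$ for $m=0,2,6,3,8$, followed by $\boldsymbol{P}(0,0,1)^T$.
\item So the code is a doubly extended RS code. After $m\mapsto 1/(m-1)$ it equals $\mathrm{GRS}_3\big((12,1,8,7,2,0),\boldsymbol{w}\big)$ for suitable $\boldsymbol{w}$.
\end{itemize}
Thus it is a $[6,3,4]_{13}$ MDS code of GRS type.

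\textbf{What is true, and reachable along the lines you intend, is the case $n\ge k+2$.}
\begin{itemize}
\item Suppose the code were GRS. The rational normal curve through its columns would pass through $n\ge k+2$ points of the standard curve $\{(1,t,\dots,t^{k-1})^T\}\cup\{\boldsymbol{e}_k\}$, so it would coincide with it.
\item Each column $(0,\dots,0,a_{1m},\dots,a_{\ell m})^T$ has first coordinate $0$ because $k>\ell$. It would therefore have to be a multiple of $\boldsymbol{e}_k$.
\item Two independent columns of $\boldsymbol{A}$ cannot both be multiples of $\boldsymbol{e}_k$, and there are at least two when $\ell\ge 2$. This gives the contradiction.
\end{itemize}
This argument must allow an extra evaluation point at $\infty$, a case your Cauchy form $c_m/(\beta_m-\alpha_i)$ also omits.
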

\textbf{Proof}. By Remark \ref{GRLmonomial}, we only focus on the code $\mathrm{GRL}_{k}(\boldsymbol{\alpha},\boldsymbol{1},\boldsymbol{A}_{\ell\times \ell})$. Firstly, we set $$
f_{i}(x)=\sum_{j=1}^{k} f_{i j} x^{j-1}=\prod_{j=1, j \neq i}^{k}\left(x-\alpha_{j}\right)(1 \leq i \leq k)
,$$ 
\begin{equation}\label{F}
	\boldsymbol{F}=\begin{pmatrix}
		f_{11} & f_{12} & \cdots & f_{1 k} \\
		f_{21} & f_{22} & \cdots & f_{2 k} \\
		\vdots & \vdots & & \vdots \\
		f_{k 1} & f_{k 2} & \cdots & f_{k k}
	\end{pmatrix}
\end{equation}
and
$$
\eta_{i}=\prod_{s=1, s \neq i}^{k}\left(\alpha_{i}-\alpha_{s}\right)(1\leq i\leq k),$$
$$ \eta_{k+j}=\prod_{s=1}^{k}\left(\alpha_{k+j}-\alpha_{s}\right)(1\leq j\leq n-k).
$$ 

Now for $\boldsymbol{F}$ and $\boldsymbol{G}_{1}$ given by $(\ref{F})$ and $(\ref{GRL1generatormatrix})$, respectively, we have 
$$\boldsymbol{FG_{1}}=\begin{pmatrix}
f_{1}\left(\alpha_{1}\right)&\cdots&f_{1}\left(\alpha_{k}\right)&f_{1}\left(\alpha_{k+1}\right)&\cdots&f_{1}\left(\alpha_{n}\right)&\sum\limits_{s=1}^{\ell}f_{1(k-\ell+s)}a_{s1}&\cdots&\sum\limits_{s=1}^{\ell}f_{1(k-\ell+s)}a_{s\ell}\\
\vdots& &\vdots&\vdots& &\vdots&\vdots& &\vdots\\
f_{k}\left(\alpha_{1}\right)&\cdots&f_{k}\left(\alpha_{k}\right)&f_{k}\left(\alpha_{k+1}\right)&\cdots&f_{k}\left(\alpha_{n}\right)&\sum\limits_{s=1}^{\ell}f_{k(k-\ell+s)}a_{s1}&\cdots&\sum\limits_{s=1}^{\ell}f_{k(k-\ell+s)}a_{s\ell}
\end{pmatrix}.$$
Note that
$$f_{i}\left(\alpha_{j}\right)=\begin{cases}
\eta_{j},&1\leq i=j\leq k;\\
0,&1\leq i\neq j\leq k;\\
\frac{\eta_{j}}{\alpha_{j}-\alpha_{i}},&k+1\leq j\leq n,
\end{cases}$$
thus
$$\begin{aligned}
&\boldsymbol{FG_{1}}\\
=&\begin{pmatrix}
	\eta_{1}&\cdots&0&\frac{\eta_{k+1}}{\alpha_{k+1}-\alpha_{1}}&\cdots&\frac{\eta_{n}}{\alpha_{n}-\alpha_{1}}&\sum\limits_{s=1}^{\ell}f_{1(k-\ell+s)}a_{s1}&\cdots&\sum\limits_{s=1}^{\ell}f_{1(k-\ell+s)}a_{s\ell}\\
	\vdots& &\vdots&\vdots& &\vdots&\vdots& &\vdots\\
	0&\cdots&\eta_{k}&\frac{\eta_{k+1}}{\alpha_{k+1}-\alpha_{k}}&\cdots&\frac{\eta_{n}}{\alpha_{n}-\alpha_{k}}&\sum\limits_{s=1}^{\ell}f_{k(k-\ell+s)}a_{s1}&\cdots&\sum\limits_{s=1}^{\ell}f_{k(k-\ell+s)}a_{s\ell}
\end{pmatrix}\\
=&\boldsymbol{}\begin{pmatrix}
\eta_{1}&\cdots&0\\
\vdots& &\vdots\\
0&\cdots&\eta_{k}
\end{pmatrix}\begin{pmatrix}
	1&\cdots&0&\frac{\eta_{k+1}\eta_{1}^{-1}}{\alpha_{k+1}-\alpha_{1}}&\cdots&\frac{\eta_{n}\eta_{1}^{-1}}{\alpha_{n}-\alpha_{1}}&\eta_{1}^{-1}\sum\limits_{s=1}^{\ell}f_{1(k-\ell+s)}a_{s1}&\cdots&\eta_{1}^{-1}\sum\limits_{s=1}^{\ell}f_{1(k-\ell+s)}a_{s\ell}\\
	\vdots& &\vdots&\vdots& &\vdots&\vdots& &\vdots\\
	0&\cdots&1&\frac{\eta_{k+1}\eta_{k}^{-1}}{\alpha_{k+1}-\alpha_{k}}&\cdots&\frac{\eta_{n}\eta_{k}^{-1}}{\alpha_{n}-\alpha_{k}}&\eta_{k}^{-1}\sum\limits_{s=1}^{\ell}f_{k(k-\ell+s)}a_{s1}&\cdots&\eta_{k}^{-1}\sum\limits_{s=1}^{\ell}f_{k(k-\ell+s)}a_{s\ell}
\end{pmatrix}\\
=&\boldsymbol{V}\widetilde{\boldsymbol{G}_{\boldsymbol{1}}}\\
=&\boldsymbol{V}\left[\boldsymbol{E}_{k}\mid\boldsymbol{B}\right].
\end{aligned}$$ 
It's easy to know that $\boldsymbol{F}$ and $\boldsymbol{V}$ are both nonsingular over $\mathbb{F}_{q}$, and so $\boldsymbol{FG}_{\boldsymbol{1}}$ and 
$\widetilde{\boldsymbol{G}_{\boldsymbol{1}}}$ are both the generator matrices of the code $\mathrm{GRL}_{k}(\boldsymbol{\alpha},\boldsymbol{1},\boldsymbol{A}_{\ell\times \ell}).$

Note that $\boldsymbol{A}_{\ell\times \ell}\in\mathrm{GL}_{\ell}\left(\mathbb{F}_{q}\right)$, it means that $a_{11},a_{21},\cdots,a_{\ell1}$ are not all equal to zero, and then without loss of generality, we can suppose that $a_{\ell1}\neq 0$. Thus, if $\widetilde{\boldsymbol{G}_{1}}$ generates a $\mathrm{RS}$ code, then by Lemma \ref{nonGRS}, for the $(i,n-k+1)$-th entry of $\boldsymbol{B}$($1\leq i\leq k$),  there exist $\alpha_{n+1},\cdots,\alpha_{n+\ell}\in\mathbb{F}_{q}\backslash\left\{\alpha_1,\ldots,\alpha_n\right\}$ such that
$$
\eta_{i}^{-1}\sum\limits_{s=1}^{\ell}f_{i(k-\ell+s)}a_{s1}=\frac{\eta_{n+1}\eta_{i}^{-1}}{\alpha_{n+1}-\alpha_{i}},$$
where $\eta_{n+1}=\prod\limits_{s=1}^{k}\left(\alpha_{n+1}-\alpha_{s}\right),$ 
i.e., 
$$\frac{\eta_{n+1}}{\alpha_{n+1}-\alpha_{i}}=\sum\limits_{s=1}^{\ell}f_{i(k-\ell+s)}a_{s1}=\sum\limits_{s=1}^{\ell}\left(\sum\limits_{t=0}^{\ell-s}(-1)^{t}\sigma_{t}\alpha_{i}^{\ell-s-t}\right)a_{s1}.$$ 
It's means that $\alpha_1,\ldots,\alpha_k(k>\ell)$ are distinct roots of the polynomial $$\eta_{n+1}=\left(\alpha_{n+1}-x\right)\sum\limits_{s=1}^{\ell}\left(\sum\limits_{t=0}^{\ell-s}(-1)^{t}\sigma_{t}x^{\ell-s-t}\right)a_{s1},$$ 
which is contradict with $\deg\left(\left(\alpha_{n+1}-x\right)\sum\limits_{s=1}^{\ell}\left(\sum\limits_{t=0}^{\ell-s}(-1)^{t}\sigma_{t}x^{\ell-s-t}\right)a_{s1}\right)=\ell$. Therefore $\widetilde{\boldsymbol{G}_{\boldsymbol{1}}}$ is not a generator matrix for any $\mathrm{RS}$ code, i.e., the code $\mathrm{GRL}_{k}(\boldsymbol{\alpha},\boldsymbol{1},\boldsymbol{A}_{\ell\times \ell})(k>\ell)$ is non-RS. Furthermore,  the code $\mathrm{GRL}_{k}(\boldsymbol{\alpha},\boldsymbol{v},\boldsymbol{A}_{\ell\times \ell})(k>\ell)$ is non-GRS.

This completes the proof of Theorem $\ref{nonRS}$.

$\hfill\Box$

\begin{remark}
$(1)$ If $k=\ell$ and $\boldsymbol{A}_{\ell\times \ell}$ is an $\ell\times \ell$ Vandermonde matrix $\begin{pmatrix}
1&\cdots&1\\
\beta_{1}&\cdots&\beta_{\ell}\\
\beta_{1}^{2}&\cdots&\beta_{\ell}^{2}\\
\vdots&\ddots&\vdots\\
\beta_{1}^{\ell-1}&\cdots&\beta_{\ell}^{\ell-1}
\end{pmatrix}$, then the corresponding code $\mathrm{GRL}_{k}(\boldsymbol{\alpha},\boldsymbol{v},\boldsymbol{A}_{\ell\times \ell})$ has the following generate matrix
\begin{equation}\label{k=ell}
\begin{pmatrix}
	1&\cdots&1&1&\cdots&1\\
	\alpha_{1}&\cdots&\alpha_{n}&\beta_{1}&\cdots&\beta_{\ell}\\
	\alpha_{1}^{2}&\cdots&\alpha_{n}^{2}&\beta_{1}^{2}&\cdots&\beta_{\ell}^{2}\\
	\vdots&\ddots&\vdots&\vdots&\ddots&\vdots\\
	\alpha_{1}^{\ell-1}&\cdots&\alpha_{n}^{\ell-1}&\beta_{1}^{\ell-1}&\cdots&\beta_{\ell}^{\ell-1}
\end{pmatrix},
\end{equation}
it's easy to know that the code $\mathrm{GRL}_{k}(\boldsymbol{\alpha},\boldsymbol{v},\boldsymbol{A}_{\ell\times \ell})$ generated by $(\ref{k=ell})$ is an $[n+\ell,\ell, n+1]$ RS code with the evaluation-point sequence $\boldsymbol{\alpha}=\left(\alpha_{1},\ldots,\alpha_{n},\beta_{1},\ldots,\beta_{\ell}\right)$.

$(2)$ If $k=\ell$ and $\boldsymbol{A}_{\ell\times \ell}$ is a $\ell\times \ell$ non-singular lower triangular matrix $\begin{pmatrix}
	a_{11}&0&\cdots&0\\
	*&a_{22}&\cdots&0\\
	\vdots&\vdots&\ddots&\vdots\\
	*&*&\cdots&a_{\ell\ell}
\end{pmatrix}$ with $a_{ii}\in\mathbb{F}_{q}^{*}$, then the corresponding code $\mathrm{GRL}_{k}(\boldsymbol{\alpha},\boldsymbol{v},\boldsymbol{A}_{\ell\times \ell})$ has the following generate matrix
\begin{equation}\label{k=ell2}
	\begin{pmatrix}
		1&\cdots&1&a_{11}&0&\cdots&0\\
		\alpha_{1}&\cdots&\alpha_{n}&*&a_{22}&\cdots&0\\ 
		\vdots&\ddots&\vdots&\vdots&\vdots&\ddots&\vdots\\
		\alpha_{1}^{\ell-1}&\cdots&\alpha_{n}^{\ell-1}&*&*&\cdots&a_{\ell\ell}
	\end{pmatrix},
\end{equation}
By the Magma program, for the code $\mathrm{GRL}_{k}(\boldsymbol{\alpha},\boldsymbol{v},\boldsymbol{A}_{\ell\times \ell})$ generated by $(\ref{k=ell2})$, we have 
$$\dim\left(\mathrm{GRL}_{k}^{2}(\boldsymbol{\alpha},\boldsymbol{v},\boldsymbol{A}_{\ell\times \ell})\right)>2k-1,$$
then by Proposition 1 of the reference \cite{nonGRS6}, the code $\mathrm{GRL}_{k}(\boldsymbol{\alpha},\boldsymbol{v},\boldsymbol{A}_{\ell\times \ell})$ generated by $(\ref{k=ell2})$ is a non-RS type.
\end{remark}
\section{Conclusions}\label{sec6}

In this paper, we study the LCD property of the code $\mathrm{GRL}_{k}(\boldsymbol{\alpha},\boldsymbol{1},\boldsymbol{A}_{\ell\times \ell})$ with both
Euclidean and Hermitian inner products. By taking some special vector $\boldsymbol{\alpha}=\left(\alpha_{1},\ldots,\alpha_{n}\right)$, we construct several classes of Euclidean LCD GRL codes, Hermitian LCD GRL codes, GRL codes with small-dimensional hull, respectively. And for several classes of Hermitian GRL codes, we firstly give an upper bound for the dimension of the hull. Subsequently, we apply the above results to construct several classes of EAQECCs. Finally, we prove that the code $\mathrm{GRL}_{k}(\boldsymbol{\alpha},\boldsymbol{1},\boldsymbol{A}_{\ell\times \ell})$ is non-GRS for $k>\ell$.

\section*{Acknowledgement}
This paper is supported by National Natural Science Foundation of China (Grant No. 12471494) and Natural Science Foundation of Sichuan Province (2024NSFSC2051). 

\section*{Appendix} 
\renewcommand{\theexample}{\Alph{section}.\arabic{example}}
\subsection*{Appendix A: Examples for Section \ref{sec3}}
\setcounter{section}{1}
\setcounter{example}{0}
In this section, we give some examples for Theorems \ref{EuclideanLCDRL1}, \ref{EuclideanLCDRL2}, \ref{EuclideanLCDRL31}, \ref{EuclideanLCDRL3} and \ref{EuclideanLCDRL4}, where Examples \ref{ELCDRL11}-\ref{ELCDRL12} are for Theorems \ref{EuclideanLCDRL1} (1)-(2), Examples \ref{ELCDRL21}-\ref{ELCDRL22} are for Theorem \ref{EuclideanLCDRL2} (1)-(2); Example \ref{ELCDRL311} is for Theorem \ref{EuclideanLCDRL31}; Example \ref{ELCDRL3245} (1)-(3) and Examples \ref{ELCDRL31}-\ref{ELCDRL33} are for Theorem \ref{EuclideanLCDRL3} (2),(4), (5), (1) and (3); Example \ref{ELCDRL41} is for Theorem \ref{EuclideanLCDRL4}, respectively.

\begin{example}\label{ELCDRL11}
	Let $q=3^4,k=5,\delta=2,\mathbb{F}_{q}^{*}=\langle\gamma\rangle,\boldsymbol{A}_{2\times 2}=\begin{pmatrix}
		\gamma&\gamma^2\\
		\gamma^3&\gamma^5
	\end{pmatrix}.$ By $\alpha_{i}=\gamma^{\frac{q-1}{k}i}=\gamma^{16i}$, we have $$\boldsymbol{\alpha}=\left(\gamma^{16+2},\gamma^{32+2},\gamma^{48+2},\gamma^{64+2},\gamma^{80+2}\right),$$ and then the corresponding GRL code $\mathcal{C}$ has the following generator matrix
	$$
	\begin{pmatrix}
		1&1&1&1&1&0&0\\
		\gamma^{18}&\gamma^{34}&\gamma^{50}&\gamma^{66}&\gamma^{82}&0&0\\
		\left(\gamma^{18}\right)^{2}&\left(\gamma^{34}\right)^{2}&\left(\gamma^{50}\right)^{2}&\left(\gamma^{66}\right)^{2}&\left(\gamma^{82}\right)^{2}&0&0\\
		\left(\gamma^{18}\right)^{3}&\left(\gamma^{34}\right)^{3}&\left(\gamma^{50}\right)^{3}&\left(\gamma^{66}\right)^{3}&\left(\gamma^{82}\right)^{3}&\gamma&\gamma^2\\
		\left(\gamma^{18}\right)^{4}&\left(\gamma^{34}\right)^{4}&\left(\gamma^{50}\right)^{4}&\left(\gamma^{66}\right)^{4}&\left(\gamma^{82}\right)^{4}&\gamma^3&\gamma^5
	\end{pmatrix}
	$$
	Based on the Magma program, $\mathcal{C}$ is Euclidean LCD MDS with the parameters $[7,5,3]_{3^4}$, which is consistent with Theorem \ref{EuclideanLCDRL1} (1).
\end{example}

\begin{example}\label{ELCDRL12}
	Let $q=5^2,k=8,\delta=1,\mathbb{F}_{q}^{*}=\langle\gamma\rangle,\boldsymbol{A}_{4\times 4}=\begin{pmatrix}
		1& \gamma& \gamma^2 & \gamma \\
		\gamma & \gamma^3 & \gamma^5 & \gamma^7 \\
		\gamma & \gamma^6 & \gamma^{10} & \gamma^{14} \\
		\gamma^{3} & \gamma^{9} & \gamma^{15} & \gamma^{21}
	\end{pmatrix}.$ By $\alpha_{i}=\gamma^{\frac{q-1}{k}i}=\gamma^{3i}$, we have $$\boldsymbol{\alpha}=\left(\gamma^{3+1},\gamma^{6+1},\gamma^{9+1},\gamma^{12+1},\gamma^{15+1},\gamma^{18+1},\gamma^{21+1},\gamma^{24+1}\right),$$ and then the corresponding GRL code $\mathcal{C}$ has the following generator matrix
	$$
	\begin{pmatrix}
		1&1&1&1&1&1&1&1&0&0&0&0\\
		\gamma^{4}&\gamma^{7}&\gamma^{10}&\gamma^{13}&\gamma^{16}&\gamma^{19}&\gamma^{22}&\gamma^{25}&0&0&0&0\\
		\left(\gamma^{4}\right)^2&\left(\gamma^{7}\right)^2&\left(\gamma^{10}\right)^2&\left(\gamma^{13}\right)^2&\left(\gamma^{16}\right)^2&\left(\gamma^{19}\right)^2&\left(\gamma^{22}\right)^2&\left(\gamma^{25}\right)^2&0&0&0&0\\
		\left(\gamma^{4}\right)^3&\left(\gamma^{7}\right)^3&\left(\gamma^{10}\right)^3&\left(\gamma^{13}\right)^3&\left(\gamma^{16}\right)^3&\left(\gamma^{19}\right)^3&\left(\gamma^{22}\right)^3&\left(\gamma^{25}\right)^3&0&0&0&0\\
		\left(\gamma^{4}\right)^4&\left(\gamma^{7}\right)^4&\left(\gamma^{10}\right)^4&\left(\gamma^{13}\right)^4&\left(\gamma^{16}\right)^4&\left(\gamma^{19}\right)^4&\left(\gamma^{22}\right)^4&\left(\gamma^{25}\right)^4&1& \gamma& \gamma^2 & \gamma\\
		\left(\gamma^{4}\right)^5&\left(\gamma^{7}\right)^5&\left(\gamma^{10}\right)^5&\left(\gamma^{13}\right)^5&\left(\gamma^{16}\right)^5&\left(\gamma^{19}\right)^5&\left(\gamma^{22}\right)^5&\left(\gamma^{25}\right)^5&\gamma& \gamma^3 & \gamma^5 & \gamma^7\\
		\left(\gamma^{4}\right)^6&\left(\gamma^{7}\right)^6&\left(\gamma^{10}\right)^2&\left(\gamma^{13}\right)^6&\left(\gamma^{16}\right)^6&\left(\gamma^{19}\right)^6&\left(\gamma^{22}\right)^6&\left(\gamma^{25}\right)^6&\gamma& \gamma^6 & \gamma^{10} & \gamma^{14}\\
		\left(\gamma^{4}\right)^7&\left(\gamma^{7}\right)^7&\left(\gamma^{10}\right)^7&\left(\gamma^{13}\right)^7&\left(\gamma^{16}\right)^7&\left(\gamma^{19}\right)^7&\left(\gamma^{22}\right)^7&\left(\gamma^{25}\right)^7&
		\gamma^{3} & \gamma^{9} & \gamma^{15} & \gamma^{21}\\
	\end{pmatrix}
	$$ 
	Based on the Magma program, $\mathcal{C}$ is Euclidean LCD NMDS with the parameters $[12,8,4]_{5^2}$, which is consistent with Theorem \ref{EuclideanLCDRL1} (2).
\end{example}

\begin{example}\label{ELCDRL21}
	Let $q=5^2,k=12,\delta=2,\mathbb{F}_{q}^{*}=\langle\gamma\rangle,\boldsymbol{A}_{2\times 2}=\begin{pmatrix}
	\gamma^2 & \gamma^4 & \gamma^6\\
	\gamma^2 & \gamma^5 & \gamma^7\\
	\gamma^5 & \gamma^8 & \gamma^9
	\end{pmatrix}.$ By $\alpha_{i}=\gamma^{\frac{q-1}{k}i}=\gamma^{2i}$, we have $$\boldsymbol{\alpha}=\left(0,\gamma^{2+2},\gamma^{4+2},\gamma^{6+2},\gamma^{8+2},\gamma^{10+2},\gamma^{12+2},\gamma^{14+2},\gamma^{16+2},\gamma^{18+2},\gamma^{20+2},\gamma^{22+2},\gamma^{24+2}\right),$$ and then the corresponding GRL code $\mathcal{C}$ has the following generator matrix
	$$
\begin{pmatrix}
	1 & 1 & 1 & 1 & 1 & 1 & 1 & 1 & 1 & 1 & 1 & 1 & 1 & 1 & 1 & 1 \\
	0 & \gamma^4 & \gamma^6 & \gamma^8 & \gamma^{10} & \gamma^{12} & \gamma^{14} & \gamma^{16} & \gamma^{18} & \gamma^{20} & \gamma^{22} & \gamma^{24} & \gamma^{26} & 0 & 0 & 0 \\
	0 & \gamma^8 & \gamma^{12} & \gamma^{16} & \gamma^{20} & \gamma^{24} & \gamma^{28} & \gamma^{32} & \gamma^{36} & \gamma^{40} & \gamma^{44} & \gamma^{48} & \gamma^{52} & 0 & 0 & 0 \\
	0 & \gamma^{12} & \gamma^{18} & \gamma^{24} & \gamma^{30} & \gamma^{36} & \gamma^{42} & \gamma^{48} & \gamma^{54} & \gamma^{60} & \gamma^{66} & \gamma^{72} & \gamma^{78} & 0 & 0 & 0 \\
	0 & \gamma^{16} & \gamma^{24} & \gamma^{32} & \gamma^{40} & \gamma^{48} & \gamma^{56} & \gamma^{64} & \gamma^{72} & \gamma^{80} & \gamma^{88} & \gamma^{96} & \gamma^{104} & 0 & 0 & 0 \\
	0 & \gamma^{20} & \gamma^{30} & \gamma^{40} & \gamma^{50} & \gamma^{60} & \gamma^{70} & \gamma^{80} & \gamma^{90} & \gamma^{100} & \gamma^{110} & \gamma^{120} & \gamma^{130} & 0 & 0 & 0 \\
	0 & \gamma^{24} & \gamma^{36} & \gamma^{48} & \gamma^{60} & \gamma^{72} & \gamma^{84} & \gamma^{96} & \gamma^{108} & \gamma^{120} & \gamma^{132} & \gamma^{144} & \gamma^{156} & 0 & 0 & 0 \\
	0 & \gamma^{28} & \gamma^{42} & \gamma^{56} & \gamma^{70} & \gamma^{84} & \gamma^{98} & \gamma^{112} & \gamma^{126} & \gamma^{140} & \gamma^{154} & \gamma^{168} & \gamma^{182} & 0 & 0 & 0 \\
	0 & \gamma^{32} & \gamma^{48} & \gamma^{64} & \gamma^{80} & \gamma^{96} & \gamma^{112} & \gamma^{128} & \gamma^{144} & \gamma^{160} & \gamma^{176} & \gamma^{192} & \gamma^{208} & 0 & 0 & 0 \\
	0 & \gamma^{36} & \gamma^{54} & \gamma^{72} & \gamma^{90} & \gamma^{108} & \gamma^{126} & \gamma^{144} & \gamma^{162} & \gamma^{180} & \gamma^{198} & \gamma^{216} & \gamma^{234} & \gamma^2 & \gamma^4 & \gamma^6 \\
	0 & \gamma^{40} & \gamma^{60} & \gamma^{80} & \gamma^{100} & \gamma^{120} & \gamma^{140} & \gamma^{160} & \gamma^{180} & \gamma^{200} & \gamma^{220} & \gamma^{240} & \gamma^{260} & \gamma^2 & \gamma^5 & \gamma^7 \\
	0 & \gamma^{44} & \gamma^{66} & \gamma^{88} & \gamma^{110} & \gamma^{132} & \gamma^{154} & \gamma^{176} & \gamma^{198} & \gamma^{220} & \gamma^{242} & \gamma^{264} & \gamma^{286} & \gamma^5 & \gamma^8 & \gamma^9
\end{pmatrix}
	$$
	Based on the Magma program, $\mathcal{C}$ is Euclidean LCD NMDS with the parameters $[16,12,4]_{5^2}$, which is consistent with Theorem \ref{EuclideanLCDRL2} (1).
\end{example}

\begin{example}\label{ELCDRL22}
	Let $q=5^2,k=8,\delta=1,\mathbb{F}_{q}^{*}=\langle\gamma\rangle,\boldsymbol{A}_{2\times 2}=\begin{pmatrix}
		1&\gamma\\
		\gamma^2&\gamma^4
	\end{pmatrix}.$ By $\alpha_{i}=\gamma^{\frac{q-1}{k}i}=\gamma^{3i}$, we have $$\boldsymbol{\alpha}=\left(0,\gamma^{3+1},\gamma^{6+1},\gamma^{9+1},\gamma^{12+1},\gamma^{15+1},\gamma^{18+1},\gamma^{21+1},\gamma^{24+1}\right),$$ and then the corresponding GRL code $\mathcal{C}$ has the following generator matrix
	$$
	\begin{pmatrix}
1 & 1 & 1 & 1 & 1 & 1 & 1 & 1 & 1 & 1 & 1 & 1 & 1 \\
0 & \gamma^4 & \gamma^7 & \gamma^{10} & \gamma^{13} & \gamma^{16} & \gamma^{19} & \gamma^{22} & \gamma^{25} & 0 & 0 & 0 & 0 \\
0 & \gamma^8 & \gamma^{14} & \gamma^{20} & \gamma^{26} & \gamma^{32} & \gamma^{38} & \gamma^{44} & \gamma^{50} & 0 & 0 & 0 & 0 \\
0 & \gamma^{12} & \gamma^{21} & \gamma^{30} & \gamma^{39} & \gamma^{48} & \gamma^{57} & \gamma^{66} & \gamma^{75} & 0 & 0 & 0 & 0 \\
0 & \gamma^{16} & \gamma^{28} & \gamma^{40} & \gamma^{52} & \gamma^{64} & \gamma^{76} & \gamma^{88} & \gamma^{100} & \gamma^1 & \gamma^3 & \gamma^5 & \gamma^7 \\
0 & \gamma^{20} & \gamma^{35} & \gamma^{50} & \gamma^{65} & \gamma^{80} & \gamma^{95} & \gamma^{110} & \gamma^{125} & \gamma^3 & \gamma^4 & \gamma^8 & \gamma^9 \\
0 & \gamma^{24} & \gamma^{42} & \gamma^{60} & \gamma^{78} & \gamma^{96} & \gamma^{114} & \gamma^{132} & \gamma^{150} & \gamma^{11} & \gamma^{12} & \gamma^{13} & \gamma^{15} \\
0 & \gamma^{28} & \gamma^{49} & \gamma^{70} & \gamma^{91} & \gamma^{112} & \gamma^{133} & \gamma^{154} & \gamma^{175} & \gamma^{16} & \gamma^{20} & \gamma^{21} & \gamma^{23}
\end{pmatrix}
	$$
	Based on the Magma program,  $\mathcal{C}$ is Euclidean LCD NMDS with the parameters $[13,8,5]_{5^2}$, which is consistent with Theorem \ref{EuclideanLCDRL2} (2).
\end{example}

\begin{example}\label{ELCDRL311}
	Let $q=31,k=5,s=1,t=9,\mathbb{F}_{q}^{*}=\langle\gamma\rangle,\boldsymbol{A}_{2\times 2}=\begin{pmatrix}
		\gamma&\gamma^2\\
		\gamma^3&\gamma^5
	\end{pmatrix},$ it's easy to see that $\frac{q-1}{k}=6\nmid t-s=8$ $$v_{2}(q-1)-v_{2}(k)-1=0\neq v_{2}(t-s)=3.$$
	Then we have 
	$$\boldsymbol{\alpha}=\left(\gamma^{6+1},\gamma^{12+1},\gamma^{18+1},\gamma^{24+1},\gamma^{30+1},\gamma^{6+9},\gamma^{12+9},\gamma^{18+9},\gamma^{24+9},\gamma^{30+9}\right),$$
	and then the corresponding GRL code $\mathcal{C}$ has the following generator matrix 
	$$
	\begin{pmatrix}
		1 & 1 & 1 & 1 & 1 & 1 & 1 & 1 & 1 & 1 &0&0 \\
		\gamma^{7} & \gamma^{13} & \gamma^{19} & \gamma^{25} & \gamma^{31} & \gamma^{15} & \gamma^{21} & \gamma^{27} & \gamma^{33} & \gamma^{39} &0 & 0 \\
		\left(\gamma^{7}\right)^{2} &\left( \gamma^{13}\right)^{2} & \left(\gamma^{19}\right)^{2} & \left(\gamma^{25}\right)^{2} & \left(\gamma^{31}\right)^{2} & \left(\gamma^{15}\right)^{2} & \left(\gamma^{21}\right)^{2} & \left(\gamma^{27}\right)^{2} & \left(\gamma^{33}\right)^{2} & \left(\gamma^{39}\right)^{2} & 0 & 0 \\ 
		\left(\gamma^{7}\right)^{3} &\left( \gamma^{13}\right)^{3} & \left(\gamma^{19}\right)^{3} & \left(\gamma^{25}\right)^{3} & \left(\gamma^{31}\right)^{3} & \left(\gamma^{15}\right)^{3} & \left(\gamma^{21}\right)^{3} & \left(\gamma^{27}\right)^{3} & \left(\gamma^{33}\right)^{3} & \left(\gamma^{39}\right)^{3} & \gamma & \gamma^2 \\
		\left(\gamma^{7}\right)^{4} &\left( \gamma^{13}\right)^{4} & \left(\gamma^{19}\right)^{4} & \left(\gamma^{25}\right)^{4} & \left(\gamma^{31}\right)^{4} & \left(\gamma^{15}\right)^{4} & \left(\gamma^{21}\right)^{4} & \left(\gamma^{27}\right)^{4} & \left(\gamma^{33}\right)^{4} & \left(\gamma^{39}\right)^{4} & \gamma^3 & \gamma^5 \\
	\end{pmatrix}.
	$$
	Based on the Magma program,  $\mathcal{C}$ is Euclidean LCD NMDS with the parameters $[12,5,7]_{31}$, which is consistent with Theorem \ref{EuclideanLCDRL31}.
\end{example}

\begin{example}\label{ELCDRL3245}
	Let $q=3^4,k=4,\mathbb{F}_{q}^{*}=\langle\gamma\rangle,\boldsymbol{A}_{2\times 2}=\begin{pmatrix}
		\gamma&\gamma^2\\
		\gamma^3&\gamma^5
	\end{pmatrix},$ it's easy to see that $$v_{2}(k)=2<v_{2}(q-1)=4$$
	and
	$$v_{5}(k)=0<v_{5}(q-1)=1,$$
	then we have the following three choices
	\begin{enumerate}[label=$(\arabic*)$]
		\item Let $\delta=2^{v_{2}(q-1)-v_{2}(k)-2}=1$. By $\alpha_{i}=\gamma^{\frac{q-1}{k}i}=\gamma^{20i}$, we have $$\boldsymbol{\alpha}=\left(\gamma^{20},\gamma^{40},\gamma^{60},\gamma^{80},\gamma^{21},\gamma^{41},\gamma^{61},\gamma^{81}\right),$$ and then the corresponding GRL code $\mathcal{C}$ has the following generator matrix
		$$
		\begin{pmatrix}
			1&1&1&1&1&1&1&1&0&0\\
			\gamma^{20}&\gamma^{40}&\gamma^{60}&\gamma^{80}&\gamma^{21}&\gamma^{41}&\gamma^{61}&\gamma^{81}&0&0\\
			\gamma^{40}&\gamma^{80}&\gamma^{120}&\gamma^{160}&\gamma^{42}&\gamma^{82}&\gamma^{122}&\gamma^{162}&\gamma&\gamma^{2}\\
			\gamma^{60}&\gamma^{120}&\gamma^{180}&\gamma^{240}&\gamma^{63}&\gamma^{123}&\gamma^{183}&\gamma^{243}&\gamma^{3}&\gamma^{5}\\
		\end{pmatrix}
		$$
		Based on the Magma program,  $\mathcal{C}$ is Euclidean LCD NMDS with the parameters $[10,4,6]_{3^4}$, which is consistent with Theorem \ref{EuclideanLCDRL3} (2).
		\item Let $\delta=2^{v_{2}(q-1)+1}=2^5$. By $\alpha_{i}=\gamma^{\frac{q-1}{k}i}=\gamma^{20i}$, we have $$\boldsymbol{\alpha}=\left(\gamma^{20},\gamma^{40},\gamma^{60},\gamma^{80},\gamma^{52},\gamma^{72},\gamma^{92},\gamma^{112}\right),$$ and then the corresponding GRL code $\mathcal{C}$ has the following generator matrix
		$$
		\begin{pmatrix}
			1&1&1&1&1&1&1&1&0&0\\
			\gamma^{20}&\gamma^{40}&\gamma^{60}&\gamma^{80}&\gamma^{52}&\gamma^{72}&\gamma^{92}&\gamma^{112}&0&0\\
			\gamma^{40}&\gamma^{80}&\gamma^{120}&\gamma^{160}&\gamma^{104}&\gamma^{144}&\gamma^{184}&\gamma^{224}&\gamma&\gamma^{2}\\
			\gamma^{60}&\gamma^{120}&\gamma^{180}&\gamma^{240}&\gamma^{156}&\gamma^{216}&\gamma^{276}&\gamma^{336}&\gamma^{3}&\gamma^{5}\\
		\end{pmatrix}
		$$
		Based on the Magma program,  $\mathcal{C}$ is Euclidean LCD MDS with the parameters $[10,4,7]_{3^4}$, which is consistent with Theorem \ref{EuclideanLCDRL3} (4).
		\item Let $\delta=5^{v_{2}(q-1)-v_{2}(k)-1}=1$. By $\alpha_{i}=\gamma^{\frac{q-1}{k}i}=\gamma^{20i}$, we have $$\boldsymbol{\alpha}=\left(\gamma^{20},\gamma^{40},\gamma^{60},\gamma^{80},\gamma^{21},\gamma^{41},\gamma^{61},\gamma^{81}\right),$$ and then the corresponding GRL code $\mathcal{C}$ has the following generator matrix
		$$
		\begin{pmatrix}
			1&1&1&1&1&1&1&1&0&0\\
			\gamma^{20}&\gamma^{40}&\gamma^{60}&\gamma^{80}&\gamma^{21}&\gamma^{41}&\gamma^{61}&\gamma^{81}&0&0\\
			\gamma^{40}&\gamma^{80}&\gamma^{120}&\gamma^{160}&\gamma^{42}&\gamma^{82}&\gamma^{122}&\gamma^{162}&\gamma&\gamma^{2}\\
			\gamma^{60}&\gamma^{120}&\gamma^{180}&\gamma^{240}&\gamma^{63}&\gamma^{123}&\gamma^{183}&\gamma^{243}&\gamma^{3}&\gamma^{5}\\
		\end{pmatrix}
		$$
		Based on the Magma program,  $\mathcal{C}$ is Euclidean LCD NMDS with the parameters $[10,4,6]_{3^4}$, which is consistent with Theorem \ref{EuclideanLCDRL3} (5).
		
	\end{enumerate}
	
\end{example}

\begin{example}\label{ELCDRL31}
	Let $q=5^2,k=8,\mathbb{F}_{q}^{*}=\langle\gamma\rangle,\boldsymbol{A}_{2\times 2}=\begin{pmatrix}
		1&\gamma\\
		\gamma^2&\gamma^4
	\end{pmatrix},$ it's easy to see that $$v_{2}(k)=v_{2}(q-1)=3.$$ 
	Then by taking $\delta=2^{1}$ and $\alpha_{i}=\gamma^{\frac{q-1}{k}i}=\gamma^{3i}$, we have $$\boldsymbol{\alpha}=\left(\gamma^{3},\gamma^{6},\gamma^{9},\gamma^{12},\gamma^{15},\gamma^{18},\gamma^{21},\gamma^{24},\gamma^{4},\gamma^{7},\gamma^{10},\gamma^{13},\gamma^{16},\gamma^{19},\gamma^{22},\gamma^{25}\right),$$ and then the corresponding GRL code $\mathcal{C}$ has the following generator matrix
	$$\begin{pmatrix}
		1&1&1&1&1&1&1&1&1&1&1&1&1&1&1&1&0&0\\
		\gamma^{3} & \gamma^{6} & \gamma^{9} & \gamma^{12} & \gamma^{15} & \gamma^{18} & \gamma^{21} & \gamma^{24} & \gamma^{4} & \gamma^{7} & \gamma^{10} & \gamma^{13} & \gamma^{16} & \gamma^{19} & \gamma^{22} & \gamma^{25}&0&0\\ 
		\gamma^{6} & \gamma^{12} & \gamma^{18} & \gamma^{24} & \gamma^{30} & \gamma^{36} & \gamma^{42} & \gamma^{48} & \gamma^{8} & \gamma^{14} & \gamma^{20} & \gamma^{26} & \gamma^{32} & \gamma^{38} & \gamma^{44} & \gamma^{50}&0&0\\ 
		\gamma^{9} & \gamma^{18} & \gamma^{27} & \gamma^{36} & \gamma^{45} & \gamma^{54} & \gamma^{63} & \gamma^{72} & \gamma^{12} & \gamma^{21} & \gamma^{30} & \gamma^{39} & \gamma^{48} & \gamma^{57} & \gamma^{66} & \gamma^{75}&0&0 \\ 
		\gamma^{12} & \gamma^{24} & \gamma^{36} & \gamma^{48} & \gamma^{60} & \gamma^{72} & \gamma^{84} & \gamma^{96} & \gamma^{16} & \gamma^{28} & \gamma^{40} & \gamma^{52} & \gamma^{64} & \gamma^{76} & \gamma^{88} & \gamma^{100}&0&0\\ 
		\gamma^{15} & \gamma^{30} & \gamma^{45} & \gamma^{60} & \gamma^{75} & \gamma^{90} & \gamma^{105} & \gamma^{120} & \gamma^{20} & \gamma^{35} & \gamma^{50} & \gamma^{65} & \gamma^{80} & \gamma^{95} & \gamma^{110} & \gamma^{125}&0&0 \\ 
		\gamma^{18} & \gamma^{36} & \gamma^{54} & \gamma^{72} & \gamma^{90} & \gamma^{108} & \gamma^{126} & \gamma^{144} & \gamma^{24} & \gamma^{42} & \gamma^{60} & \gamma^{78} & \gamma^{96} & \gamma^{114} & \gamma^{132} & \gamma^{150} &1&\gamma\\ 
		\gamma^{21} & \gamma^{42} & \gamma^{63} & \gamma^{84} & \gamma^{105} & \gamma^{126} & \gamma^{147} & \gamma^{168} & \gamma^{28} & \gamma^{49} & \gamma^{70} & \gamma^{91} & \gamma^{112} & \gamma^{133} & \gamma^{154} & \gamma^{175} &\gamma^2&\gamma^4\\
	\end{pmatrix}$$
	Based on the Magma program,  $\mathcal{C}$ is Euclidean LCD AMDS with the parameters $[18,8,10]_{5^2}$, which is consistent with Theorem \ref{EuclideanLCDRL3} (1).
\end{example}

\begin{example}\label{ELCDRL33}
	Let $q=7^2,k=6,\mathbb{F}_{q}^{*}=\langle\gamma\rangle,\boldsymbol{A}_{2\times 2}=\begin{pmatrix}
		\gamma&\gamma^2\\
		\gamma^3&\gamma^5
	\end{pmatrix},$ it's easy to see that $$v_{2}(q-1)-v_{2}(k)=2\neq 1$$
	and
	$$v_{3}(k)=v_{3}(q-1)=1.$$
	Then by taking $\delta=3^{1}$ and $\alpha_{i}=\gamma^{\frac{q-1}{k}i}=\gamma^{2i}$, we have 
	$$\boldsymbol{\alpha}=\left(\gamma^{2},\gamma^{4},\gamma^{6},\gamma^{8},\gamma^{10},\gamma^{12},\gamma^{5},\gamma^{7},\gamma^{9},\gamma^{11},\gamma^{13},\gamma^{15}\right),$$
	and then the corresponding GRL code $\mathcal{C}$ has the following generator matrix 
	$$
	\begin{pmatrix}
		1 & 1 & 1 & 1 & 1 & 1 & 1 & 1 & 1 & 1 & 1 & 1 & 0&0 \\
		\gamma^2 & \gamma^4 & \gamma^6 & \gamma^8 & \gamma^{10} & \gamma^{12} & \gamma^5 & \gamma^7 & \gamma^9 & \gamma^{11} & \gamma^{13} & \gamma^{15} & 0 & 0 \\
		\gamma^4 & \gamma^8 & \gamma^{12} & \gamma^{16} & \gamma^{20} & \gamma^{24} & \gamma^{10} & \gamma^{14} & \gamma^{18} & \gamma^{22} & \gamma^{26} & \gamma^{30} & 0 & 0 \\
		\gamma^6 & \gamma^{12} & \gamma^{18} & \gamma^{24} & \gamma^{30} & \gamma^{36} & \gamma^{15} & \gamma^{21} & \gamma^{27} & \gamma^{33} & \gamma^{39} & \gamma^{45} & 0 & 0 \\
		\gamma^8 & \gamma^{16} & \gamma^{24} & \gamma^{32} & \gamma^{40} & \gamma^{48} & \gamma^{20} & \gamma^{28} & \gamma^{36} & \gamma^{44} & \gamma^{52} & \gamma^{60} & \gamma & \gamma^2 \\
		\gamma^{10} & \gamma^{20} & \gamma^{30} & \gamma^{40} & \gamma^{50} & \gamma^{60} & \gamma^{25} & \gamma^{35} & \gamma^{45} & \gamma^{55} & \gamma^{65} & \gamma^{75} & \gamma^3 & \gamma^5 \\
	\end{pmatrix}.
	$$
	Based on the Magma program,  $\mathcal{C}$ is Euclidean LCD NMDS with the parameters $[14,6,8]_{7^2}$, which is consistent with Theorem \ref{EuclideanLCDRL3} (3).
\end{example}

\begin{example}\label{ELCDRL41}
	Let $q=7^2,k=12,\mathbb{F}_{q}^{*}=\langle\gamma\rangle,\boldsymbol{A}_{2\times 2}=\begin{pmatrix}
	\gamma&\gamma^2\\
	\gamma^3&\gamma^5
\end{pmatrix},$ it's easy to see that $$(3k,q)=1,$$
$$q-1=48\notin\left\{k,2k,3k\right\}=\left\{12,24,36\right\},$$
and $$v_{2}(q-1)=v_{2}(k)=4\geq 1.$$
	Then by taking $\alpha_{i}=\gamma^{\frac{q-1}{k}i}=\gamma^{4i}$, we have 
$$\boldsymbol{\alpha}=\left(\gamma^{4},\gamma^{8},\ldots,\gamma^{44},\gamma^{48},\gamma^{5},\gamma^{9},\ldots,\gamma^{45},\gamma^{49},\gamma^{6},\gamma^{10},\ldots,\gamma^{46},\gamma^{50}\right),$$
Based on the Magma program,  $\mathcal{C}$ is Euclidean LCD with the parameters $[50,12]_{7^2}$, which is consistent with Theorem \ref{EuclideanLCDRL4}.
\end{example}
\subsection*{Appendix B: Examples for Section \ref{sec4}} 
In this Appendix B, we give some examples for Theorems \ref{HermitianLCDRL1},\ref{HermitianLCDRL2},\ref{HermitianLCDRL3}, \ref{HermitianLCDRL31} and  \ref{HermitianLCDRL4}, where Example \ref{HLCDRL11} $(1)$-$(2)$ are for Theorem \ref{HermitianLCDRL1} $(1)$-$(2)$; Examples \ref{HLCDRL21}-\ref{HLCDRL22} are for Theorem \ref{HermitianLCDRL2} $(1)$-$(2)$; Examples \ref{HLCDRL32}-\ref{HLCDRL4} is for Theorems \ref{HermitianLCDRL3}, \ref{HermitianLCDRL31} and   \ref{HermitianLCDRL4}, respectively. 

\begin{example}\label{HLCDRL11}
	Let $q=3^2,k=8,\delta=1,\mathbb{F}_{q^2}^{*}=\langle\gamma\rangle.$ By $\alpha_{i}=\gamma^{\frac{q^2-1}{k}i}=\gamma^{10i}$, we have $$\boldsymbol{\alpha}=\left(\gamma^{11},\gamma^{21},\gamma^{31},\gamma^{41},\gamma^{51},\gamma^{61},\gamma^{71}\right),$$ and then we have the following two cases.
	
$(1)$ By taking $\boldsymbol{A}_{2\times 2}=\begin{pmatrix}
	\gamma&\gamma^2\\
	\gamma^3&\gamma^5
\end{pmatrix}.$ Then the corresponding GRL code $\mathcal{C}$ has the following generator matrix
$$
\begin{pmatrix}
1&1&1&1&1&1&1&1&0&0\\
\gamma^{11} & \gamma^{21} & \gamma^{31} & \gamma^{41} & \gamma^{51} & \gamma^{61} & \gamma^{71} & \gamma^{81} &0&0\\
(\gamma^{11})^2 & (\gamma^{21})^2 & (\gamma^{31})^2 & (\gamma^{41})^2 & (\gamma^{51})^2 & (\gamma^{61})^2 & (\gamma^{71})^2 & (\gamma^{81})^2 &0&0\\
(\gamma^{11})^3 & (\gamma^{21})^3 & (\gamma^{31})^3 & (\gamma^{41})^3 & (\gamma^{51})^3 & (\gamma^{61})^3 & (\gamma^{71})^3 & (\gamma^{81})^3 &0&0\\
(\gamma^{11})^4 & (\gamma^{21})^4 & (\gamma^{31})^4 & (\gamma^{41})^4 & (\gamma^{51})^4 & (\gamma^{61})^4 & (\gamma^{71})^4 & (\gamma^{81})^4 &0&0\\
(\gamma^{11})^5 & (\gamma^{21})^5 & (\gamma^{31})^5 & (\gamma^{41})^5 & (\gamma^{51})^5 & (\gamma^{61})^5 & (\gamma^{71})^5 & (\gamma^{81})^5 &0&0\\
(\gamma^{11})^6 & (\gamma^{21})^6 & (\gamma^{31})^6 & (\gamma^{41})^6 & (\gamma^{51})^6 & (\gamma^{61})^6 & (\gamma^{71})^6 & (\gamma^{81})^6  &\gamma&\gamma^2\\
(\gamma^{11})^7 & (\gamma^{21})^7 & (\gamma^{31})^7 & (\gamma^{41})^7 & (\gamma^{51})^7 & (\gamma^{61})^7 & (\gamma^{71})^7 & (\gamma^{81})^7 &
\gamma^3&\gamma^5
\end{pmatrix}
$$
Based on the Magma program,  $\mathcal{C}$ is Hermitian LCD MDS with the parameters $[10,8,3]_{3^4}$, which is consistent with Theorem \ref{HermitianLCDRL1} $(1)$. 

$(2)$ By taking $\boldsymbol{A}_{4\times 4}=\begin{pmatrix}
	\gamma & 1 & 1 & 1 \\
	1 & \gamma^2 & 1 & 1 \\
	1 & 1 & \gamma^{3} &1\\
	\gamma^4 & \gamma^{5} & \gamma^{6} & \gamma^{7}
\end{pmatrix}.$ Then the corresponding GRL code $\mathcal{C}$ has the following generator matrix
$$
\begin{pmatrix}
	1&1&1&1&1&1&1&1&0&0&0&0\\
	\gamma^{11} & \gamma^{21} & \gamma^{31} & \gamma^{41} & \gamma^{51} & \gamma^{61} & \gamma^{71} & \gamma^{81} &0&0&0&0\\
	(\gamma^{11})^2 & (\gamma^{21})^2 & (\gamma^{31})^2 & (\gamma^{41})^2 & (\gamma^{51})^2 & (\gamma^{61})^2 & (\gamma^{71})^2 & (\gamma^{81})^2 &0&0&0&0\\
	(\gamma^{11})^3 & (\gamma^{21})^3 & (\gamma^{31})^3 & (\gamma^{41})^3 & (\gamma^{51})^3 & (\gamma^{61})^3 & (\gamma^{71})^3 & (\gamma^{81})^3 &0&0&0&0\\
	(\gamma^{11})^4 & (\gamma^{21})^4 & (\gamma^{31})^4 & (\gamma^{41})^4 & (\gamma^{51})^4 & (\gamma^{61})^4 & (\gamma^{71})^4 & (\gamma^{81})^4 &\gamma & 1 & 1 & 1 \\
	(\gamma^{11})^5 & (\gamma^{21})^5 & (\gamma^{31})^5 & (\gamma^{41})^5 & (\gamma^{51})^5 & (\gamma^{61})^5 & (\gamma^{71})^5 & (\gamma^{81})^5 &1 & \gamma^2 & 1 & 1 \\
	(\gamma^{11})^6 & (\gamma^{21})^6 & (\gamma^{31})^6 & (\gamma^{41})^6 & (\gamma^{51})^6 & (\gamma^{61})^6 & (\gamma^{71})^6 & (\gamma^{81})^6  &1 &1 & \gamma^{3} & 1 \\
	(\gamma^{11})^7 & (\gamma^{21})^7 & (\gamma^{31})^7 & (\gamma^{41})^7 & (\gamma^{51})^7 & (\gamma^{61})^7 & (\gamma^{71})^7 & (\gamma^{81})^7 &
	\gamma^4 & \gamma^{5} & \gamma^{6} & \gamma^{7}
\end{pmatrix}
$$
Based on the Magma program,  $\mathcal{C}$ is Hermitian LCD NMDS with the parameters $[12,8,4]_{3^4}$, which is consistent with Theorem \ref{HermitianLCDRL1} $(2)$.
\end{example}

\begin{example}\label{HLCDRL21}
	Let $q=3^3,k=13,\delta=0,\mathbb{F}_{q^2}^{*}=\langle\gamma\rangle.$ By $\alpha_{i}=\gamma^{\frac{q^2-1}{k}i}=\gamma^{56i}$, we have $$\boldsymbol{\alpha}=\left(0,\gamma^{56},(\gamma^{56})^2,(\gamma^{56})^3,(\gamma^{56})^4,(\gamma^{56})^5,(\gamma^{56})^6,(\gamma^{56})^7,(\gamma^{56})^8,(\gamma^{56})^9,(\gamma^{56})^{10},(\gamma^{56})^{11},(\gamma^{56})^{12},(\gamma^{56})^{13}\right).$$
	By taking $\boldsymbol{A}_{2\times 2}=\begin{pmatrix}
		\gamma&\gamma^2\\
		\gamma^3&\gamma^5
	\end{pmatrix}.$ Then the corresponding GRL code $\mathcal{C}$ has the following generator matrix
	$$
	\begin{pmatrix}
1 & 1 & 1 & \dots & 1 & 0 & 0 \\
0 & \gamma^{56} & (\gamma^{56})^2 & \dots & (\gamma^{56})^{13} & 0 & 0 \\
0& (\gamma^{56})^2 & ((\gamma^{56})^2)^2 & \dots & ((\gamma^{56})^{13})^2 &0&0\\
0& (\gamma^{56})^3 & ((\gamma^{56})^2)^3 & \dots & ((\gamma^{56})^{13})^3 &0 &0 \\
\vdots & \vdots & \vdots & \ddots & \vdots & \vdots & \vdots \\
0& (\gamma^{56})^{10} & ((\gamma^{56})^2)^{10} & \dots & ((\gamma^{56})^{13})^{10} &0&0\\
0& (\gamma^{56})^{11} & ((\gamma^{56})^2)^{11} & \dots & ((\gamma^{56})^{13})^{11} & \gamma^1 & \gamma^2 \\
0& (\gamma^{56})^{12} & ((\gamma^{56})^2)^{12} & \dots & ((\gamma^{56})^{13})^{12} & \gamma^3 & \gamma^5 
\end{pmatrix}_{13\times 16}
	$$
	Based on the Magma program,  $\mathcal{C}$ is Hermitian LCD MDS with the parameters $[16,13,4]_{3^4}$, which is consistent with Theorem \ref{HermitianLCDRL2} $(1)$. 
\end{example}

\begin{example}\label{HLCDRL22}
Let $q=5^2,k=8,\delta=0,\mathbb{F}_{q^2}^{*}=\langle\gamma\rangle.$ By $\alpha_{i}=\gamma^{\frac{q^2-1}{k}i}=\gamma^{78i}$, we have $$\boldsymbol{\alpha}=\left(0,\gamma^{78},(\gamma^{78})^2,(\gamma^{78})^3,(\gamma^{78})^4,(\gamma^{78})^5,(\gamma^{78})^6,(\gamma^{78})^7,(\gamma^{78})^8\right).$$
By taking $\boldsymbol{A}_{2\times 2}=\begin{pmatrix}
	\gamma&\gamma^2\\
	\gamma^3&\gamma^5
\end{pmatrix}.$ Then the corresponding GRL code $\mathcal{C}$ has the following generator matrix
	$$
\small	\begin{pmatrix}
	1 & 1 & 1 & 1 & 1 & 1 & 1 & 1 & 1 & 0 & 0 & 0 & 0 \\
	0 & \gamma^{78} & (\gamma^{78})^2 & (\gamma^{78})^3 & (\gamma^{78})^4 & (\gamma^{78})^5 & (\gamma^{78})^6 & (\gamma^{78})^7 & (\gamma^{78})^8 & 0 & 0 & 0 & 0 \\
	0 & (\gamma^{78})^2 & ((\gamma^{78})^2)^2 & ((\gamma^{78})^3)^2 & ((\gamma^{78})^4)^2 & ((\gamma^{78})^5)^2 & ((\gamma^{78})^6)^2 & ((\gamma^{78})^7)^2 & ((\gamma^{78})^8)^2 & 0 & 0 & 0 & 0 \\
	0 & (\gamma^{78})^3 & ((\gamma^{78})^2)^3 & ((\gamma^{78})^3)^3 & ((\gamma^{78})^4)^3 & ((\gamma^{78})^5)^3 & ((\gamma^{78})^6)^3 & ((\gamma^{78})^7)^3 & ((\gamma^{78})^8)^3 & 0 & 0 & 0 & 0 \\
	0 & (\gamma^{78})^4 & ((\gamma^{78})^2)^4 & ((\gamma^{78})^3)^4 & ((\gamma^{78})^4)^4 & ((\gamma^{78})^5)^4 & ((\gamma^{78})^6)^4 & ((\gamma^{78})^7)^4 & ((\gamma^{78})^8)^4 & \gamma & 1 & 1 & 1 \\
	0 & (\gamma^{78})^5 & ((\gamma^{78})^2)^5 & ((\gamma^{78})^3)^5 & ((\gamma^{78})^4)^5 & ((\gamma^{78})^5)^5 & ((\gamma^{78})^6)^5 & ((\gamma^{78})^7)^5 & ((\gamma^{78})^8)^5 & 1 & \gamma^2 & 1 & 1 \\
	0 & (\gamma^{78})^6 & ((\gamma^{78})^2)^6 & ((\gamma^{78})^3)^6 & ((\gamma^{78})^4)^6 & ((\gamma^{78})^5)^6 & ((\gamma^{78})^6)^6 & ((\gamma^{78})^7)^6 & ((\gamma^{78})^8)^6 & 1 & 1 & \gamma^3 & 1 \\
	0 & (\gamma^{78})^7 & ((\gamma^{78})^2)^7 & ((\gamma^{78})^3)^7 & ((\gamma^{78})^4)^7 & ((\gamma^{78})^5)^7 & ((\gamma^{78})^6)^7 & ((\gamma^{78})^7)^7 & ((\gamma^{78})^8)^7 & \gamma^4 & \gamma^5 & \gamma^6 & \gamma^7
\end{pmatrix}.
$$
	Based on the Magma program,  $\mathcal{C}$ is Hermitian LCD NMDS with the parameters $[13,8,5]_{5^4}$, which is consistent with Theorem \ref{HermitianLCDRL2} $(2)$. 
\end{example}

\begin{example}\label{HLCDRL32}
	Let $q=13,k=6,\mathbb{F}_{q^2}^{*}=\langle\gamma\rangle,$ $$\boldsymbol{A}_{\ell\times \ell}\in L=\left\{\begin{pmatrix}
		\gamma&\gamma^2\\
		\gamma^3&\gamma^5
	\end{pmatrix},\begin{pmatrix}
	\gamma^1& \gamma^3& \gamma^5\\
	\gamma^3& \gamma^4& \gamma^8\\
	\gamma^{11}& \gamma^{12}& \gamma^{13}
	\end{pmatrix} \right\}.$$ It's easy to see that $\frac{q^2-1}{k}=28$ and $$0\notin T=\left\{v_{2}(q^2-1)-1-v_{2}(i+(k-i)q)|1\leq i\leq k-1\right\}=\left\{1\right\},$$
	then we take some $(s,t)$ satisfy $\frac{q^2-1}{k}\nmid s-t$ and $v_{2}\left(s-t\right)\notin T$, for example, $$(s,t)\in K=\left\{(4,1),(5,2),(10,1),(11,2),(28,1)\right\},$$ furthermore, we have $$\boldsymbol{\alpha}\in\left\{\left(\gamma^{28+s},\gamma^{56+s},\cdots,\gamma^{140+s},\gamma^{168+s},\gamma^{28+t},\gamma^{56+t},\cdots,\gamma^{140+t},\gamma^{168+t}\right)|(s,t)\in K\right\}.$$ 
	Then the corresponding GRL code $\mathcal{C}$ has the following generator matrix
	$$
	\begin{pmatrix}
	\begin{matrix}
		1&1&\cdots&1&1&1&1&\cdots&1&1\\
		\gamma^{28+s}&\gamma^{56+s}&\cdots&\gamma^{140+s}&\gamma^{168+s}&\gamma^{28+t}&\gamma^{56+t}&\cdots&\gamma^{140+t}&\gamma^{168+t}\\
		\left(\gamma^{28+s}\right)^{2}&\left(\gamma^{56+s}\right)^{2}&\cdots&\left(\gamma^{140+s}\right)^{2}&\left(\gamma^{168+s}\right)^{2}&\left(\gamma^{28+t}\right)^{2}&\left(\gamma^{56+t}\right)^{2}&\cdots&\left(\gamma^{140+t}\right)^{2}&\left(\gamma^{168+t}\right)^{2}\\ 
		
		\left(\gamma^{28+s}\right)^{3}&\left(\gamma^{56+s}\right)^{3}&\cdots&\left(\gamma^{140+s}\right)^{3}&\left(\gamma^{168+s}\right)^{3}&\left(\gamma^{28+t}\right)^{3}&\left(\gamma^{56+t}\right)^{3}&\cdots&\left(\gamma^{140+t}\right)^{3}&\left(\gamma^{168+t}\right)^{3}\\
		
		\left(\gamma^{28+s}\right)^{4}&\left(\gamma^{56+s}\right)^{4}&\cdots&\left(\gamma^{140+s}\right)^{4}&\left(\gamma^{168+s}\right)^{4}&\left(\gamma^{28+t}\right)^{4}&\left(\gamma^{56+t}\right)^{4}&\cdots&\left(\gamma^{140+t}\right)^{4}&\left(\gamma^{168+t}\right)^{4}\\
		
		\left(\gamma^{28+s}\right)^{5}&\left(\gamma^{56+s}\right)^{5}&\cdots&\left(\gamma^{140+s}\right)^{5}&\left(\gamma^{168+s}\right)^{5}&\left(\gamma^{28+t}\right)^{5}&\left(\gamma^{56+t}\right)^{5}&\cdots&\left(\gamma^{140+t}\right)^{5}&\left(\gamma^{168+t}\right)^{5}
	\end{matrix}&\begin{matrix}
	\boldsymbol{0}_{(6-\ell)\times\ell}\\
	\\
	\\
	\boldsymbol{A}_{\ell\times \ell}
\end{matrix}\end{pmatrix}
	$$
with $\boldsymbol{A}_{\ell\times \ell}\in L.$ Based on the Magma program, when $\boldsymbol{A}_{\ell\times \ell}=\begin{pmatrix}
	\gamma&\gamma^2\\
	\gamma^3&\gamma^5
\end{pmatrix}$,  $\mathcal{C}$ is Hermitian LCD NMDS with the parameters $[14,6,8]_{13^2}$, which is consistent with Theorem \ref{HermitianLCDRL3};
when $\boldsymbol{A}_{\ell\times \ell}=\begin{pmatrix}
	\gamma^1& \gamma^3& \gamma^5\\
	\gamma^3& \gamma^4& \gamma^8\\
	\gamma^{11}& \gamma^{12}& \gamma^{13}
\end{pmatrix}$,  $\mathcal{C}$ is Hermitian LCD NMDS with the parameters $[15,6,9]_{13^2},$ expect for the case $(s,t)=(5,2)$, which is Hermitian LCD with the parameters $[15,6,8]_{13^2}$, which is consistent with Theorem \ref{HermitianLCDRL3}.
\end{example}

\begin{example}\label{HLCDRL3}
	Let $q=11,k=5,\mathbb{F}_{q^2}^{*}=\langle\gamma\rangle,\boldsymbol{A}_{2\times 2}=\begin{pmatrix}
		\gamma&\gamma^2\\
		\gamma^3&\gamma^5
	\end{pmatrix}.$ It's easy to see that $\frac{q^2-1}{k}=24$ and $$T=\left\{v_{2}(q^2-1)-1-v_{2}(i+(k-i)q)|1\leq i\leq k-1\right\}=\left\{2\right\},$$
	then we take some $\delta$ satisfy $\frac{q^2-1}{k}\nmid \delta$ and $v_{2}\left(\delta\right)\notin T$, for example, $\delta=2,3,6,8$, furthermore, we have $$\boldsymbol{\alpha}\in\left\{\left(\gamma^{24},\gamma^{48},\gamma^{72},\gamma^{96},\gamma^{120},\gamma^{24+\delta},\gamma^{48+\delta},\gamma^{72+\delta},\gamma^{96+\delta},\gamma^{120+\delta}\right)|\delta=2,3,6,8\right\}.$$ 
Then the corresponding GRL code $\mathcal{C}$ has the following generator matrix
$$
\begin{pmatrix}
1&1&1&1&1&1&1&1&1&1&0&0\\
\gamma^{24}&\gamma^{48}&\gamma^{72}&\gamma^{96}&\gamma^{120}&\gamma^{24+\delta}&\gamma^{48+\delta}&\gamma^{72+\delta}&\gamma^{96+\delta}&\gamma^{120+\delta}&0&0\\
\left(\gamma^{24}\right)^{2}&\left(\gamma^{48}\right)^{2}&\left(\gamma^{72}\right)^{2}&\left(\gamma^{96}\right)^{2}&\left(\gamma^{120}\right)^{2}&\left(\gamma^{24+\delta}\right)^{2}&\left(\gamma^{48+\delta}\right)^{2}&\left(\gamma^{72+\delta}\right)^{2}&\left(\gamma^{96+\delta}\right)^{2}&\left(\gamma^{120+\delta}\right)^{2}&0&0\\ 
\left(\gamma^{24}\right)^{3}&\left(\gamma^{48}\right)^{3}&\left(\gamma^{72}\right)^{3}&\left(\gamma^{96}\right)^{3}&\left(\gamma^{120}\right)^{3}&\left(\gamma^{24+\delta}\right)^{3}&\left(\gamma^{48+\delta}\right)^{3}&\left(\gamma^{72+\delta}\right)^{3}&\left(\gamma^{96+\delta}\right)^{3}&\left(\gamma^{120+\delta}\right)^{3}&\gamma&\gamma^2\\
\left(\gamma^{24}\right)^{4}&\left(\gamma^{48}\right)^{4}&\left(\gamma^{72}\right)^{4}&\left(\gamma^{96}\right)^{4}&\left(\gamma^{120}\right)^{4}&\left(\gamma^{24+\delta}\right)^{4}&\left(\gamma^{48+\delta}\right)^{4}&\left(\gamma^{72+\delta}\right)^{4}&\left(\gamma^{96+\delta}\right)^{4}&\left(\gamma^{120+\delta}\right)^{4}&\gamma^3&\gamma^5
\end{pmatrix}
$$
Based on the Magma program,  $\mathcal{C}$ is Hermitian LCD NMDS with the parameters $[12,5,7]_{11^2}$, which is consistent with Theorem \ref{HermitianLCDRL31}. 
\end{example}

\begin{example}\label{HLCDRL4}
	Let $q=11,k=5,\mathbb{F}_{q^2}^{*}=\langle\gamma\rangle,\boldsymbol{A}_{2\times 2}=\begin{pmatrix}
		\gamma&\gamma^2\\
		\gamma^3&\gamma^5
	\end{pmatrix}.$ It's easy to know that
	$$\left\{\frac{q^2-1}{\left(q^2-1,i+(k-i)q\right)}\middle|1\leq i\leq k-1\right\}=\left\{8,24\right\},$$
	then we take some $\delta$ satisfy $\left(\delta k,q\right)=1$ and $\frac{q^2-1}{\left(q^2-1,i+(k-i)q\right)}\nmid \delta+1$ for any $1\leq i\leq k-1$, for example, $\delta=1,2,3,4,5,6$, furthermore, we have $$\boldsymbol{\alpha}\in\left\{\left(\gamma^{24},\gamma^{48},\gamma^{72},\gamma^{96},\gamma^{120},\gamma^{25},\gamma^{49},\gamma^{73},\gamma^{97},\gamma^{121},\ldots,\gamma^{24+\delta},\gamma^{48+\delta},\gamma^{72+\delta},\gamma^{96+\delta},\gamma^{120+\delta}\right)|\delta=1,2,3,4,5,6\right\}.$$ 
	Then the corresponding GRL code $\mathcal{C}$ has the following generator matrix
$$
\footnotesize\begin{pmatrix}
	1&\cdots&1&1&\cdots&1&\cdots&1&\cdots&1&0&0\\
	\gamma^{24}&\cdots&\gamma^{120}&\gamma^{24+1}&\cdots&\gamma^{120+1}&\cdots&\gamma^{24+\delta}&\cdots&\gamma^{120+\delta}&0&0\\
	\left(\gamma^{24}\right)^{2}&\cdots&\left(\gamma^{120}\right)^{2}&\left(\gamma^{24+1}\right)^{2}&\cdots&\left(\gamma^{120+1}\right)^{2}&\cdots&\left(\gamma^{24+\delta}\right)^{2}&\cdots&\left(\gamma^{120+\delta}\right)^{2}&0&0\\ 
	\left(\gamma^{24}\right)^{3}&\cdots&\left(\gamma^{120}\right)^{3}&\left(\gamma^{24+1}\right)^{3}&\cdots&\left(\gamma^{120+1}\right)^{3}&\cdots&\left(\gamma^{24+\delta}\right)^{3}&\cdots&\left(\gamma^{120+\delta}\right)^{3}&\gamma&\gamma^2\\
	\left(\gamma^{24}\right)^{4}&\cdots&\left(\gamma^{120}\right)^{4}&\left(\gamma^{24+1}\right)^{4}&\cdots&\left(\gamma^{120+1}\right)^{4}&\cdots&\left(\gamma^{24+\delta}\right)^{4}&\cdots&\left(\gamma^{120+\delta}\right)^{4}&\gamma^3&\gamma^5
\end{pmatrix}
$$
	Based on the Magma program,  $\mathcal{C}$ is Hermitian LCD NMDS with the parameters $$\left\{[12,5,7]_{11^2},[17,5,12]_{11^2},[22,5,17]_{11^2},[27,5,22]_{11^2},[32,5,27]_{11^2},[37,5,32]_{11^2}\right\},$$
which is consistent with Theorem \ref{HermitianLCDRL4}.
\end{example}
\end{document}